\documentclass[11pt]{article}
\usepackage[letterpaper, margin=1in]{geometry}
\usepackage{authblk}

\title{Suffix Random Access via Function Inversion:\\
A Key for Asymmetric Streaming String Algorithms}

\date{\vspace{-1cm}}

\author[1]{Panagiotis Charalampopoulos}
\author[2]{Taha El Ghazi}
\author[3]{Jonas Ellert}
\author[4]{Pawe{\l} Gawrychowski}
\author[2]{Tatiana Starikovskaya} 

\affil[1]{King's College London}
\affil[2]{ENS Paris}
\affil[3]{CWI Amsterdam}
\affil[4]{Institute of Computer Science, University of Wrocław}

\usepackage{tikz}
\usetikzlibrary{patterns}
\usetikzlibrary{calc} 
\usetikzlibrary{decorations.markings,decorations.pathreplacing}
\usetikzlibrary{math}\tikzset{crossmark/.style={thick,inner sep=1.5pt}}

\usepackage{makecell}
\usepackage{lmodern}
\usepackage{xspace}
\usepackage[T1]{fontenc}
\usepackage{amsmath}
\usepackage{amssymb}
\usepackage{amsthm}
\usepackage{mathtools}
\usepackage{pifont}
\usepackage{hyperref}
\hypersetup{
      colorlinks=false,
      linkcolor=blue,
      urlcolor=cyan,
}
\usepackage[noadjust,sort]{cite}
\usepackage[capitalise]{cleveref}
\usepackage{thmtools}
\usepackage{thm-restate}
\usepackage{algorithm}
\usepackage[noend]{algpseudocode}
\usepackage{subcaption}
\usepackage{comment}
\usepackage{pgfplots}

\usepackage{xcolor}
\definecolor{my-green}{HTML}{66c2a5}
\definecolor{my-red}{HTML}{fc8d62}
\definecolor{my-blue}{HTML}{8da0cb}
\definecolor{my-purple}{HTML}{e78ac3}

\usepackage{contour}
\usepackage{tikz}
\usetikzlibrary{positioning,calc,fit,arrows.meta,patterns}

\newcommand{\prob}{{\textnormal{{Pr}}}}

\newcommand{\cO}{{\mathcal O}}
\newcommand{\tO}{{\tilde\cO}}
\let\cOtilde\tO

\newcommand{\rlz}{\mathsf{rlz}}
\newcommand{\per}{{\mathsf{per}}}
\newcommand{\ident}{{\mathsf{sync}}}

\newcommand{\rev}{{\mathsf{rev}}}
\newcommand{\rot}{\mathrm{rot}}

\let\unvarphi\phi
\let\phi\varphi
\let\eps\epsilon
\let\err\lambda

\def\tauconst{{\chi}}

\newcommand{\N}{\mathcal{N}}

\def\dd{\mathinner{.\,.}}

\newcommand{\ceil}[1]{{\lceil#1\rceil}}
\newcommand{\floor}[1]{{\lfloor#1\rfloor}}
\newcommand{\absolute}[1]{{\lvert#1\rvert}}
\def\poly{\textnormal{poly}}
\def\polylog{\textnormal{polylog}}

\newtheorem{theorem}{Theorem}[section]
\newtheorem{corollary}[theorem]{Corollary}
\newtheorem{lemma}[theorem]{Lemma}
\newtheorem{fact}[theorem]{Fact}
\newtheorem{observation}[theorem]{Observation}
\newtheorem{claim}[theorem]{Claim}

\newtheorem{remark}[theorem]{Remark}
\newtheorem{assumption}[theorem]{Assumption}

\theoremstyle{definition}
\newtheorem{definition}[theorem]{Definition}

\def\defaultproblemfont{\sffamily}

\newlength{\probboxminipagewidth}
\newlength{\fboxsepbak}
\newcommand{\defdetailedcustomproblem}[4][\defaultproblemfont]{%
\setlength{\fboxsepbak}{\fboxsep}
\setlength{\fboxsep}{.5em}
\setlength{\probboxminipagewidth}{\textwidth - 2\fboxsep - 2\fboxrule}%
\def\tmp{#3}
\par\vspace{\topsep}\noindent\fbox{%
  \begin{minipage}{\probboxminipagewidth}
    {#1{{\large#2}}} \par\smallskip
    \ifx\tmp\relax \else#3\par\smallskip\fi
    \foreach \fieldname/\fieldcontent in {#4} {
        {\bf{\fieldname:}} \fieldcontent \par\smallskip
    }
  \end{minipage}%
  }%
\par\vspace{\topsep}%
\setlength{\fboxsep}{\fboxsepbak}%
}

\newcommand{\defcustomproblem}[3][\defaultproblemfont]{%
\defdetailedcustomproblem[#1]{#2}{}{#3}%
}

\usepackage{enumitem}
\usepackage{lipsum}

\setlist[itemize]{wide=\parindent}

\begin{document}

\maketitle

\thispagestyle{empty}

\begin{abstract}
Many string processing problems can be phrased in the streaming setting,
where the input arrives symbol by symbol and we have sublinear working space.
The area of streaming algorithms for string processing has flourished since the seminal work of Porat and Porat [FOCS 2009].

Unfortunately, problems with efficient solutions in the classical setting often do not admit efficient solutions in the streaming setting.
As a bridge between these two settings,
Saks and Seshadhri [SODA 2013]
introduced the \emph{asymmetric streaming model} (see also [Andoni, Krauthgamer, and Onak; FOCS 2010]).
Here, one is given read-only access to
a (typically short) reference string $R$ of length $m$, while a (typically long) text~$T$ arrives as a stream.

We provide a generic technique to reduce fundamental string problems in the asymmetric streaming model to the online read-only model, lifting several existing algorithms and generally improving upon the state of the art.
Most notably, we obtain asymmetric streaming algorithms for \emph{exact} and \emph{approximate pattern matching} (under both the Hamming and edit distances), and for \emph{relative Lempel--Ziv compression}, a popular scheme for measuring and exploiting redundancy in repetitive text collections.

\smallskip

At the heart of our approach lies a novel tool that facilitates efficient computation in
the asymmetric streaming model: the \emph{suffix random access data structure}.
In its simplest variant, it maintains constant-time random access to the longest suffix of (the seen prefix
of) $T$ that occurs in~$R$.
Let $\tau$ be a parameter that denotes the size of the data structure.
A straightforward approach maintains the data structure in $\cO(m/\tau)$ time per arriving symbol of~$T$.

We drastically improve this tradeoff and reveal fundamental barriers
via a bidirectional reduction between suffix random access and \emph{function inversion}, a central problem in cryptography:
\begin{itemize}
\item By leveraging Fiat and Naor's function inversion data structure [SIAM J.~Comput.~2000],  we achieve $\tO(1+m^3 / \tau^6)$ update time.\footnote{The $\tO(\cdot)$ and $\tilde{\Omega}(\cdot)$ notations suppress, respectively, $\log^{\cO(1)}N$ and $1 / \log^{\cO(1)}N$ factors, where $N$ is the input-size.}
In particular, for $\tau = \sqrt{m}$, we obtain $\tO(1)$ update time, improving over the $\Omega(\sqrt{m})$ bound of the straightforward solution.
\item We establish an unconditional $\tilde{\Omega}(m / \tau^3)$ lower bound on the update time.
Additionally, we show that achieving update time $o(m^3 / \tau^7)$ would imply a breakthrough in function inversion.
\end{itemize}

On the way to our upper bound, we propose a variant of the
string synchronizing sets ([Kempa and Kociumaka; STOC 2019]) with a local sparsity condition that, as we show, admits an efficient streaming construction algorithm.
We believe that our framework and techniques will find broad applications in the development of small-space string algorithms.
\end{abstract}

\clearpage

\thispagestyle{empty}
\tableofcontents
\clearpage

\pagenumbering{arabic}
\setcounter{page}{1}

\section{Introduction}
The streaming model is designed to capture problems in which we need to efficiently process very large amounts of incoming data.
The common assumption is that we do not have enough space to store the input. 
Ideally, we would like to
use space polylogarithmic in the size of the data, perhaps at the expense of introducing randomization or returning approximate
results (especially if it is provably impossible to return exact results).
Many problems concerning strings (sequences of symbols) can be naturally expressed in the streaming model.
While string algorithms in the classical setting have been studied intensively for many decades, a systematic study of algorithmic problems on strings in the streaming setting was initiated relatively recently by a seminal work of Porat and Porat~\cite{porat2009optimal} on exact and approximate pattern matching.
It was followed by a 
flurry of results for pattern matching~\cite{DBLP:conf/esa/CliffordFPSS15,DBLP:conf/esa/GolanP17, DBLP:conf/soda/CliffordFPSS16, starikovskaya:LIPIcs:2017:7320, DBLP:conf/icalp/GolanKP18,DBLP:journals/algorithmica/GolanKP19,10.1145/3357713.3384266,DBLP:journals/algorithmica/GawrychowskiS22,clifford2018streaming,DBLP:conf/cpm/GolanKKP20, DBLP:journals/iandc/RadoszewskiS20,DBLP:conf/focs/KociumakaPS21,DBLP:conf/soda/DudekGGS22,DBLP:conf/icalp/Bhattacharya023}, repetition detection~\cite{DBLP:conf/isaac/GhaziS25,Ergun:10,stream-periodicity-mismatches, stream-periodicity-wildcards,DBLP:journals/algorithmica/GawrychowskiMSU19, DBLP:conf/cpm/MerkurevS19, DBLP:conf/spire/MerkurevS19, DBLP:conf/cpm/GawrychowskiRS19}, and formal language recognition~\cite{DBLP:journals/siamcomp/MagniezMN14,ganardi_et_al:LIPIcs:2018:9131, DBLP:conf/lata/GanardiHL18, ganardi_et_al:LIPIcs:2018:8485, ganardi_et_al:LIPIcs:2016:6853, DBLP:conf/mfcs/GanardiJL18, DBLP:journals/tcs/BabuLRV13, franois_et_al:LIPIcs:2016:6355,DBLP:conf/icalp/BathieS21,DBLP:conf/isaac/BathieKS23,ganardi_et_al:LIPIcs:2019:11502,DBLP:journals/theoretics/GanardiHLMS25} in the streaming model.

However, the streaming model is very restrictive, and some natural string processing problems do not admit sublinear-space streaming algorithms.
For example, Bathie, Charalampopoulos, and Starikovskaya~\cite{DBLP:conf/cpm/BathieCS24} showed that this is the case for longest common substring and circular pattern matching.
On the other hand, both of these problems do admit efficient solutions in the read-only setting~\mbox{\cite{DBLP:conf/cpm/BathieCS24,DBLP:conf/esa/KociumakaSV14}}, where one is given constant-time random access to the input and does not need to account for the space required to store it.

This suggests that one should seek a computational model (i) relevant to applications in which we need to process long sequences of incoming
symbols, and (ii) powerful enough to allow for non-trivial space-efficient solutions.
A promising option is the \emph{asymmetric streaming} model proposed by Saks and Seshadhri~\cite{doi:10.1137/1.9781611973105.122} (see also the work of Andoni, Krauthgamer, and Onak~\cite{asymED}). 
In this model, the input typically consists of a short read-only string that we are allowed to preprocess and a long streaming string that arrives symbol by symbol.
The algorithm must only account for any extra space used while processing $T$, that is, the space required to store the read-only string is not counted toward the space complexity. 
So far, existing results in this model consider edit distance and longest common subsequence computation~\cite{doi:10.1137/1.9781611973105.122,7354391,li_et_al:LIPIcs.FSTTCS.2021.27,cheng_et_al:LIPIcs.ICALP.2021.54}.
We thus ask:

\begin{center}
\textit{Which other fundamental string processing problems admit\\ efficient solutions in the asymmetric streaming model?}
\end{center}

Towards answering this question, we introduce a novel tool for string algorithms in the asymmetric streaming model:
the \emph{suffix random access data structure}.
We demonstrate that this data structure serves as a unifying primitive for asymmetric streaming string processing.
In particular, we show that:
\begin{enumerate}[label=(\alph*)]
\item It enables a generic reduction that transforms online read-only algorithms to asymmetric streaming algorithms for a wide class of pattern matching problems.
In particular, this reduction yields the first \emph{deterministic online} algorithms for exact and approximate pattern matching (under both the Hamming and edit distances)
with strongly sublinear space usage on top of read-only access to $P$ (without random access to $T$).
\item It leads to the first efficient asymmetric streaming algorithm for relative Lempel--Ziv factorization~\cite{10.1007/978-3-642-16321-0_20}, a popular compression scheme (especially for repetitive text collections).
\end{enumerate}
This allows deterministically processing a single reference string (for pattern matching or compression) and $d$ concurrently arriving text streams in $o(md)$ space, which was previously not possible.

\subsection{Suffix Random Access Data Structures}
In asymmetric streaming, we have read-only access to a \emph{reference} string $R$ of length $m$.
Then, a \emph{text} string $T$ of length $n$ arrives as a stream. 
Ideally, we would like to provide access to the last $\Theta(m)$ symbols of the text: this would immediately facilitate the design of asymmetric streaming
algorithms for an abundance of problems.
For example, if we want to locate exact occurrences of the reference in the text, then we indeed only need to access the last $m$ symbols to determine if $R$ occurs as a suffix of (the seen prefix of) $T$.
However, maintaining $m$ arbitrary symbols is impossible in $o(m)$ space.
To overcome this, we observe that, in many
applications, it suffices to have access to the longest suffix of the text that is ``similar'' to the reference string.
A trivial example is the case in which the alphabets of the reference and the text are disjoint: for exact pattern matching, it would then suffice to store the empty suffix.  

Formally, we introduce the notion of a \emph{$(k - 1)$-error suffix random access data structure}, parametrized by an integer $k\geq 1$, that supports random access to a suffix of~$T$ at least as long as the longest suffix of $T$
that can be decomposed into $k - 1$ single symbols and $k$ substrings of~$R$.

\paragraph*{Upper bounds for suffix random access.}
As a warm-up, we use read-only constant-space pattern matching~\cite{CROCHEMORE199233} to obtain a simple $0$-error suffix random access data structure that uses $\cO(\tau)$ space, has update time\footnote{Throughout this work, we use \emph{update time} for the time spent per arriving symbol of $T$ and \emph{query time} for the time required for randomly accessing symbols in the supported suffix of the text.} $\cO(m/\tau)$, and has constant query time (\cref{lm:random-warmup}).
Note that the product of the space and the update time is linear in $m$. This raises a natural question about $0$-error suffix random access data structures:

\begin{center}
\textit{Is there a data structure with sublinear product of space and update time\\that has (near-)constant query time?}
\end{center}

We answer this question affirmatively: for every constant $\epsilon>0$ and for every positive integer $\tau \geq m^{2/5+\epsilon}$, we design a data structure with space usage $\cOtilde(\tau)$ and a strongly sublinear product of space and update time.
Notably, for $\tau = \floor{\sqrt{m}}$, we achieve near-constant update time, and hence the product is $\tO(\sqrt{m})$.
Our main result for maintaining a suffix random access data structure in asymmetric streaming is as follows:

\def\versionofstreamingkerror{simplified}
\def\contentofstreamingkerror{%
There is an algorithm that, for any reference $R \in \Sigma^m$ and positive integer parameters~$\tau$ and $k = \tO(\tau)$, maintains a $(k - 1)$-error streaming suffix random access data structure with space complexity $\tO(\tau)$, worst-case update time $\tO(1 + k^3m^{3} / \tau^6)$, and query time $\cO(1 + \log k)$. The preprocessing succeeds without error in $\cO(\poly(m))$ expected time.
}

\begin{restatable*}[\versionofstreamingkerror]{theorem}{streamingkerror}
\label{cor:streaming_kerror}%
\contentofstreamingkerror
\end{restatable*}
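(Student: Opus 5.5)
The plan is to reduce the $(k-1)$-error case to $k$ parallel instances of a $0$-error structure. The $0$-error structure itself is built from two ingredients: a streaming-constructible, locally sparse synchronizing set, and the Fiat--Naor function inversion data structure.

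\textbf{The $0$-error case.} Let $S$ be the longest suffix of the current text that occurs in $R$. We represent $S$ by a single occurrence $R[i\dd j)$ and support access by offset into $R$. The difficulty is updating $(i,j)$ when $T[t]$ arrives: $S\cdot T[t]$ may fail to occur at position $i$ while still occurring elsewhere in $R$. In that case we must find another occurrence of a suffix of $S\cdot T[t]$, and we cannot afford to scan $R$.

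\begin{enumerate}
\item \emph{Sampling.} We maintain a sample of positions of the current suffix given by a $\tau$-synchronizing-set variant. This variant has the local sparsity property: every window of length $\Theta(m/\tau)$ contains $\tO(1)$ positions, and the set is computable in a streaming fashion from a fingerprint/KR-hash of the length-$\Theta(m/\tau)$ contexts. For each sampled position we store a Karp--Rabin fingerprint of the text from that position onward. This is where the $\tO(\tau)$ space goes.
\item \emph{Relocation via inversion.} When the current occurrence cannot be extended, we need an occurrence in $R$ of a long suffix that starts at the leftmost suitable synchronizing position. Consistency of synchronizing sets means such occurrences in $R$ start at synchronizing positions of $R$. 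We therefore define a function $f$ on synchronizing positions of $R$, mapping each to a fingerprint of a substring of length growing geometrically, truncated to the domain size. Recovering a position from a fingerprint is exactly function inversion.
\item \emph{Parameters.} Fiat--Naor gives inversion of an arbitrary function on a domain of size $N$ with space $s$ and time $t$ satisfying $s^3t=\tO(N^3)$. The domain has size $N=\tO(m)$ (or fewer after sampling), and we take space $\tau$. This gives $t=\tO(m^3/\tau^3)$ per inversion. Fingerprint comparisons let us restrict to roughly $m/\tau$-spaced candidates, which should bring the effective domain to about $m/\tau$ of the relevant structure. Combined with amortizing one inversion over the $\Theta(m/\tau)$-ish symbols available before the sample becomes stale, we aim for worst-case $\tO(1+m^3/\tau^6)$ per symbol.
\item \emph{De-amortization.} We run the expensive inversion in the background over a window of arriving symbols, buffering $\tO(\tau)$ pending symbols. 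This keeps the update cost worst-case.
\item \emph{Preprocessing.} We retry random hash seeds until the fingerprints are collision-free on all substrings of $R$ and the Fiat--Naor structure inverts correctly on every image. This makes queries error-free at expected $\poly(m)$ preprocessing cost.
\end{enumerate}

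\textbf{Lifting to $k-1$ errors.} Any suffix decomposed into $k$ substrings of $R$ separated by $k-1$ single symbols is handled greedily. We maintain the longest $0$-error suffix, and when it breaks we freeze it, record the mismatching symbol, and start a fresh $0$-error suffix. Keeping the last $k$ frozen pieces gives a suffix at least as long as required, by the standard greedy exchange argument. Each piece is stored as an interval of $R$. Access uses a predecessor search over the $k$ piece boundaries in $\cO(1+\log k)$ time.

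We budget space $\tau/k$ per piece, which keeps total space $\tO(\tau)$ since $k=\tO(\tau)$. Substituting $\tau/k$ into $m^3/\tau^6$ gives $k^6m^3/\tau^6$ per piece, which is too much. Instead, the design shares one inversion structure of space $\tO(\tau)$, and only the active piece needs updating, since frozen pieces are static. The factor $k^3$ then arises from the per-piece sample budget $\tau/k$ entering cubically through the relocation cost. So we aim for $\tO(1+k^3m^3/\tau^6)$.

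\textbf{Main obstacle.} The main obstacle is the relocation step. We must guarantee that a single inversion query over synchronizing positions of $R$ with a geometrically chosen length suffices to restore a maximal occurrence, and that the locally sparse synchronizing set can be maintained in streaming with polylogarithmic work. I would first try defining $f$ on pairs (synchronizing position, length class), so that the domain stays $\tO(m)$. Correctness would then follow from consistency of synchronizing sets applied to the suffix starting at the first sampled position within the periodic-free part. Highly periodic regions would be handled separately by tracking the period arithmetically.
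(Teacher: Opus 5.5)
There is a genuine gap, and it sits at the step you yourself call the main obstacle: relocation cannot work with the representation you chose.

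\textbf{Why relocation fails.} You store the supported suffix as one occurrence $R[i\dd j)$. When it stops extending, you hope to recover an occurrence of an arbitrary-length suffix of $S\cdot T[t]$ by one inversion of fingerprints of windows anchored at a synchronizing position. An inverter returns an \emph{arbitrary} preimage, i.e., some synchronizing position of $R$ where the anchored window occurs. Nothing ties that position to the symbols of the suffix left of the anchor or beyond the window, and different occurrences of the same window extend differently. Indexing $f$ by (synchronizing position, geometric length class) does not help, since each class still fixes only a window. Using exact lengths inflates the domain by a factor up to $m$. In effect you need full text indexing of arbitrary-length patterns.

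\textbf{The two missing ideas.} First, the paper relaxes the primitive to \emph{core-matching}. For $Q[1\dd 3q]$ it only locates the central $Q(q\dd 2q]$, and may fail if $Q$ does not occur in $R$. If $j'\le q+1$ is the first synchronizing position of $Q$, the window $Q[j'\dd j'+2q)$ contains $Q(q\dd 2q]$. Hence \emph{any} verified preimage under $f$ is a correct answer, where $f$ maps (block, rank of a synchronizing position in a length-$q$ block of $R$) to the fingerprint of the length-$2q$ substring there, with domain $\tO(m/q)$ by sparsity. Periodic $Q$ is handled by a separate minimum-fingerprint function (\cref{lem:reduce_core_matching_to_fi}).

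Second, the paper never maintains one global occurrence. The offline $0$-error structure consists of verbatim copies of $\Delta$ boundary symbols plus $\cO(\log m)$ \emph{independently} located pieces $T(2^\ell\Delta\dd 2^{\ell+1}\Delta]$. Each piece comes from one core-matching query on $T[1\dd 3\cdot 2^\ell\Delta]$; the construction is mirrored for the right part, and the support-length is found by binary search (\cref{lem:reduce_offline_ra_to_core_matching}). Streaming then comes from the level-wise block decomposition, where the structures of two half-blocks serve accesses while the parent's structure is built (\cref{lem:random_access_reduce_streaming_to_offline}). A $\tO(\tau)$ buffer only covers work that finishes within $\tau$ arrivals.

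\textbf{Cost accounting.} Your accounting does not yield the bound, and the factor $k^3$ is asserted rather than derived.
\begin{itemize}
\item Sparsity is needed for windows whose length equals the synchronizing parameter $q$, not $m/\tau$.
\item Each evaluation of $f$ costs $\Theta(q)$ time, which you ignore.
\item You can amortize only over $\cO(q)$ arrivals, since those symbols must be buffered, not over $\Theta(m/\tau)$.
\end{itemize}
The paper takes the smallest granularity $\Delta=\ceil{\tau/k}$. The domain is then $\tO(m/\Delta)$. Fiat--Naor with space $\tau$ performs $\tO(m^3/(\Delta^3\tau^3))$ evaluations of cost $\tO(\Delta)$ each, charged to the $\Delta$ symbols of the query. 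This gives $\tO(1+k^3m^3/\tau^6)$ per symbol. The $\tO(\tau)$-size inversion structures are shared, and only the $\cO(k\log m)$ chained $0$-error structures, each of size $\tO(\Delta)$, are replicated (\cref{lem:reduce_ra_to_fi}). In your design a frozen piece is a single interval of $R$, so a ``per-piece budget $\tau/k$'' plays no role and cannot be the source of $k^3$.

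\textbf{Lifting to $k-1$ errors.} Your greedy scheme is sound only if the active piece restarts after the last mismatch. In that case a greedy-stays-ahead argument shows that $k$ frozen segments plus the active piece suffice. But this is a different primitive from the longest-suffix structure you designed: it requires tracking an occurrence of a growing string, which is again the same indexing problem. The paper instead chains at most $2k+3$ offline $0$-error structures built for prefixes $T[1\dd n_i)$ (\cref{lem:random_access_reduce_kerror_to_noerror}).
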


\newcommand{\minipar}[1]{\noindent\emph{\underline{#1}}\ }
\minipar{A pattern-matching task.} A natural approach to designing a suffix random access data structure would be to
construct a text indexing data structure for $R$, which then allows locating an occurrence of any given query string in $R$.
By periodically locating recently received substrings of~$T$ within $R$, we can represent the maintained suffix of~$T$ as a list of fragments of $R$.

A textbook index is the \emph{suffix array} of~$R$, which requires $\cO(m)$ space and allows searching for any length-$\ell$
string in $\cO(\ell+\log m)$ time.
To decrease space usage, one can store the \emph{sparse suffix array} (see, e.g., \cite{DBLP:conf/latin/AyadLPV24} and references therein), 
which can efficiently locate every string of length at least $\ell$ and uses only roughly $\cO(m/\ell)$ space, for a parameter $\ell$ of our choice. 
However, this requires fixing the value of $\ell$, and it provides a small-space data
structure only when $\ell$ is large.

Interestingly, preprocessing a string of length $m$ in small space for detecting an occurrence
of a query string of fixed length $\ell = \Theta(\log m)$ has been studied in another context under the name of \emph{systematic substring 
search}~\cite{DBLP:journals/tcs/GalM07,DBLP:conf/tcc/Corrigan-GibbsK19}. In particular, Corrigan-Gibbs and
Kogan~\cite{DBLP:conf/tcc/Corrigan-GibbsK19} showed that systematic substring search is equivalent to a well-known problem in cryptography: function inversion. The idea of their reduction is to define a function that maps a position of~$R$ to a length-$\ell$ substring starting at that position.
By combining this reduction with the seminal data structure for function inversion devised by Fiat and Naor~\cite{doi:10.1137/S0097539795280512},
they achieve a trade-off $\tau^{3}q=\tO(m^{3})$, where~$\tau$ and~$q$ respectively denote the index size and the query time.
However, we would like to emphasize that systematic substring search (as well as the reduction to function inversion from \cite{DBLP:conf/tcc/Corrigan-GibbsK19}) only works for $\ell = \Theta(\log m)$, when the substrings are essentially integers from a polynomial domain.

The construction of a random access data structure may require computing occurrences of substrings of $T$ of arbitrary lengths (in $[1 \dd m]$) in $R$.
Hence, both existing approaches described above are inadequate for the task at hand.

\medskip

\minipar{Our approach.}
We obtain \cref{cor:streaming_kerror} via a reduction to function inversion.
From the discussion above, one might assume that the reduction is an adaptation of the one used for systematic substring search in \cite{DBLP:conf/tcc/Corrigan-GibbsK19}.
While our problem bears a conceptual resemblance to systematic substring search, the definition of systematic substring search is fairly restrictive, and their straightforward reduction is not applicable in our case. 
To nonetheless use function inversion, we present two novel and highly non-trivial ideas of independent interest:
\begin{enumerate}
\item We design a data structure that can be seen as a relaxation of text indexing on $R$. Given a query string $Q[1\dd 3q]$, it either locates an occurrence of the central fragment $Q(q\dd 2q]$ in $R$, or reports that the entire $Q$ has no occurrence in $R$; we call this query a \emph{core-matching query}.
\item We enrich \emph{string synchronizing sets}~\cite{DBLP:conf/stoc/KempaK19}, a tool for locally consistent sampling of substrings with numerous applications in stringology~(e.g.,~\cite{DBLP:conf/spire/Ellert23,DBLP:conf/focs/KempaK24,DBLP:conf/esa/Charalampopoulos21,DBLP:conf/soda/KempaK23,DBLP:journals/siamcomp/KociumakaRRW24,DBLP:conf/cpm/Charalampopoulos22,DBLP:conf/mfcs/Charalampopoulos25,DBLP:conf/spire/RadoszewskiZ24}), with a local sparsity condition. We then show a space-efficient algorithm for constructing such sets.\footnote{A different locally sparse variant of string synchronizing set has been proposed by Kempa and Kociumaka in \cite{DBLP:conf/stoc/KempaK22}; however the authors of this work do not provide a space-efficient construction algorithm.}
\end{enumerate}

By combining these ideas, we ultimately reduce the maintenance of a suffix random access data structure to inverting a function over a domain of size $\tO(m / \tau)$.
We next provide further details on each of these two ideas and clarify the challenges they address.

Our data structure for core-matching queries is of size $\tO(\tau)$ and supports queries for strings of the form $Q[1\dd 3q]$ with $q = 2^j \cdot \tau$, for integer parameters $j \geq 0$ and $\tau \geq 1$.
It is based on the following strategy.
Consider covering the reference $R$ with blocks of length $2q$, starting at positions that are multiples of $q$.
Then, every occurrence of $Q[1\dd 3q]$ in $R$ fully contains one of the blocks, and this block, in turn, contains the central fragment $Q(q\dd 2q]$ of the occurrence.
Therefore, to answer the query, it suffices to (attempt to) locate every length-$2q$ substring of $Q$ among the $\cO(m/q)$ blocks.
Hence, rather than considering every position of the reference as a potential occurrence, we only have to consider $\cO(m/q)$ positions.
In our core-matching solution using function inversion, this translates to reducing the domain of the function to $\cO(m/ q) \subseteq \cO(m / \tau)$.

However, by designing a core-matching data structure this way, we inadvertently create another challenge: to answer a query, in the worst case, we have to try to locate \emph{every} length-$2q$ substring of the query string among the blocks.
To avoid this, we use a variant of $q$-synchronizing sets~\cite{DBLP:conf/stoc/KempaK19}. 
Introduced by Kempa and Kociumaka~\cite{DBLP:conf/stoc/KempaK19}, they provide a consistent sampling mechanism for length-$2q$ substrings of $R$ and $Q$ such that, broadly speaking, within every length-$3q$ substring of $R$ and $Q$, at least one length-$2q$ substring is sampled.
Crucially, the total number of sampled substrings in $R$ is $\cO(m / \tau)$.
Hence, instead of the evenly-spaced blocks of length $2q$, we can use the sampled substrings.
Thus, rather than having to locate all length-$2q$ substrings of $Q$, we only have to locate the first sampled length-$2q$ substring of $Q$, accelerating queries by a factor~$q$.

To actually implement this idea, we crucially need to control the local sparsity of the sampling scheme, i.e., we not only need the total number of sampled substrings in $R$ to be $\cO(m / \tau)$, but we also need the number of samples within any length-$\cO(q)$ substring to be $\tO(1)$. (Otherwise, in the reduction to function inversion, there will be a blow-up in the size of the domain.)
Thus, we design a method for sampling a synchronizing set with this local sparsity property that, in addition, can be implemented efficiently in the streaming model, using $\tO(q)$ space.

\paragraph{Lower bounds for suffix random access.} We complement our main result with a reverse reduction from the function
inversion problem to the problem of maintaining a suffix random access data structure. This implies each of the following results (for a suffix random access data structure whose query time does not exceed its update time):%
\begin{enumerate}
\item An $\tO(\tau)$-space suffix random access data structure with update time faster than ours by a factor $\gg \tau$ would imply a breakthrough in function inversion, improving over Fiat and Naor's solution~\cite{doi:10.1137/S0097539795280512}. This gives a conditional update time lower bound of $\Omega(m^3/\tau^7)$ (\cref{lem:ra_offline_lower_cond}). 
\item An $\tO(\tau)$-space suffix random access data structure cannot have update time better than $\tO(m/\tau^{3})$, up to subpolynomial factors (\cref{lem:ra_streaming_lower}). We stress that this is an \emph{unconditional lower bound}.
\item An unconditional lower bound higher than ours by a $\tau$ factor would imply a breakthrough in the long-standing unconditional lower bound for function inversion~\cite{DBLP:conf/stoc/Yao90,DBLP:journals/eccc/DeTT09}. Hence, there is little hope of turning our conditional lower bound into an unconditional one (\cref{lem:nobetterlowercond}).
\end{enumerate}%
See \cref{fig:tradeoffs} for an illustration of our upper and lower bounds for suffix random access data structures.

\definecolor{cbBlue}{HTML}{0072B2}
\definecolor{cbOrange}{HTML}{E69F00}
\definecolor{cbSkyBlue}{HTML}{56B4E9}
\definecolor{cbGreen}{HTML}{009E73}
\definecolor{cbYellow}{HTML}{F0E442}
\definecolor{cbRed}{HTML}{D55E00}
\definecolor{cbPink}{HTML}{CC79A7}
\definecolor{cbBlack}{HTML}{000000}

\renewcommand{\floatpagefraction}{.8}

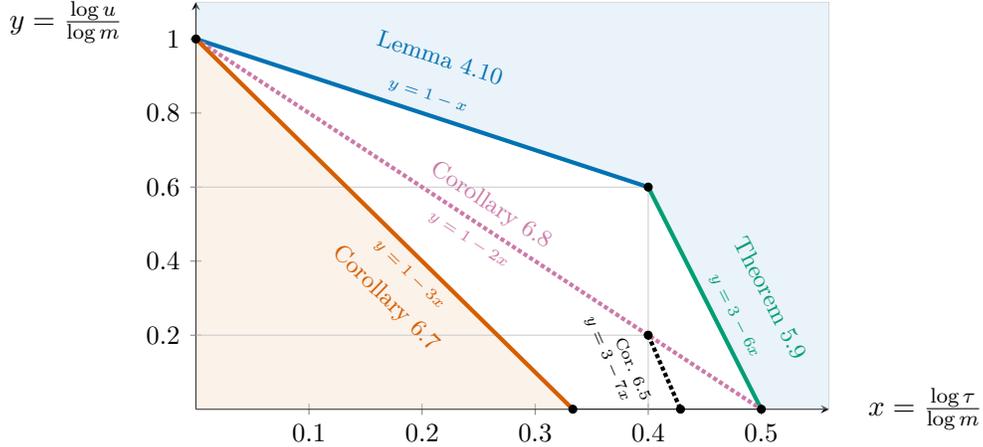
\begin{figure}
\centering
\begingroup
\NoHyper
\begin{tikzpicture}
    \begin{axis}[
        domain=0:0.5,
        ymin=0, ymax=1.1,
        xmin=0, xmax=0.56,
        axis lines=middle,
        xlabel={{$\quad x = \frac{\log \tau}{\log m}$}},
        xlabel style={
            at={(axis description cs:1,0)},
            anchor=west
        },
        ylabel={{$y = \frac{\log u}{\log m}$}},
		ylabel style={
		    at={(axis description cs:-0.1,.95)},
		    anchor=east
		},
    	xticklabel style={font=\small},
		yticklabel style={font=\small},
        samples=200,
        width=10cm,
        height=7cm,
        every axis plot/.append style={mark=none}
    ]
    
\fill[cbRed!20, opacity=0.4] 
        (axis cs:0,0) -- (axis cs:0,1) -- (axis cs:0.33333,0) -- cycle;
        
\fill[cbBlue!20, opacity=0.4] 
        (axis cs:0,1.1) -- (axis cs:0,1) -- (axis cs:0.4,0.6) -- (axis cs:0.5,0) -- (axis cs:0.56,0) -- (axis cs:0.56,1.1) -- cycle;
        
\draw[black!20] 
        (axis cs:0.4,0) -- (axis cs:0.4,0.6) -- (axis cs:0,0.6);
        
\draw[black!20] 
        (axis cs:0,1.1) -- (axis cs:0.56,1.1) -- (axis cs:0.56,0);
        
\draw[black!20] 
        (axis cs:0.4,0.2) -- (axis cs:0,0.2);
        
        \addplot[cbBlue, ultra thick, domain=0:0.4] {1 - x}
    	node[pos=0.5, sloped, above, align=center] {\footnotesize{\cref{lm:random-warmup}}\\\tiny{$y = 1 - x$}};

\addplot[cbGreen, ultra thick, domain=0.4:0.5] {3 - 6*x}
    node[pos=0.3, sloped, above right, align=center] {\footnotesize{\cref{cor:streaming_kerror}}\\\tiny{$y = 3 - 6x$}};

\addplot[cbBlack, ultra thick, densely dotted, domain=0.4:0.5] {3 - 7*x}
    node[pos=0.0875, sloped, below=.33em, align=center] {\shortstack[c]{\tiny{Cor.~6.5}\\\tiny{$y = 3 - 7x\quad\quad$}}};

\addplot[cbPink, ultra thick, densely dotted, domain=0:0.5] {1 - 2*x}
    node[pos=0.5, sloped, above, align=center] {\footnotesize{\cref{lem:nobetterlowercond}}}
    node[pos=0.5, sloped, below, align=center] {\tiny{$y = 1 - 2x$}};

\addplot[cbRed, ultra thick] {1 - 3*x}
    node[pos=0.4, sloped, below, align=center] {\tiny{$y = 1 - 3x$}\\\footnotesize{\cref{lem:ra_streaming_lower}}};
    
    \addplot[mark=*, only marks, mark size=1.5pt, black] coordinates {(0, 1)};
    \addplot[mark=*, only marks, mark size=1.5pt, black] coordinates {(0.5, 0)};
    \addplot[mark=*, only marks, mark size=1.5pt, black] coordinates {(0.4, 0.6)};
    \addplot[mark=*, only marks, mark size=1.5pt, black] coordinates {(0.4, 0.2)};
    \addplot[mark=*, only marks, mark size=1.5pt, black] coordinates {(0.333333333, 0)};
    \addplot[mark=*, only marks, mark size=1.5pt, black] coordinates {(.42857143, 0)};
    \end{axis}
\end{tikzpicture}
\endNoHyper
\endgroup
\caption{Trade-offs for suffix random access data structures in a doubly-logarithmic scale.
Each line segment corresponds to a relation of the exponents of the space $\tau$ and the update time $u$ of such a data structure as functions of $m$, that is, $\tau = m^x$ and $u = m^y$.
We consider only data structures where the query time is at most equal to the update time.
The {\color{cbBlue}blue} line segment corresponds to the upper bound of the warm-up data structure (see \cref{lm:random-warmup}).
The {\color{cbGreen}green} line segment corresponds to the upper bound of our main suffix access random data structure (see \cref{cor:streaming_kerror}).
The {\color{cbRed}orange} line segment corresponds to our unconditional lower bound (see \cref{lem:ra_streaming_lower}).
Any improvement of the lower bound beyond the {\color{cbPink}pink} dotted line segment would imply a breakthrough in the long-standing unconditional lower bound for function inversion \cite{DBLP:conf/stoc/Yao90,DBLP:journals/eccc/DeTT09} (see \cref{lem:nobetterlowercond}).
Finally, any improvement in the upper bound beyond the {\color{cbBlack}black} dotted line segment would yield an improvement over the Fiat--Naor function inversion data structure~\cite{doi:10.1137/S0097539795280512} (see \cref{lem:ra_offline_lower_cond}).}
\label{fig:tradeoffs}
\end{figure}

\subsection{Applications} 
We demonstrate that our suffix random data structure is a powerful tool by using it to obtain asymmetric streaming algorithms for pattern matching and compression.

\paragraph*{Generic pattern matching.}
We show that our suffix random data structure allows a unifying reduction for a large family of pattern matching problems.
Broadly speaking, if such a problem has an online algorithm in the read-only setting, then our data structure immediately yields an algorithm with similar guarantees in the asymmetric streaming model. 
In a pattern matching problem, we are given a pattern $P[1\dd m]$ and a text $T[1\dd n]$, and we are looking for (the ending positions of) fragments of~$T$, called \emph{occurrences}, that \emph{match} $P$ under some matching relation.
We introduce the following class of \emph{generic pattern matching} problems: 

\begin{restatable*}[Generic pattern matching]{definition}{defgenericpatternmatching}
For an integer $z \geq 0$, a pattern matching problem is \emph{$z$-generic} if it satisfies the following property. For a pattern $P[1\dd m]$ and a text $T[1\dd n]$, every occurrence of $P$ in $T$ is a fragment of $T$ that equals a concatenation of at most $z + 1$ substrings of~$P$ and at most $z$ symbols.
\end{restatable*}

The generic pattern matching framework captures several by now classical pattern matching problems.
To name just a few examples,
standard pattern matching, circular pattern matching~\cite{10.1007/978-3-319-02309-0_59,FREDRIKSSON2009579,10.1093/comjnl/bxt023,Iliopoulos2017,10.1007/978-3-642-25591-5_69,10.1007/978-3-642-38905-4_15,DBLP:journals/jcss/Charalampopoulos21,DBLP:conf/esa/Charalampopoulos22,keditCPM,DBLP:conf/cpm/BathieCS24}, a variant of elastic-degenerate string matching~\cite{grossi_et_al:LIPIcs.CPM.2017.9,aoyama_et_al:LIPIcs.CPM.2018.9,doi:10.1137/20M1368033,BERNARDINI2020109,DBLP:conf/biostec/ProchazkaCK021,ILIOPOULOS2021104616,DBLP:journals/mst/BernardiniGPSSZ24,pissis_et_al:LIPIcs.CPM.2025.28,gawrychowski_et_al:LIPIcs.CPM.2025.29,GABORY2025105296}, and string matching with variable-length gaps~\cite{Lee03,DBLP:journals/jcb/MorgantePVZ05,DBLP:conf/cocoon/RahmanILMS06,DBLP:journals/ir/FredrikssonG08,DBLP:conf/lata/FredrikssonG09,10.5555/643002.643006,BILLE201225,AMIR201534,
10.1007/978-3-642-20662-7_7,DBLP:journals/algorithmica/AmirKLPPS19} are all $\cO(1)$-generic.
Additionally, their $k$-approximate variants under each of the Hamming distance and the edit distance are $\cO(k)$-generic pattern matching problems.
In \cref{th:blackbox_ro_to_as}, we show that a suffix random access data structure can be used to transform any online read-only algorithm for a $k$-generic pattern matching problem into an asymmetric streaming algorithm at a small overhead cost. In particular, if one allows $\tau = \tO(\sqrt{mk})$ extra space, then an online read-only algorithm implies an asymmetric streaming algorithm without asymptotic time overhead. Despite its generality, our reduction is conceptually simple: we maintain a suffix random access structure and use it to simulate the online read-only algorithm through symbol queries. The definition of the class of generic pattern matching problems guarantees that all necessary symbols are available. 

\paragraph*{Relative Lempel--Ziv factorization.}
Relative Lempel--Ziv factorization is a compression scheme in which a text $T$ of length $n$ is greedily parsed into lengthwise maximal substrings of a reference~$R$ of length $m$. This method is especially effective for compressing collections of strings that are highly similar to a reference, and it has inspired a small-space text index for databases containing full genome sequences of individuals of the same species~\cite{10.1093/bioinformatics/btr505,10.1007/978-3-642-16321-0_20,DBLP:journals/bioinformatics/DeorowiczDG13,DBLP:journals/bmcgenomics/ValenzuelaNVPM18,DBLP:journals/bioinformatics/NavarroSMG19}.
Nevertheless, no small-space algorithm for computing this factorization is known. 

We address this gap by leveraging our suffix random access data structure to design the first asymmetric streaming algorithm for relative Lempel--Ziv factorization (\cref{thm:approx_rlz_ub}). If the size of the Lempel--Ziv factorization of $T$ relative to $R$ is~$z$, then our algorithm outputs a factorization of size $(1+\eps) \cdot z$, where each factor is either a symbol or a substring of $R$. The algorithm uses $\tO(\tau)$ space and $\tO(\eps^{-1} z (m/\tau)^{5/3})$ time (ignoring the time needed for maintaining the suffix random access data structure). 
The algorithm starts by constructing a factorization of size $\cO(\log^2 m) \cdot z$ using core-matching queries.
The factorization is then refined using ideas similar to those in a work of Fischer, Gagie, Gawrychowski, and Kociumaka~\cite{DBLP:conf/esa/0001GGK15}, who showed an efficient read-only algorithm for constructing the \emph{standard} Lempel--Ziv factorization of a text in small space. 
For both steps, we exploit random access on the relevant suffix of $T$.

On the lower bound side, we show that any $\tau$-space $\tO(1)$-approximation asymmetric streaming algorithm for relative Lempel--Ziv factorization must use $\tilde\Omega(z (m/\tau))$ time (\cref{cor:rlz_lb}). We obtain this lower bound via a reduction from (a version of) the set-complementation problem, for which we show a new lower bound using $R$-way branching programs.

\subsection{Discussion and Open Problems}
The main open problem that stems from our work is closing the gap between the obtained upper and lower bounds for suffix random access data structures.
Let us briefly discuss some of the main obstacles toward this goal.

The main challenge in improving our lower bound from \cref{lem:ra_streaming_lower} to match the $\tilde{\Omega}(m/\tau)$ space-time product of \cref{lem:nobetterlowercond} is the following difference in the nature of the two considered problems: a function inversion data structure becomes active just after construction, while a random access data structure can essentially afford to wait for $\tau$ symbols (as it can simply store them explicitly).
In our lower bound construction (see \cref{sec:overview}), we naturally encode a function $f$ into the reference and use the encoding of the image $f(x)$ of a domain element $x$ as the text.
If this text is shorter than $\tau$, then a (naive) random access data structure can explicitly store a copy.
Therefore, we enforce the text length to be significantly greater than $\tau$ by encoding $f(x)$ in $\gg \tau$ symbols, for each $x$ in the domain.
Consequently, the size of the domain is limited by $\cO(m / \tau)$.
While we can indeed show that processing the encoding of $f(x)$ as the text allows us to invert $f$ at position $f(x)$,
we only obtain a lower bound $\Omega(m/\tau^3)$ on the update time: we have to divide the $\Omega(m/\tau^2)$ lower bound for inverting a function with domain of size $m/\tau$ by the length of the text.

The main reason for the extra $\tau$ factor in the upper bound of \cref{cor:streaming_kerror} compared to \cref{lem:ra_offline_lower_cond} is that the function we invert (Karp--Rabin fingerprints) cannot be evaluated in constant time. Essentially, we can afford to wait for at most $\tau$ symbols to arrive, but then we intuitively have to perform a task of a pattern-matching flavor for a length-$\tau$ substring of $T$.
Using string synchronizing sets, we reduce this task to inverting a function over a domain of size roughly $m / \tau$; this can be done in $\tO(m^3/\tau^6)$ time using the Fiat--Naor function inversion data structure~\cite{doi:10.1137/S0097539795280512}.
We only need to invert once every $\cO(\tau)$ symbols arrive, so we need to pay for $\tO(m^3 / \tau^7)$ function evaluations per symbol (as part of the inversions).
However, evaluating Karp--Rabin fingerprints of fragments of $T$ requires $\Omega(\tau)$ time (given our space constraints).
We believe that overcoming this challenge will require a fundamentally new idea.

\section{Technical Overview of the Suffix Random Access Data Structure}
\label{sec:overview}
In this section, we give a technical overview of the main results of this work related to the suffix random access data structure. (Below, we often refer to it as random access data structure for brevity.) Intuitively, we are given a short read-only string~$R$ and a long streaming string $T$, and would like to maintain random access to the longest suffix of~$T$ that is close to a substring of~$R$ using little extra space. We study upper and lower bounds for random access data structures via bi-directional reductions to function inversion. 

We assume the word RAM model with words of width $w$, i.e., we can perform basic arithmetic operations on integers from $[0\dd 2^{\cO(w)}]$ in constant time. We further assume that the input strings are over a finite integer alphabet $\Sigma = [1\dd \sigma]$ with $\sigma \in n^{\cO(1)}$, and make the common assumption that $\log n = \cO(w)$. We develop classic offline read-only algorithms and asymmetric streaming algorithms, introduced by Andoni et al.~\cite{asymED} and Saks and Seshadhri~\cite{doi:10.1137/1.9781611973105.122}. In the former setting, one can access symbols of the input in constant time, and does not account for the space required to store the input in the space complexity of an algorithm. In the asymmetric streaming setting, the input consists of a read-only string $R \in \Sigma^\ast$, called a \emph{reference}, and a streaming string ${T \in \Sigma^\ast}$. Here, we receive~$T$ symbol by symbol and must account for the space required to store any information about~$T$ in the space complexity of an algorithm (whereas the space required to store~$R$ is not accounted for). 
Unless explicitly said otherwise, we measure the space complexity in words (including the working space used while answering queries). 


\subsection{Upper Bounds for Random Access Data Structures}\label{sec:tov_ub_ra}

We formalize the problem of maintaining a random access data structure as follows.

\begin{restatable}[$k$-Error Random Access Data Structure]{definition}{defdatastructure}\label{def:ra_data_structure}
For an integer $k \geq 0$, a string $R \in \Sigma^m$ (called the \emph{reference}), and a string~$T \in \Sigma^n$ (called the \emph{text}), a \emph{$k$-error random access data structure}
\begin{itemize}
	\item explicitly stores its \emph{support-length} $h \in [0\dd n]$, which satisfies the following property: if $h < n$, then $T[n - h \dd n]$ cannot be factorized into~$k$ symbols and $k + 1$ substrings of~$R$, and
	\item can return $T[i]$ upon receiving a \emph{random access query} $i \in (n-h\dd n]$, possibly accessing the reference $R$, but not accessing the text~$T$.
\end{itemize}
\end{restatable}


We introduce different algorithms for constructing such data structures in the read-only model and in the asymmetric streaming model.
In both models, we assume that the data structure is defined by a given parameter $\tau$.
Also in both models, we have read-only access to the reference, and we may preprocess it into a data structure of size $\tO(\tau)$.
Then, in the read-only model, the text is given as a read-only string, and we have to construct the random access data structure using the precomputed information and random access on $R$ and $T$. We call this version of the data structure \emph{offline}. 
In the asymmetric streaming model, we instead receive the text as a stream. Whenever we receive some symbol $T[i]$, we must update the maintained data structure such that it is a random access data structure for the reference $R$ and a text $T[1\dd i]$. We call this version of the data structure \emph{streaming}.
In either model, the total space occupied at any time after preprocessing (apart from $R$) may not exceed $\tO(\tau)$.
To summarize, apart from the $\tO(\tau)$ space complexity of the data structure, we are interested in the preprocessing time (spent processing~$R$ before the first access to~$T$), the query time for random access queries, and either the construction time after preprocessing (for the read-only model) or the update time per arriving symbol (for the asymmetric streaming model).


\paragraph{Reduction to the $0$-error offline setting.} 
We start by showing that an efficient offline construction algorithm immediately implies an efficient streaming algorithm. This is done by cutting~$T$ into blocks of geometrically increasing sizes on $\cO(\log m)$ levels, and then constructing the offline data structure for each block. The cost of running the offline construction algorithm can be de-amortized over the arriving symbols.
This is similar to the black box offline to online reduction for approximate pattern matching, see \cite{DBLP:journals/iandc/CliffordEPP11}.
Unlike~\cite{DBLP:journals/iandc/CliffordEPP11}, we cannot afford $\cO(m)$ additional space, and we have to ensure that we already support random access to a block while constructing its data structure.
While constructing the offline data structure for a block of size $2^\ell$, we use the random access data structures already constructed for the two blocks of size $2^{\ell - 1}$ that make up the larger block. 
For the offline construction, we further observe that, instead of constructing a $k$-error data structure, we can chain together $\cO(k)$ $0$-error data structures. A random access query can then be answered by binary searching in $\cO(\log k)$ time for the data structure responsible for the query.

\begin{restatable*}{corollary}{lemreducekerrorstreamingtonoerroroffline}\label{lem:random_access_reduce_kerrorstreaming_to_noerroroffline}
    Assume that there is an offline $0$-error random access data structure with preprocessing time $\cO(p(m))$, construction time $\cO(n \cdot c(m))$, query time $\cO(q(m))$, and space complexity $\cO(s(m))$.
    For every positive integer $k$, there is a streaming $k$-error random access data structure with preprocessing time $\cO(p(m))$, worst-case update time $\cO(c(m) \cdot (q(m) + \log k) \cdot \log m)$, query time $\cO(q(m) +\log k)$, and space complexity $\cO(k \cdot s(m) \cdot \log m)$. 
\end{restatable*}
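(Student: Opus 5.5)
The proof proceeds in two stages: first reduce $k$ errors to $0$ errors in the offline setting by chaining, then make the offline structure streaming by a dyadic block decomposition whose rebuild cost is spread over arriving symbols.

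\emph{Offline $k$-error from $0$-error.} Given a read-only text $T[1\dd n]$, build a $k$-error structure greedily from the right. Run the offline $0$-error construction on $T$ to obtain support-length $h_1$. If $h_1<n$, the symbol $T[n-h_1]$ is an error symbol. Store it explicitly, and run the $0$-error construction again on $T[1\dd n-h_1-1]$, treating it as the text. Repeat this $k+1$ times, or until the whole text is covered.

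The combined support is the concatenation of the $k+1$ supports and the $k$ stored symbols. Each $0$-error support witnesses that the fragment one position longer is not a substring of $R$. So a standard greedy exchange argument shows that no suffix longer than the combined support can be factorized into $k$ symbols and $k+1$ substrings of $R$: any such factorization has its $j$-th factor from the right starting no further left than the greedy's, by induction on $j$.

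A query binary-searches the $\cO(k)$ stored boundaries in $\cO(\log k)$ time and then delegates to the relevant structure. The space is $\cO(k\cdot s(m))$. The total construction time is $\cO(n\cdot c(m))$, since the pieces are disjoint, provided the construction cost is linear in the length of the text it actually scans. I would state the offline construction this way, or cap each run at length $m+1$, since no support can exceed $m$.

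\emph{Streaming via dyadic blocks.} Partition the stream into aligned blocks of length $2^\ell$ for levels $\ell=0,\dots,L=\cO(\log m)$, with the top blocks of length about $m$. The invariant is that, for the current position $i$, we keep completed offline $k$-error structures for the $\cO(1)$ most recent complete blocks on each level. The structure is built on each block separately; when querying across blocks, we concatenate by applying the same chaining/greedy argument. Since any relevant suffix has length at most $(k+1)m+k$, $\cO(k)$ top-level blocks suffice.

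When a level-$\ell$ block $B=B_1B_2$ completes, start building its offline structure. This construction needs random access to $B$, which is provided by the already-built level-$(\ell-1)$ structures for $B_1$ and $B_2$. Each text access then costs $\cO(q(m)+\log k)$, making the total work for $B$ equal to $\cO(2^\ell\cdot c(m)\cdot(q(m)+\log k))$. Spread this work evenly over the next $2^\ell$ arriving symbols. During that time, answer queries about $B$ through its children, which is why two generations per level must be retained.

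Summing over the $\cO(\log m)$ levels gives worst-case update time $\cO(c(m)(q(m)+\log k)\log m)$. Keeping $\cO(1)$ blocks per level with $\cO(k\,s(m))$ space each gives $\cO(k\,s(m)\log m)$ space, and the query time is $\cO(q(m)+\log k)$ after locating the block.

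\emph{Main obstacle.} The hardest step is the subtle case where children's supports are only partial. Random access to $B$ through $B_1,B_2$ is available only on their supports, yet the offline construction on $B$ may ask for symbols outside them. I would handle this by running the offline construction only on the supported suffix of $B$. Correctness follows because a suffix of $T$ that is factorizable with at most $k$ errors and extends beyond a child's support would give a factorizable extension inside that child, contradicting the child's support property. Factors crossing the $B_1/B_2$ boundary need care. I would argue with the chained $0$-error pieces, where a substring of $R$ restricted to $B_2$ is still a substring of $R$, so the error count only decreases.
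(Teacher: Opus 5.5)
Your two-stage plan has the same architecture as the paper: \cref{lem:random_access_reduce_kerror_to_noerror} followed by \cref{lem:random_access_reduce_streaming_to_offline}, including building the merged structure only on the supported suffix as in \cref{lem:mergeoffline}. However, the correctness argument of Stage~1 is false. Alternating ``$0$-error piece, stored symbol, $0$-error piece, \ldots'' with only $k+1$ pieces is not optimal. A factorization may order its symbols and substrings arbitrarily; in particular, two substrings of $R$ may be adjacent, as happens with a deletion under edit distance.

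Take $k=1$, $R=\texttt{cdeXf}$ and $T=\texttt{bcdef}$.
\begin{itemize}
\item The longest suffix of $T$ occurring in $R$ is $\texttt{f}$, so the first piece may have support~$1$.
\item You then store $T[4]=\texttt{e}$.
\item The longest suffix of $\texttt{bcd}$ occurring in $R$ is $\texttt{cd}$, so the second piece may have support~$2$.
\end{itemize}
Your combined support is $4<5$. Yet $T=\texttt{b}\cdot\texttt{cde}\cdot\texttt{f}$ consists of one symbol and two substrings of~$R$, so the defining property fails, even though every $0$-error structure reported the best possible support. Your induction breaks at $j=2$: the second factor $\texttt{cde}$ of this factorization starts at position~$2$, while the greedy's second factor starts at position~$4$.

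The missing idea is a pigeonhole argument with twice as many pieces. The paper chains up to $2k+2$ $0$-error structures; the $i$-th is built for $T[1\dd n_i)$ and supports $T[n_{i+1}\dd n_i)$ with $n_{i+1}<n_i$. Consider any factorization, into at most $2k+1$ factors, of the suffix starting one position left of the combined support. It leaves some piece $T[n_{i+1}\dd n_i)$ containing no factor start. Hence $T[n_{i+1}-1\dd n_i)$ lies inside one factor, which has length at least~$2$ and is not a substring of~$R$. Space and query time are unaffected.

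Second, your construction-time bound is assumed rather than derived. The hypothesis bounds a run only by $\cO(n_i\cdot c(m))$ in the length of the text handed to it. Capping at $m+1$ symbols gives $\cO(\min(n_i,m+1)\cdot c(m))$ per run, regardless of how short the resulting support is. So $\Theta(k)$ runs are only bounded by $\cO(k\min(n,m)\,c(m))$, and inside the streaming reduction this becomes an extra factor~$k$ in the update time. The paper makes the cost proportional to the support by exponential search (\cref{lem:construct_in_h_time}): build for the suffixes of lengths $1,2,4,\ldots$ and stop at the first whose support is shorter than its length.

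There are two smaller issues in Stage~2.
\begin{itemize}
\item The structure of $B=B_1B_2$ cannot be started when $B$ completes, because $B_2$'s structure is then still being built. The paper instead builds $T_{\ell,z}$ while the second half of $T_{\ell,z+1}$ arrives.
\item No chaining across blocks is needed for the global support. The paper keeps it as a single leftmost supported position, raised via \cref{obs:datastructure_of_substring} whenever a finished block reports a support shorter than the block. Keeping $\Theta(k)$ top-level $k$-error structures, as you suggest, would cost $\cO(k^2 s(m))$ space.
\end{itemize}
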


\paragraph{Reduction to core-matching queries.} 
In the $0$-error offline setting, we further isolate the main computational challenge by showing that a simple data structure of size $\tO(\tau)$ with constant query time can be constructed by answering a series of $\tO(1)$ so-called \emph{core-matching queries} on~$R$. Each query consists of a string $T'[1\dd 3n']$ (a~substring of $T$) with $n' = 2^\ell \cdot \tau$ for some integer $\ell \geq 0$, and the result of the query is some position~$i$ such that $R[i\dd i + n') = T'(n'\dd 2n']$, i.e., we must locate the central length-$n'$ fragment of $T'$ in~$R$. We are allowed to fail if $T'$ does not occur in~$R$.

To understand the idea of the reduction, consider a particularly simple case when~$T$ is of length $n = 3\cdot 2^L \cdot \tau$ and the entire~$T$ is a substring of~$R$. In this case, the random access data structure consists of a copy of the length-$\tau$ prefix and suffix of $T$, as well as pointers to $\cO(\log n)$ fragments $r_1, \dots, r_{\cO(\log n)}$ of~$R$ obtained by answering core-matching queries for all prefixes and suffixes of~$T$ that are of length $3 \cdot 2^\ell \cdot \tau$ for integer $\ell \geq 0$. 
This is visualized below, where
\raisebox{-.1em}{\begin{tikzpicture}[every node/.style={inner sep=0pt}]
    \node[minimum width=1.1em, minimum height=.9em, draw, pattern=north east lines] {};
\end{tikzpicture}}
indicates the length-$\tau$ prefix and suffix of $T$, and\enskip
\raisebox{-.25em}{\begin{tikzpicture}[every node/.style={inner sep=0pt}]
    \node[minimum width=3.6em, minimum height=.6em] (box) {};
    \draw[thick] (box.north west) to (box.south west) (box.north east) to (box.south east) (box.west) to (box.east);
    \node[minimum width=1.2em, minimum height=1em, draw, fill=black!10!white] (core) at (box) {};
    \node at (core) {$r_i$};
\end{tikzpicture}}\enskip
indicates the core-matching query used to find $r_i$.

\medskip

\begin{center}%
\newlength{\tikzxx}%
\setlength{\tikzxx}{.65em}%
\newlength{\tikzyy}%
\setlength{\tikzyy}{1.2em}%
\begin{tikzpicture}[x = \tikzxx, y = 1em, every node/.style={inner sep=0pt}]
    \foreach[count=\ri from 0, count=\ti from -37] \i in {0,...,60} {
        \node[minimum width=\tikzxx, minimum height=1em] (r\ri) at (\i, 0) {};
        \node[minimum width=\tikzxx, minimum height=1em] (t\ti) at (\i, 0) {};
        \node[minimum width=\tikzxx, minimum height=1em] (lowr\ri) at (\i, -.5em) {};
    }
    \node[draw, fit=(r0)(r28)] (R) {};
    \node[draw, fit=(t0)(t23)] (T) {};
    \node[left=0 of R] {$R =\ $};
    \node[left=0 of T] {$T =\ $};
    
    \node[draw, fit=(t0)(t1), pattern=north east lines] {};
    \node[draw, fit=(t22)(t23), pattern=north east lines] {};

    \foreach[count=\q from 1, evaluate=\sideid as \sidemult using int(1-(\sideid*2))] \yoff/\shortlen/\sideid in {%
        1/2/0,%
        2/4/0,%
        3/8/0,%
        2/4/1,%
        1/2/1} {
            \node (v) at (0, -\yoff\tikzyy) {};
            \foreach[
                count=\i from 0, 
                evaluate=\x as \nid using int(\sideid*23 + \sidemult * \i * \shortlen),
                evaluate=\x as \xp using int(\x + 1),
                evaluate=\nid as \nidp using int(\nid + \sidemult * (\shortlen - 1))] \x in {0, 2, 4} {
                \node[minimum width=\tikzxx, minimum height=1em] (n\x) at (t\nid |- v) {};
                \node[minimum width=\tikzxx, minimum height=1em] (n\xp) at (t\nidp |- v) {};
            }
            \node[fit=(n0)(n5), inner ysep = -.25em, inner xsep=-1pt] (box) {};
            \draw[thick] (box.north west) to (box.south west) (box.north east) to (box.south east) (box.west) to (box.east);
            \node[fit=(n2)(n3), inner ysep = -.01em, inner xsep=-.5pt, draw, fill=black!10!white] (core) {};
            \node at (core) {$r_\q$};
        }

    \foreach[evaluate=\pos as \epos using int(\pos+\len-1)] \len/\pos/\q in {2/3/1,8/11/3,4/19/4,2/6/5} {
        \node[draw, fit=(r\pos)(r\epos), fill=black!10!white] (tmp) {};
        \node at (tmp) {$r_\q$};
    }
    \node[draw, fit=(lowr16)(lowr19), fill=black!10!white] (tmp) {};
    \node at (tmp) {$r_2$};
\end{tikzpicture}
\end{center}

\medskip

With simple techniques to generalize for a string~$T$ of arbitrary length that is not a substring of $R$, this leads to the following result.

\def\versionoflemreduceratocorematching{simplified}
\def\addtolemreduceratocorematching{}
\def\addtolemreduceratocorematchingtext{}
\begin{restatable*}[\versionoflemreduceratocorematching]{lemma}{lemreduceratocorematching}\label{lem:reduce_offline_ra_to_core_matching}
    Let $\tau > 0$ be an integer parameter. Computing an offline $0$-error random access data structure with constant query time and space complexity $\cO(\tau + \log m)$ for reference $R \in \Sigma^m$ \addtolemreduceratocorematchingtext{}can be reduced to $\cO(\log^2 m)$ core-matching queries on~$R$.\addtolemreduceratocorematching{}
The reduction takes $\cO(\tau +  \log^2 m)$ time and $\cO(\tau + \log m)$ additional working space.
\end{restatable*}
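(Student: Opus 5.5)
We plan to compute the support-length $h$ first and then build the structure for $T(n-h\dd n]$ exactly as in the simple case sketched above. If $T(n-h\dd n]$ is a substring of $R$ of length at most $3\tau$, we store it explicitly. Otherwise, we store its length-$\tau$ prefix and suffix, together with pointers obtained from core-matching queries on geometrically growing prefixes and suffixes.

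\emph{Computing $h$.} We search for the longest suffix of $T$ that occurs in $R$ by a doubling-plus-binary-search on its length, using core-matching queries as the only way of touching $R$. Suppose $h$ is the length of the longest suffix $S$ of $T$ occurring in $R$. For every $\ell$ with $3\cdot 2^\ell\tau\le h$, the query on the length-$3\cdot 2^\ell\tau$ suffix of $T$ must succeed, since that string occurs in $R$.

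Conversely, a successful answer only locates the central third. Correctness must therefore come from overlapping answers. Take the suffix $T(n-3n'\dd n]$ with $n'=2^\ell\tau$, and consider the suffixes and prefixes of $S$ of lengths $3\cdot 2^{j}\tau$ for $j\le\ell$. Their central thirds, together with the explicit length-$\tau$ prefix and suffix, cover every position. So success of all these queries yields, for each position $i$, a pointer into $R$ holding $T[i]$.

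We never need to certify that the covered suffix occurs in $R$ as a whole. We only need random access, and a maximality witness whenever we stop short of $n$. So we take $h$ to be the largest length for which the whole chain of queries succeeds. If the next doubling fails, then by the guarantee of core-matching queries the queried string does not occur in $R$, and hence neither does the longer suffix. Between consecutive powers of two we binary search over $\cO(\log m)$ candidate lengths. The last query used for maximality need not be aligned to a power of two, but its length is at most $3\cdot 2^{\ell+1}\tau$, so padding reduces it to an aligned query. Each binary-search step issues $\cO(\log m)$ queries, which gives the $\cO(\log^2 m)$ total.

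\emph{Building the structure.} Once $h$ is fixed, we store:
\begin{itemize}
\item the length-$\tau$ prefix and suffix of $T(n-h\dd n]$, copied in $\cO(\tau)$ time;
\item the $\cO(\log m)$ pointers $r_j$ returned for the prefixes and suffixes of lengths $3\cdot 2^j\tau$, with the last level taken as overlapping so that its central third covers the middle.
\end{itemize}
A query for position $i$ computes $\lfloor\log\rfloor$ of the distance from $i$ to the nearer end of $T(n-h\dd n]$ (one word operation). This tells us which central third contains $i$, and we read $R[r_j+\text{offset}]$ in constant time. Space is $\cO(\tau+\log m)$.

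\emph{Main obstacle.} The delicate step is the maximality certificate when $h<n$. Core-matching answers are only one-sided: a success does not prove the whole query string occurs. So we must argue that the suffix of length $h+1$ is genuinely not a substring of $R$, not merely that our chain was not extended.

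I would first try to define $h$ as the true longest occurring suffix, with the search only needing monotonicity of failure. This needs a query on a string of length at most $3\cdot 2^{\ell+1}\tau$ containing the suffix of length $h+1$ whose central third is a prefix-aligned piece. If that does not align cleanly, the fallback is to weaken the witness: we only need the failing query string to be a suffix of $T$ of length at most $h+1$ that does not occur in $R$. This is exactly what the definition requires, since then $T[n-h\dd n]$ contains it and cannot occur in $R$.
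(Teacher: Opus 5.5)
Your architecture matches the paper's: an explicit length-$\tau$ prefix and suffix, $\cO(\log m)$ central thirds returned by core-matching queries on geometrically growing prefixes and suffixes, and a binary search on the support-length costing $\cO(\log m)$ queries per step. (Querying the suffixes $T(n-3\cdot 2^j\tau\dd n]$ directly, instead of the paper's detour through $\rev(T)$ and $\rev(R)$, is a small simplification, since all queries then stay on~$R$.) However, the maximality argument is the only nontrivial point, and it is not correct as you wrote it.
\begin{itemize}
\item ``Padding'' a non-aligned string into an aligned query is not available. Queries have length exactly $3\cdot 2^\ell\tau$. If you extend the length-$(h+1)$ suffix to a longer substring of~$T$, a failure only certifies that the longer string does not occur.
\item Your ``first try'' has the same inverted containment. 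A failing query string must be \emph{contained in} $T[n-h\dd n]$, not contain it.
\item In your fallback, the failing string need not be a suffix of~$T$, and in general it is not: typically it is a prefix-chain query.
\end{itemize}
The correct certificate is exactly the paper's. Every query issued while building the structure for $T[n-h\dd n]$ is a substring of $T[n-h\dd n]$. So if any of them fails, $T[n-h\dd n]$ does not occur in~$R$.

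Two further points need repair.
\begin{itemize}
\item Chain success is not monotone in the length, because different lengths use different prefix chains. So binary search does not find ``the largest length for which the chain succeeds.'' Instead, maintain the invariant that the whole chain succeeds at $\mathit{lo}$ and some query fails at $\mathit{hi}$. Initialize $\mathit{lo}=\min(n,3\tau-1)$ (stored as an explicit copy) and the sentinel $\mathit{hi}=\min(n,m)+1$. The search then ends at an adjacent pair, which gives a valid $h$. In particular, a successful single suffix query in your doubling phase does not establish the invariant at $\mathit{lo}$; only the whole chain does.
\item Take $h\in[3\cdot 2^\ell\tau\dd 3\cdot 2^{\ell+1}\tau)$. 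The prefix and suffix chains leave an uncovered middle of length $h-2^{\ell+2}\tau$, which can approach $2^{\ell+1}\tau$. A single level-$\ell$ central third has length only $2^\ell\tau$, so ``one overlapping last-level query'' does not suffice. Either add two level-$\ell$ queries inside $T(n-h\dd n]$, or do as the paper does: build the aligned structure for the first and for the last $3\cdot 2^\ell\tau\ge h/2$ symbols of $T(n-h\dd n]$, which together cover it.
\end{itemize}
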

\def\addtolemreduceratocorematching{
For every ${\ell \in [0\dd \floor{\log_2(m/3\tau)}]}$ there are $\cO(\log m)$ queries of length $3 \cdot 2^\ell \cdot \tau$, and each query is (a~reference~to) a substring of~$T$. 
The reduction algorithm is adaptive, i.e., it waits for each query to be answered prior to asking the next one.}
\def\addtolemreduceratocorematchingtext{and text $T \in \Sigma^n$\ }
\def\versionoflemreduceratocorematching{full version}

\paragraph{Core-matching queries via function inversion.} 
As a warm-up, in \cref{lem:naive_core_matching} we show a simple solution to core-matching queries based on the following observation: If $T' = R[i\dd i+3n')$ for some $i \in [1\dd m - 3n' + 1]$, then this occurrence of $T'$ fully contains the fragment $R(bn' - 2n' \dd bn']$ with $b = \floor{\frac{i+3n'-1}{n'}}$. We precompute the Karp--Rabin fingerprints~\cite{karp1987efficient} $\phi(U)$ of such substrings $U$ in $\cO(m/n')$ space and store them in a balanced binary search tree. At query time, we compute the fingerprints of all length-$2n'$ substrings of $T'$, and use them to find~$j$ such that $T'[j \dd j + 2n') = R(bn' - 2n'\dd bn']$. We can then simply output $R(bn' - n' - j + 1\dd bn' - j + 1] = T(n' \dd 2n']$. Combined with the reductions above, this already gives a $k$-error random access data structure with a non-trivial trade-off:  

\def\versionoflemnaivekerror{simplified}
\def\contentoflemnaivekerror{%
There is a streaming $k$-error random access data structure with space complexity $\tO((k + 1) \cdot \sqrt{m})$, worst-case update time $\tO(1)$, and query time $\cO(1 + {\log (k + 1)})$. The preprocessing takes $\cO(\poly(m))$ time and $\tO(\sqrt{m})$ space deterministically.%
}

\begin{restatable*}[\versionoflemnaivekerror]{corollary}{cornaivekerror}
\label{cor:naivekerror}%
\contentoflemnaivekerror
\end{restatable*}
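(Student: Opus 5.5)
The plan is to chain the three reductions already stated, with $\tau = \floor{\sqrt{m}}$. First I would build an offline $0$-error random access data structure from the naive core-matching solution (\cref{lem:naive_core_matching}). Then I would lift it to the streaming $k$-error setting via \cref{lem:random_access_reduce_kerrorstreaming_to_noerroroffline}. Replacing $k$ by $k+1$ where necessary covers the case $k = 0$ and produces the $(k+1)$ factors in the statement.

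\textbf{Step 1: the naive core-matching structure.} For every level $\ell \in [0\dd \floor{\log_2(m/3\tau)}]$, set $n' = 2^\ell \tau$. For each block index $b$, store the Karp--Rabin fingerprints of the fragments $R(bn'-2n'\dd bn']$ in a balanced BST. This takes $\cO(m/n')$ space per level, hence $\cO(m/\tau) = \cO(\sqrt{m})$ space over all levels.

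To make preprocessing deterministic, I would pick the fingerprint modulus and base by brute force in $\poly(m)$ time, so that $\phi$ is collision-free on all substrings of $R$. I expect this to be the main subtle point. We also need to avoid false positives against substrings of $T$, which are not known in advance. This is handled by verifying every candidate match symbol-by-symbol against $R$ at query time, which is legal because $R$ is read-only. Verification uses random access to $T'$ through the already-built structures, as the offline model permits.

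A query $T'[1\dd 3n']$ slides a rolling fingerprint over all length-$2n'$ windows of $T'$ in $\cO(n')$ time. Each window is looked up in $\cO(\log m)$ time. On a hit we return $R(bn'-n'-j+1\dd bn'-j+1]$, and we may fail only when $T'$ does not occur in $R$. A hit is verified in $\cO(n')$ time, and one verification per query suffices because the fingerprints are collision-free on substrings of $R$. So a query costs $\tO(n')$.

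\textbf{Step 2: amortize over the text.} By \cref{lem:reduce_offline_ra_to_core_matching}, constructing the offline structure for a text of length $n$ needs $\cO(\log m)$ queries of each length $3\cdot 2^\ell\tau$ with $3 \cdot 2^\ell \tau \le n$. The total time is therefore $\tO(\tau + \sum_\ell 2^\ell\tau) = \tO(\tau + n)$.

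The term $\tau$ is problematic only for $n < \tau$. In that case I would simply store $T$ explicitly, which I would treat as part of the offline structure. This gives construction time $\cO(n\cdot c(m))$ with $c(m) = \tO(1)$, query time $q(m) = \cO(1)$, and space $s(m) = \tO(\tau + m/\tau) = \tO(\sqrt{m})$.

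\textbf{Step 3: conclude.} Apply \cref{lem:random_access_reduce_kerrorstreaming_to_noerroroffline} with $k+1$. This yields space $\tO((k+1)\sqrt{m})$, worst-case update time $\tO(1\cdot(1+\log(k+1))\log m) = \tO(1)$, and query time $\cO(1+\log(k+1))$. Preprocessing consists of the deterministic fingerprint search plus the BST construction: it takes $\poly(m)$ time, and the fingerprint computation can stream over $R$ in $\tO(\sqrt{m})$ space.

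The step I expect to be the main obstacle is guaranteeing determinism without false positives. If symbol-by-symbol verification during the offline construction turns out to conflict with the de-amortization inside \cref{lem:random_access_reduce_kerrorstreaming_to_noerroroffline}, I would fall back on the observation that verification costs only $\cO(n')$ per query. That is within the already-allowed budget, so the de-amortized schedule is unaffected.
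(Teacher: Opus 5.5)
Your proposal follows the paper's proof: fix $\tau=\Theta(\sqrt{m})$, preprocess $R$ with \cref{lem:naive_core_matching} for every level $\ell\in[0\dd \floor{\log_2(m/(3\tau))}]$, use \cref{lem:reduce_offline_ra_to_core_matching} to obtain an offline $0$-error structure with $\tO(n)$ construction time, $\tO(\sqrt{m})$ space and constant query time, and lift it with \cref{lem:random_access_reduce_kerrorstreaming_to_noerroroffline}; your remarks on verifying fingerprint hits against $R$ and on short texts are already built into those lemmas. The only extra ingredient in the paper is the randomized-preprocessing variant of the full statement, which follows by invoking the probabilistic version of \cref{lem:naive_core_matching} with parameter $2\err$ and taking a union bound over the $\cO(\log m)$ levels.
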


\def\contentoflemnaivekerror{%
There is streaming $k$-error random access data structure with space complexity $\tO((k + 1) \cdot \sqrt{m})$, worst-case update time $\tO(1)$, and query time $\cO(1 + \log (k + 1))$. The preprocessing takes $\cO(\poly(m))$ time and $\tO(\sqrt{m})$ space deterministically, or $\tO(m + \poly(\err))$ time and $\tO(\sqrt{m} + \poly(\err))$ space with success probability at least $1 - 2^{-\err}$, for any integer parameter $\err \geq \log_2 (m\sigma)$.%
}
\def\versionoflemnaivekerror{full version}

In a pursuit of a better trade-off, we replace the binary search tree with a function inversion data structure.
The idea is to use a function $f(x) = \phi(R(xn'-2n'\dd xn'])$ with domain ${[N]}$, where $N = \floor{m/n'} - 1$. 
Rather than locating the fingerprint~$y$ of every length-$2n'$ substring $T'[j\dd j + 2n')$ of $T'$ in the binary search tree, 
we instead invert the function and obtain $x \in f^{-1}(y)$, which implies $T'[j\dd j + 2n) = R(xn'-2n'\dd xn']$.
The best-known trade-off for function inversion is due to Fiat and Naor~\cite{doi:10.1137/S0097539795280512}, who showed that $\tO(N^3 / \tau^3)$ inversion time can be achieved in $\tO(\tau)$ space:

\begin{restatable*}[\cite{doi:10.1137/S0097539795280512}]{fact}{inversion}\label{fact:inversion}
Let $f : [N] \rightarrow [N]$ be a function that can be
evaluated at any point in constant time. For any integer parameter $\tau > 0$, we can construct in $\tO(N)$ time and with success probability $1- 1/N$ a data structure that is capable of inverting~$f$ at any point in $\tO(N^3/\tau^3)$ time. The data structure can be constructed, stored, and queried in $\tO(\tau)$ space.
\end{restatable*}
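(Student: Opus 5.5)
The last statement is Fact~\ref{fact:inversion}, the Fiat--Naor result, which the paper cites rather than proves. I would therefore recall how Fiat and Naor's construction works rather than build anything new.

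\textbf{Step 1: Hellman's method for random-like functions.} Start with Hellman's time--memory tradeoff, which inverts functions that behave randomly. Pick $m_0$ random starting points and iterate $g = h\circ f$ for $t$ steps, where $h$ is a random re-randomizing map. Store only the (start, end) pairs of these chains, sorted by endpoint; this is $\tO(\tau)$ space.

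To invert $y$:
\begin{itemize}
\item iterate $g$ from $h(y)$ for up to $t$ steps, looking up each point among the stored endpoints;
\item on a hit, walk the matching chain forward from its start to recover a preimage.
\end{itemize}
Use $\ell$ independent tables with different $h$ to cover about $N$ points. The chain-collision (matrix stopping rule) analysis gives $m_0 t^2 \approx N$ per table.

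\textbf{Step 2: removing the randomness assumption.} Arbitrary $f$ may have images with huge preimage sets, so chains merge and cycle prematurely. Fiat and Naor handle these heavy images separately. They sample $\tO(\tau)$ random domain points and store the images that appear frequently, together with a preimage for each, in a small table.

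Every remaining image has a small preimage set. On these, restrict the chain iteration via a $k$-wise independent family: the iteration refuses to continue through stored heavy images. With this restriction, chain coverage behaves close to the random-function case, up to a degradation factor.

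\textbf{Step 3: balancing parameters.} Fiat and Naor show that this restricted Hellman scheme achieves $\tau^3 q = \tO(N^3)$ for arbitrary functions. Here $q$ is the query time, and the extra $\tau$-type factors relative to Hellman's $\tau^2 q \approx N^2$ come from the heavy-image handling and the loss of independence. This yields the $\tO(N^3/\tau^3)$ inversion time. Preprocessing evaluates $f$ along all chains, i.e.\ $\tO(N)$ evaluations, and needs only the tables, so it runs in $\tO(\tau)$ space. Repeating $\cO(\log N)$ times boosts the success probability to $1-1/N$.

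\textbf{Main obstacle.} The hard part is the coverage analysis in Step~2: showing that the chains built from limited-independence hash functions cover a constant fraction of the non-heavy images. I would not redo this analysis but rely on \cite{doi:10.1137/S0097539795280512}, noting only that the space there is counted in words of $\cO(\log N)$ bits.
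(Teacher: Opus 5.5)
Your proposal is correct and matches the paper's treatment: the paper does not prove this fact but cites Fiat and Naor, and your sketch (Hellman chains plus a table for heavy images) likewise defers the coverage analysis to that reference. The only thing the paper adds beyond the citation is the remark that the $\tO(\tau)$ query-time working space is not stated explicitly by Fiat and Naor but is easily verified; your description of the query already makes this evident, since it walks one chain at a time using only the stored tables.
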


For core-matching queries of length $3n' = 3\tau$, we can achieve $\tO(\tau)$ space and $\tO(m^3/\tau^4)$ time: the domain of~$f$ is of size $N = \cO(m/\tau)$, but we have to multiply the $\tO(N^3 / \tau^3)$ inversion time of \cref{fact:inversion} with the number of inversion queries and the evaluation time of $f$, both of which are roughly $\tau$. 
To reduce the time, we use a variant of the string synchronizing sets \cite{DBLP:conf/stoc/KempaK19} to consistently and regularly select synchronizing positions in both~$R$ and~$T$.
This allows us to obtain a better function~$f$ that retains the size of the domain and the evaluation time, but only needs to be inverted (close to) a constant number of times in order to answer a core-matching query.
In the definition below, we use the original consistency and density conditions from \cite{DBLP:conf/stoc/KempaK19}, but also add a sparsity condition, ensuring that, within any length-$\tau$ fragment of $R$, we select a small number of positions.
For a string $T$, we denote by $\per(T)$ the smallest period of $T$, that is, the smallest integer $p$ such that $T[i] = T[i+p]$ for all $i \in [1 \dd n-p]$.

\begin{restatable*}{definition}{syncsets}
    Let $k,\tau$ be positive integers. For a string $R \in \Sigma^m$, a set $S \subseteq [1\dd m - 2\tau + 1]$ is a $k$-sparse $\tau$-synchronizing set if it satisfies the following conditions:
    \begin{description}
        \item[(Consistency)] For $i,j \in [1\dd m - 2\tau + 1]$, if $R[i\dd i + 2\tau) = R[j\dd j+2\tau)$ then $i \in S \iff j \in S$.
        \item[(Density)] 
        For $i \in [1\dd m-3\tau + 2]$, it holds $S \cap [i\dd i + \tau) = \emptyset$ if and only if $\per(R[i\dd i + 3\tau - 2]) \leq \frac\tau3$.
        \item[(Sparsity)] For $i \in [1\dd m-3\tau + 2]$, it holds $\absolute{S \cap [i\dd i + \tau)} \leq k$.
    \end{description}
\end{restatable*}

During construction, we enforce the sparsity condition by sampling every distinct substring with probability around $k/\tau$. Namely, we define a function $\ident : \Sigma^{2\tau} \rightarrow \{0,1\}$ that identifies synchronizing substrings. Crucially, the function can be stored in $\tO(1)$ space, and can be evaluated efficiently in a rolling manner, similarly to Karp--Rabin fingerprints. This way, we can produce the elements of a string synchronizing set as a stream. 

\begin{restatable*}{theorem}{sss}\label{lm:sss}
    Let $R \in \Sigma^m$ be a string, and let $\tau \in [1\dd \floor{\frac{m}2}]$ and $\err \geq \log_2 (m\sigma)$ be integers. In $\cO(\poly(\err))$ time and space, and with success probability $1 - 2^{-\err}$, we can construct a function $\ident : \Sigma^{2\tau} \rightarrow \{0,1\}$ encoded in $\cO(\err)$ space with the following properties: 
    \begin{enumerate}
        \item $S = \{ i \in [1\dd m - 2\tau + 1] \mid \ident(R[i\dd i + 2\tau)) = 1 \}$ is an $\cO(\err)$-sparse $\tau$-synchronizing set.
        \item Given $R' \in \Sigma^{m'}$ of arbitrary length, the set $\{ i \in [1\dd m' - 2\tau + 1] \mid \ident(R'[i\dd i + 2\tau)) = 1 \}$ can be produced as a stream in increasing order, in $\cO(m'\err)$ time and $\cO(\err)$ working space.
    \end{enumerate}
\end{restatable*}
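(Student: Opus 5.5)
The plan is to combine the Kempa--Kociumaka treatment of periodic regions with \emph{independent random sampling} in place of their local-minimum rule. A local-minimum rule forces a sliding-window minimum over $\Theta(\tau)$ candidates, which cannot be done in $\cO(\err)$ space; random sampling is also what gives sparsity directly.

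\textbf{Defining $\ident$.} For $X\in\Sigma^{2\tau}$ I would call a length-$\tau$ string $Y$ \emph{periodic} if $\per(Y)\le\tau/3$. I set $\ident(X)=1$ if either of the following holds:
\begin{itemize}
\item $X$ witnesses a \emph{run boundary}: exactly one of $X[1\dd\tau]$ and $X[2\dd\tau+1]$ is periodic (the left end of a periodic stretch), or the analogous condition at the right end, shifted so that it still lies inside the $2\tau$-window;
\item $X[1\dd\tau]$ is aperiodic and $h(\phi(X[1\dd\tau]))<p:=c\err/\tau$.
\end{itemize}
Here $\phi$ is a Karp--Rabin fingerprint modulo a random prime of $\Theta(\err)$ bits, and $h$ is drawn from a $\Theta(\err)$-wise independent family mapping into $[0,1)$. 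Both are described by $\cO(\err)$ words and can be sampled in $\poly(\err)$ time. Consistency holds automatically, since $\ident$ depends only on the $2\tau$-window.

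\textbf{Density and sparsity for $R$.} I would proceed in three steps.
\begin{enumerate}
\item \emph{Periodic windows contain no position of $S$.} If $\per(R[i\dd i+3\tau-2])\le\tau/3$, then every length-$\tau$ substring starting in $[i\dd i+\tau)$, together with its neighbours, is periodic with the same period. So no boundary occurs there and no aperiodic substring can be sampled.
\item \emph{Aperiodic windows contain many distinct candidates.} Suppose the window is not periodic and contains no run boundary. Then all $\tau$ starting positions carry aperiodic length-$\tau$ strings. If $R[a\dd a+\tau)=R[b\dd b+\tau)$ with $0<b-a=d$, then $R[a\dd b+\tau)$ has period $d$. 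Combined with the periodicity lemma, this shows each string repeats only when $d>\tau/3$, and hence at most $3$ times within the window. So there are $\Omega(\tau)$ distinct candidates. Under $h$, each candidate is sampled with probability $p$, and the candidates are $\Theta(\err)$-wise independent as long as $\phi$ is injective on the $\cO(m)$ substrings of $R$ (which holds with probability $1-2^{-\err}$ for a suitably sized prime). Hence the window misses all of them with probability $2^{-\Omega(\err)}$.
\item \emph{Sparsity.} By the repetition bound, the number of sampled positions in a window is at most $3\cdot\mathrm{Bin}(\tau,p)$ plus $\cO(1)$ boundaries. By a limited-independence Chernoff bound this is $\cO(\err)$ with probability $1-2^{-\Omega(\err)}$.
\end{enumerate}
A union bound over the $\le m$ windows and the fingerprint collisions, with constants tuned to $\err\ge\log_2(m\sigma)$, gives overall success probability $1-2^{-\err}$. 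Failure is detectable by a $\poly(m)$ check if one prefers a Las Vegas variant, but the statement only needs the stated probability.

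\textbf{Streaming evaluation on $R'$.} The pieces are as follows.
\begin{itemize}
\item Rolling Karp--Rabin fingerprints of $R'[i\dd i+\tau)$ take $\cO(1)$ time per shift.
\item Evaluating $h$ takes $\cO(\err)$ time, using a degree-$\cO(\err)$ polynomial.
\item The remaining ingredient is to decide, in a rolling fashion and in $\cO(\err)$ space, whether $R'[i\dd i+\tau)$ is periodic.
\end{itemize}
I expect this last test to be the main obstacle. My first attempt would maintain the current candidate period $q\le\tau/3$ and the extent of the run it generates. When a run of period $q$ breaks, the next run starting within it is found by running a constant-space periodicity / pattern-matching routine à la Crochemore--Perrin (the same tool as in \cref{lm:random-warmup}) on the next $\cO(\tau)$ symbols. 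The total work is charged to the $\Omega(\tau)$ positions of disjoint runs or aperiodic stretches, which gives $\cO(m'\err)$ time overall. De-amortization is not required here, since the claimed bound is total time.

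If random access to $R'$ is not available in part~2, I would instead verify the periodicity condition via fingerprint equalities $\phi(R'[i\dd i+\tau-q))=\phi(R'[i+q\dd i+\tau))$. These are maintained in a rolling way for the single live candidate $q$, and a new candidate is determined from the first mismatch location. Handling candidate changes within $\cO(\err)$ space is exactly where I expect the argument to need the most care.
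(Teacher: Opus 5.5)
Your argument for property~1 is sound and follows the paper's blueprint. Windows of period at most $\tau/3$ are excluded outright, and run boundaries are marked deterministically. The remaining aperiodic candidates, each repeated at most three times, are sampled by a $\Theta(\err)$-wise independent hash of collision-free fingerprints, followed by a limited-independence Chernoff bound and a union bound over windows. The only difference is that you test periodicity and hash on length-$\tau$ substrings rather than on the whole length-$2\tau$ window. This works provided both boundary tests are read symmetrically, i.e., exactly one of the two adjacent length-$\tau$ substrings is periodic, at the left and at the right end of $X$. With that reading, density and the $\cO(1)$ bound on boundary positions go through.

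The gap is property~2, which is half of the statement and which you leave open.
\begin{itemize}
\item Your first attempt does not say what is computed when a run breaks. More importantly, it does not say how the start of a new periodic run is detected inside an arbitrarily long aperiodic stretch. Recomputing a period at each shift costs $\Theta(\tau)$ per position, i.e., $\Theta(m'\tau)$ in total. The claim that each $\cO(\tau)$-time call can be charged to $\Omega(\tau)$ positions is exactly what needs proof.
\item Your $\ident$ at position $j$ also depends on the status of the windows starting at $j+\tau-1$ and $j+\tau$. That is look-ahead which cannot be buffered symbol by symbol in $\cO(\err)$ space.
\item The fingerprint-only fallback is unresolved, and it is also unnecessary: $R'$ is given read-only with random access, as in the application, where the query text is randomly accessible.
\end{itemize}

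The missing idea is an \emph{anchoring} argument.
\begin{itemize}
\item The paper processes $R'$ in overlapping chunks of length $3\tau-1$ starting at positions $\equiv 1 \pmod{\tau}$.
\item Every length-$2\tau$ fragment of a chunk with period at most $\tau/3$ contains the chunk's central fragment $R'[\tau\dd 2\tau]$. By the periodicity lemma, its period therefore equals $\per(R'[\tau\dd 2\tau])$.
\item Hence one constant-space, $\cO(\tau)$-time period computation \cite{DBLP:conf/esa/0001GGK15}, followed by naive extension to both sides, yields the unique periodic stretch relevant to the chunk. This settles all positions governed by the deterministic rules.
\item All other positions are decided by rolling fingerprints and $\cO(\err)$-time evaluations of $h$, giving $\cO(\tau\err)$ time per $\tau$ output positions.
\end{itemize}

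For your length-$\tau$ variant, the same idea works with blocks of $\floor{\tau/3}$ consecutive starting positions.
\begin{itemize}
\item Every length-$\tau$ window starting in a block contains a fixed anchor of length at least $\frac{2\tau}{3}+1$. That is long enough to apply the periodicity lemma to two periods of at most $\tau/3$.
\item So the periodic starts within a block form a single interval, computable in $\cO(\tau)$ time and storable in $\cO(1)$ words.
\item Keeping the $\cO(1)$ upcoming blocks supplies the look-ahead your $\ident$ needs.
\end{itemize}
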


We are now ready to explain our final solution for core-matching queries. To solve a query $T'[1 \dd 3n']$, we use an $\tO(1)$-sparse $n'$-synchronizing set~$S$ of~$R$. We find the minimal $j \in [1 \dd n' + 1]$ such that $\ident(T'[j\dd j + 2n')) = 1$, which takes $\cO(n')$ time with the algorithm from \cref{lm:sss}(2). For intuition, assume that~$j$ is well-defined. (If~$j$ does not exist, then $T'$ is periodic and the solution only becomes simpler.) We compute $y = \phi(T'[j\dd j + 2n'))$ and use \cref{fact:inversion} to obtain some $x \in f^{-1}(y)$, where this time $f(x) = \phi(R[x\dd x+2n'))$ is defined not for domain $[\floor{\frac m{n'}} - 1]$ but for domain $S$, which is still of size $\tO(\frac m{n'})$. While \cref{fact:inversion} requires that the domain is an integer range, the sparsity of~$S$ allows us to obtain an efficient mapping between $[\absolute{S}]$ and~$S$.
Hence we can still invert~$f$, and, if we find $x \in f^{-1}(y)$, then it holds $T'[j\dd j + 2n) = R[x\dd x + 2n')$.

For core-matching queries of length $3n' = 3\tau$, using synchronizing sets improves the time complexity to $\tO(m^3 / \tau^5)$. 
The full construction of the function~$f$ and the reduction of core-matching to function inversion is given in \cref{lem:reduce_core_matching_to_fi}. 
We plug this solution for core-matching queries into \cref{lem:reduce_offline_ra_to_core_matching} to obtain a $0$-error offline data structure, and then into \cref{lem:random_access_reduce_kerrorstreaming_to_noerroroffline} (with a slight modification that improves the dependency on $k$) to obtain our main result:

\streamingkerror*

\subsection{Lower Bound for Random Access Data Structures}
In computational complexity, a fundamental difference arises between \emph{uniform} and \emph{non-uniform} computation. 
In the former case, we assume that a single algorithm is used to solve input instances of all possible sizes (i.e., the size is part of the input).
In the latter case, we are allowed to design a family of algorithms, using a different algorithm for each input size. 
We say \emph{non-uniform algorithm} to refer to the entire family, and \emph{non-uniform data structure} whenever a data structure is constructed, updated, or queried using non-uniform algorithms.
Crucially, a non-uniform algorithm may embed an arbitrary amount of \emph{advice} that depends only on the size of the input.

To show a lower bound for random access data structures, we show how to obtain a non-uniform function inversion protocol from any offline random access data structure. Assume that the text~$T$ is a substring of the reference~$R$. Intuitively, the construction algorithm for a random access data structure must find an occurrence of~$T$ in~$R$. Using this observation, we show that a random access data structure implies a function inversion protocol: we encode the image of a function in a reference $R$, and to invert an element of the image, we encode and feed it as a text~$T$. As a result, we obtain the following:

\begin{restatable*}{corollary}{randomaccessimpliesnonuniforminversion}\label{lem:random_access_implies_non_uniform_inversion}
\let\tau\tauconst
Let $\tau, \delta > 0$ with $\tau < 1$ be constant.
Assume that there is a (possibly non-uniform) offline random access data structure with 
space complexity $\cO(m^\tau)$ bits, amortized query time $\tO(m^\delta)$, and construction time $\tO(m^{\tau + \delta})$ for a text of length at most $m^\tau \log m$.
Then there is a non-uniform function inversion data structure that, for functions with domain and co-domain~$[N]$, has query time $\tO(N^{(\tau + \delta) / (1- \tau)})$ and can be stored and queried in $\tO(N^{{\tau}/{(1 - \tau)}})$ space.
\end{restatable*}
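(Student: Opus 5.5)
Given $f:[N]\to[N]$, I would encode it into a reference $R$ and answer an inversion query by feeding an encoding of the target as the text. Set $L=\Theta(N^{\tau/(1-\tau)}\log N)$, so that a block of length $L$ is long compared to the data structure's space. Let $R$ be the concatenation, over $x\in[N]$, of blocks $B_x$ of length about $L$. Each $B_x$ is built from $\mathrm{enc}(f(x))$, a fixed-length encoding of $f(x)$ (for instance $\cO(\log N)$ symbols from $\{0,1\}$), padded or repeated to length $L$, followed by a delimiter block that cannot occur inside any $\mathrm{enc}(\cdot)$. Then $m=\Theta(NL)$, and since $L=N^{\tau/(1-\tau)}$ up to logs we get $m^\tau=\tilde\Theta(L)$. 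The encoding is designed so that a text $T$ of length about $m^\tau\log m\ge L$ derived from $y$ occurs in $R$ only at the blocks $B_x$ with $f(x)=y$, and never across block boundaries; the delimiter makes this rigid.

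Since $R$ depends on $f$, it cannot be stored in small space. The reduction therefore works in the non-uniform setting, where the advice depends only on $N$. The inversion data structure stores the $\cO(m^\tau)$-bit preprocessing result of the random access data structure on $R$. Any read access to $R[j]$ can be simulated in $\cO(1)$ evaluations of $f$ after computing $x=\lceil j/L\rceil$. To invert $y$, we build the text $T=\mathrm{enc}(y)$ padded to length $L$, which we can generate on the fly without storing it. We then run the construction algorithm, which takes time $\tO(m^{\tau+\delta})$ in $f$-evaluations.

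If $y$ has a preimage, then $T$ is a substring of $R$, so the support length is $h=|T|$. We then recover a preimage as follows. The data structure uses only $\cO(m^\tau)\ll L$ bits but must answer every query $T[i]$ without reading $T$. Because $T$ is incompressible relative to this space, it must effectively point into $R$. The rigorous way to extract the position is to simulate the queries and record which reference positions are read; the block index of such a position gives a candidate $x$. We verify the candidate with one evaluation of $f$. If $y$ has no preimage, verification fails for every candidate and we report that.

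The main obstacle is making this extraction sound. Answers could be produced from the stored bits without touching $R$, so I would first choose $|T|$ large enough, up to the $m^\tau\log m$ allowance, that $T$ cannot fit in the state. I would also make the encoding position-dependent, so that only an occurrence of $T$ in $R$ can supply the symbols. Concretely, each $B_x$ includes a counter pattern, and the text is taken as the suffix of length $m^\tau\log m$ of the encoding of $y$. Then some query must read a position in the correct block, and we test every block read, paying $\tO(m^\delta)$ per query. Finally, the parameters are rewritten in terms of $N$, using $m=\tilde\Theta(N^{1/(1-\tau)})$. The space becomes $\tO(m^\tau)=\tO(N^{\tau/(1-\tau)})$. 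The time becomes $\tO(m^{\tau+\delta})$, plus $\tO(m^\delta)$ per query over at most $m^\tau\log m$ queries, which is $\tO(N^{(\tau+\delta)/(1-\tau)})$.
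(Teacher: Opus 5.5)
There is a genuine gap, and it is exactly the step you flag as the main obstacle. Your text is a fixed, explicit encoding of $y$ (padded, repeated, or with a counter pattern), so it is determined by $\cO(\log N)$ bits. It is therefore not incompressible relative to an $\cO(m^\tau)$-bit state, however long you make it: what matters is information content, not length.

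Your reduction must work for \emph{every} data structure meeting the hypothesis. Consider the following one. During construction it checks, in $\cO(\absolute{T})$ time, whether $T$ has the shape of your encoding. If so, it stores $y$, sets the support-length to $\absolute{T}$, and answers each query by evaluating the encoding; otherwise it runs the given data structure. Setting the support-length to $\absolute{T}$ is always permitted, since the definition only lower-bounds the support. This data structure meets the same bounds, yet on your texts it never reads $R$. Recording the reference positions it reads then gives no candidate, and your protocol would wrongly report that $y$ has no preimage. So the claim that ``some query must read a position in the correct block'' is unjustified. A symptom of the problem is that non-uniformity plays no actual role in your construction.

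The missing idea is to make the block contents themselves incompressible and to supply them through the non-uniform advice. The paper proceeds as follows.
\begin{itemize}
\item \textbf{Setup.} Take $S\in\{0,1\}^{NM}$ with $M=\ceil{N^{\tau/(1-\tau)}\log_2 N}$, so $m=NM$ and $m^\tau=o(M)$. Cut $S$ into $A_1,\dots,A_N\in\{0,1\}^M$, set $R=A_{f(1)}\cdots A_{f(N)}$, and use $A_y$ as the text.
\item \textbf{Dichotomy, first case.} For some $i$, the data structure built for text $A_i$ still reproduces all of $A_i$ when simulated on the string $R'$ obtained by replacing every block $x$ with $f(x)=i$ by $0^M$. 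Then, given $f$, $S$ is encoded in $NM-M+o(M)$ bits: the $o(M)$-bit state, the index $i$, and the other $N-1$ blocks.
\item \textbf{Dichotomy, second case.} For every $i$ in the image of $f$, simulating access to $A_i$ reads some $R[h]$ with $f(\ceil{h/M})=i$, and $\ceil{h/M}$ is a preimage. If no such read occurs, $i$ is not in the image.
\item \textbf{Counting.} For a fixed $f$, fewer than $2^{NM-M+o(M)}$ strings are compressible. A greedy covering argument then yields a family of $\cO(N\log N)$ strings that contains a good string for every $f$. A single string does not suffice, because a union bound over all $N^N$ functions fails once $N\log_2 N$ exceeds $M$.
\item \textbf{The inverter.} The family is the advice. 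The inverter stores the index of a good string together with the preprocessed state, restored after each query. Your final time and space accounting then applies.
\end{itemize}
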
 

This reduction directly yields a conditional lower bound for random access data structures:

\begin{restatable*}{corollary}{raofflinelowercond}\label{lem:ra_offline_lower_cond}
\let\tau\tauconst
Let $\tau, \delta > 0$ with $2/5 \le \tau < 1$ be constant.
Assume that there is a (possibly non-uniform) offline random access data structure with 
space complexity $\cO(m^\tau)$ bits, amortized query time $\tO(m^\delta)$, and construction time $\tO(m^{\tau + \delta})$ for a text of length at most $m^\tau \log m$.
If $\delta < 3 - 7\tau$, then there is a non-uniform function inversion data structure that improves over the time-space trade-off of Fiat and Naor.
\end{restatable*}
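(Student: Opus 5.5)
The plan is to apply \cref{lem:random_access_implies_non_uniform_inversion} directly and compare the resulting exponents with the Fiat--Naor trade-off of \cref{fact:inversion}. Let $\chi$ denote the space exponent (the constant written $\tau$ in the statement), and assume a data structure with the stated parameters and $\delta < 3 - 7\chi$. \cref{lem:random_access_implies_non_uniform_inversion} yields a non-uniform inversion data structure for $f:[N]\to[N]$ with space $S = \tO(N^{\chi/(1-\chi)})$ and query time $Q = \tO(N^{(\chi+\delta)/(1-\chi)})$.

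Next I compute what Fiat--Naor achieves at the same space. Write $S = N^{s}$ with $s = \chi/(1-\chi)$; since $2/5 \le \chi < 1$, we have $s \ge 2/3$. By \cref{fact:inversion}, Fiat--Naor uses time $\tO(N^3/S^3) = \tO(N^{3-3s})$. Clearing the factor $1/(1-\chi)$, the claim is that
\[
\frac{\chi+\delta}{1-\chi} < 3 - \frac{3\chi}{1-\chi}
\]
holds whenever $\chi+\delta < 3(1-\chi) - 3\chi = 3 - 6\chi$, that is, whenever $\delta < 3 - 7\chi$. This matches the hypothesis exactly.

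It remains to make sure the comparison is meaningful. For $\chi \ge 2/5$ we get $3 - 7\chi \le 1/5$, and the Fiat--Naor time exponent is $3 - 3s = (3-6\chi)/(1-\chi) \ge 0$ exactly when $\chi \le 1/2$. So the improvement is nontrivial for $\chi \in [2/5, 1/2)$. For $\chi \ge 1/2$ the hypothesis forces $\delta < 3 - 7\chi \le -1/2$, which is impossible for $\delta > 0$, so the statement holds vacuously there.

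The role of the lower limit $\chi \ge 2/5$ is to give $s \ge 2/3$, which puts us in the regime where Fiat--Naor is the best known trade-off, namely $S \ge N^{2/3}$ with time $N^3/S^3 \le N$. In this range, beating the exponent $3 - 3s$ by a polynomial factor is precisely an improvement over Fiat--Naor; this holds both for the uniform and for the non-uniform versions of their data structure, since their trade-off is the best known either way. I expect no real obstacle here. The step needing the most care is the exponent bookkeeping, and in particular checking that the polylogarithmic factors hidden in $\tO$ do not matter. They do not, because the inequality between exponents is strict and the inequality is obtained by a polynomial gap in $N$.
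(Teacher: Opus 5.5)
Your proposal is correct and is essentially the paper's own argument: the paper also applies \cref{lem:random_access_implies_non_uniform_inversion} and checks $S^3T = N^{(4\chi+\delta)/(1-\chi)} = o(N^3)$, which is equivalent to your comparison of time exponents at fixed space, and it justifies $\chi \ge 2/5$ by the same remark that Fiat--Naor is only meaningful for $S \ge N^{2/3}$. One small correction: because $\delta > 0$, the hypothesis $\delta < 3 - 7\chi$ can already not be met once $\chi \ge 3/7$, so the statement is non-vacuous only for $\chi \in [2/5, 3/7)$, not for all of $[2/5, 1/2)$ as you wrote; this does not affect the proof.
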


Using well-known lower bounds on the time-space trade-off of permutation inversion~\cite{DBLP:conf/stoc/Yao90,DBLP:journals/eccc/DeTT09}, we also derive the following unconditional lower bound:

\begin{restatable*}{corollary}{raofflinelower}\label{lem:ra_offline_lower}
\let\tau\tauconst
Let $\tau, \delta > 0$ with $\tau < 1$ be constant.
Assume that there is a (possibly non-uniform) offline random access data structure with 
space complexity $\cO(m^\tau)$ bits, amortized query time $\tO(m^\delta)$, and construction time $\tO(m^{\tau + \delta})$ for a text of length at most $m^\tau \log m$.
Then $\delta \geq 1 - 3\tau$.
\end{restatable*}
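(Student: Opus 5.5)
The proof will be a direct combination of \cref{lem:random_access_implies_non_uniform_inversion} with a known lower bound for inverting permutations, applied in the contrapositive. Suppose the assumed offline random access data structure exists with parameters $\tau,\delta$.

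\textbf{Step 1: obtain a permutation-inversion scheme.} By \cref{lem:random_access_implies_non_uniform_inversion}, there is a non-uniform function inversion data structure for functions $[N]\to[N]$. Its query time is $t=\tO(N^{(\tau+\delta)/(1-\tau)})$, and it is stored and queried in $s=\tO(N^{\tau/(1-\tau)})$ space. Permutations are special functions, so this scheme in particular inverts permutations of $[N]$.

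\textbf{Step 2: apply the permutation lower bound.} Yao~\cite{DBLP:conf/stoc/Yao90}, in the form refined by De, Trevisan and Tulsiani~\cite{DBLP:journals/eccc/DeTT09}, shows that any non-uniform inversion scheme for permutations of $[N]$ with $s$ bits of advice and $t$ oracle queries must satisfy $s\cdot t=\tilde\Omega(N)$. This holds when inversion must succeed on all points (or a constant fraction), and it is unconditional.

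The main obstacle is checking that the scheme from Step 1 fits this model. First, the space bound should be in bits. Here it is $\tO(N^{\tau/(1-\tau)})$ whether counted in words or bits, up to polylog factors. Second, query time must upper-bound the number of evaluations of $f$. I would verify from the construction that each evaluation of $f$ costs at least unit time, so the evaluation count is at most $t$. Third, the scheme must be correct on every image point. This should hold because the random access structure is correct on every text, so no average-case loss arises.

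\textbf{Step 3: compare exponents.} Comparing exponents gives
\[
\frac{\tau}{1-\tau}+\frac{\tau+\delta}{1-\tau}\ \ge\ 1-o(1).
\]
The $o(1)$ term absorbs the polylog factors; since $\tau$ and $\delta$ are constants, polylogarithmic factors cannot affect the exponents. The inequality simplifies to $2\tau+\delta\ge 1-\tau$, that is, $\delta\ge 1-3\tau$, which is the claim.
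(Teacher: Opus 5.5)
Your proposal is correct and follows the paper's argument: combine \cref{lem:random_access_implies_non_uniform_inversion} with the Yao/De--Trevisan--Tulsiani $\Omega(N)$ space--time lower bound for permutation inversion, then compare exponents, $\frac{\tau}{1-\tau}+\frac{\tau+\delta}{1-\tau}\ge 1$ (the paper phrases this as a contradiction when $\delta<1-3\tau$). The one point the paper makes explicit and you only assume is that the permutation lower bound still applies when the protocol hardwires $N$-dependent strings not counted in its space, because the underlying compression argument only counts how many inputs an encoding of a given size can distinguish.
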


In \cref{lem:ra_streaming_lower}, we extend both the conditional and the unconditional lower bounds to a streaming text. 
Finally, via our reduction from function inversion to suffix random access, we show that an unconditional lower bound higher than ours by a factor of $\tau$ would imply a breakthrough in the long-standing unconditional lower bound for function inversion~\cite{DBLP:conf/stoc/Yao90,DBLP:journals/eccc/DeTT09} (\cref{lem:nobetterlowercond}).

\section{Preliminaries}
\label{sec:prelim}
For integers $i,j\in \mathbb{Z}$, we write $[i\dd j] = [i\dd j + 1) = (i - 1\dd j] = (i - 1\dd j + 1)$ to denote the integer range
$\{k \in \mathbb{Z} : i \le k \le j\}$, and $[i]$ to denote $[1\dd i]$.
For a set $X\subseteq \mathbb{Z}$ and an integer $s \in \mathbb{Z}$, we write
$s+X$ and $X+s$ for the set $\{s+x : x\in X\}$ of all elements of~$X$ incremented by~$s$.

An \emph{alphabet} $\Sigma$ is a finite set, and its members are called \emph{symbols}.
A length-$n$ string~$T$ over alphabet $\Sigma$ is a sequence of~$n$ symbols.
We write $T[1\dd n]$ to denote that~$T$ is a string of length $\absolute{T} = n$.
For $i, j \in [1\dd n]$, we write $T[i]$ to denote the $i$-th symbol in $T$, 
and $T[i\dd j] = T[i\dd j + 1) = T(i - 1\dd j] = T(i - 1\dd j + 1)$ to denote the sequence $T[i]T[i + 1] \dots T[j]$ if $i \leq j$, and the \emph{empty string} $\varepsilon$ otherwise.
We say that another string $P[1\dd m]$ is a \emph{substring} of $T$, or equivalently that it \emph{occurs} in $T$, if there is $i \in [1 \dd n-m+1]$ such that $\forall j \in [0 \dd m) : T[i + j] = P[1 + j]$.
Any such fragment $T[i \dd i+m)$ is an \emph{occurrence} of~$P$ in~$T$.
We sometimes refer to the position $i$ itself as an occurrence, too. 
For $i \in [1\dd n]$ and $j \in [i\dd n]$, we differentiate between the substring $T[i \dd j]$ and the \emph{fragment} $T[i \dd j]$, where in the latter case we mean the specific occurrence of $T[i \dd j]$ at position~$i$.
\emph{Suffixes} and \emph{prefixes} of~$T$ respectively have the form $T[i\dd n]$ and $T[1\dd i]$, and we may simply write $T[i\dd ]$ and $T[\dd i]$ instead.
The \emph{reverse} of~$T$ is $\rev(T) = T[n]T[n-1] \cdots T[2]T[1]$.

For strings $S[1\dd m],T[1\dd n]$, their \emph{concatenation} $S[1]S[2] \cdots S[m]\, T[1]T[2] \cdots T[n]$ is denoted by $S \cdot T$ or simply $ST$.
We denote the concatenation $S_1 \cdot S_2 \cdots S_k$ of strings $S_1, S_2, \ldots, S_k$ by $\bigodot_{i=1}^k S_i$.
The concatenation of $k\in \mathbb{Z}_{>0}$ copies of a string~$S$ is denoted by $S^k$, and $S^k$ is called a \emph{power} of~$S$ with \emph{exponent} $k$, while the infinite string obtained by concatenating infinitely many copies of~$S$ is denoted by $S^\infty$.

A string~$T$ is \emph{primitive} if it cannot be expressed as $S^k$
for any string~$S$ and any integer~$k > 1$.
The \emph{rotation} operation $\rot(\star)$ takes a string as input and moves its last symbol to the
front; that is,~$\rot(T) := T[|T|]T[1 \dd |T|-1]$.
We say that a string $\rot^i(T)$, where $i \in \mathbb{Z}$, is a \emph{rotation} of~$T$.
Note that a primitive string does not match any of its non-trivial rotations, i.e., $T = \rot^i(T)$ if and only if $i \equiv 0 \pmod{n}$.
A positive integer~$p$ is a \emph{period} of~$T$ if $T[i] = T[i+p]$ for all $i \in [1 \dd n-p]$.
The smallest period of~$T$ is referred to as \emph{the period} of~$T$ and is denoted by $\per(T)$. A string~$T$ is called \emph{periodic} if $\per(T) \leq \absolute{T}/2$. 

\begin{fact}[Corollary of the Fine--Wilf periodicity lemma~\cite{fine1965uniqueness}]\label{cor:progression}
The starting positions of the occurrences of a pattern~$P$ in a text~$T$ form $\cO(|T|/|P|)$ arithmetic progressions with difference $\per(P)$.
\end{fact}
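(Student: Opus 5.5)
The plan is to derive the statement from the Fine--Wilf periodicity lemma. I will cut the text into $\cO(|T|/|P|)$ windows of length about $2|P|$ and show that, inside each window, the occurrences starting there form a single arithmetic progression with difference $\per(P)$.

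Write $m = |P|$ and $p = \per(P)$. Cover the set of possible starting positions $[1\dd |T|-m+1]$ with blocks $B_t = [tm/2 + 1 \dd (t+1)m/2]$, for $t = 0, 1, \dots$. Each block is an interval of length at most $\lceil m/2\rceil$, and there are $\cO(|T|/m)$ of them. (If $|P| > |T|$ there are no occurrences and nothing needs proving.)

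The main step is to show that the occurrences starting inside one block $B_t$ form an arithmetic progression with difference $p$. Take two consecutive occurrences $i < j$ in $B_t$. Their difference satisfies $d = j - i < m/2 \le m$, so the two occurrences overlap in $T[j \dd i+m)$. This gives $P[k] = P[k+d]$ for all valid $k$, so $d$ is a period of $P$.

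I then consider two cases.

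\emph{Case $d \le m/2$.} The periods $d$ and $p$ satisfy $d + p \le m$. Fine--Wilf then says $\gcd(d,p)$ is a period, so $p$ divides $d$. It remains to show $d = p$. Since $i$ is an occurrence and $p$ is a period of $P$ with $p \le d < m$, the position $i + p$ is also an occurrence. Indeed, $T[i+p \dd i+m)$ agrees with $P[1 \dd m-p]$. The remaining part $T[i+m \dd i+p+m)$ lies inside the occurrence at $j$, because $i + p + m \le j + m$. That part equals $P[m-p+1 \dd m]$ by the period $d$ combined with the divisibility $p \mid d$. Because $j$ was the next occurrence after $i$, we get $d = p$.

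\emph{Case $d > m/2$.} This cannot happen within a block of length at most $\lceil m/2\rceil$, and any boundary rounding can be handled by choosing block length $\lfloor m/2 \rfloor$ (or $1$ when $m = 1$).

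Hence the occurrences in each block form one progression with difference $p$. A single occurrence, or none, counts as a trivial progression. Summing over the $\cO(|T|/m)$ blocks gives the bound. The only delicate point is the case argument in the main step: it needs the gap between consecutive occurrences to be at most $m/2$ so that Fine--Wilf applies, and it needs the extension step to show the next occurrence comes exactly $p$ later. Both are guaranteed by the choice of block length.
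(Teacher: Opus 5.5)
Your proof is correct, and it is the standard derivation from Fine--Wilf that the paper cites without giving a proof. You use windows of length about $|P|/2$, so the gap $d$ between consecutive occurrences is a period of $P$ with $d \le |P|/2$; Fine--Wilf then gives $\per(P) \mid d$, and your extension step forces $d = \per(P)$. One small wording fix: in the extension step, $P[m-d+1\dd m-d+p] = P[m-p+1\dd m]$ follows from the period $p$ together with $p \mid d-p$, not from the period $d$.
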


\begin{fact}[{\cite[Section~5]{DBLP:journals/jcss/FredmanW93}}]\label{lem:computelog}
    Let~$n$ be a positive integer. After an $\cO(\log n)$ time and space preprocessing, computing $\ceil{\log_2 x}$ or $\floor{\log_2 x}$ for any $x \in [1\dd n]$ takes $\cO(1)$ time.
\end{fact}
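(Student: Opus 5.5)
The statement is the Fredman--Willard fact: after $\cO(\log n)$ preprocessing, $\floor{\log_2 x}$ and $\ceil{\log_2 x}$ can be computed in $\cO(1)$ time for $x \in [1\dd n]$. I would prove it directly with a table of powers of two and constant-time word operations, rather than reproducing the fusion-tree machinery. Since $\log n = \cO(w)$, every $x \in [1\dd n]$ fits in $\cO(1)$ machine words, so it suffices to treat a single word of $b = \floor{\log_2 n}+1 = \cO(\log n)$ bits. The first reduction is that ceiling follows from floor: if $x = 1$ then $\ceil{\log_2 x} = 0$, and otherwise $\ceil{\log_2 x} = \floor{\log_2(x-1)} + 1$. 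So only $\floor{\log_2 x}$, the index of the most significant set bit, needs attention.

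For the preprocessing I would store the powers $2^0, 2^1, \dots, 2^{b}$ in $\cO(\log n)$ time and space. To find the most significant bit in constant time, I would split the $b$-bit word into $\sqrt{b}$ blocks of $\sqrt{b}$ bits each. The steps are as follows.
\begin{enumerate}
\item Compute, in parallel via masks, which blocks are nonzero. The standard trick is to OR each block's top bit with the result of subtracting from a top-bit-set copy of the remaining low bits.
\item Gather these $\sqrt{b}$ indicator bits into one small field with a single multiplication by a precomputed magic constant.
\item Find the most significant set bit of that $\sqrt{b}$-bit field by a parallel comparison against the stored powers of two. The comparison again uses one multiplication and one subtraction, followed by a popcount-by-multiplication that sums the comparison outcomes.
\item Repeat the same parallel-comparison routine within the selected block.
\end{enumerate}
All constants used are computed from the table in $\cO(\log n)$ time.

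I expect the main obstacle to be verifying that the multiplication tricks do not overflow, that is, that each intermediate product fits into $\cO(w)$ bits with enough spacing between fields so that carries do not interfere. This is exactly where the block size $\sqrt{b}$ is needed, so that $\sqrt{b}$ fields of width $\sqrt{b}+1$ fit in $\cO(b)$ bits. As a fallback, since the paper works in the word RAM with $\cO(w)$-bit arithmetic, I could instead binary-search over the stored powers with a constant number of levels of $\sqrt{b}$-ary comparisons done word-parallel, which gives the same bound. Everything else is routine, and this is precisely the argument of \cite[Section~5]{DBLP:journals/jcss/FredmanW93}, which I would cite for the bit-level details.
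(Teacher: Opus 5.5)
Your main argument is correct and is essentially the Fredman--Willard Section~5 most-significant-bit routine, which is all the paper relies on: it gives no proof of its own and just cites that section. Drop the ``fallback'', though: comparing the full $b$-bit $x$ word-parallel against $\sqrt{b}$ thresholds $2^{k\sqrt{b}}$ needs $\sqrt{b}$ fields of $b+1$ bits each, that is, $\Theta(b^{3/2})$ bits, which can exceed $\mathcal{O}(w)$. It only works once those comparisons are replaced by the block-nonzero test, and then it is your main argument again.
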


\subsection{Karp--Rabin Fingerprinting}
\begin{definition}[Karp--Rabin fingerprint~\cite{karp1987efficient,porat2009optimal}]
For a prime number~$p$ and an integer $r \in [0\dd p)$, the Karp--Rabin fingerprint of $X \in \Sigma^n$ is $\varphi_{p,r}(X) = {\sum_{i = 1}^n X[i] \cdot r^{n - i} \pmod p}$.
\end{definition}

\begin{fact}\label{obs:linearity_fingerprints}
Let $X, Y$ be strings of length at most~$n$, let $p$ be a prime number, and let $r$ be an integer in $[0\dd p)$.
Given two of the fingerprints $\varphi_{p,r}(X), \varphi_{p,r}(Y), \varphi_{p,r}(XY)$, the remaining one can be computed in $\cO(\log n)$ time and $\cO(1)$ space.
\end{fact}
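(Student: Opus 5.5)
Because Karp--Rabin fingerprints are linear, I plan to recover the third fingerprint with a single arithmetic identity. Write $\varphi = \varphi_{p,r}$. The definition gives
\[
\varphi(XY) \equiv \varphi(X)\cdot r^{|Y|} + \varphi(Y) \pmod p .
\]
I will take this identity as the core claim. Its proof is direct: in $XY$ the symbol $X[i]$ sits at position $i$ of a string of length $|X|+|Y|$, so its weight is $r^{|X|+|Y|-i} = r^{|X|-i}\cdot r^{|Y|}$. The symbol $Y[j]$ sits at position $|X|+j$, so its weight is $r^{|Y|-j}$. Summing over all symbols and reducing modulo $p$ gives the identity.

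Next I handle the three cases, assuming the lengths $|X|$ and $|Y|$ are known; in every application they are stored alongside the fingerprints. In each case the key quantity is $r^{|Y|} \bmod p$, which I compute by repeated squaring in $\cO(\log |Y|) \subseteq \cO(\log n)$ word operations and $\cO(1)$ words of space. Here I use the standing assumption that $p$ fits in a constant number of machine words, so each modular multiplication takes constant time.
\begin{itemize}
\item Given $\varphi(X)$ and $\varphi(Y)$: $\varphi(XY)$ follows from the identity with one multiplication and one addition.
\item Given $\varphi(XY)$ and $\varphi(X)$: I compute $\varphi(Y) = \varphi(XY) - \varphi(X)\, r^{|Y|} \bmod p$.
\item Given $\varphi(XY)$ and $\varphi(Y)$: I need $\varphi(X) = (\varphi(XY)-\varphi(Y))\cdot (r^{|Y|})^{-1} \bmod p$.
\end{itemize}

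The third case is the only delicate step, for two reasons. First, the inverse of $r^{|Y|}$ must exist, which requires $r \neq 0$. I would assume this, which is standard since $r$ is drawn from $[1\dd p)$; alternatively I would note that for $r=0$ the fingerprint of $X$ is not determined by the other two. Second, the inverse must be computed within the time bound. Since $p$ is prime, Fermat's little theorem gives $(r^{|Y|})^{-1} = (r^{-1})^{|Y|}$. So I plan to precompute $r^{-1} = r^{p-2} \bmod p$ once, which is a one-time cost and can be stored with $r$, and then obtain $(r^{-1})^{|Y|}$ by repeated squaring in $\cO(\log n)$ time. This settles all three cases within $\cO(\log n)$ time and $\cO(1)$ space.
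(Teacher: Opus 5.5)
Your proposal is correct and follows the same route as the paper. The paper justifies the fact by the identity $\varphi_{p,r}(XY) = (\varphi_{p,r}(X)\cdot r^{|Y|} + \varphi_{p,r}(Y)) \bmod p$, with $r^{|Y|}$ obtained by exponentiation by squaring in $\cO(\log |Y|)$ time. The paper only spells out the case of computing $\varphi_{p,r}(XY)$, so your treatment of the other two cases usefully fills in what it leaves implicit. In particular, you correctly observe that recovering $\varphi_{p,r}(X)$ from $\varphi_{p,r}(XY)$ and $\varphi_{p,r}(Y)$ needs $r \neq 0$; the statement as written allows $r=0$, and the claim then fails for nonempty $Y$. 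It also needs an inverse of $r$, which must either be precomputed, as you do, or costs $\cO(\log p)$ rather than $\cO(\log n)$ time.
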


For example, we can compute $\varphi_{p,r}(XY)$ as $(\varphi_{p,r}(X) \cdot r^{\absolute{Y}} + \varphi_{p,r}(Y)) \bmod p$, where $r^{\absolute{Y}}$ can be computed in $\cO(\log \absolute{Y})$ time using exponentiation by squaring.

By precomputing $r^\ell$ in $\cO(\log \ell)$ time, we can compute fingerprints of length-$\ell$ substrings in a \emph{rolling} manner.  

\begin{fact}\label{obs:rollfp}
Let $p$ be a prime number, let $r$ be an integer in $[0 \dd p)$, and let $\ell$ be a positive integer.
After an $\cO(\log \ell)$-time and $\cO(1)$-space preprocessing, the following holds. For string $X[1\dd n]$, consider $i \in [1\dd n - \ell]$. If $\phi_{p,r}(X[i\dd i + \ell))$ is given, then 
$\phi_{p,r}(X[i\dd i + \ell])$ and $\phi_{p,r}(X(i\dd i + \ell])$ can be computed in $\cO(1)$ time.
\end{fact}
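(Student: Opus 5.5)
We compute the rolling update directly from the definition $\varphi_{p,r}(X) = \sum_{i=1}^n X[i] r^{n-i} \bmod p$. In the preprocessing step, we compute $r^{\ell} \bmod p$ by exponentiation by squaring. This takes $\cO(\log \ell)$ modular multiplications, each of which is a constant-time word operation since $p$ fits in a word. We store only this single value, so the space is $\cO(1)$.

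For extension to the right, write $X[i\dd i+\ell] = X[i\dd i+\ell)\cdot X[i+\ell]$. By the definition (equivalently, by the concatenation formula following \cref{obs:linearity_fingerprints} with $|Y|=1$), we get
\[
\phi_{p,r}(X[i\dd i+\ell]) = \bigl(\phi_{p,r}(X[i\dd i+\ell))\cdot r + X[i+\ell]\bigr) \bmod p .
\]
This requires only $r^1 = r$, so it takes $\cO(1)$ time.

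For the sliding step, write $X[i\dd i+\ell] = X[i]\cdot X(i\dd i+\ell]$, where the second part has length $\ell$. The definition gives
\[
\phi_{p,r}(X[i\dd i+\ell]) = \bigl(X[i]\cdot r^{\ell} + \phi_{p,r}(X(i\dd i+\ell])\bigr) \bmod p .
\]
Hence
\[
\phi_{p,r}(X(i\dd i+\ell]) = \bigl(\phi_{p,r}(X[i\dd i+\ell]) - X[i]\cdot r^{\ell}\bigr) \bmod p .
\]
We first compute $\phi_{p,r}(X[i\dd i+\ell])$ in $\cO(1)$ time as above. We then subtract $X[i]\cdot r^\ell$ using the precomputed $r^\ell$, again in $\cO(1)$ time.

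The only point needing care is that every intermediate value fits in $\cO(1)$ words. This holds because $p$, $r$, and the symbols $X[i]\in[1\dd\sigma]$ are polynomially bounded, so products modulo $p$ are constant-time word-RAM operations. There is no real obstacle beyond making sure the power needed is exactly $r^\ell$, which is fixed in advance for all $i$. This is what allows the $\cO(\log \ell)$ cost to be paid only once.
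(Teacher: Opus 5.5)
Your proposal is correct and takes the same route as the paper, which only remarks that one precomputes $r^\ell \bmod p$ in $\cO(\log \ell)$ time and then rolls the fingerprint using the concatenation identity of \cref{obs:linearity_fingerprints}: append $X[i+\ell]$ with factor $r$, and remove $X[i]$ with factor $r^\ell$. Like you, the paper leaves implicit the word-RAM assumption that arithmetic modulo $p$ costs constant time.
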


If distinct length-$n$ strings $X_1,X_2$ have the same fingerprint, then~$r$ is a root of the polynomial $\varphi_{p,x}(X_1) - \varphi_{p,x}(X_2)$ of degree at most $n - 1$ over the field $\mathbb{F}_p$.
There are at most $n - 1$ roots, and hence using large~$p$ and choosing~$r$ at random implies that distinct strings have distinct fingerprints with high probability (w.h.p.).

\begin{fact}[see, e.g., \cite{porat2009optimal}]\label{fact:collision_probability}
Let~$p>\sigma$ be a prime number and let $r \in [0\dd p)$ be drawn uniformly at random. For any two strings $X_1, X_2 \in [0\dd\sigma)^n$, the following holds:
\begin{itemize}
\item If $X_1 = X_2$, then $\varphi_{p,r}(X_1) = \varphi_{p,r}(X_2)$;
\item If $X_1 \neq X_2$, then $\varphi_{p,r}(X_1) = \varphi_{p,r}(X_2)$ with probability at most $\frac{n-1}{p}$.
\end{itemize}
\end{fact}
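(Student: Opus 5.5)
The statement to prove is \cref{fact:collision_probability}, the collision bound for Karp--Rabin fingerprints. The plan is the standard polynomial-roots argument, which the paragraph right before the fact already sketches.

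The first bullet is immediate. The fingerprint $\varphi_{p,r}(X)$ is a deterministic function of $X$, $p$ and $r$, so equal strings have equal fingerprints for every choice of $r$.

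For the second bullet, fix distinct $X_1, X_2 \in [0\dd\sigma)^n$ and consider the polynomial
\[
D(x) = \sum_{i=1}^{n} \bigl(X_1[i] - X_2[i]\bigr)\, x^{n-i}
\]
over the field $\mathbb{F}_p$. By definition, $\varphi_{p,r}(X_1) = \varphi_{p,r}(X_2)$ holds exactly when $D(r) \equiv 0 \pmod p$. The degree of $D$ is at most $n-1$, since the exponents $n-i$ range over $[0\dd n-1]$.

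The one point that needs care is showing that $D$ is not the zero polynomial in $\mathbb{F}_p[x]$. Since $X_1 \neq X_2$, some index $i$ has $X_1[i] \neq X_2[i]$. Both symbols lie in $[0\dd\sigma)$, so their difference is a nonzero integer of absolute value less than $\sigma < p$. Such an integer is not divisible by $p$, so the coefficient of $x^{n-i}$ is nonzero modulo $p$. This is exactly where the hypothesis $p > \sigma$ is used.

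A nonzero polynomial of degree at most $n-1$ over a field has at most $n-1$ roots. Because $r$ is drawn uniformly from the $p$ elements of $[0\dd p) = \mathbb{F}_p$, the probability that $r$ is one of these roots is at most $\frac{n-1}{p}$, which is the claimed bound.
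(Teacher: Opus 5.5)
Your proof is correct and uses the same polynomial-roots argument that the paper sketches just before the fact: a collision forces $r$ to be a root of $\varphi_{p,x}(X_1)-\varphi_{p,x}(X_2)$, which has degree at most $n-1$ over $\mathbb{F}_p$. Your explicit check that this polynomial is nonzero modulo $p$, which uses $p>\sigma$, supplies the one detail the paper's sketch leaves implicit.
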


The following fact follows from a direct application of the union bound.

\begin{fact}\label{fact:collision_probability:allsubstrings}
Let~$p > \sigma$ be a prime number, let $r \in [0\dd p)$ be drawn uniformly at random, and let $\ell \in [n]$.
Any string $X \in [0\dd\sigma)^n$ has two distinct length-$\ell$ substrings $X_1,X_2$ with $\varphi_{p,r}(X_1) = \varphi_{p,r}(X_2)$ with probability at most $\frac{\ell \cdot (n - \ell + 1)^2}{p} \leq \frac{n^3}{p}$.
\end{fact}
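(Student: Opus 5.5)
The statement at the end is \cref{fact:collision_probability:allsubstrings}. I will derive it from \cref{fact:collision_probability} by a union bound over all pairs of length-$\ell$ fragments of~$X$.

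\textbf{Step 1: enumerate the bad events.} Fix $X \in [0\dd\sigma)^n$ and $\ell \in [n]$. The length-$\ell$ fragments of~$X$ start at positions $i \in [1\dd n-\ell+1]$, so there are $n-\ell+1$ of them. For every pair of start positions $i < j$, let $E_{i,j}$ be the event that
\[
X[i\dd i+\ell) \neq X[j\dd j+\ell) \quad\text{and}\quad \varphi_{p,r}(X[i\dd i+\ell)) = \varphi_{p,r}(X[j\dd j+\ell)).
\]
The event in the statement, that two distinct length-$\ell$ substrings collide, happens exactly when some $E_{i,j}$ happens. This holds because any two distinct substrings have occurrences at some pair of fragments.

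\textbf{Step 2: bound each event.} If the two fragments are equal as strings, then $E_{i,j}$ is impossible. Otherwise, \cref{fact:collision_probability} applies to two length-$\ell$ strings over $[0\dd\sigma)$ with $p>\sigma$. Since $r$ is uniform in $[0\dd p)$, it gives $\Pr[E_{i,j}] \le (\ell-1)/p \le \ell/p$.

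\textbf{Step 3: union bound.} The number of pairs is $\binom{n-\ell+1}{2} \le (n-\ell+1)^2$. Summing over them,
\[
\Pr\Bigl[\bigcup_{i<j} E_{i,j}\Bigr] \le \frac{\ell\,(n-\ell+1)^2}{p}.
\]
Finally, $\ell \le n$ and $n-\ell+1 \le n$ give the bound $n^3/p$.

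No step is genuinely hard. The only point needing care is that the union is taken over pairs of \emph{positions} rather than pairs of distinct substrings. This only over-counts, so the bound remains valid.
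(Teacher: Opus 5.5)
Your proof is correct and is the same argument the paper relies on: the paper states this fact as a direct application of the union bound, applying \cref{fact:collision_probability} to each of the at most $(n-\ell+1)^2$ pairs of length-$\ell$ fragments. Your remark that taking the union over pairs of positions only over-counts is the right justification.
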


In the remainder of the paper, we use Karp--Rabin fingerprints with a large prime number~$p$ and a random base~$r$.
We often omit the subscript $p,r$ and simply write $\varphi(X)$ instead. We then provide more details on the choice of~$p$ when analyzing the success probability of our algorithms.
Either way, we use the following result to find a prime in a given range.

\begin{fact}\label{fact:findprime_in_range}
    Let $\err, n > 1$ be integers. A prime $p \in [n\dd 2n)$ can be found in $\cO(\err \cdot \polylog(n))$ time and $\cO(\polylog(n))$ space with success probability at least $1 - 2^{-\err}$, and in $\cO(n \cdot \polylog(n))$ time and $\cO(\polylog(n))$ space deterministically.
\end{fact}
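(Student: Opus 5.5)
The statement to prove is \cref{fact:findprime_in_range}: find a prime in $[n\dd 2n)$, either by a randomized search with failure probability at most $2^{-\err}$ or by a deterministic search. Both parts rest on the prime number theorem, in the form that $[n\dd 2n)$ contains $\Theta(n/\log n)$ primes for $n > 1$. For $n = 2$ the interval $[2\dd 3]$ contains the primes $2$ and $3$, and Bertrand's postulate guarantees at least one prime in general. Hence a uniformly random $x \in [n\dd 2n)$ is prime with probability at least $c/\log n$ for some absolute constant $c > 0$.

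\textbf{Randomized algorithm.} Repeat the following for $t = \ceil{(\err \ln 2)\log n / c} = \cO(\err \log n)$ rounds. Sample $x \in [n\dd 2n)$ uniformly, test whether $x$ is prime, and output the first $x$ that passes. Use a deterministic polynomial-time primality test such as AKS, which runs in $\polylog(n)$ time and $\polylog(n)$ space on an $\cO(\log n)$-bit number. Because the test never errs, the only way to fail is that all $t$ samples are composite. This happens with probability at most $(1 - c/\log n)^t \le e^{-ct/\log n} \le 2^{-\err}$.

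The total time is $\cO(\err \log n) \cdot \polylog(n) = \cO(\err \cdot \polylog(n))$. The space is $\polylog(n)$, because each test reuses the same workspace.

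\textbf{Deterministic algorithm.} Scan $x = n, n+1, \dots, 2n-1$ and return the first $x$ that the AKS test declares prime. Bertrand's postulate guarantees that a prime is found. The scan makes at most $n$ tests, each costing $\polylog(n)$ time, for $\cO(n \polylog(n))$ time overall, and it uses $\polylog(n)$ space.

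A simpler alternative avoids AKS in this case: trial division by all $d \le \sqrt{x}$ costs $\cO(\sqrt{n})$ per candidate. Since prime gaps near $n$ are $\cO(n^{0.525})$, the scan stops after $\cO(n^{0.525})$ candidates, so the total is $\cO(n^{0.525}\sqrt n) = \cO(n^{1.025})$, which is not $\cO(n \polylog n)$. The AKS-based scan is therefore the cleanest way to meet the stated bound.

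\textbf{Main obstacle.} The only nontrivial inputs are a primality test running in $\polylog(n)$ time and space and the prime-density lower bound; both are standard and can be cited. If one wants to avoid the heavy AKS machinery in the randomized case, Miller--Rabin with $\cO(\err)$ random bases also works. Its one-sided error adds at most $t \cdot 4^{-\Theta(\err)}$ to the failure probability, which is absorbed by running each test with $\err + \cO(\log \err)$ bases. This still fits within $\cO(\err \polylog n)$ time only if $\err$ is not multiplied twice, which is exactly why the deterministic test is preferable.
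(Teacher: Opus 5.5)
Your proof is correct and follows the paper's argument: AKS plus Bertrand--Chebyshev for the deterministic scan, and AKS applied to uniformly random samples together with the prime-number-theorem density bound for the randomized version. The only difference is cosmetic. You take $\lceil(\err\ln 2)\log n/c\rceil$ samples with an explicit density constant $c$, whereas the paper takes $\lceil\err\log^2 n\rceil$ samples so that the constant is absorbed once $n$ is large; your asides on trial division and Miller--Rabin are not needed.
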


\begin{proof}
    The AKS primality test verifies if a given number~$p$ is prime in $\cO(\polylog(p))$ time~\cite{AKS}. The deterministic algorithm applies the test to all $p \in [n\dd 2n)$ until a prime is found. (At least one such prime exists due to the Bertrand–Chebyshev theorem.) The probabilistic algorithm exploits that, once~$n$ exceeds some constant, $[n\dd2n)$ contains at least $n / (c  \ln n)$ primes for another constant $c > 0$ (which is a consequence of the prime number theorem). If we try $\ceil{\err  \log^2 n} > c  \err  \ln n$ values $p \in [n\dd 2n)$ drawn uniformly at random, then the verification takes $\cO(\err \cdot \polylog(n))$ time, and the probability that none of the tried values is prime is $(1 - 1 /(c \ln n))^{(c \cdot \ln n) \cdot \err} < e^{-\err} < 2^{-\err}$.
\end{proof}

\begin{fact}\label{fact:construct_deterministic_collision_free_fp}
    For any string $R \in [0\dd \sigma)^m$, we can find a prime $p \in \Theta(\sigma + m^4)$ and $r \in [0\dd p)$ such that, for every $\ell \in [1\dd m]$, there are no distinct length-$\ell$ substrings $X_1,X_2$ of~$R$ such that $\varphi_{p,r}(X_1) = \varphi_{p,r}(X_2)$ in $\cO(\poly(m + \sigma))$ time and $\cO(\polylog(m + \sigma))$ space deterministically.
\end{fact}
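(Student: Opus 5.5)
We prove \cref{fact:construct_deterministic_collision_free_fp} by exhaustive search over a deliberately small set of candidates, relying on the counting argument behind \cref{fact:collision_probability} to guarantee that a good choice exists.

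\textbf{Choosing the prime.} Set $n := \max(\sigma, m^4) + 1$. Using the deterministic part of \cref{fact:findprime_in_range}, find a prime $p \in [n \dd 2n)$ in $\cO(n \polylog n) = \cO(\poly(m+\sigma))$ time and $\cO(\polylog(m+\sigma))$ space. Then $p > \sigma$ and $p \in \Theta(\sigma + m^4)$.

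\textbf{A good base exists.} Call $r \in [0\dd p)$ \emph{bad} if some $\ell \in [1\dd m]$ admits distinct length-$\ell$ substrings $X_1, X_2$ of $R$ with $\varphi_{p,r}(X_1) = \varphi_{p,r}(X_2)$.

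1. Fix $\ell$ and a pair of distinct length-$\ell$ substrings. Then $\varphi_{p,x}(X_1) - \varphi_{p,x}(X_2)$ is a nonzero polynomial over $\mathbb{F}_p$ of degree less than $\ell$. It is nonzero because its coefficients are differences of symbols in $[0\dd\sigma)$ and $p > \sigma$. Hence at most $\ell - 1 < m$ values of $r$ are roots.
2. For each $\ell$ there are at most $m^2$ pairs of starting positions, and there are $m$ choices of $\ell$.
3. A union bound gives fewer than $m \cdot m^2 \cdot m = m^4 < p$ bad values.

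So some $r \in [0\dd p)$ is good.

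\textbf{Finding it in small space.} Enumerate $r = 0, 1, \ldots$ and test each candidate. To test $r$, iterate over all $\ell \in [1\dd m]$ and all pairs of starting positions $i, j$ with the same $\ell$.

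1. Compute $\varphi_{p,r}(R[i\dd i+\ell))$ and $\varphi_{p,r}(R[j\dd j+\ell))$ directly by Horner's rule, using $\cO(1)$ words.
2. If the fingerprints are equal, compare the two substrings symbol by symbol (read-only access to $R$) to decide whether they are distinct.
3. Reject $r$ as soon as a collision between distinct substrings is found.

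Each test costs $\cO(m^4)$ time. At most $p = \cO(\sigma + m^4)$ candidates are tried, so the total time is $\poly(m + \sigma)$. The space is a constant number of integers of magnitude $\poly(m+\sigma)$ plus the prime search, i.e., $\cO(\polylog(m+\sigma))$ space.

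The only real subtlety is making the counting strictly below $p$. This is why $p$ is taken larger than $m^4$ rather than merely $m^3$. It also requires $p > \sigma$, so that a difference of distinct strings yields a nonzero polynomial. Everything else is routine.
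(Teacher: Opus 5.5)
Your proposal is correct and matches the paper's proof. Both arguments pick a prime of order $\sigma + m^4$ via the deterministic search of \cref{fact:findprime_in_range}, use root counting over all pairs of equal-length substrings to show that fewer than $m^4 < p$ bases are bad, and then try every base with a constant-space collision test; your explicit remark that $p > \sigma$ makes the difference polynomial nonzero is a detail the paper leaves implicit.
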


\begin{proof}
    If distinct length-$\ell$ strings $X_1,X_2$ have the same fingerprint, then~$r$ is a root of polynomial $\varphi_{p,r}(X_1) - \varphi_{p,r}(X_2)$ of degree at most $\ell - 1$. There are at most $\ell - 1$ roots.
    Over all the different lengths $\ell \in [1\dd m]$, and all the pairs of length-$\ell$ substrings, there are less than $\sum_{\ell = 1}^m {\ell \cdot(m - \ell + 1)^2} \leq m^4$ values $r \in [0\dd p)$ that cause distinct equal-length substrings to have the same fingerprint. We find a prime  $p \in [\sigma + m^4\dd2\sigma + 2m^4)$ with \cref{fact:findprime_in_range}, and then systematically try all the $r \in [0\dd p)$ in $\cO(\poly(m + \sigma))$ time and constant space, until we find one that causes no collision.
\end{proof}

\begin{fact}\label{fact:construct_probabilistic_collision_free_fp}
   For any string $R \in [0\dd \sigma)^m$ and integer parameter $\err \geq \log_2 (m\sigma)$, we can find a prime $p \in [2^{5\err}\dd 2^{5\err + 1})$ and a base $r \in [0\dd p)$ such that, for every $\ell \in [1\dd m]$, there are no distinct length-$\ell$ substrings $X_1,X_2$ of~$R$ such that $\varphi_{p,r}(X_1) = \varphi_{p,r}(X_2)$. This takes $\cO(\poly(\err))$ time and succeeds with probability at least $1 - 2^{-\err}$.
\end{fact}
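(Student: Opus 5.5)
We will prove the statement in the most direct way: find a prime with \cref{fact:findprime_in_range}, draw the base $r$ uniformly at random, and bound the collision probability with \cref{fact:collision_probability:allsubstrings} and a union bound over all lengths.

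\emph{Step 1 (prime).} Apply the randomized part of \cref{fact:findprime_in_range} with $n = 2^{5\err}$ and error parameter $\err + 1$. This yields a prime $p \in [2^{5\err}\dd 2^{5\err+1})$ with failure probability at most $2^{-\err-1}$. The running time is $\cO(\err \cdot \polylog(2^{5\err})) = \cO(\poly(\err))$. Arithmetic on $\cO(\err)$-bit numbers costs $\poly(\err)$ per operation even if $\err$ exceeds the word size, so this remains polynomial in $\err$.

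\emph{Step 2 (base).} If Step 1 fails, we can detect this cheaply with the AKS test. Otherwise, draw $r \in [0\dd p)$ uniformly at random. Since $\err \geq \log_2(m\sigma)$, we have $p \geq 2^{5\err} > \sigma$, so \cref{fact:collision_probability:allsubstrings} applies. For each fixed length $\ell \in [1\dd m]$, the probability that two distinct length-$\ell$ substrings of~$R$ collide is at most $\ell(m-\ell+1)^2/p$. A union bound over all~$\ell$ gives total probability at most
\[
\sum_{\ell=1}^m \frac{\ell(m-\ell+1)^2}{p} \leq \frac{m^4}{p} \leq \frac{m^4}{2^{5\err}} \leq \frac{2^{4\err}}{2^{5\err}} = 2^{-\err},
\]
using $m \leq m\sigma \leq 2^{\err}$. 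Since $\err \geq 1$, this is at most $2^{-\err-1}\cdot 2$; to get the final constant cleanly, we instead run Step 1 with slightly larger $n = 2^{5\err+1}$, or simply note $m^4/p \le 2^{-\err-1}$ whenever $\sigma\ge2$ or $m\ge 2$, by slack in the exponent. Combining the two failure events by a union bound yields total failure probability at most $2^{-\err}$.

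\emph{Main obstacle.} The only delicate point is the constant bookkeeping: the two failure events must together total at most $2^{-\err}$ while keeping $p$ in the stated range $[2^{5\err}\dd 2^{5\err+1})$. The generous exponent $5\err$, compared with the $4\err$ needed for the collision bound, gives the necessary slack. Drawing $r$ takes $\cO(\poly(\err))$ time, so the whole procedure runs in $\cO(\poly(\err))$ time.
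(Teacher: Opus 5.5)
Your proposal is correct and matches the paper's proof: obtain $p$ via \cref{fact:findprime_in_range}, draw $r$ uniformly at random, and bound the collision probability by $m^4/p \le 2^{4\err}/2^{5\err} = 2^{-\err}$ (the paper counts fewer than $m^4$ bad bases, which is the same bound). You are more careful than the paper in also accounting for the prime-finding failure, but that step needs tidying: taking $n = 2^{5\err+1}$ would push $p$ out of the required range, and the ``$m \ge 2$'' clause does not follow from exponent slack when $\sigma = 1$; instead, note that $\sigma = 1$ is trivial because all equal-length substrings coincide, while $\sigma \ge 2$ gives $m \le 2^{\err-1}$ and hence $m^4/p \le 2^{-\err-4}$, so the total failure probability is at most $2^{-\err-1} + 2^{-\err-4} < 2^{-\err}$.
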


\begin{proof}
    We compute the prime with \cref{fact:findprime_in_range}, and draw $r \in [0\dd p)$ uniformly at random. As observed in the proof of \cref{fact:construct_deterministic_collision_free_fp}, less than $m^4$ values of~$r$ cause a collision, and thus the success probability is at least $1 - \frac{m^4}{p} \geq 1 - 2^{-\err}$. The inequality follows from $p \geq 2^{5\err} \geq m^4 \cdot 2^\err$.
\end{proof}

\subsection{Fiat--Naor Function Inversion Data Structure}
In the function inversion problem, we are given a function $f : [N] \rightarrow [N]$ and a value $y \in [N]$, and must either report some element of $f^{-1}(x) = \{ x \in [N] \mid f(x) = y\}$, or report that $f^{-1}(x) = \emptyset$.
Assume that $f(x)$ can be computed for any $x \in [N]$ in constant time.
A fundamental problem is to design an inversion data structure with good space-time trade-off; if we are allowed to preprocess~$f$ in order to construct a data structure of size $\tO(\tau)$ (for some parameter $\tau$), what is the fastest achievable inversion time?
A well-known lower bound states that the query time is $\tilde\Omega(N / \tau)$ \cite{DBLP:conf/stoc/Yao90} (also see~\cite{DBLP:journals/eccc/DeTT09} for an alternative proof).
The best known upper bound is due to Fiat and Naor, who showed that $\tO(N^3 / \tau^3)$ inversion time can be achieved~\cite{doi:10.1137/S0097539795280512}:

\inversion

Note that we state an upper bound $\tO(\tau)$ for the working space needed by the inversion data structure at query time, which was not explicitly stated (but can be readily verified) in \cite{doi:10.1137/S0097539795280512}.
Below, we show that inverting a function $f : [N] \rightarrow [M]$ with $M > N$ is not significantly harder than inverting a function $g : [N] \rightarrow [N]$.

\begin{lemma} \label{lem:inversion_reduce_codomain}
    Given a function $f: [N] \rightarrow [M]$ and a positive integer parameter $\err \geq \log_2 N$,
    we can in $\cO(\poly(\err))$ time, and with success probability at least $1 - 2^{-\err}$, construct $k = \cO(\err)$ functions $f_1, \dots, f_k : [N] \rightarrow[N]$ with the following property.
    For any $i \in [k]$ and $x \in [N]$, evaluating $f_i(x)$ can be reduced in $\cO(1)$ time to evaluating $f(x)$.
    An inversion query for~$f$ can be reduced in $\cO(\err)$ time to one inversion query for each of $f_1, \dots, f_k$.
\end{lemma}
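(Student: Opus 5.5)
We hash the codomain $[M]$ down to $[N]$ with random hash functions and use $k$ independent hashes so that false positives are unlikely. Concretely, we draw $k = \Theta(\err)$ functions $h_1,\dots,h_k : [M] \to [N]$ from a universal family, for instance $h_i(y) = ((a_i y + b_i) \bmod p) \bmod N$ with a prime $p \ge M$ found via \cref{fact:findprime_in_range}. We then set $f_i = h_i \circ f$. Evaluating $f_i(x)$ clearly costs one evaluation of $f$ plus $\cO(1)$ arithmetic (we assume $M$ fits in a word, i.e.\ $\log M = \cO(w)$). Drawing the $a_i,b_i$ and finding $p$ takes $\cO(\poly(\err))$ time.

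To answer an inversion query for $y \in [M]$, we first compute $h_i(y)$ for every $i$ and ask the inversion query $h_i(y)$ for $f_i$, obtaining some $x_i$ or a report of emptiness. For each returned $x_i$, we evaluate $f(x_i)$ and check whether it equals $y$. If some check succeeds, we output that $x_i$; otherwise we report $f^{-1}(y) = \emptyset$. This takes $\cO(\err)$ time beyond the $k$ inversion queries.

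Soundness is immediate, since we only output verified preimages. The key issue is completeness: if $f^{-1}(y) \neq \emptyset$, every $f_i^{-1}(h_i(y))$ is nonempty, but the returned $x_i$ may satisfy $f(x_i) = y' \ne y$ with $h_i(y') = h_i(y)$. I therefore plan to argue a per-query guarantee. The inversion oracle may adversarially return any element of $f_i^{-1}(h_i(y))$. It returns a true preimage for sure if no value $y' \in f([N]) \setminus \{y\}$ collides with $y$ under $h_i$.

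The main obstacle is that this bad event has probability up to $\min(1, N/N)$ per $i$ under plain universal hashing, since the image of $f$ can have size $N$. To avoid this, I would hash into a range of size $N/2$ (or $N$ with a constant-factor adjusted family), making the collision probability for fixed $y$ at most $1/2$ per $i$. The failure probability, that all $k$ indices collide, is then $2^{-k} \le 2^{-\err}$ for $k = \err$. Strictly, this is per query rather than uniform over all $y$. If a uniform guarantee is needed, I would instead take $k = \cO(\err)$ with $k \ge \err + \log_2 M$ and union-bound over $y$, which stays $\cO(\err)$ as long as $\log M = \cO(\err)$. I would state the lemma's success event accordingly, noting that in the applications $M$ is polynomial in $2^{\err}$.
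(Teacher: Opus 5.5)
Your overall scheme (compose $f$ with $k=\Theta(\err)$ independent universal hashes, invert each $f_i$ at $h_i(y)$, and verify every candidate with one evaluation of $f$) is the same as the paper's. However, the step meant to make each index fail with probability at most $1/2$ goes in the wrong direction. For $y\in f([N])$, index $i$ is bad if one of the at most $N-1$ other image values collides with $y$ under $h_i$. For a universal family into a range of size $r$, this happens with probability at most roughly $(N-1)/r$. Shrinking the range to $N/2$ makes this bound about $2$. With range exactly $N$ it is $(N-1)/N$, which is not bounded away from $1$, so $\cO(\err)$ repetitions do not give $2^{-\err}$ by this argument.

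You need a range of size at least $2N$. The paper hashes into $[p]$ for a prime $p\in[2N\dd 4N)$; with your $\bmod\ r$ construction, $r=4N$ and $p\ge\max(M,4N)$ leave enough slack. But then $h_i\circ f$ no longer has co-domain $[N]$, as the lemma demands, and your proposal is missing the repair. Split each such function into $\cO(1)$ functions $f_{i,j}:[N]\to[N]$, where $f_{i,j}(x)=h_i(f(x))-(j-1)(N-1)$ if $h_i(f(x))\in((j-1)(N-1)\dd j(N-1)]$, and $f_{i,j}(x)=N$ otherwise. A query value $h_i(y)$ is routed to the unique $j$ whose block contains it. The dummy value $N$ is never queried, and the total number of functions stays $\cO(\err)$.

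Second, you do not need to weaken the lemma to a per-query guarantee, nor to union-bound over all of $[M]$. For $y\notin f([N])$, your verification already forces the correct answer ``empty'' regardless of the hashes. So only the at most $N\le 2^{\err}$ values in the image of $f$ can ever cause an error. Take $k=2\err+1$ independent indices, each bad for a fixed image value with probability at most $1/2$. Then the probability that some image value is bad for all $k$ indices is at most $N\cdot 2^{-(2\err+1)}\le 2^{-\err}/2$. This is exactly the paper's argument via the sets $C_i$ and $C=\bigcap_i C_i$. It gives the uniform guarantee the lemma asserts: a successful construction answers every later query correctly. That is what the applications need, since the functions are built once and queried indefinitely, and it requires no relation between $M$ and $\err$.

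The only remaining dependence on $M$ in your route is finding the prime $p\ge M$, which costs $\cO(\err\cdot\polylog(M))$ time. So state $\log M\le\poly(\err)$ explicitly; it holds in the paper's use, where $M<2^{5\err+1}$. Otherwise that choice is the careful one. Reducing $a_iy+b_i$ modulo a prime below $M$, as the paper's family does, makes any two values congruent modulo that prime collide under every $h_i$. Such a family is therefore not pairwise independent on $[M]$.
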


\begin{proof}
    The problem is trivial for $M \leq N$ (using a single function $f_1 = f$).
    An intuitive idea would be to use a hash function $h : [M] \rightarrow [N]$.
    If there are no $x_1,x_2 \in [N]$ such that $f(x_1) \neq f(x_2)$ but $h(f(x_1)) = h(f(x_2))$,
    then inverting $y \in [M]$ with respect to~$f$ is equivalent to inverting $h(y)$ with respect to $h \circ f$.
    Since we cannot afford to store a perfect hash function, we use a slightly more involved probabilistic approach.
    We compute a prime $p \in [2N\dd 4N)$ using \cref{fact:findprime_in_range} with parameter $\err+1$ in $\cO(\poly(\err))$ time, failing with probability at most $2^{-\err}/2$.
    Then, we use $k = 2\err + 1$ hash functions $h_1, \dots, h_k : [M] \rightarrow [1 \dd p]$ drawn independently and uniformly at random from a family of pairwise independent hash functions.
    This can be done in $\cO(\err)$ time by defining $\forall i \in [k] : h_i(x) = 1 + ((a_i \cdot x + b_i) \bmod p)$, where $a_i, b_i \in [0\dd p)$ are drawn uniformly at random.
    For $i \in [k]$, let
    \begin{align*}
    C_i &=\{ x_1 \in [N] \mid \exists x_2 \in [N] \textnormal{\ such that\ } f(x_1) \neq f(x_2)\textnormal{\ and\ }h_i(f(x_1)) = h_i(f(x_2)) \quad \textnormal{\ and \ }\\
    C &= \bigcap\nolimits_{i \in [k]} C_i.
    \end{align*}
    Since $h_i$ is pairwise independent, for any $x_1, x_2 \in [N]$ with $f(x_1) \neq f(x_2)$, we have $\prob[h_i(f(x_1)) = h_i(f(x_2))] = 1/p$.
    We apply the union bound over all the $x_2 \in [N]$ with $f(x_1) \neq f(x_2)$ to obtain $\prob[x_1 \in C_i] \leq N/p \leq 1/2$.
    We then apply the union bound first over all $i \in [k]$ and then over all $x_1 \in [N]$, resulting in $\prob[x_1 \in C] \leq 2^{-(2\err + 1)} \leq 2^{-\err} / (2N)$ and $\prob[C \neq \emptyset] \leq 2^{-\err} / 2$.
    
    For $i \in [k]$, we define $f_i : [N] \rightarrow [p]$ with $f_i(x) = (h_i \circ f)(x)$, where $f_i(x)$ can be computed in constant time given $f(x)$. (We reduce the co-domain to $[N]$ later.)
    For some $x \in [N]$, let $y = f(x)$.
    It clearly holds ${f^{-1}(y) \subseteq f_i^{-1}(h_i(y))}$.
    If $C = \emptyset$ then there is $i \in [k]$ with $x \notin C_i$.
    This implies that, for all $x' \in f_i^{-1}(h_i(y))$, we have $f(x') = y$, that is, inverting $f_i$ at point $h_i(y)$ has the same effect as inverting~$f$ at point~$y$.
    Therefore, we can use the following query procedure.
    Given an inversion query $y \in [M]$, for each $i \in [k]$, we invert $f_i$ to obtain $x_i \in f_i^{-1}(h_i(y))$.
    For each $x_i$, we check whether $f(x_i) = y$.
    If this does not hold for any $x_i$, then we return that $f^{-1}(y) = \emptyset$.
    The construction fails if we fail to compute a prime~$p$ or if $C \neq \emptyset$, and thus with probability at most~$2^{-\err}$.

    Finally, each $f_i$ has co-domain $[p] \subset [4N]$. For an aesthetic improvement, we split each function~$f_i$ into five functions $f_{i,1}, \dots, f_{i,5} : [N] \rightarrow [N]$ with $f_{i,j}(x) = N$ if $f(x) \notin ((j - 1) \cdot (N-1)\dd j \cdot (N - 1)]$, and otherwise $f_{i,j}(x) = f_i(x) - (j - 1) \cdot (N-1)$.
    Hence, inverting $f_i$ at position~$y$ is the same as inverting $f_{i,j}$ for $j = \ceil{y / (N - 1)}$ at position $f_i(x) - (j - 1) \cdot (N-1)$.
\end{proof}

\section{Suffix Random Access Data Structures}
\label{sec:ra_data_structures}

In this section, we consider the problem of maintaining a $k$-error random access data structure formalized in the beginning of \cref{sec:tov_ub_ra}.

We consider two computational models: in the first, the text is given offline; in the second, it arrives as a stream.
In both settings, the data structure is computed in two stages. First, during a preprocessing stage, we only have access to the reference (and not to the text).
Then, in the second stage, we are allowed to access both the text and the reference to complete the construction of the data structure.

We measure the performance of the data structure in terms of
the preprocessing time (i.e., the time spent preprocessing only the reference, independently of the text),
the query time (i.e., the time needed to answer a random access query),
and the space complexity (i.e., the worst-case size of the data structure excluding the space required to store $R$ at any moment in time).
We stress that the space complexity includes the additional working space used during preprocessing and construction.
Unless explicitly stated otherwise, we measure the space usage in memory words.
For the aforementioned measures, we usually express the worst-case performance as a function of the reference length $m$, i.e., we state the worst-case time or space required for an arbitrary text.

If the text is given offline, then we also measure the construction time \emph{after preprocessing}, i.e., the time needed for completing the construction of the data structure after the first access to~$T$.
If the text arrives as a stream, then we have to maintain the random access data structure for the currently known prefix of the text.
Immediately before $T[i]$ arrives, we have already constructed the random access data structure for text $T[1\dd i)$. When $T[i]$ arrives, we have to update the data structure so that it works for text $T[1\dd i]$.
In this setting, we measure the update time needed for each arriving symbol. In the streaming setting, the construction time is then the product of the update time and the length of the stream.
Note that every data structure for the streaming setting yields a data structure for the offline setting.

\subsection{Auxiliary Lemmas}

During the preprocessing stage, we compute some information that depends only on the reference.
At construction time, i.e., once the text is first accessed, we may still modify (e.g., overwrite) this precomputed information.
Hence, if the goal is to construct the data structure for multiple texts, the preprocessing has to be repeated for each text. 
Below, we show that, for offline random access data structures, we can assume without loss of generality that the precomputed information is not modified at construction time, and hence it suffices to run the preprocessing once.

\begin{lemma}\label{lem:preprocess_once_use_many_times}
    Assume that there is an offline $k$-error random access data structure with preprocessing time $\cO(p(m))$, construction time $\cO(n \cdot c(m))$, query time $\cO(q(m))$, and space complexity $\cO(s(m))$.
    There is an offline $k$-error random access data structure with the same asymptotic preprocessing time, construction time, query time, and space complexity that, after preprocessing, never modifies the information computed during preprocessing.
\end{lemma}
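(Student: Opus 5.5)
The plan is a copy-on-write simulation. Given the original data structure $\mathcal{D}$, we define a new data structure $\mathcal{D}'$ as follows. First, $\mathcal{D}'$ runs the preprocessing of $\mathcal{D}$ unchanged. Let $M_0$ be the memory contents at the end of this stage; it occupies $\cO(s(m))$ words. $M_0$ is then declared read-only. During construction and querying, $\mathcal{D}'$ simulates $\mathcal{D}$ step by step. Every memory access of $\mathcal{D}$ to a cell $a$ is redirected through an overlay structure $\Delta$ that stores the cells written since preprocessing ended. A read of $a$ returns $\Delta[a]$ if $a$ is present in $\Delta$, and $M_0[a]$ otherwise. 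A write to $a$ is applied to $\Delta$ only. Under this simulation, $\mathcal{D}$ behaves exactly as before, since every read returns the value it would have seen. Hence $\mathcal{D}'$ is a correct offline $k$-error random access data structure, and $M_0$ is never modified. To reuse the preprocessing for another text, we discard $\Delta$.

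Space is straightforward. $\mathcal{D}$ never touches more than $\cO(s(m))$ memory cells, so $\Delta$ holds at most that many entries. We may assume the addresses used by $\mathcal{D}$ lie in a range of size $\cO(s(m))$. This is without loss of generality by standard memory-compaction conventions for a space-$s$ algorithm; if needed, it is enforced by the same remapping. The total space is therefore $\cO(s(m))$.

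The main obstacle is keeping the overhead of each simulated access at $\cO(1)$ in the worst case, deterministically, without hashing. My first attempt is a direct-address overlay. Allocate an array $V$ of size equal to the address range, together with a validity mechanism telling whether $V[a]$ has been written since construction began. Reinitializing a validity bitmap would cost $\Theta(s(m))$ time per text, and that may exceed the construction time $\cO(n\cdot c(m))$ for short texts. So instead I use the classical constant-time-initializable array trick: a stack of written addresses plus back-pointers, so that membership of $a$ is certified in $\cO(1)$ time with no initialization. The auxiliary arrays have size $\cO(s(m))$. They can be allocated once during preprocessing, and their garbage contents do not matter. Every read and write then costs $\cO(1)$, so construction time stays $\cO(n\cdot c(m))$ and query time stays $\cO(q(m))$.

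Two details remain. First, the overlay arrays themselves are written during construction, which formally modifies memory allocated at preprocessing. I will argue that these arrays are scratch space rather than "information computed during preprocessing": their contents are never relied upon, precisely because of the initializable-array trick. Alternatively, they can be allocated lazily at construction time. Second, queries also write memory (working space). These writes likewise go to $\Delta$, so $M_0$ stays intact throughout.
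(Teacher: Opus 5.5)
Your copy-on-write simulation is correct, but the paper takes a much more elementary route. It splits on the text length. If $n \le s(m)$, it ignores the preprocessed information entirely and uses a verbatim copy of $T$ as the random access data structure, at $\cO(n)$ time, $\cO(s(m))$ space and constant query time. If $n > s(m)$, it first makes a plain copy of the preprocessed structure $\mathcal A$ in $\cO(s(m)) \subseteq \cO(n)$ time and runs the original construction on that copy.

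You identified the obstacle that resetting a bitmap (or copying) costs $\Theta(s(m))$, which may exceed $\cO(n\cdot c(m))$ for short texts. The case split removes exactly this obstacle, so the paper needs no overlay and no constant-time-initializable array.

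Your approach does buy something extra:
\begin{itemize}
\item the per-construction overhead is $\cO(1)$ regardless of $n$;
\item no fallback structure for short texts is needed;
\item $M_0$ can be shared read-only by arbitrarily many concurrent instances.
\end{itemize}
The price is reliance on the uninitialized-array trick and on the convention that the original structure's addresses lie in a range of size $\cO(s(m))$. The paper's copying step implicitly relies on the same convention to copy $\mathcal A$ in $\cO(s(m))$ time.

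Of your two options for the overlay arrays, lazy allocation at construction time is the cleaner one. It is also the one you actually need, because the later reductions (e.g., the streaming reduction) run $\cO(\log m)$ constructions at once, and each needs its own overlay.
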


\begin{proof}
    Let $\mathcal A$ be the data structure of size $\cO(s(m))$ computed during the preprocessing stage.
    If the text is of length $n \leq s(m)$, then we can use a copy of the text as a random access data structure with construction time $\cO(n)$, query time $\cO(1)$, and space complexity $\cO(s(m))$.
    If $n > s(m)$, then we first create a copy of $\mathcal A$ in $\cO(s(m)) \subseteq \cO(n)$ time.
    Then, we run the construction algorithm using this copy, resulting in overall $\cO(n \cdot c(m))$ construction time.
\end{proof}

\begin{remark}\label{rem:preprocess_once_use_many_times}
	Henceforth, we assume that \cref{lem:preprocess_once_use_many_times} is applied in the construction of any offline $k$-error random access data structure.
\end{remark}

Another useful observation is that, by using exponential search for finding an approximation of the support-length, we can reduce the construction time so that, instead of depending on the text length $n$, it depends on the actual support-length~$h$ of the computed data structure.

\begin{lemma}\label{lem:construct_in_h_time}
    If there is an offline $k$-error random access data structure with preprocessing time $\cO(p(m))$, construction time $\cO(n \cdot c(m))$, query time $\cO(q(m))$, and space complexity $\cO(s(m))$, then there is an offline $k$-error random access data structure with construction time $\cO(h \cdot c(m))$, where~$h >0$ is the support-length of the data structure. The asymptotic preprocessing time, query time, and space complexity remain unchanged.
\end{lemma}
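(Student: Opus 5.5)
We use exponential search over suffix lengths, running the given construction algorithm as a black box on suffixes of geometrically increasing length. Fix the given offline data structure $\mathcal D$. By \cref{rem:preprocess_once_use_many_times}, its preprocessing is done once and never modified afterwards, so it can be run on many texts. For $j = 0, 1, 2, \dots$, let $n_j = \min(2^j, n)$ and let $T_j = T(n-n_j\dd n]$ be the length-$n_j$ suffix of $T$, which is accessible read-only as a fragment of $T$. For each $j$ we run the construction of $\mathcal D$ on text $T_j$, taking $\cO(n_j \cdot c(m))$ time, and obtain support-length $h_j \in [0\dd n_j]$. We stop at the first $j$ with $h_j < n_j$ or $n_j = n$, and output the structure built for $T_j$, discarding all earlier ones. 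Before building the next structure we free the previous one, so the space stays $\cO(s(m))$ up to an additive $\cO(1)$ for counters.

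\emph{Correctness.} Random access queries $i \in (n-h_j\dd n]$ translate to queries $i-(n-n_j)$ on $T_j$ in $\cO(1)$ extra time, so the query time is unchanged. Suppose the search stops because $h_j < n_j$. Then the string $T_j[n_j-h_j\dd n_j] = T[n-h_j\dd n]$ cannot be factorized into $k$ symbols and $k+1$ substrings of $R$. This is exactly the condition required for support-length $h_j < n$ on text $T$, so the output is a valid $k$-error structure for $T$. If instead the search stops because $n_j = n$, then the structure was built for $T$ itself and is trivially valid.

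\emph{Running time.} Let $j^*$ be the final index. Every earlier round $j<j^*$ had $h_j = n_j = 2^j$. For $j^*>0$, this gives $h_{j^*-1} = 2^{j^*-1}$. The key step, and the main obstacle, is bounding the final $n_{j^*}$ by $\cO(h)$, where $h=h_{j^*}$ is the output support-length.

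The factorizability property is monotone under taking suffixes. Indeed, a suffix of a string factorized into $\le k$ symbols and $\le k+1$ substrings of $R$ can itself be factorized that way, because a suffix of a substring of $R$ is a substring of $R$. We must also check that "exactly $k$ and $k+1$" in the definition can be read as "at most": a symbol of the factorization is a length-1 substring if it occurs in $R$, and empty substrings are allowed. I would make this reading explicit.

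However, monotonicity alone does not force $h_{j^*} \ge h_{j^*-1}$, because the support-length returned by $\mathcal D$ is only guaranteed to be \emph{at least} the longest factorizable suffix, and may be smaller. The fix is to output the better of the last two structures. If $h_{j^*} \ge 2^{j^*-1}$, output the final structure; then $n_{j^*} \le 2^{j^*} \le 2h$. Otherwise, output the structure from round $j^*-1$, whose support-length $2^{j^*-1}$ exceeds $h_{j^*}$. We claim this is still valid for $T$. Since $T[n-h_{j^*}\dd n]$ is not factorizable and factorizability is monotone, the longer suffix $T[n-2^{j^*-1}\dd n]$ is not factorizable either.

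To keep both structures at once we pay $\cO(s(m))$ space twice, which is fine. The total time is then $\sum_{j\le j^*}\cO(2^j c(m)) = \cO(2^{j^*}c(m)) = \cO(h\cdot c(m))$, because $h \ge 2^{j^*-1}$ in either case. The edge case $j^*=0$ costs $\cO(c(m))$, which is $\cO(h\cdot c(m))$ since $h>0$.
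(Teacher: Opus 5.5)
Your proposal is the paper's proof: exponential search over the suffixes $T(n-2^j\dd n]$ with preprocessing done once. If the final support-length drops below $2^{j^*-1}$, you return the previous round's structure, which is valid for $T$ because factorizability is preserved under taking suffixes. The time bound then follows from $h \ge 2^{j^*-1}$.

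There is one small gap, at round $0$. You run the black-box construction on the length-$1$ text $T[n]$, and it may legitimately return support-length $0$. This happens, for instance, when $k=0$ and $T[n]$ does not occur in $R$, which is exactly the $0$-error case the paper later feeds into this lemma. You then stop at $j^*=0$ and output $h=0$. At that point your remark ``since $h>0$'' has nothing behind it, and $\cO(h\cdot c(m))$ cannot bound the $\Theta(c(m))$ work. More importantly, the reduction from $k$-error to $0$-error structures relies on this lemma always returning $h\ge 1$, since otherwise its chain $n_{i+1}=n_i-h_i$ stalls.

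The paper avoids this by making step $0$ an explicit copy of $T[n]$. That copy has support-length $1$ for the length-$1$ text, which is trivially valid because it covers the whole text. If round $1$ then returns support-length $0$, the ordinary fallback hands back this copy. The copy is valid for $T$ by the same monotonicity argument you already give, since an unfactorizable $T[n]$ makes $T[n-1\dd n]$ unfactorizable. Adding this one-line change to your round $0$ closes the gap, and the rest of your argument stands as written.
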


\begin{proof}
We will construct an offline $k$-error random access data structure for multiple suffixes of~$T$.
By \cref{lem:preprocess_once_use_many_times}, we have to perform the preprocessing only once.
We compute the data structure at most $\ceil{\log_2 n} + 1$ times. 
In step $i = 0$, we store a copy of $T[n]$ as a random access data structure with support-length~$1$ for text $T(n - 2^0\dd n] = T[n]$.
In step $i > 0$, we construct the $k$-error random access data structure for text $T(n - \min(n, 2^i)\dd n]$. 
If $i < \ceil{\log_2 n}$ and the support-length of the data structure is $2^i$, then we continue with the next step.
Otherwise, if the support-length lies in the range $[2^{i - 1}\dd 2^i)$, then we return the data structure constructed in step~$i$.
Else, the support-length is less than $2^{i - 1}$, and we instead return the data structure constructed during (the previous) step $i - 1$, which has support-length $2^{i - 1}$. 
Either way, if we terminate in step $i$, then the support-length of the returned data structure is $h \geq 2^{i - 1}$, and the overall construction time is $\cO(\sum_{j=0}^i 2^j \cdot c(m)) = \cO(h \cdot c(m))$.
The construction space is $\cO(s(m))$ because we only have to store the data structures computed in the two most recent levels.
\end{proof}

\subsection{Reduction from Streaming to Offline Data Structures}

Every streaming algorithm for constructing a random access data structure implies an offline algorithm by simply treating the offline text as a stream.

\begin{observation}
If there is a streaming $k$-error random access data structure with amortized update time $\cO(u(m))$, then there is an offline $k$-error random access data structure with construction time $\cO(n \cdot u(m))$.
The space complexity, preprocessing time, and query time are identical.
\end{observation}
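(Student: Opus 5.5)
The plan is a direct simulation: feed the offline text to the streaming algorithm one symbol at a time, in order. Start from the state produced by the streaming preprocessing on $R$, which we take verbatim as the offline preprocessing, so the preprocessing time agrees. Then, for $i=1,\dots,n$, read $T[i]$ through the read-only access to $T$ and run the streaming update procedure on it. After step $n$, the maintained object is, by definition of the streaming structure, a $k$-error random access data structure for $R$ and $T[1\dd n]=T$. Concretely, it stores a support-length $h$ with the required non-factorizability property, and it answers random access queries $i\in(n-h\dd n]$ without accessing the text. This final object is exactly what the offline model asks us to output.

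For the complexity bounds, the construction time after preprocessing is the total time of the $n$ updates. Since the streaming update time is amortized $\cO(u(m))$, this total is $\cO(n\cdot u(m))$. Here I would note that an amortized bound applies to the whole sequence of $n$ updates starting right after preprocessing, which is precisely our situation. Queries are handled by the streaming structure's own query procedure, so the query time is the same.

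The only point needing care is space. Both models count the working space at every moment except the space for $R$. The simulation adds only $\cO(1)$ words: an index $i$ into $T$, and the current symbol read on demand from the read-only text instead of being buffered. So the space complexity is identical up to this constant.

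There is no real obstacle here. The one subtlety worth stating is that the streaming structure never needs to reread earlier symbols. It would be allowed to in the offline model, but the simulation simply doesn't use that, so no guarantee of the streaming algorithm is violated.
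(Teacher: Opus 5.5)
Your proposal is correct and matches the paper's own one-line justification: the offline text is simply fed to the streaming structure as a stream, so the construction time is $n$ times the amortized update time, and preprocessing, query time and space carry over. Your added remarks fill in details the paper leaves implicit. One is that the amortized bound covers the $n$ updates starting right after preprocessing. The other is that the simulation needs only $\cO(1)$ extra words for the read index.
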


In the remainder of the section, we show that every offline construction algorithm also implies a streaming construction algorithm.
The general idea is to decompose the stream into blocks, and then compute the offline data structure of each block (carefully de-amortizing the construction time over the arriving symbols, similarly to what was done in \cite{DBLP:journals/iandc/CliffordEPP11}). We will periodically merge the data structures of adjacent blocks using the following auxiliary \lcnamecref{lem:mergeoffline}.

\begin{lemma}\label{lem:mergeoffline}
Assume that there is an offline $k$-error random access data structure with 
construction time $\cO(n \cdot c(m))$, query time $\cO(q(m))$, and space complexity~$\cO(s(m))$.
If this data structure is given for texts $T_1[1\dd n_1]$ and $T_2[1\dd n_2]$, then a $k$-error random access data structure for $T_1 \cdot T_2$ can be constructed in $\cO((n_1 + n_2) \cdot q(m) \cdot c(m))$ time.
\end{lemma}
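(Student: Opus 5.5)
The plan is to treat the concatenation $T_1 \cdot T_2$ itself as an offline text. We then run the given offline construction algorithm on it, simulating every access to the text by a random access query to one of the two given data structures. The catch is that the given structures only support the suffixes of $T_1$ and $T_2$ of lengths $h_1$ and $h_2$, not all of $T_1$ and $T_2$. So the first step is to determine which part of $T_1 T_2$ the new structure can possibly need.

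\textbf{Case $h_2 < n_2$.} By the definition, $T_2[n_2 - h_2\dd n_2]$ cannot be factorized into $k$ symbols and $k+1$ substrings of~$R$. This string is also a suffix of $T_1T_2$, so we can simply return the data structure for $T_2$, reinterpreted with the same support-length $h_2$. Queries are shifted by $n_1$, and nothing needs to be constructed.

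\textbf{Case $h_2 = n_2$.} All of $T_2$ is accessible. We now consider the text $T' = T_1(n_1 - h_1\dd n_1] \cdot T_2$, of length $h_1 + n_2$.
\begin{itemize}
\item If $h_1 < n_1$, then $T_1[n_1-h_1\dd n_1]$ has no factorization into $k$ symbols and $k+1$ substrings of~$R$. Hence no suffix of $T_1T_2$ that is longer than $|T'|$ admits such a factorization either: restricting a factorization of a longer suffix to the part lying in $T_1$ yields a factorization of a suffix of $T_1$ with at most as many pieces, because truncating a factor leaves a substring of $R$ or the empty string. So any valid support-length for text $T'$ is also valid for $T_1T_2$, after shifting positions.
\item If $h_1 = n_1$, then $T' = T_1T_2$.
\end{itemize}

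We therefore run the offline construction algorithm on the text $T'$. Each text access $T'[i]$ is answered by one query to the structure of $T_1$ or of $T_2$, chosen by comparing $i$ with $h_1$, in $\cO(q(m))$ time. The algorithm makes at most $\cO(|T'| \cdot c(m))$ text accesses, so the total time is $\cO((n_1+n_2)\cdot q(m)\cdot c(m))$. The resulting data structure answers queries without touching the text, so the two input structures can be discarded afterwards. If an offline text must be presented as a read-only array with a known length $|T'|$, that length is available because $h_1$ and $n_2$ are stored explicitly.

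The main subtlety is the monotonicity argument for the non-factorizable suffix: a string with no factorization into $k$ symbols and $k+1$ substrings of $R$ keeps that property when we prepend symbols. This follows from closure of substrings of $R$ under taking suffixes, together with padding the factorization with empty factors where needed.

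A lesser subtlety is the space needed while the construction runs, namely the two input structures plus the working space of the construction. This is $\cO(s(m))$ and does not affect the time bound claimed in the lemma.
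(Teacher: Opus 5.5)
Your proposal is correct and is essentially the paper's proof. The paper also sets $h=h_2$ if $h_2<n_2$ and $h=h_1+n_2$ otherwise, and runs the offline construction on $T(n-h\dd n]$, simulating each text access by an $\cO(q(m))$-time query to one of the two given structures. It justifies the support-length by noting that $T[n-h\dd n]$ has a non-factorizable prefix, either $T_1[n_1-h_1\dd n_1]$ or $T_2[n_2-h_2\dd n_2]$ itself. There are only two minor differences: the paper reruns the construction even when $h_2<n_2$ instead of reusing the $T_2$ structure with an offset, and it explicitly invokes \cref{rem:preprocess_once_use_many_times} to guarantee that the preprocessed information of $R$ is still intact for this third construction, which your argument assumes silently.
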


\begin{proof}
Let $T = T_1 \cdot T_2$ and $n = n_1 + n_2$. 
Let $h_1,h_2$ be respectively the support-lengths of the data structures for $T_1$ and $T_2$.
Let $h = h_2$ if $h_2 < n_2$ and $h = h_1 + n_2$ otherwise.

By our assumption from \cref{rem:preprocess_once_use_many_times}, we assume that we have access to the information computed during the preprocessing of $R$ (see \cref{lem:preprocess_once_use_many_times}).
Using the data structures for $T_1$ and $T_2$, we can simulate random access to $T(n - h \dd n]$ in $\cO(q(m))$ time.
Hence we can run the construction algorithm for $T(n - h \dd n]$ in $\cO(h \cdot q(m) \cdot c(m)) \subseteq \cO(n \cdot q(m) \cdot c(m))$ time using $\cO(s(m))$ space (using \cref{lem:construct_in_h_time}).

It remains to be shown that the constructed data structure works correctly.
For $x \in \{1, 2\}$, if $h_x < n_x$ then $T_x[n_x - h_x\dd n_x]$ cannot be written as a factorization of~$k$ symbols and $k + 1$ substrings of~$R$ (by the definition of $k$-error random access data structures).
If $h < n$, then $T[n - h\dd n]$ either has prefix $T_1[n_1 - h_1 \dd n_1]$, or it has prefix (and is equal to) $T_2[n_2 - h_2 \dd n_2]$.
Hence $T[n - h\dd n]$ cannot be written as a factorization of~$k$ symbols and $k + 1$ substrings of $R$ either, and it thus suffices to consider $T(n - h\dd n]$.
\end{proof}

\newcommand{\z}{z}

\begin{observation}\label{obs:datastructure_of_substring}
	Consider a text $T \in \Sigma^n$, a reference $R \in \Sigma^m$, and $1 \le i \le \z \le j \le n$.
	If $T[\z\dd n]$ can be factorized into~$k$ symbols and $k+1$ substrings of $R$, then so can $T[\z \dd j]$.
	If $\z > i$ and $T[\z - 1\dd j]$ cannot be factorized in such a way, then neither can $T[\z - 1 \dd n]$.
\end{observation}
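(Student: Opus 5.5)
The plan is a direct argument from the definition of a factorization into $k$ symbols and $k+1$ substrings of $R$, using that the class of such strings is closed under taking substrings. Throughout, I will read ``factorized into $k$ symbols and $k+1$ substrings of $R$'' as allowing empty substrings of $R$. Equivalently, the string is an alternating concatenation $F_0 c_1 F_1 c_2 \cdots c_k F_k$, where each $F_t$ is a (possibly empty) substring of $R$ and each $c_t$ is a single symbol. This is the reading used in \cref{def:ra_data_structure}, and it is what makes monotonicity hold.

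First I will prove the key claim: if $X$ admits such a factorization and $Y$ is any fragment of $X$, then $Y$ admits one too. Fix a factorization $X = F_0 c_1 F_1 \cdots c_k F_k$ and intersect each piece with the fragment $Y$. A piece $F_t$ restricted to $Y$ becomes a substring of $F_t$, and hence a (possibly empty) substring of $R$. A symbol $c_t$ restricted to $Y$ either survives as a single symbol or disappears.

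If $c_t$ disappears, I will account for it by keeping a slot for it: substitute an arbitrary symbol that is also absorbed into an adjacent piece. The cleaner way is to note that a factorization with fewer than $k$ symbols can be padded. We may split any nonempty $F_t$ as $F' \cdot a \cdot F''$, where $a$ is a symbol and $F', F''$ are substrings of $R$; each such split trades one substring slot for one symbol and one extra substring. If all pieces are empty, we may instead simply declare fewer symbols, reading the definition as ``at most $k$ symbols and at most $k+1$ substrings''.

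I expect the only real obstacle to be this bookkeeping about exact versus at-most counts. I will resolve it by observing that the definition is equivalent to ``at most $k$ symbols and at most $k+1$ substrings'' whenever empty substrings of $R$ are allowed. The reason is that any symbol $c_t$ can be moved to the end or dropped while empty pieces fill the remaining slots. More precisely, a string with a factorization using at most $k$ symbols can be rewritten with exactly $k$ symbols only if it is long enough, but the contrapositive statements we need only use the at-most form. I will state that convention explicitly.

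With the closure claim in hand, both parts follow quickly. For the first statement, $T[z\dd j]$ is a prefix of $T[z\dd n]$, so it inherits a factorization. For the second statement, $T[z-1\dd j]$ is a prefix of $T[z-1\dd n]$. If $T[z-1\dd n]$ could be factorized, the closure claim would make $T[z-1\dd j]$ factorizable, which is a contradiction; hence $T[z-1\dd n]$ cannot be factorized either. The hypotheses $i \le z \le j$ and $z > i$ only guarantee that the indices are valid.
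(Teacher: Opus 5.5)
Your proposal is correct and matches the paper's implicit argument. The paper states this as an observation without proof; the intended reasoning is exactly yours: such factorizations are inherited by prefixes, and the second claim is the contrapositive of the first applied at position $z-1$.

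One caveat: your claimed equivalence between the exact-count and at-most readings is false. A string shorter than $k$ has no factorization with exactly $k$ symbols, and symbols cannot be ``dropped'' or ``moved''. So the first claim holds only under the at-most reading. You do adopt that reading in the end, and it is also how the paper uses the notion, e.g.\ in the proof of \cref{lem:random_access_reduce_kerror_to_noerror}.
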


\begin{restatable}{theorem}{lemreducestreamingtooffline}\label{lem:random_access_reduce_streaming_to_offline}
If there is an offline $k$-error random access data structure with preprocessing time $\cO(p(m))$, construction time $\cO(n \cdot c(m))$, query time $\cO(q(m))$, and space complexity $\cO(s(m))$, then there is a streaming $k$-error random access data structure with preprocessing time $\cO(p(m))$, worst-case update time $\cO(q(m) \cdot c(m) \cdot \log m)$, query time $\cO(q(m))$, and space complexity $\cO(s(m) \cdot \log m)$.
\end{restatable}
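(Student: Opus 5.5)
We plan a logarithmic block decomposition of the stream, in the spirit of the offline-to-online reduction of~\cite{DBLP:journals/iandc/CliffordEPP11}, with merges performed by \cref{lem:mergeoffline}. Partition the arriving text into aligned blocks at levels $\ell = 0,1,\dots,L$ with $L = \ceil{\log_2 s(m)} + \cO(\log m)$. A level-$\ell$ block has length $2^\ell \cdot b$, where $b = \Theta(s(m))$ is a base length. Base blocks are stored verbatim, which gives a trivial random access structure at no asymptotic extra space cost. A completed level-$\ell$ block is the concatenation of two level-$(\ell-1)$ blocks. Its offline data structure is built by \cref{lem:mergeoffline} from the structures of its two halves, in $\cO(2^\ell b \cdot q(m)\cdot c(m))$ time, and by \cref{rem:preprocess_once_use_many_times} all of these constructions share one preprocessing of~$R$.

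At any moment we keep $\cO(1)$ structures per level, giving $\cO(s(m)\log m)$ space in total. The current text decomposes as $\dots X_{L} X_{L-1}\cdots X_0 \cdot (\text{partial base block})$, where each $X_\ell$ is one or two completed blocks of level~$\ell$. The support-length is computed by scanning from the right. We add lengths while each block's structure reports full support; at the first block with $h_x < n_x$ we add $h_x$ and stop. Correctness follows from \cref{obs:datastructure_of_substring}, exactly as in the proof of \cref{lem:mergeoffline}: if a suffix of a block cannot be factorized, then neither can any extension of it to the left.

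Blocks beyond $\Theta(m)$ total length never need to be retained. Once the support exceeds $m(k+1)+k$ no factorization is possible anyway, so only $\cO(\log m)$ levels are needed. I will handle this by capping the top level at length $\Theta(km)$ and discarding older blocks.

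To answer a query, we locate the responsible block through the level structure in $\cO(1)$ time. Lengths are powers of two times $b$, so the index within the block is found by arithmetic, using \cref{lem:computelog} for the logarithms. We then query that block in $\cO(q(m))$.

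The main obstacle is de-amortization with worst-case update time. When two level-$(\ell-1)$ blocks complete, we start the merge producing their level-$\ell$ structure and spread its $\cO(2^\ell b\, q(m)c(m))$ work evenly over the next $2^{\ell-1} b$ arriving symbols. That is $\cO(q(m)c(m))$ work per level per symbol, and $\cO(q(m)c(m)\log m)$ overall. While the merge runs, the two halves remain available and answer queries, since they are the merge's inputs. The new level-$\ell$ block becomes active only when the merge finishes, at which point the halves are released.

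The remaining step is to verify that at most $\cO(1)$ pending merges and stored blocks exist per level at any time. This is the usual binary-counter argument: a level-$\ell$ merge finishes before the next pair at level $\ell-1$ completes, because the next pair needs $2^\ell b$ new symbols and the merge finishes within $2^{\ell-1} b$ of them. With that, the space bound $\cO(s(m)\log m)$ and the worst-case update bound hold. The query time stays $\cO(q(m))$ and the preprocessing time stays $\cO(p(m))$.
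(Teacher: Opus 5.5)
Your proposal is correct and is essentially the paper's proof. Both use doubling blocks on $\cO(\log m)$ levels, with each level-$\ell$ structure built from its two halves via \cref{lem:mergeoffline}, time-sliced over the arrival of the next half-block, $\cO(1)$ structures per level, and a single shared preprocessing of~$R$. The differences are only bookkeeping. The paper keeps the three most recent completed blocks per level so that a query~$x$ can be routed by the closed form $\ell = \floor{\log_2(i-x)} - 1$, and it updates the leftmost supported position incrementally. You instead keep a partition into active blocks and recompute the support by a right-to-left scan. Your $\cO(1)$-time lookup deserves a line of justification: under your fixed merge schedule, the active level containing~$x$ is one of $\cO(1)$ candidates determined by $\log_2((i-x)/b)$. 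Your explicit cap on the retained length at $\Theta(km)$ is a detail the paper leaves implicit.
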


\begin{proof}
During the lifetime of the streaming data structure, we will construct the offline data structure for multiple substrings of~$T$.
By \cref{lem:preprocess_once_use_many_times}, we only have to perform the $\cO(p(m))$ time preprocessing once. 
The streaming data structure does not require any further preprocessing.

We first describe the streaming data structure without explaining its construction.
We conceptually divide the text into blocks on $\ceil{\log_2m} + 1$ levels.
On level $\ell \in [0\dd \ceil{\log_2m}]$, the blocks are $T_{\ell,\z} = T((\z - 1) \cdot 2^\ell \dd \z \cdot 2^\ell]$ for $\z \in [1\dd \floor{\absolute{T} / 2^\ell}]$.
After receiving and processing~$T[i]$, for each level, we store the offline data structures of the rightmost three blocks fully contained in $T[1\dd i - 2^\ell]$.
More precisely, let~$\z$ be the maximal integer such that $\z \cdot 2^\ell \leq i$, then we store the offline data structures of $T_{\ell, \z - 1}$, $T_{\ell, \z - 2}$, and $T_{\ell, \z - 3}$ (respectively assuming $\z > 1$, $\z > 2$, and $\z > 3$).
See \cref{fig:ra_reduce_streaming_to_offline} for a visualization. We explicitly store the most recently received symbol $T[i]$.

\begin{figure}
\def\lenm{32}
\def\leni{37}
\def\lenn{53}
\def\ygap{1}
\input{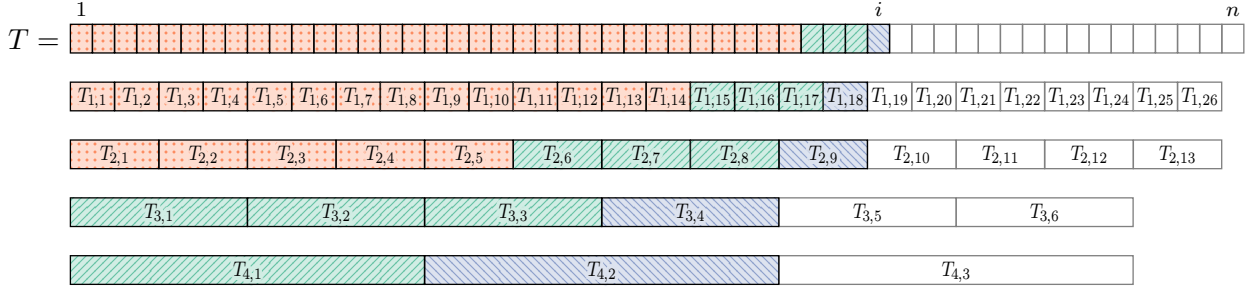}
\caption{%
Text partitioning for the proof of \cref{lem:random_access_reduce_streaming_to_offline}. 
After $T[i]$ arrives, the streaming data structure consists of the offline data structures of the blocks marked \insertedblock. 
The data structures of previous blocks, marked \deletedblock, have already been deleted. 
Blocks marked \underconstructionblock{} have already been received, but their offline data structures have not been contructed yet.
The offline data structure of block $T_{3,4}$ will be constructed while $T_{2,10}$ arrives. 
As soon as the last symbol of $T_{2, 10}$ arrives, the offline data structure of $T_{3,1}$ will be deleted.
}\label{fig:ra_reduce_streaming_to_offline}
\end{figure}

\paragraph{Maintaining the data structure.} Consider a block $T_{\ell,\z}$ on any level.
We finish the construction of its offline data structure as soon as we receive the last symbol of $T_{\ell,\z + 1}$, and we have to delete its data structure as soon as we receive the last symbol of $T_{\ell,\z + 4}$. 

We show how to construct the offline data  structures by induction over the levels.
For a block $T_{0, \z} = T[\z]$ of the lowest level, the offline data structure is merely a copy of the symbol, which we immediately create on its arrival.
Now inductively assume that we have already shown how to maintain the offline data structures for some level $\ell - 1 \geq 0$.
Next, we show how to maintain the offline data structures for level $\ell$.
Note that $T_{\ell,\z} = T_{\ell - 1, 2\z - 1} \cdot T_{\ell - 1, 2\z}$.
We will compute the offline data structure of $T_{\ell,\z}$ while the second half $T_{\ell - 1, 2\z + 2}$ of $T_{\ell,\z + 1}$ arrives.
By the inductive assumption, during this time, the offline random access data structures of $T_{\ell - 1, 2\z - 1}$ and $T_{\ell - 1, 2\z}$ are available.
Hence we can use \cref{lem:mergeoffline} to construct the offline random access data structure of $T_{\ell, \z}$ in $\cO(2^\ell \cdot c(m) \cdot q(m))$ time, or $\cO(c(m) \cdot q(m))$ time for each of the $2^{\ell - 1}$ symbols of $T_{\ell - 1, 2\z + 2}$.

At any moment in time, we run the offline construction algorithm for $\cO(\log m)$ blocks.
As each algorithm requires $\cO(c(m) \cdot q(m))$ time per arriving symbol,
the update time is as claimed.

\paragraph{Maintaining the support-length.}
We maintain the leftmost supported position~$j$ as follows.
When~$T[1]$ arrives, we assign $j \gets 1$. Then, whenever we finish constructing the offline data structure for some block $T_{\ell,\z}$, we obtain its support-length~$h$. If $h < 2^\ell$, then $T_{\ell,\z}[2^\ell - h\dd2^\ell] = T[\z \cdot 2^\ell - h \dd \z \cdot 2^\ell]$ cannot be factorized into~$k$ symbols and $k+1$ substrings of $R$, and, by \cref{obs:datastructure_of_substring}, neither can $T[\z \cdot 2^\ell - h\dd n]$.
Hence we assign $j \gets \max(j, \z \cdot 2^\ell - h + 1)$.
The support-length $i - j + 1$ can be returned in constant time.

\paragraph{Answering queries.}
We claim that a query $x \in [j \dd i)$ can be answered using the offline $k$-error random access data structure of block $T_{\ell, \z}$ with $\ell = \floor{\log_2 (i - x)} - 1$ and $\z = \ceil{x / 2^\ell}$.
It clearly holds that $x \in ((\z - 1) \cdot 2^\ell \dd \z \cdot 2^{\ell}]$, and thus $T[x] = T_{\ell, \z}[2^\ell - \z \cdot 2^\ell + x]$.
It remains to be shown that we have already computed and not yet deleted the random access data structure of $T_{\ell, \z}$, i.e., $(\z + 1) \cdot 2^\ell \leq i$ and $(\z + 4) \cdot 2^\ell > i$.
This holds due to
\begin{alignat*}{2}
(\z + 1) \cdot 2^\ell \leq (x/2^\ell + 2) \cdot 2^\ell = {}&x + 2^{\floor{\log_2 (i - x)}} &&{} \leq i\text{, and}\\
(\z + 4) \cdot 2^\ell \geq (x/2^\ell + 4) \cdot 2^\ell = {}&x + 2^{\floor{\log_2 (i - x)} + 1} &&{} > 
i.
\end{alignat*}
Computing $\ell$ takes constant time after $\cO(\log m)$-time preprocessing (\cref{lem:computelog}). We perform this preprocessing when $T[1]$ arrives, without asymptotically increasing the update time.
\end{proof}

\subsection[Reduction from k-Error to Exact Data Structures]{{\boldmath Reduction from $k$-Error to Exact Data Structures\unboldmath}}

\begin{restatable}{theorem}{lemreducekerrortonoerror}\label{lem:random_access_reduce_kerror_to_noerror}
Assume that there is an offline $0$-error random access data structure with preprocessing time $\cO(p(m))$, construction time $\cO(n \cdot c(m))$, query time $\cO(q(m))$, and space complexity $\cO(s(m))$.
For every positive integer $k$, there is an offline $k$-error random access data structure with preprocessing time $\cO(p(m))$, construction time $\cO(n \cdot c(m))$, query time $\cO(q(m) +\log k)$, and space complexity $\cO(k \cdot s(m))$. 
\end{restatable}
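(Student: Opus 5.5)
We build the $k$-error structure by chaining up to $k+1$ $0$-error structures from right to left. Only one preprocessing is needed: by \cref{lem:preprocess_once_use_many_times} and \cref{rem:preprocess_once_use_many_times}, the preprocessed information for $R$ is never modified and can be reused across all constructions. To keep the total construction time linear in the covered length, we rely on \cref{lem:construct_in_h_time}, which builds a $0$-error structure with support-length $h'$ in $\cO(h' \cdot c(m))$ time.

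\textbf{Construction.} Set $e_0 = n$. For $t = 1, 2, \dots, k+1$, construct the offline $0$-error structure $\mathcal D_t$ for the text $T[1 \dd e_{t-1}]$ using \cref{lem:construct_in_h_time}, and let $h_t$ be its support-length.
\begin{itemize}
\item If $h_t = e_{t-1}$, the whole remaining prefix is covered, and we stop.
\item Otherwise, we also store the single symbol $T[e_{t-1} - h_t]$ explicitly.
\item Set $e_t = e_{t-1} - h_t - 1$. If $e_t \le 0$, we stop.
\end{itemize}
After at most $k+1$ rounds, the total support-length $h$ is the total length covered: the sum of the $h_t$ plus the number of stored symbols (at most $k$).

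\textbf{Complexity.} The covered regions and stored symbols are pairwise disjoint. Round $t$ costs $\cO((h_t + 1) \cdot c(m))$, so the total construction time is $\cO((n + k) \cdot c(m))$. When $k > n$ only $n$ rounds can occur, so this is $\cO(n \cdot c(m))$. At construction time we keep at most $k+1$ structures of size $\cO(s(m))$ each, plus one current working structure, giving $\cO(k \cdot s(m))$ space.

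\textbf{Queries.} We store the boundaries $e_t$ in a sorted array. A query $i$ is answered by binary search for the piece containing $i$, taking $\cO(\log k)$ time. We then either return the stored symbol, or query $\mathcal D_t$ at the offset $i$ relative to $T[1 \dd e_{t-1}]$, taking $\cO(q(m))$ time.

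\textbf{Correctness.} This is the main obstacle. Suppose $h < n$. We must show that $T[n-h \dd n]$ has no factorization into $k$ symbols and $k+1$ substrings of~$R$. Assume such a factorization exists. We argue by induction that its $t$-th substring factor must end at or after position $e_{t-1}$, i.e., each greedy piece is at least as long as the corresponding factor.

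The argument is a greedy exchange on the right-to-left order. By the $0$-error guarantee, $T[e_{t-1} - h_t \dd e_{t-1}]$ is not a substring of $R$. Since any substring of a substring of $R$ is itself in $R$, the $t$-th substring factor, starting at or after its greedy counterpart's start, cannot extend to the left of $e_{t-1} - h_t$. Hence each factor, together with the symbol after it, ends no further left than the greedy piece. After $k+1$ substring factors and $k$ symbols, the factorization therefore cannot reach position $n-h$, a contradiction.

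Two cases need separate care:
\begin{itemize}
\item Early stops: if $h_t = e_{t-1}$ or $e_t \le 0$, then $h = n$ and there is nothing to prove.
\item Empty substrings: the factorization may use empty substring factors. The inductive invariant should be stated so that it allows this.
\end{itemize}
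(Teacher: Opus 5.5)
Your query procedure and complexity analysis are fine, but the correctness argument has a genuine gap. The exchange step pairs the $t$-th substring factor with ``the symbol after it''. That is, it assumes the factorization alternates substring, symbol, substring, \dots{} from right to left. The definition does not fix the order of the factors: the paper's proof treats arbitrary factorizations into at most $2k+1$ factors, and, e.g., deletions under edit distance produce adjacent substrings. Empty substring factors take care of adjacent symbols, but not of adjacent substrings, and with only $k+1$ rounds the claim is false. Take $k=1$, $R=\texttt{abzcd}$, $T=\texttt{abcdx}$. Since $T=\texttt{ab}\cdot\texttt{cd}\cdot\texttt{x}$, a $1$-error structure must have $h=5$.
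In round~1, $h_1=1$ is a legal support, because $\texttt{dx}$ does not occur in $R$. It is exactly what \cref{lem:construct_in_h_time} returns if the underlying structure reports the longest suffix occurring in $R$. You therefore store $T[4]=\texttt{d}$ and set $e_1=3$.
In round~2, $h_2=1$ is legal for $\texttt{abc}$, since $\texttt{bc}$ does not occur in $R$.
Hence $h\le 4$, whether or not you also store $T[2]$ in the last round (your loop does, but your count excludes it). Yet $T[n-h\dd n]$ is either $T$ or $\texttt{bcdx}=\texttt{b}\cdot\texttt{cd}\cdot\texttt{x}$, and both factorize into one symbol and two substrings of $R$.

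The missing observation concerns normalization. An unordered factorization with at most $k$ symbols and $k+1$ substrings can in general only be turned into an alternating one with up to $2k$ symbols and $2k+1$ substrings. To do so, rewrite adjacent nonempty substrings $XY$ as $X\cdot Y[1]\cdot Y[2\dd]$ and insert empty substrings between adjacent symbols. So your greedy scheme becomes correct if you run $2k+1$ rounds and store up to $2k$ symbols; this keeps all stated bounds.

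The paper instead drops the explicit symbols. It chains $2k+2$ $0$-error structures for $T[1\dd n_i)$ with supports $h_i=n_i-n_{i+1}\ge 1$ (by \cref{lem:construct_in_h_time}) and argues by pigeonhole. Any factorization of $T[n_{k'}-1\dd n]$ into at most $2k+1$ factors leaves some piece $T[n_{i+1}\dd n_i)$ containing no factor start. Then a single factor contains $T[n_{i+1}-1\dd n_i)$, which has length at least $2$ and does not occur in $R$. This argument does not depend on the order of symbol and substring factors.
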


\begin{proof}
    We will construct the $0$-error data structure for multiple substrings of~$T$.
    By \cref{lem:preprocess_once_use_many_times}, it suffices to perform the preprocessing only once.
    The $k$-error data structure is defined by a sequence of $k' \leq 2k + 3$ integers $n + 1 = n_1 > n_2 > \dots > n_{k'} \geq 1$.
    For every $i \in [1 \dd k')$, we store a random access data structure with support-length $n_i - n_{i + 1}$ for $T[1 \dd n_i)$.
    The sequence of positions and the data structures are obtained as follows. 
    The initial position is $n_1 = n + 1$. 
    Inductively assume that we have already computed $n_i$ for some $i \geq 1$. 
    If $n_i = 1$ or $i = 2k + 3$, then we terminate.
    Otherwise, we construct the $0$-error data structure for text $T[1\dd n_{i})$ using \cref{lem:construct_in_h_time}.
    This always results in a non-zero support-length $h_i \in [1\dd n_i)$, and we assign $n_{i + 1} \gets n_{i} - h_i$.

    Since the data structure computed for $T[1\dd n_{i})$ has support-length $h_i = n_{i} - n_{i + 1}$, it supports access to $T[n_{i + 1}\dd n_i)$.
    Hence, for every position $x \in [n_{k'} \dd n]$, we can access $T[x]$ using one of the data structures, and the overall support-length is $n - n_{k'} + 1$.
    The correct data structure can be found in $\cO(\log k)$ time by binary searching for the unique $i \in [1\dd k')$ such that $x \in [n_{i + 1}\dd n_{i})$.
    Constructing the $0$-error data structure for $T[1\dd n_i)$ takes $\cO(h_i \cdot c(m)) = \cO((n_i - n_{i + 1}) \cdot c(m))$ time. Over all $i$, the total construction time is thus $\cO(n \cdot c(m))$. The space complexity is $\cO(k' + k' \cdot s(m)) = \cO(k \cdot s(m))$.
    
    If $n_{k'} > 1$, then we have to show that $T[n_{k'} - 1 \dd n]$ cannot be written as a factorization of at most~$k$ symbols and~$k + 1$ substrings of~$R$.
    Due to $n_{k'} > 1$, we must have terminated the construction with $k' = 2k + 3$. Since the support-length of the data structure computed for $T[1\dd n_i)$ is $h_i = n_i - n_{i + 1}$, it cannot be that $T[n_i - h_i - 1\dd n_i) = T[n_{i + 1} - 1 \dd n_i)$ is a substring of~$R$.
    Now consider any factorization of $T[n_{k'} - 1 \dd n]$ into at most $2k + 1$ factors.
    There must be some $i \in [1\dd 2k + 2]$ such that fragment $T[n_{i + 1} \dd n_i)$ does not contain the initial position of any factor. This implies that $T[n_{i + 1} - 1 \dd n_i)$ is fully contained in one of the factors, and this factor is neither a single symbol nor a substring of~$R$. 
\end{proof}

By combining \cref{lem:random_access_reduce_streaming_to_offline,lem:random_access_reduce_kerror_to_noerror}, we have:

\lemreducekerrorstreamingtonoerroroffline

\subsection{Warm-up: Simple Random Access Data Structure via Pattern Matching}
We start with a very simple data structure with a linear product of space and update time: 

\begin{lemma}\label{lm:random-warmup}
Let $\tau \in [1\dd m]$ be a parameter. There is a streaming $0$-error random access data structure with space complexity $\cO(\tau)$, worst-case update time $\cO(m/\tau)$, and query time $\cO(1)$. The data structure is deterministic and does not require any preprocessing.
\end{lemma}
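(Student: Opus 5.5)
The plan is to keep the supported suffix as one fragment of $R$ plus an explicitly stored buffer of recent text symbols, and to pay for locating that fragment in $R$ with constant-space read-only pattern matching~\cite{CROCHEMORE199233}, de-amortized over blocks of $\tau$ arriving symbols.

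\textbf{State.} We cut $T$ into blocks of $\tau$ consecutive symbols. After the block ending at position $i'$ has been fully processed, we store:
\begin{itemize}
\item the leftmost supported position $j$;
\item a pair $(p,L)$ with $L = i'-\tau-j+1$ and $T[j\dd i'-\tau] = R[p\dd p+L)$;
\item the last $\cO(\tau)$ text symbols, kept explicitly in a cyclic buffer.
\end{itemize}
The invariant is that $j=1$, or $T[j-1\dd i']$ does not occur in $R$. A query $x\in[j\dd n]$ is answered in $\cO(1)$ time: it returns $R[p+x-j]$ if $x\le j+L-1$, and reads the buffer otherwise. The space is $\cO(\tau)$ words.

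\textbf{Update.} While the next block $T(i'\dd i'+\tau]$ arrives, we store it in the buffer. We also run, in the background and sliced into $\cO(m/\tau)$ steps per arriving symbol, one scan of $R$. This scan updates $(j,p,L)$ so that they certify the block ending at $i'$, i.e., so that $T[j\dd i']$ occurs in $R$ and $T[j-1\dd i']$ does not. The pattern of the scan is
\[
P = R[p\dd p+L)\cdot T(i'-\tau\dd i'],
\]
which the scan accesses in constant time through $R$ and the buffer, so the read-only matching needs no copy of $P$. Hence the structure always lags by one block. This lag is harmless: every position in the lagging window is still in the buffer, so we may use support-length $n-j+1$. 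The $0$-error condition still holds by \cref{obs:datastructure_of_substring}, since $T[j-1\dd i']$ not occurring in $R$ implies that $T[j-1\dd n]$ does not occur either. Each scan costs $\cO(m)$ time and is spread over $\tau$ symbols, which gives worst-case update time $\cO(m/\tau)$.

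\textbf{Main obstacle.} The difficulty is the failure case, where $P$ does not occur in $R$. We then need the longest suffix of $P$ that occurs in $R$, which gives the new $j$ and a new $p$, and we need it within a single $\cO(m)$-time, $\cO(1)$-space pass. A plain existence test for $P$ is not enough, and a binary search over start positions would cost an extra $\log m$ factor. My first attempt is to reverse the roles: look for the longest prefix of $\rev(P)$ that occurs in $\rev(R)$. For this I would use a constant-space matcher that reports, at each position of the scanned string, the length of the longest pattern prefix matched there. I believe a variant of Crochemore--Perrin two-way matching does this, and access to $\rev(P)$ and $\rev(R)$ is just read-only index arithmetic.

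If that variant is unavailable, the fallback exploits the fact that $T[j\dd i'-\tau]$ already occurs in $R$. Then only starts inside the final $\cO(\tau)$ part of $P$ can change the answer. These starts can be organized so that a constant number of full scans suffice, which keeps the update time at $\cO(m/\tau)$.
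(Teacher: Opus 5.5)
Your primary route is the paper's proof. Buffer the last $2\tau$ symbols and keep a pointer into $R$ for the longest suffix of $T$ that ends at the previous block boundary and occurs in $R$. While the next block arrives, recompute this suffix by matching $\rev(P)$ against $\rev(R)$ with Crochemore's constant-space algorithm \cite{CROCHEMORE199233}, time-sliced into $\cO(m/\tau)$ steps per symbol.

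The property you hedge on is exactly what the paper takes directly from \cite{CROCHEMORE199233}: the algorithm reveals the length and an occurrence of the longest pattern prefix that occurs in the text. Your fallback is therefore unnecessary, and it would not work as stated. The start of the longest suffix of $P$ occurring in $R$ can lie anywhere in $P$, not only in its last $\cO(\tau)$ symbols.
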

\begin{proof}
At all times, we naively buffer the most recent $2\tau$ symbols of~$T$. Using $\cO(1)$ space, we maintain variables $\ell$ and~$i$ that satisfy the following invariant.
While $T((b + 1) \cdot \tau\dd (b + 2) \cdot \tau]$ arrives for any non-negative integer~$b$, $\ell$ is the maximal value in $[0, \min\{b\tau, m\}]$ such that $T' = T(b\tau - \ell\dd b\tau]$ has an occurrence in $R$, and $R[i\dd i + \ell)$ is an arbitrary such occurrence.
(For $b = 0$, we trivially have $\ell = 0$.) If we maintain this invariant, then we have random access in constant time by either looking up one of the most recent $2\tau$ symbols in the buffer or by using the stored reference $R[i\dd i + \ell)$.

Now we show how to maintain the invariant. While $T((b + 1) \cdot \tau\dd (b + 2) \cdot \tau]$ arrives, we compute the longest suffix of $T' \cdot T(b\tau\dd (b + 1) \cdot \tau]$ that has an occurrence in~$R$. This can be done using exact pattern matching, where the pattern is the reverse of $T' \cdot T(b\tau\dd (b + 1) \cdot \tau]$ and the text is the reverse of~$R$.
The pattern matching algorithm from \cite{CROCHEMORE199233} runs in $\cO(m)$ time and uses constant additional working space.
If the pattern $\rev(T' \cdot T(b\tau\dd (b + 1) \cdot \tau])$ does not occur in $\rev(R)$, then the algorithm reveals the length and an occurrence of the longest pattern prefix that occurs in $\rev(R)$, which in our case is the reverse of the longest suffix of $T' \cdot T(b\tau\dd (b + 1) \cdot \tau]$ that has an occurrence in~$\rev(R)$.
We update $\ell$ and~$i$ accordingly.

The pattern matching algorithm takes at most $cm$ time for some constant $c > 0$.
We use the so-called time-slicing technique to de-amortize the computation.
Specifically, while $T((b + 1) \cdot \tau\dd (b + 2) \cdot \tau]$ arrives, we execute $\ceil{c \cdot m / \tau} = \cO(m/\tau)$ steps of the algorithm whenever a symbol arrives.
Hence, on arrival of $T[(b + 2) \cdot \tau]$, we have computed the new values of $\ell$ and~$i$.
\end{proof}

\subsection{Isolating the Computational Challenge: Core-Matching Queries}

For the more involved solution based on function inversion, we reduce the problem of computing a random access data structure to the following relaxed notion of pattern matching:

\defcustomproblem{Core-Matching Queries}{%
{Input}/{A positive integer $\tau>0$ and a reference $R \in \Sigma^m$.},%
{Query}/{Given an integer $n$, where $n = 2^\ell\cdot \tau$ for some $\ell \geq 0$, and constant-time random access to a text $T \in \Sigma^{3n}$, output an occurrence of $T(n\dd 2n]$ in~$R$. If $T$ does not occur in $R$, then the query algorithm may return a special failure symbol.}%
}

\lemreduceratocorematching

\begin{proof}

    \def\b#1{b_{#1}}
    We first show that, without loss of generality, we can assume that $n = 3 \cdot 2^L \cdot \tau$ for some ${L \in [0\dd \floor{\log_2(m/3\tau)}]}$. 
    If $n < 3\tau$, then a copy of~$T$ is a suitable random access data structure (and serves as a reduction to zero core-matching queries).
    If $n > m$, then it suffices to compute a data structure for $T(n - m\dd n]$.
    Hence we can assume that $n \in [3\tau\dd m]$.
    Let $L = \floor{\log_2(n/(3\tau))} \in [0\dd \floor{\log_2(m/\tau)}]$. If $n \neq 3 \cdot 2^L \cdot \tau$, then we construct the data structure for both $T[1 \dd 3 \cdot 2^L \cdot \tau]$ and $T(n - 3 \cdot 2^L \cdot \tau\dd n]$.
    It is then easy to output the support-length and answer queries without affecting the asymptotic complexities.
    From now on, we assume that $n = 3 \cdot 2^L \cdot \tau$.

\begin{figure}
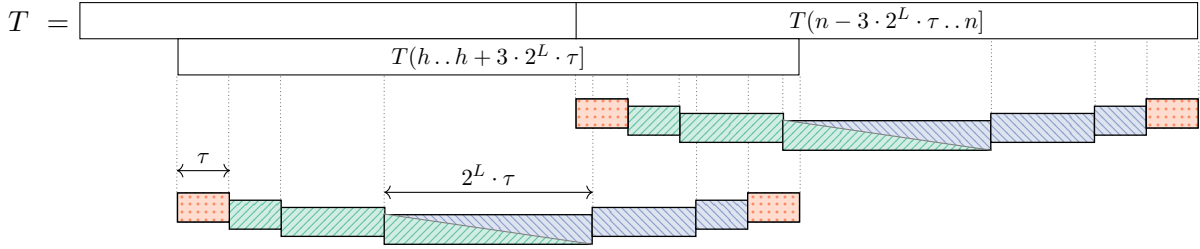

\input{figures/offline_to_core_matching_styles}
\begin{tikzpicture}[x=.765em, y=1em]

\def\numlvls{3}
\def\naivect{6}
\def\minwidth{.5\textwidth}
\def\twocol{1}

\foreach[evaluate=\loopyoff as \loopyoffplus using {\loopyoff + 1}] \loopxbase/\loopybase/\loopyoff/\looplrid in {5/0/4/1,23/1.25/2/0} {

\def\xbase{\loopxbase}
\def\ybase{\loopybase}
\def\yoff{\loopyoff}
\def\yoffplus{\loopyoffplus}
\def\lrid{\looplrid}

\input{figures/offline_to_core_matching}
}

\node[below left=0 and 0 of 0-lbox.north east, draw, minimum height=1.25em, minimum width=.9\textwidth, fill=white] (T) {};
\node[below left=0 and 0 of 1-lbox.north east, draw, minimum height=1.25em, minimum width=.5\textwidth, fill=white] {};
\draw (0-lbox.north west) to (0-lbox.south west);
\node[left=0 of T] {$T\ =\ $};
\node at (0-lbox.center) {\scalebox{.8}{$T(n - 3\cdot 2^L \cdot \tau\dd n]$}};
\node at (1-lbox.center) {\scalebox{.8}{$T(h \dd h + 3 \cdot 2^L \cdot \tau ]$}};

\draw[<->] ([yshift=.75em]1-east-1.north west) to node[midway, above=.25em] {\scalebox{.8}{$2^L \cdot \tau$}} ([yshift=.75em]1-east-1.north east);
\draw[<->] ([yshift=.75em]1-naive-west-1.north west) to node[midway, above=.25em] {\scalebox{.8}{$\tau$}} ([yshift=.75em]1-naive-west-\naivect.north east);
\end{tikzpicture}%
\input{figures/offline_to_core_matching_legend}%
\caption{%
Data structure layout from the proof of \cref{lem:reduce_offline_ra_to_core_matching} using $L = \floor{\log_2((n - h)/(3\tau))}$.
Blocks marked \insertedblock{} are substrings of~$R$ that match substrings of $T(h\dd h + 3\cdot 2^L \cdot \tau]$ and ${T(n - 3 \cdot 2^L \cdot \tau \dd n]}$. Blocks marked \underconstructionblock{} are matching substrings found by considering the reverses of~$R$ and~$T$.
Finally, blocks \deletedblock{} mark fragments of $\cO(\tau)$ naively stored symbols.%
}\label{fig:ra_offline_layout}
\end{figure}
    
    \paragraph{Data structure layout.}     
    We start by describing a random access data structure that requires that~$T$ is a substring of~$R$.
    It consists of a copy of $T[1\dd \tau]$ and, for every ${\ell \in [0\dd L]}$, a position~$i_\ell$ such that $R[i_\ell \dd i_\ell + 2^\ell \cdot \tau) = T(2^\ell \cdot \tau\dd 2^{\ell + 1} \cdot \tau]$.
    This representation requires $\cO(\tau + \log(m/\tau))$ space.
    Now we show that this simple data structure already allows constant time random access to $T[1\dd 2n/3]$. The leftmost $\tau$ queries are trivially handled by the copy of $T[1\dd \tau]$.    
    Given a query position $i \in T(\tau\dd 2n/3]$, we observe that $i \in (2^\ell \cdot \tau \dd 2^{\ell + 1}\cdot \tau]$ for $\ell = \floor{\log_2((i - 1) / \tau)}$. Hence, we can compute~$\ell$ in constant time using \cref{lem:computelog}, and then return $T[i] = R[i_\ell + i - 2^\ell\cdot \tau - 1]$.
    For the remaining queries $i \in (2n/3\dd n]$, we construct the same data structure for the reversed text and reversed reference without asymptotically increasing the time and space complexities.

    Finding the occurrence $R[i_\ell \dd i_\ell + 2^\ell \cdot \tau) = T(2^\ell \cdot \tau\dd 2^{\ell + 1} \cdot \tau]$ corresponds to a core-matching query with parameter $\tau$ on the preprocessed reference~$R$ with query text $T[1\dd 3\cdot 2^\ell \cdot \tau]$. Hence computing the data structure can be reduced to $\cO(\log(m / \tau))$ core-matching queries.

    \paragraph{Handling \boldmath$T$ that does not occur in~$R$.\unboldmath}
    We assumed that~$T$ is a substring of~$R$. If it is not, then the construction might fail due to a core-matching query returning the failure symbol. Hence, instead of running the algorithm for the entire $T$, we binary search for the minimal ${h \in (\max(0, n - m) \dd n]}$ such that the construction succeeds for text $T(h\dd n]$. This increases the time complexity and the number of core-matching queries by an  $\cO(\log m)$ multiplicative factor.

    Deciding the input to each core-matching query takes constant time, and we need $\cO(\tau)$ additional time to create a copy of the length-$\tau$ prefix and suffix of $T(h\dd n]$, resulting in the claimed time complexity. 
\end{proof}

\subsection{A Simple Core-Matching Data Structure}

\begin{restatable}{lemma}{lemnaivecorematching}
\label{lem:naive_core_matching}
    Given $R \in \Sigma^m$ and integer parameters $\ell \geq 0$ and $\tau \geq 1$, let $n = 2^\ell \cdot \tau$.
    There is a data structure of size $\cO(m/n)$ that answers any core-matching query $T \in \Sigma^{3n}$ in $\cO(n \log m)$ time and $\cO(1)$ additional space. It can be constructed in $\cO(\poly(m))$ time and $\cO(m/n + \polylog(m))$ space deterministically, or in $\cO(m + (m \log m)/n + \poly(\err))$ time and $\cO(m/n + \poly(\err))$ space with success probability at least $1 - 2^{-\err}$, for any integer parameter $\err \geq \log_2 (m\sigma)$.
\end{restatable}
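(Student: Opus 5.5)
We follow the warm-up idea from the technical overview. We cover $R$ by the fragments $B_b = R(bn - 2n\dd bn]$ for $b \in [2\dd \floor{m/n}]$; there are $\cO(m/n)$ of them. First we pick a Karp--Rabin fingerprint $\phi = \phi_{p,r}$ that is collision-free on equal-length substrings of $R$. Deterministically, we use \cref{fact:construct_deterministic_collision_free_fp}, which takes $\cO(\poly(m))$ time and $\cO(\polylog m)$ space. Probabilistically, we use \cref{fact:construct_probabilistic_collision_free_fp}, which takes $\cO(\poly(\err))$ time. We then compute $\phi(B_b)$ for all $b$ by one left-to-right scan of $R$ with a rolling fingerprint (\cref{obs:rollfp}). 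This costs $\cO(m)$ time and constant extra space, and emits each fingerprint when the window hits position $bn$. We store the pairs $(\phi(B_b), b)$ in an array and sort it by fingerprint, which costs $\cO((m/n)\log m)$ time and $\cO(m/n)$ space. In the deterministic version, an in-place sort or heapsort keeps the space at $\cO(m/n)$.

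To answer a query $T \in \Sigma^{3n}$, we roll a length-$2n$ window over $T$, computing $\phi(T[j\dd j+2n))$ for $j \in [1\dd n+1]$. Each window takes $\cO(1)$ amortized time after $\cO(\log n)$ preprocessing. For each window we binary search the sorted array in $\cO(\log m)$ time, which gives $\cO(n\log m)$ time in total and $\cO(1)$ additional space. If some window's fingerprint matches, say $\phi(T[j\dd j+2n)) = \phi(B_b)$, we output the candidate position corresponding to $T(n\dd 2n]$. Since $T[j\dd j+2n)$ should equal $R(bn-2n\dd bn]$, position $n+1$ of $T$ aligns with $R$-position $bn - 2n + 1 + (n+1-j)$. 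We output this position, clipped to lie in $[1\dd m-n+1]$. If no window matches, we return the failure symbol.

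The correctness argument is what needs care. Suppose $T = R[i\dd i+3n)$. Let $b = \floor{(i+3n-1)/n}$. Then $bn \le i+3n-1$ and $bn - 2n \ge i - 1$, so $B_b$ lies inside the occurrence and corresponds to some window $j \in [1\dd n+1]$ of $T$. Hence at least one match exists. The main obstacle is that a window $T[j'\dd j'+2n)$ found first might equal some $B_{b'}$ only by fingerprint, not as a string, because collision-freeness is guaranteed only among substrings of $R$. To handle this, note that when $T$ occurs in $R$, every window of $T$ is itself a substring of $R$. Collision-freeness on length-$2n$ substrings of $R$ then makes fingerprint equality imply string equality, so $T[j'\dd j'+2n) = B_{b'}$. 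The window covers $T(n\dd 2n]$ for every $j' \in [1\dd n+1]$, so the computed position is an exact occurrence of $T(n\dd 2n]$. If $T$ does not occur in $R$, any output is allowed, but it must still be a valid position. We could additionally verify the output against $R$ in $\cO(n)$ time and return failure on mismatch. The matching-window argument holds as long as $3n \le m$, and if $3n > m$ we can return failure immediately.

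Finally, we add up the construction costs. The deterministic version takes $\cO(\poly(m))$ time and $\cO(m/n + \polylog m)$ space. The probabilistic version takes $\cO(m + (m\log m)/n + \poly(\err))$ time and $\cO(m/n + \poly(\err))$ space, and it succeeds with probability at least $1-2^{-\err}$.
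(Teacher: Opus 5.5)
Your proposal is correct and essentially matches the paper's proof. It uses the same collision-free fingerprint, the same blocks $R(bn-2n\dd bn]$ for $b\in[2\dd\floor{m/n}]$, the same choice $b=\floor{(i+3n-1)/n}$, and the same rolling scan over the $n+1$ length-$2n$ windows of $T$; a sorted array with binary search in place of the balanced search tree changes nothing.

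The one point to tighten is that the $\cO(n)$-time verification of the matched window against $R$ must be mandatory, not optional. A core-matching query may return only a genuine occurrence of $T(n\dd 2n]$ or the failure symbol, not an arbitrary valid position. \cref{lem:reduce_offline_ra_to_core_matching} relies on this when its binary search treats ``no failure'' as a successful construction. The paper verifies once and returns failure on a mismatch, which is sound because a window whose fingerprint matches a block without equaling it cannot be a substring of $R$, so $T$ cannot occur in $R$.
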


\begin{proof}
    We use Karp--Rabin fingerprints such that there are no two distinct length-$2n$ substrings of~$R$ that have the same fingerprint.
    We obtain the prime and the base with \cref{fact:construct_deterministic_collision_free_fp} or \cref{fact:construct_probabilistic_collision_free_fp}, resulting either in $\cO(\poly(m))$ time and $\cO(\polylog(m))$ space, or in $\cO(\poly(\err))$ time and space with success probability at least $1- 2^{-\err}$.
    We store the fingerprints $\phi(R(bn - 2n\dd bn])$ with $b \in [2\dd \floor{\frac{m}{n}}]$ in a balanced binary search tree, each associated with the value~$b$ (i.e., the fingerprints of length-$2n$ substrings starting at every~$n$ positions). The fingerprints can be computed in $\cO(m)$ time in total using \cref{obs:rollfp}, and the tree can be constructed in $\cO((m \log m) / n)$ time and $\cO(m / n)$ space.
    
    It is easy to see that, if $T= R[i\dd i+3n)$ for some $i \in [1\dd m - 3n + 1]$, then this occurrence of~$T$ fully contains the fragment $R(bn - 2n\dd bn]$ with $b = \floor{\frac{i+3n - 1}{n}}$. Hence we can find an occurrence of $T(n\dd 2n]$ as follows.
    We compute the sequence of fingerprints $\phi(T[j \dd j + 2n))$ for $j \in [1\dd n + 1]$ in total time $\cO(n)$ using \cref{obs:rollfp}.
    We then search for each fingerprint in the binary search tree in $\cO(\log(m / n))$ time. If the fingerprint is contained and associated with some value $b$, then we know that $\phi(T[j \dd j + 2n)) = \phi(R(bn - 2n\dd bn])$. We naively verify whether the two substrings actually match, and, if they do, we return that $R(bn - n - j + 1\dd bn - j + 1] = T(n\dd 2n]$.
    If the substrings mismatch, then $T[j \dd j + 2n)$ does not occur in~$R$ because distinct length-$2n$ substrings of~$R$ have distinct fingerprints, and we immediately return the failure symbol. Hence we run the $\cO(n)$-time verification at most once. If we find none of the fingerprints in the search tree, we also return the failure symbol.    
\end{proof}

We are now ready to obtain a streaming $k$-error random access data structure with a sublinear product of space and update time:

\cornaivekerror
\begin{proof}
We fix $\tau = \Theta(\sqrt{m})$ and preprocess~$R$ according to \cref{lem:naive_core_matching} for every value $\ell$ in $[0\dd \floor{\log_2(m/(3\tau))}]$. 
Then, we construct an offline $0$-error random access data structure using \cref{lem:reduce_offline_ra_to_core_matching}. This results in construction time $\tO(n)$, space complexity $\tO(\sqrt{m})$, and constant query time. 
When using the probabilistic version of \cref{lem:naive_core_matching}, we use parameter $2\err \geq 2\log_2 (m\sigma)$, which implies that each of the $\cO(\log m)$ data structures is constructed successfully with probability at least $1 - 2^{-2\err}$, and all of them with probability at least $1 - 2^{-2\err} \cdot \cO(\log m) \geq 1 - 2^{-\err}$ by the union bound.
Finally, combining the obtained offline $0$-error random access data structure with \cref{lem:random_access_reduce_kerrorstreaming_to_noerroroffline}, and recalling that we only have to run the offline preprocessing once (see \cref{lem:preprocess_once_use_many_times}), we obtain the stated result.
\end{proof}

\section{Core-Matching via Function Inversion}

For \cref{cor:naivekerror}, we used parameter $\tau = \Theta(\sqrt{m})$ to balance the $\cO(\tau)$ space contributed by \cref{lem:reduce_offline_ra_to_core_matching} and the $\cO(m/\tau)$ space contributed by \cref{lem:naive_core_matching}.
Next, we avoid the $\cO(m/\tau)$ space by replacing the core-matching solution from \cref{lem:naive_core_matching} with a function inversion data structure of size $\cO(\tau)$.
We start by reducing core-matching to function inversion.

\begin{restatable}{lemma}{lemseminaivecorematching}
\label{lem:semi_naive_core_matching}
    Let $\ell \geq 0$ and $\tau \geq 1$ be integers, and let $n = 2^\ell \cdot \tau$. Given a reference $R \in \Sigma^m$ and an integer parameter $\err \geq \log_2 (m\sigma)$, we can compute and store a sequence of $\cO(\err)$ functions with the following properties.
    Each function has domain and co-domain $[N]$ with $N = \cO(\frac{m}{n})$, can be stored in $\cO(1)$ space, and can be evaluated in $\cO(n)$ time using $\cO(1)$ space.
    A core-matching query $T \in \Sigma^{3n}$ can be reduced to inverting each of the functions $\cO(n)$ times, and the reduction can be performed deterministically in $\cO(n\err)$ time and $\cO(\err)$ additional working space.
    The functions can be constructed in $\cO(\poly(\err))$ time with success probability at least $1-2^{-\err}$.
\end{restatable}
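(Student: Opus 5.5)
We would follow the blueprint of \cref{lem:naive_core_matching} and replace the balanced binary search tree by function inversion. First, pick a prime $p \in [2^{5\err}\dd 2^{5\err+1})$ with \cref{fact:findprime_in_range} and a uniformly random base $r$. By \cref{fact:construct_probabilistic_collision_free_fp}, with probability at least $1-2^{-\err}/2$ no two distinct length-$2n$ substrings of~$R$ share a fingerprint. Note that we do not verify this, since verification would need $\Omega(m)$ work; the failure probability is simply charged to the construction.

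Next, let $N' = \floor{m/n}-1$ and define $f : [N'] \to [p]$ by $f(x) = \phi(R(xn-n\dd xn+n])$, that is, the fingerprint of the length-$2n$ block of~$R$ starting at position $xn - n + 1$. The indexing matches the blocks $R(bn-2n\dd bn]$ of \cref{lem:naive_core_matching} with $b = x+1$. This $f$ is stored in $\cO(1)$ space (namely $p$, $r$, $n$), and we evaluate it in $\cO(n)$ time and $\cO(1)$ space by scanning the block directly.

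The co-domain $[p]$ is much larger than $N'$, so we apply \cref{lem:inversion_reduce_codomain} with parameter $\err+1$. This yields $\cO(\err)$ functions $f_1,\dots,f_k : [N']\to[N']$, constructed in $\cO(\poly(\err))$ time with failure probability at most $2^{-\err}/2$. Each $f_i$ is described by a constant number of integers $a_i,b_i,p'$ together with the description of~$f$, so it is storable in $\cO(1)$ space. Each $f_i$ is evaluable in $\cO(n)$ time via one evaluation of~$f$. Setting $N = N'$ (padding if necessary), a union bound over the two failure events gives overall success probability at least $1-2^{-\err}$.

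For the query on $T\in\Sigma^{3n}$, we compute $y_j = \phi(T[j\dd j+2n))$ for all $j\in[1\dd n+1]$ in a rolling manner using \cref{obs:rollfp}, in $\cO(1)$ time each after $\cO(\log n)$ preprocessing. For each $j$, we run the inversion reduction of \cref{lem:inversion_reduce_codomain}, which costs one inversion of each $f_i$ and $\cO(\err)$ extra time. This gives a candidate $x$ with $f(x)=y_j$, or a certificate that $y_j\notin f([N'])$. Recomputing $f(x)$ for the check $f(x_i)=y_j$ inside that reduction costs $\cO(n)$ per candidate. To keep the reduction's own cost at $\cO(n\err)$, we avoid doing this check for every $j$: instead we only record the $f_i$-inverses and compare $h_i(y_j)$-based candidates lazily. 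Concretely, we stop at the first $j$ for which some returned candidate $x$ satisfies $f(x)=y_j$; each such verification is charged as a function evaluation, i.e.\ counted toward inversion cost rather than reduction cost. Altogether this is $\cO(n)$ inversions of each function.

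Once a candidate $x$ with $f(x)=y_j$ is found, fingerprint collision-freeness on~$R$ alone does not guarantee $T[j\dd j+2n) = R(xn-n\dd xn+n]$. We therefore verify naively in $\cO(n)$ time and $\cO(1)$ space. On success, we output $R(xn - j + 1\dd xn + n - j + 1]$ as the occurrence of $T(n\dd 2n]$. On mismatch, $T[j\dd j+2n)$ cannot occur in~$R$: otherwise some length-$2n$ substring of~$R$ equal to it would share its fingerprint with the distinct block at~$x$, contradicting collision-freeness. Hence $T$ does not occur in~$R$ either, and we return failure. This verification happens at most once, so it costs $\cO(n)$.

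Correctness follows from the block-containment argument of \cref{lem:naive_core_matching}: if $T=R[i\dd i+3n)$, some $j$ aligns $T[j\dd j+2n)$ with a block, and its inversion succeeds. The main obstacle is the time accounting. We must ensure that the reduction itself, excluding inversions and the evaluations they trigger, stays within $\cO(n\err)$ deterministic time and $\cO(\err)$ space. This holds because we perform $n+1$ rolling fingerprint steps, each followed by $\cO(\err)$ hash computations, and a single $\cO(n)$ verification.
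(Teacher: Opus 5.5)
Your proposal is correct and follows the paper's proof essentially step by step. It uses the same collision-free Karp--Rabin fingerprints and the same function $f(x)=\phi(R(xn-n\dd xn+n])$ on $[\floor{m/n}-1]$. It computes the same $n+1$ rolling fingerprints of $T$, inverts each via \cref{lem:inversion_reduce_codomain} with parameter $\err+1$, and performs the same single naive verification, whose failure certifies that $T$ does not occur in~$R$.

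The one addition is that you explicitly charge the $\cO(n)$-time checks $f(x_i)=y_j$ inside \cref{lem:inversion_reduce_codomain} to the inversion cost. The paper leaves this implicit in this proof and handles it the same way in \cref{thm:core_matching_queries}, so it is a useful clarification, not a different route.
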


\begin{proof}
    The solution is similar to \cref{lem:naive_core_matching}, but instead of querying a binary search tree we invert a function.
    We still use Karp--Rabin fingerprints such that there are no two distinct length-$2n$ substrings of~$R$ that have the same fingerprint.
    We obtain the prime $p$ and the base $r$ using \cref{fact:construct_probabilistic_collision_free_fp} in $\cO(\poly(\err))$ time and space with success probability at least $1- 2^{-(\err + 1)}$ (using parameter $\err + 1$).
    
    Let $f : [\floor{m / n} - 1] \rightarrow [p]$ be defined by $f(x) = \phi(R(xn - n\dd xn + n])$.    
    It is easy to see that, if $T = R[i\dd i+3n)$ for some $i \in [1\dd m - 3n + 1]$, then this occurrence of~$T$ fully contains the fragment $R(xn - n\dd xn + x]$ with $x = \floor{\frac{i+3n - 1}{n}} - 1$. Hence we can find an occurrence of $T(n\dd 2n]$ as follows.
    We compute the sequence of fingerprints $\phi(T[j \dd j + 2n))$ for $j \in [1\dd n + 1]$ in total time $\cO(n)$ using \cref{obs:rollfp}.
    For each fingerprint $y$, we try to obtain $x \in f^{-1}(y)$ by inverting $f$.
    We then naively verify whether $T[j \dd j + 2n) = R(xn - n\dd xn + n]$, and, if this condition holds, we return that $R(xn - j + 1\dd xn + n - j + 1] = T(n\dd 2n]$.
    If the substrings mismatch, then $T[j \dd j + 2n)$ does not occur in~$T$ because distinct length-$2n$ substrings of~$R$ have distinct fingerprints. In this case, and also if all inversion queries report that $f^{-1}(y) = \emptyset$, we return the failure symbol.

    Finally, the co-domain of $f$ is too large. We apply \cref{lem:inversion_reduce_codomain} to obtain $\cO(\err)$ functions with domain and co-domain $[\floor{m / n} - 1]$ in $\cO(\poly(\err))$ time and space. Then, one inversion of $f$ can be reduced to one inversion of each of the functions. By using parameter $\err + 1$, the construction succeeds with probability at least $1 - 2^{-(\err + 1)}$. Overall, the success probability is at least $1 - 2^{-\err}$, as claimed.
\end{proof}

\subsection{Sparse String Synchronizing Sets}

If we use the inversion data structure of Fiat and Naor (\cref{fact:inversion}) to implement \cref{lem:semi_naive_core_matching}, then on the lowest level ($\ell = 0$) we can implement a core-matching query in $\tO(\tau)$ space and $\tO(m^3/ \tau^4)$ time: the domain is of size $N = \cO(m/\tau)$, but we have to multiply the $\tO(N^3 / \tau^3)$ inversion time from \cref{fact:inversion} with the number of inversion queries and the evaluation time of the function, both of which are linear in $\tau$. Next, we show how to retain the domain size and the evaluation time of the function, while decreasing the number of inversion queries so that it is almost constant.

We achieve this by using string synchronizing sets \cite{DBLP:conf/stoc/KempaK19} to consistently and regularly select synchronizing substrings in both~$R$ and~$T$.
In the definition below, we retain the original consistency and density conditions from \cite{DBLP:conf/stoc/KempaK19}, but we also add a \emph{sparsity} condition that ensures that we select only a small number of substrings within any {length-$\tau$} window of $R$.

\syncsets

Our construction enforces the sparsity condition by sampling every substring with probability around $k/\tau$. 
Our sampling of substrings is not fully independent.
By using a $k$-wise independent hash function and Chernoff--Hoeffding bounds as stated below, we are able to satisfy the sparsity condition.

\begin{lemma}[{\cite[Theorem~5(III), $\delta = 1$]{DBLP:journals/siamdm/SchmidtSS95}}]\label{lem:chernoff_hoeffding}
    For a positive integer~$d$ and a set $D$, let $h : D \rightarrow [0\dd d)$ be drawn uniformly at random from a family of $k$-wise independent hash functions. Consider a subset $D' \subseteq D$ with $\absolute{D'} = kd$, and let $D^* = {\{ s \in D' \mid h(s) = 0\}}$.
    Then $\prob[\absolute{D^*} \notin [1\dd 2k)] \leq e^{-k/3}$.
\end{lemma}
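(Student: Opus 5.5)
This is a cited Chernoff--Hoeffding bound for limited independence (Schmidt--Siegel--Srinivasan, Theorem~5(III) with $\delta=1$). The plan is to specialize that theorem to indicator variables and check that the parameters yield exactly $e^{-k/3}$.

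\textbf{Setup.} For each $s\in D'$, define the indicator $X_s=[h(s)=0]$ and set $X=\sum_{s\in D'}X_s=\absolute{D^*}$. Since $h$ is drawn from a $k$-wise independent family, each $h(s)$ is uniform on $[0\dd d)$. Hence $\prob[X_s=1]=1/d$, and any $k$ of the $X_s$ are mutually independent. The mean is $\mu=\mathbb{E}[X]=\absolute{D'}/d=k$.

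\textbf{Two tails.} The event $\absolute{D^*}\notin[1\dd 2k)$ is the union of $X\ge 2k=(1+\delta)\mu$ with $\delta=1$, and of $X=0$. Theorem~5(III) of SSS states that if the $X_s$ are $\ceil{\mu\delta e^{-1/3}}$-wise independent (here at most $k$), then $\prob[\absolute{X-\mu}\ge\delta\mu]\le e^{-\delta^2\mu/3}$ for $\delta\le 1$. With $\delta=1$ this covers both $X\ge 2k$ and $X\le 0$ simultaneously. It gives the bound $e^{-\mu/3}=e^{-k/3}$. So the entire proof reduces to confirming the hypotheses.

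\textbf{Main obstacle.} The only delicate step is matching the independence requirement and the two-sided form with $\delta=1$ in SSS's statement. The required independence is $\ceil{\mu e^{-1/3}}\le k$, which $k$-wise independence supplies, and I will verify this against their exact phrasing.

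If it turns out the theorem only covers the upper tail, the lower tail can be handled separately. Apply Theorem~5(I)/(II) from the same paper for $\prob[X\le 0]$, or argue directly that $\prob[X=0]\le e^{-k/3}$ via the $k$-th moment of $X-\mu$. The constants would then need rechecking. However, I expect the cited two-sided variant to apply directly.
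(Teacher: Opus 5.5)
Your proposal is correct and coincides with the paper, which gives no argument beyond invoking Schmidt--Siegel--Srinivasan Theorem~5(III) with $\delta=1$. As you say, $\absolute{D^*}$ is a sum of $k$-wise independent indicators with mean $\absolute{D'}/d=k$, and $\absolute{D^*}\notin[1\dd 2k)$ is exactly the event $\absolute{X-\mu}\ge\mu$. Your paraphrase of the independence threshold mixes up $\delta$ and $\delta^2$, but at $\delta=1$ every variant asks for at most $k$-wise independence, so this does not matter. If the form you end up citing reads $e^{-\lfloor\mu/3\rfloor}$, it still yields $e^{-k/3}$ in the paper's application, since there $k=9\err$ is divisible by $3$.
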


For our construction of synchronizing sets, we define a function $\ident : \Sigma^{2\tau} \rightarrow \{0,1\}$ that identifies synchronizing substrings. Crucially, the function can be stored in small space and can be evaluated efficiently in a rolling manner, similarly to Karp--Rabin fingerprints. This way, we can produce the elements of a string synchronizing set as a stream.

\sss

\begin{proof}
    Since positions in~$S$ are chosen solely based on length-$2\tau$ substrings, it is easy to see that the consistency condition will be satisfied by any function $\ident$.
    Hence we only have to focus on achieving density and sparsity.
    For $A \in \Sigma^{2\tau}$, we define $\ident(A)$ as follows.
    \begin{enumerate}[label=$(\roman*)$]
        \item\label{sss_run_middle} If $\per(A) \leq \frac\tau3$, then $\ident(A) = 0$.
        \item\label{sss_run_sentinels} If $\per(A) > \frac\tau3$ and either $\per(A[2\dd2\tau]) \leq \frac\tau3$ or $\per(A[1\dd2\tau)) \leq \frac\tau3$, then $\ident(A) = 1$. 
        \item\label{sss_regular} Otherwise, $\ident(A) = 1$ if and only if  $h(A) = 0$, where $h : \Sigma^{2\tau} \rightarrow [0\dd d)$ is drawn uniformly at random from a family of $k$-wise independent hash functions, using $k = 9\err$ and ${{\tau / (12k)} \leq d \leq {\tau / (3k)}}$. Intuitively, by sampling each length-$2\tau$ substring with probability $\approx k / \tau$, we will likely select at least one and at most $\cO(k)$ out of any $\tau$ distinct substrings.
    \end{enumerate}
    \paragraph{{\boldmath $S$ is $\cO(k)$-sparse and $\tau$-synchronizing.\unboldmath}}
    We explain how to efficiently implement the functions $\ident$ and~$h$ later. First, we show that using $\ident$ to construct~$S$ indeed leads to an $\cO(k)$-sparse $\tau$-synchronizing set.
    Fix some $i \in [1\dd m-3\tau + 2]$ and let $B = R[i\dd i + 3\tau - 2]$. 
    Let $S_i = \{ j \in [1\dd \tau] \mid \ident(B[j\dd j + 2\tau)) = 0 \}$.
    
    If $\per(B) \leq \frac\tau3$, then also $\per(B[j \dd j + 2\tau)) \leq \frac\tau3$ for every $j \in [1\dd \tau]$. It follows from \ref{sss_run_middle} that $\ident(B[j \dd j + 2\tau)) = 0$, and thus $S_i = \emptyset$, which means that~$i$ satisfies both the density and the sparsity conditions.

    \def\freq{\textnormal{\textsf{freq}}}
    \def\infreq{\textnormal{\textsf{rare}}}
    It remains to consider the case when $\per(B) > \frac\tau3$, for which we have to show $1 \leq \absolute{S_i} \in \cO(k)$.
    We first focus on the positions contributed by step \ref{sss_run_sentinels}.
    Let us assume that~$B$ contains a length-$(2\tau - 1)$ substring of period $\leq \frac\tau3$ (as otherwise step \ref{sss_run_sentinels} will not contribute to $S_i$).
    Let $\ell, r \in [1\dd \tau + 1]$ be the respectively minimal and maximal values such that $p_\ell := \per(B[\ell\dd \ell + 2\tau - 1)) \leq \frac \tau3$ and ${p_r := \per(B[r\dd r + 2\tau - 1)) \leq \frac \tau3}$.
    Periodic substrings of different minimal periods $p_{\ell}$ and $p_r$ cannot overlap by more than $p_\ell + p_r$ symbols (see, e.g., \cite[Lemma~1]{DBLP:conf/focs/KolpakovK99}), which implies that 
    the entire $B[\ell \dd r + 2\tau - 1)$ has period $p = p_\ell = p_r$. Hence, every $B[j\dd j +2\tau)$ with $j \in [\ell\dd r)$ has period~$p$ and will not contribute to $S_i$ due to step \ref{sss_run_middle}.
    Also, we have $\ell > 1$ or $r \leq \tau$ (or both), since otherwise we would have $\per(B) = p$, contradicting that $\per(B) > \frac\tau3$.
    Due to step \ref{sss_run_sentinels}, $\ell > 1$ implies that $\ident(B[\ell - 1\dd \ell + 2\tau - 1)) = 1$ and $r \leq \tau$ implies that $\ident(B[r\dd r + 2\tau)) = 1$.
    Thus, if~$B$ contains a length-$(2\tau - 1)$ substring of period $\leq \frac\tau3$, then step \ref{sss_run_sentinels} contributes either one or two positions to~$S_i$, and hence the density condition is satisfied.

    Finally, we focus on the positions contributed by step \ref{sss_regular}.
    Let $D = \{ B[j\dd j + 2\tau) \mid j \in [1\dd \tau] \}$ be the set of distinct length-$2\tau$ substrings of~$B$.
    We partition~$D$ into $D_{\freq} \subseteq D$ and $D_{\infreq} = D \setminus D_{\freq}$, where $D_{\freq}$ contains exactly the substrings that occur at least four times in~$B$.
    Since $A \in D_{\freq}$ of length $2\tau$ has four occurrences in~$B$ of length $3\tau$, a common corollary of the periodicity lemma implies that $\per(A) \leq \frac\tau3$ (see, e.g., \cite{DBLP:conf/cpm/MiyazakiST97,DBLP:journals/tcs/KidaMSTSA03}),
    and hence $\ident(A) = 0$ due to step \ref{sss_run_middle}.
    Therefore, we only need to count the contributions of substrings in $D_{\infreq}$. We have $\absolute{D_{\infreq}} \leq \absolute{D} \le \tau \leq 12kd$ (where the latter holds due to $kd \geq \frac\tau{12}$).
    We split $D_{\infreq}$ into twelve subsets of size at most $kd$. By \cref{lem:chernoff_hoeffding} and the union bound, we conclude that $\{ A \in D_{\infreq} \mid h(A) = 0 \}$ contains more than $24k$ elements with probability at most $12e^{-k/3} < 2^{-k/3} = 2^{-3\err}$ (where the former inequality holds once $\err$ exceeds a sufficiently large constant).
    Since each infrequent substring occurs at most three times in $B$, this is also an upper bound on the probability that step \ref{sss_regular} contributes more than $72k$ positions to $S_i$.
    Applying the union bound over all $i$, we know that step \ref{sss_regular} contributes less than $72k$ positions to every $S_i$ (and hence the sparsity condition is satisfied) with probability at least $1 - m \cdot 2^{-3\err} \geq 1 - 2^{-2\err}$.

    Now we only have to show that the density condition is satisfied. We already know that this is the case when~$B$ contains at least one length-$(2\tau - 1)$ substring of period at most $\tau/3$. Hence we can assume that the period of every length-$2\tau$ substring is greater than $\tau/3$. This means that all length-$2\tau$ substrings are infrequent, which implies $\absolute{D_{\infreq}} \geq \tau / 3 \geq kd$.
    Due to \cref{lem:chernoff_hoeffding}, we know that ${\{ A \in D_{\infreq} \mid h(A) = 0 \}}$ (and thus also $S_i$) is empty with probability at most $e^{-k/3} < 2^{-k/3} = 2^{-3\err}$. Again, by applying the union bound, $S_i$ is non-empty for all~$i$ (and hence the density condition is satisfied) with probability at least $1 - m \cdot 2^{-3\err} \geq 1 - 2^{-2\err}$.

    \paragraph{Constructing the hash function.}
    We construct a hash function $h : \Sigma^{2\tau} \rightarrow [0\dd d)$ that is only guaranteed to be $k$-wise independent (with high probability) when restricted to its subdomain $\{ R[i\dd i + 2\tau) \mid i \in [1\dd m - 2\tau + 1]\}$; this is sufficient for our purposes.
    We use the Karp--Rabin fingerprinting function $\phi = \phi_{q,r}$ to map {length-$2\tau$} substrings to integers from $[q]$, using a prime $q \in [2^{5\err}\dd 2^{5\err + 1})$ and a base $r \in [0\dd q)$ drawn uniformly at random.     
    Then, we draw $h' : [q] \rightarrow [0\dd d)$ uniformly at random from a family of $k$-wise independent hash functions and define $h = h' \circ \phi$.
    If every pair of distinct length-$2\tau$ substrings of~$R$ have distinct fingerprints, then this clearly results in a $k$-wise independent function~$h$.
    Due to \cref{fact:collision_probability:allsubstrings}, this happens with probability at least $1 - \frac{m^3}{q} \geq 1 - 2^{-2\err}$.
    
    We set $h'$ to be a polynomial of degree $k-1$ with random coefficients over $\mathbb{F}_d$ for a large prime~$d$. If~$d$ is given, then $h'$ can be constructed and evaluated in $\cO(k)$ time.
    If $\tau \geq 12k$, then any prime $d \in [\ceil{\frac{\tau}{12k}} \dd 2\cdot\ceil{\frac{\tau}{12k}})$ satisfies $d < 2\cdot\ceil{\frac{\tau}{12k}} \leq \tau/(3k)$.
    We can find both~$q$ and~$d$ in $\cO(\poly(\err))$ time and space and with success probability at least $1-2^{-2\err}$ using \cref{fact:findprime_in_range}, where for~$d$ we exploit the fact that $\tau \leq m \leq 2^\err$.
    If $\tau < 12k$, then we use $h' : [q] \rightarrow \{0\}$, i.e., all positions in non-periodic regions are synchronizing. Then, in any window of length $\tau$, at most $\tau = \cO(k)$ positions are synchronizing, and the sparsity condition is trivially satisfied.

    Overall, we fail with probability at most $2^{-2\err}$ for each of the following reasons: we may not achieve sufficient density, we may not achieve sufficient sparsity, the fingerprint function may cause collisions, and the computation of either~$q$ or~$d$ may fail. By the union bound, we fail with probability at most $5 \cdot 2^{-2\err} < 2^{-\err}$, as claimed.

    \paragraph{Enumerating the synchronizing set.}
    Lastly, we show how to enumerate the set for any given string $R' \in \Sigma^{m'}$. We can assume that $m' = 3\tau - 1$ as for longer strings we can run the algorithm for all length-$(3\tau - 1)$ substrings with starting position $\equiv 1\pmod{\tau}$ in left-to-right order.
    Hence, our task is to output $\{ i \in [1\dd \tau] \mid \ident(R'[i\dd i + 2\tau)) = 1 \}$ in increasing order.

    We compute $p = \per(R'[\tau\dd2\tau])$ in $\cO(\tau)$ time and constant space using~\cite[Lemma 6]{DBLP:conf/esa/0001GGK15}.
    If $p \leq \frac\tau3$, then we extend the periodicity as far as possible to either side by naive scanning in $\cO(\tau)$ time, resulting in the respectively minimal and maximal $\ell' \in [1\dd\tau]$ and $r' \in (2\tau\dd3\tau]$ such that $\per(R'[\ell'\dd r')) = p$.
    If $R'$ has a fragment $R'[j\dd j + 2\tau)$ of period $\leq \tau/3$, then this fragment fully contains $R'[\tau\dd2\tau]$, and the periodicity lemma implies $\per(R'[j\dd j + 2\tau)) = p$.
    Therefore, the fragment is fully contained in $R'[\ell'\dd r')$, i.e., step \ref{sss_run_middle} applies exactly to the length-2$\tau$ fragments within $R'[\ell'\dd r')$.
    If $r' - \ell' \geq 2\tau - 1$, let $\ell = \ell' - 1$ and $r = r' - 2\tau + 1$. Otherwise, let $\ell = r = \tau + 2$.

    Now we are ready to compute the set.
    We compute $\phi(R'[i \dd i + 2\tau))$ with $i \in [1\dd \ell)$ in left-to-right order, taking $\cO(\tau)$ time overall (see \cref{obs:rollfp}).
    For each fingerprint, we compute $h(R'[i \dd i + 2\tau)) = h'(\phi(T[i \dd i + 2\tau)))$ in $\cO(\err)$ time and report that~$i \in S$ if and only if the hash value is zero.
    Then, if $\ell > 0$, we report that $\ell \in S$. Similarly, if $r \leq \tau$, we report that $r\in S$.
    Finally, we process positions $i \in (r \dd \tau]$ analogously to the ones in $[1 \dd \ell)$.
    This handles steps \ref{sss_regular} and \ref{sss_run_sentinels}, and produces the elements of $S$ in increasing order.
\end{proof}

\subsection{Reducing Core-Matching to Function Inversion}

\begin{theorem}\label{lem:reduce_core_matching_to_fi}
    Let $\ell \geq 0$ and $\tau \geq 1$ be integers, and let $n = 2^\ell \cdot \tau$. Given a reference $R \in \Sigma^m$ and an integer parameter $\err \geq \log_2 (m\sigma)$, we can compute and store a sequence of $\cO(\err)$ functions with the following properties.
    Each function is stored in $\cO(\err)$ space, has domain and co-domain $[N]$ with $N = \cO(\frac{m\err}{n})$, and can be evaluated in $\cO(n\err)$ time and $\cO(\err)$ additional working space.
    A core-matching query $T \in \Sigma^{3n}$ can be reduced to inverting each of the functions at most once, and the reduction can be performed deterministically in $\cO(n\err)$ time and $\cO(\err)$ additional working space.
    The functions can be constructed in $\cO(\poly(\err))$ time with success probability at least $1-2^{-\err}$.
\end{theorem}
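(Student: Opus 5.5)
We combine the fingerprint-based reduction of \cref{lem:semi_naive_core_matching} with an $\cO(\err)$-sparse $n$-synchronizing set from \cref{lm:sss}, built with parameter $n$ in place of $\tau$. The domain of the function becomes the synchronizing positions of~$R$ instead of the evenly spaced block starts. As in \cref{lem:semi_naive_core_matching}, we first obtain collision-free Karp--Rabin fingerprints for length-$2n$ substrings of~$R$ via \cref{fact:construct_probabilistic_collision_free_fp}, and the function $\ident$ via \cref{lm:sss}. Both succeed with probability $1-2^{-(\err+2)}$, using parameter $\err+2$.

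\textbf{Domain and function.} Let $S$ be the synchronizing set of~$R$. We index $S$ as follows. For $x=(b,t)$ with $b\in[\ceil{m/n}]$ and $t\in[c\err]$, where $c\err$ bounds the number of elements of $S$ in any length-$n$ window by the sparsity condition, let $s(x)$ be the $t$-th element of $S\cap((b-1)n\dd bn]$, if it exists. Hence $N=\cO(m\err/n)$. To evaluate $f(x)=\phi(R[s(x)\dd s(x)+2n))$, we run the enumeration of \cref{lm:sss}(2) on $R((b-1)n\dd bn+2n)$ to find $s(x)$ in $\cO(n\err)$ time and $\cO(\err)$ space. We then compute the fingerprint in $\cO(n)$ time. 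If $s(x)$ does not exist, we map $x$ to a dummy value, for example the fingerprint of some fixed position, so that any returned preimage is still checked. Storing $f$ requires only $p,r$, and the $\cO(\err)$-word description of $\ident$. Finally, we apply \cref{lem:inversion_reduce_codomain} to obtain $\cO(\err)$ functions $[N]\to[N]$. Its $\cO(1)$ overhead per evaluation keeps the bounds.

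\textbf{Query.} Given $T\in\Sigma^{3n}$, we enumerate synchronizing positions of $T$ (with $\ident$) in $[1\dd n+1]$, i.e., within $T[1\dd 3n]$ windows of length $2n$, and take the minimal one, $j$. If $j$ exists, we invert $f$ at $y=\phi(T[j\dd j+2n))$, which costs one inversion per derived function. From a returned $x$, we recover $s(x)$, naively verify $T[j\dd j+2n)=R[s(x)\dd s(x)+2n)$ in $\cO(n)$ time, and output the induced occurrence of $T(n\dd 2n]$. Correctness rests on consistency: if $T=R[i\dd i+3n)$, then position $i+j-1$ lies in $S$, so $y$ is in the image. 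Collision-freeness then makes any preimage a true match, and a failed verification justifies reporting failure.

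If no $j$ exists, then by density $\per(T[1\dd 3n-1])\le n/3$, taking the window $[1\dd n]$. We then handle the periodic case separately. We compute $p=\per(T(n\dd 2n])$ as in \cref{lm:sss}. If $T$ occurs in $R$ inside a maximal periodic run, the run must end, since otherwise the entire suffix is periodic; this case needs care. We try inverting at the fingerprint of the length-$2n$ window immediately after the run's end, which is a synchronizing position by step~(ii). To do this without knowing the continuation, we instead invert at the fingerprint of $(\rot^{a}(T(n\dd n+p]))^{\infty}$ prefixes of length $2n-1$ followed by each symbol. That would be too many candidates, so a better option is to add a second family of functions that indexes the positions $s-2n+1$ preceding run-boundary positions and is keyed by the fingerprint of a length-$2n$ window fully inside the run. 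The main obstacle is exactly this periodic case. We expect to resolve it by including in the domain, for each step-(ii) position $s$ at a run's right end, an entry whose value is the fingerprint of the length-$2n$ periodic window ending just before $s$; symmetrically, runs' left ends are handled by reversal. The domain stays within $\cO(m\err/n)$, and the query inverts once at $\phi(T(n\dd 3n])$, whose $p$-phase matches a run in~$R$ up to rotation. The occurrence of $T(n\dd 2n]$ is then obtained by shifting by a multiple of $p$. If $T$ is entirely periodic and the run in $R$ is long, every window of the run containing it matches, so the indexed window witnesses an occurrence.
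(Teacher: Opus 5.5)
Your non-periodic case (an $\cO(\err)$-sparse $n$-synchronizing set, a domain indexed by (block, rank) pairs, evaluation by enumerating the synchronizing positions of a length-$(3n-1)$ window of $R$, one inversion at the fingerprint of the leftmost synchronizing length-$2n$ window of $T$, then verification) matches the paper's argument. The gap is the periodic case. You leave it explicitly unresolved, and the route you sketch does not work.

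\begin{itemize}
\item \emph{Fingerprints are not rotation-invariant.} Inside a run of period $p$, the length-$2n$ windows occur in $p$ different phases. The single window you index per run boundary has the phase fixed by where the run ends in $R$. The window $T(n\dd 3n]$ has the phase fixed by where the occurrence of $T$ sits. One inversion at $\phi(T(n\dd 3n])$ therefore finds the run only if these two phases happen to coincide. Matching ``up to rotation'' cannot be tested with one fingerprint. Trying all $p\le n/3$ rotations violates the requirement of at most one inversion per function, and indexing both run ends yields only two phases.
\item \emph{Runs need not end at all.} Take $R=a^m$ and $T=a^{3n}$. Every length-$2n$ window of $R$ has period $1$, so $S=\emptyset$ and there are no step-(ii) positions. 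Your scheme indexes nothing, yet $T$ occurs in $R$.
\end{itemize}

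The missing idea is a key that does not depend on the phase. The paper sets $p_{\max}=\ceil{n/3}$ and adds a second function $g:[\ceil{m/p_{\max}}]\to[q]$ with $g(x)=\min\{\phi(R[i'\dd i'+n+p_{\max})) : i'\in(xp_{\max}-p_{\max}\dd xp_{\max}]\}$. The query key is $y=\min\{\phi(T[j\dd j+n+p_{\max})) : j\in(n-p_{\max}\dd n]\}$.

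This works for the following reasons.
\begin{itemize}
\item $T[1\dd 3n-1]$ has period at most $n/3\le p_{\max}$. Hence any $p_{\max}$ consecutive starting positions inside this periodic region cover every phase, and both minima are taken over the same set of strings.
\item If $T=R[i\dd i+3n)$, the starting positions $[i\dd i+2n-p_{\max})$, whose windows lie inside $R[i\dd i+3n-1)$, form a range of length at least $2p_{\max}-1$. This range therefore contains an aligned block $(xp_{\max}-p_{\max}\dd xp_{\max}]$, which gives $g(x)=y$.
\item The minimizing positions $i'$ and $j'$ then give, after verification, an occurrence of $T[j'\dd j'+n+p_{\max})$, and this fragment contains $T(n\dd 2n]$.
\end{itemize}

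This construction needs no run boundaries. It keeps the domain at $\cO(m/n)$, evaluates in $\cO(n)$ time with rolling fingerprints, and costs a single inversion. Both $f$ and $g$ are then passed to \cref{lem:inversion_reduce_codomain}.
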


\begin{proof}
    We describe two functions $f, g : [N] \rightarrow[q]$ with $q \in [2^{5\err}\dd 2^{5\err + 1})$ such that a core-matching query can be solved by inverting either~$f$ or~$g$. 
    Then, the \lcnamecref{lem:reduce_core_matching_to_fi} follows from \cref{lem:inversion_reduce_codomain} (invoked with an appropriate error parameter).
    The co-domain stems from Karp--Rabin fingerprints with prime number $q$, which can be found in $\cO(\poly(\err))$ time with probability at least $1 - 2^{-2\err}$ (see \cref{fact:findprime_in_range}).
    We assume that, for any $i, j \in [m - 2n + 1]$ with $R[i\dd i+2n) \neq R[j\dd j+2n)$, we have $\phi(R[i\dd i+2n)) \neq \phi(R[j\dd j+2n))$. By \cref{fact:collision_probability:allsubstrings}, this assumption holds with probability at least $1 - \frac{m^3}{q} \geq 1 - 2^{-2\err}$.

    \paragraph{Simple solution for short texts.} For the sake of explanation, and also as a solution for $n = \cO(\err)$, we show a simple approach that results in a function $f : [m - 2n + 1] \rightarrow[q]$. (If $n = \cO(\err)$, then $m - 2n + 1 = \cO((m\err) / n)$.) 
    We define $f(x) = \phi(R[i\dd i + 2n))$.
    Given a query text $T \in \Sigma^{3n}$, we compute $y = \phi(T[1\dd 2n])$ and invert~$f$ to find some $x' \in f^{-1}(y)$.
    Finally, we verify that $R[x'\dd x' + 2n) = T[1\dd 2n]$, which implies an occurrence $R[x' + n\dd x' + 2n) = T(n\dd 2n]$.

    If $R[x'\dd x' + 2n) \neq T[1\dd 2n]$, then $\phi(T[1\dd 2n]) = \phi(R[x'\dd x' + 2n))$ and our assumption on $\phi$ imply that $T[1\dd 2n]$ does not occur in $R$, and we return the failure symbol. We also return the failure symbol if the inversion returns $f^{-1}(y) = \emptyset$.
    The function~$f$ can be stored in constant space and evaluated in $\cO(n)$ time, and the reduction takes $\cO(n)$ time and constant additional space.

    \paragraph{Shrinking the domain.}
    We now reduce the size of the domain by considering only positions from an $\cO(\err)$-sparse $n$-synchronizing set.
    We construct the function $\ident$ from \cref{lm:sss}, invoking \cref{lm:sss} with density parameter~$n$ and sparsity parameter $2\err$.
    Let~$S$ be the resulting synchronizing set for~$R$ (which we do not actually construct and merely use for explanation).
    Let $k = \cO(\err)$ be chosen so that the synchronizing set is $k$-sparse (the precise value of~$k$ depends on the constants hidden in \cref{lm:sss}).
    We define $f : [k \cdot \ceil{\frac{m}{n}}] \rightarrow [q]$ as follows.
    For $b \in [\ceil{\frac{m}{n}}]$ and $j \in [1\dd k]$,
    we define $f(bk - k + j) = \phi(R[i\dd i+2n))$,
    where~$i$ is the $j$-th synchronizing position in $S \cap (bn - n \dd bn]$. If there are fewer than~$j$ synchronizing positions in this range, we define $f(bk - k + j) = \phi(\texttt{0}^{2\tau})$ instead.
    Due to the $k$-sparsity guarantee, it is clear that, for every $i \in S$, there are some values $b,j$ such that $f(bk - k + j) = \phi(R[i\dd i+2n))$.

    To answer a query $T \in \Sigma^{3n}$, we first find the minimal $j' \in [1\dd n + 1]$ such that $\ident(T[j'\dd j' + 2n)) = 1$ in $\cO(n\err)$ time and $\cO(\err)$ space using the algorithm from \cref{lm:sss}. For now, we assume that there is at least one such synchronizing position. (If not, then the density condition implies that~$T$ is periodic, and we use a different approach described later.)
    Next, we compute $y = \phi(T[j'\dd j' + 2n))$ and invert~$f$ to find $x \in f^{-1}(y)$.
    This tells us that the fingerprint~$y$ occurs at the $(x - k\floor{\frac{x}k})$-th synchronizing position in $S \cap (\floor{\frac{x}k}n \dd \floor{\frac{x}k}n + n]$.
    We can find this position in $\cO(n\err)$ time by enumerating the synchronizing set of $R(\floor{\frac{x}k}n \dd \floor{\frac{x}k}n + 3n]$ with the algorithm from \cref{lm:sss}.
    Let $i'$ be the obtained position. We then verify whether $R[i'\dd i' + 2n) = T[j' \dd j' + 2n)$ in $\cO(n)$ time, in which case $R(i' + n - j'\dd i' + 2n - j'] = T(n\dd 2n]$.

    The consistency condition guarantees that we always obtain $i'$ such that $R[i'\dd i' + 2n) = T[j' \dd j' + 2n)$ unless there is a fingerprint collision. As mentioned earlier, a collision implies that~$T$ does not occur in $R$, and, since we verify that the substrings match, we correctly detect this case. As before, we also return the failure symbol if the inversion returns $f^{-1}(y) = \emptyset$.

    \paragraph{Handling the periodic case.}
    If we do not find a synchronizing position in $T$, then the density condition implies that~$T$ has period at most $\frac n3$. Let $p_{\max} = \ceil{\frac{n}3}$.
    We solve this case with a simpler function $g : [\ceil{\frac{m}{p_{\max}}}] \rightarrow [q]$ defined by \[g(x) = \min\{\ \phi(R[i'\dd i' +n+p_{\max}) \mid i' \in (x\cdot p_{\max} - p_{\max}\dd x\cdot p_{\max}]\ \}.\]

    For answering the query, we compute $y = \min\{\ \phi(T[j\dd j +n+p_{\max}) \mid j \in (n - p_{\max}\dd n] \}$ as well as the position $j'$ that achieves the minimum. 
    We obtain $x \in g^{-1}(y)$ by inverting~$g$. Then we obtain a position $i'$ that achieves the minimum when computing $g(x)$, and verify that $R[i'\dd i' +n+p_{\max}) = T[j'\dd j' +n+p_{\max})$, which implies an occurrence $R(i' + n - j'\dd i' + 2n - j'] = T(n\dd 2n]$.
    Due to the periodicity of $T$, it can be readily verified that, if~$T$ has an occurrence in $R$, then there is also~$x$ such that $g(x) = y$.
    Computing $g(x)$, $y$, $i'$, and $j'$ can each be done using \cref{obs:rollfp} in $\cO(n)$ time and constant working space.
\end{proof}

\begin{theorem}\label{thm:core_matching_queries}
    Let $x,y,z \in \mathbb R$ be non-negative constants, and let $\tau, \Delta > 0$ be integer parameters.
    Assume that there is a function inversion data structure that, for a function $f : [N] \rightarrow [N]$ that can be evaluated in constant time and space, can be constructed in $\cO(N^{1+z} \cdot \polylog(N))$ time, answers queries in $\cO((1 + N^{x} / \tau^{y}) \cdot \polylog(N))$ time, and can be stored and queried using space $\cO(\tau \cdot \polylog(N))$. Further assume that the construction succeeds with probability at least $\frac12$.
    
    Then, there is an $\tO(\tau)$-space data structure that can answer a core-matching query for any text of length~$3n$, where $n \in \{2^\ell \cdot \Delta \mid \ell \in [0 \dd \floor{\log_2 (m/(3\Delta))}] \}$, in worst-case time $\tO(n \cdot (1 + m^x / (n^x \tau^{y})))$.
    The preprocessing takes $\cO(\poly(m))$ expected time and succeeds without error.
    The preprocessing can also be implemented in $\tO(m^{1 + z} / \Delta^{z} \cdot \poly(\err))$ time with success probability at least $1 - 2^{-\err}$ at the expense of increasing the space complexity and update time by an $\cO(\poly(\err))$ multiplicative factor, for any integer parameter $\err \geq \log_2 (m\sigma)$.
\end{theorem}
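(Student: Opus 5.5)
For each level $\ell \in [0\dd \floor{\log_2(m/(3\Delta))}]$ with $n = 2^\ell \Delta$, I apply \cref{lem:reduce_core_matching_to_fi} to $R$. This yields $\cO(\err)$ functions $f_1,\dots,f_{k}$ on domain $[N]$ with $N = \cO(m\err/n)$. Each function is stored in $\cO(\err)$ space and evaluated in $\cO(n\err)$ time and $\cO(\err)$ space. For each $f_i$, I build the assumed inversion data structure, treating $f_i$ as a constant-time black box. The data structure only issues evaluation queries, so each black-box evaluation costs $\cO(n\err)$ time instead of $\cO(1)$. A core-matching query of length $3n$ reduces, in $\cO(n\err)$ time, to one inversion per function. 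Its cost is therefore $\cO(\err)\cdot(1+N^x/\tau^y)\polylog N \cdot \cO(n\err)$, which is $\tO(n(1+m^x/(n^x\tau^y)))$ once $\err = \Theta(\log(m\sigma))$ and $N^x = \tO(m^x/n^x)$. The space is $\tO(\tau)$ per function, times $\cO(\err)$ functions, times $\cO(\log m)$ levels, which is $\tO(\tau)$. The inversion's working space $\cO(\tau\polylog N)$ dominates the $\cO(\err)$ evaluation space.

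For the randomized preprocessing, construction time is $\cO(N^{1+z}\polylog N)$ evaluations, each costing $\cO(n\err)$. This gives $\tO((m/n)^{1+z} n\,\poly(\err)) = \tO(m^{1+z}/n^{z}\poly(\err))$, which is maximized at $n=\Delta$. Summing over the $\cO(\log m)$ levels costs only a logarithmic factor. To reach success probability $1-2^{-\err}$, I call \cref{lem:reduce_core_matching_to_fi} with parameter $\err + \cO(\log\log m)$ per level. Each inversion data structure succeeds with probability only $1/2$, so I build $\cO(\err)$ independent copies, boosting the failure probability to $2^{-\Theta(\err)}$. A query is run on all copies, and every answer is verified, since the core-matching reduction already verifies candidate occurrences naively in $\cO(n)$ time. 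A faulty copy then cannot cause a wrong answer, only an extra $\cO(\err)$ factor. This is why space and query time grow by a $\poly(\err)$ factor. A union bound over levels, functions, and copies gives the claim.

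For the Las Vegas preprocessing with no error, I fix $\err = \lceil\log_2(m\sigma)\rceil$ and check every component exhaustively in $\poly(m)$ time, retrying on failure. The checks are:
\begin{itemize}
\item the fingerprints are collision-free on all length-$2n$ substrings of $R$;
\item the set produced by $\ident$ satisfies density and sparsity, checked by enumerating it over $R$;
\item the hash functions from \cref{lem:inversion_reduce_codomain} have empty collision set $C$;
\item each inversion data structure correctly inverts every point in the image of $f_i$ (query all $N$ image points and check) and reports emptiness correctly.
\end{itemize}
The last check may be unnecessary if every inversion result is verified and failures are tolerated only when no preimage exists. To be safe, I instead test all $y$ in the image. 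Each attempt succeeds with constant probability, so the expected number of retries is $\cO(1)$ and the expected total time is $\poly(m)$. The resulting structure then answers every query correctly in the worst case.

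The main obstacle is the bookkeeping that replaces ``evaluation in $\cO(1)$'' with ``$\cO(n\err)$ time and $\cO(\err)$ space'' inside a black-box inversion data structure without breaking its space bound. A second difficulty is verifying that the Las Vegas checks are complete, in particular that a structure passing them never fails at query time on texts that actually occur in $R$.
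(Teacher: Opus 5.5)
Your plan follows the paper's proof in all essentials: one application of \cref{lem:reduce_core_matching_to_fi} per level, an $\cO(n\err)$ charge per black-box evaluation in both construction and query, $\Theta(\err)$ independent copies of each inversion structure with verified preimages, and verify-and-restart for the error-free preprocessing. Two steps, however, do not go through as written.

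First, verifying returned preimages protects you only against \emph{wrong answers} from a copy whose construction failed. It does not bound that copy's running time or working space, and the hypothesis gives no such guarantee for a failed construction. So the claim of ``only an extra $\cO(\err)$ factor'' is unjustified for the worst-case query time and for the $\tO(\tau)$ space bound. The paper runs each copy under a budget and aborts it once it exceeds $\tO(n(1+(m/n)^x/\tau^y))$ time or $\tO(\tau)$ space, treating an abort like ``no preimage.'' You need the same safeguard. Correspondingly, your error-free test of an inversion structure must also check that every query finishes within this budget; otherwise the test itself may not terminate.

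Second, in the Las Vegas part you require \emph{every} inversion structure to pass its test, and then assert that each attempt succeeds with constant probability. If ``retrying'' means restarting the whole preprocessing, this is false. There are $\Theta(\err\log m)$ functions, and each structure independently fails with probability up to $\frac12$. A full attempt therefore succeeds only with probability $2^{-\Omega(\err\log m)}$, and the expected number of restarts is superpolynomial. There are two ways to fix this:
\begin{itemize}
\item after the functions themselves have passed their checks, rebuild each inversion structure separately until it passes, at an expected two attempts each; or
\item as the paper does, keep $2\err$ copies per function and require only one verified copy per function, which holds with high probability.
\end{itemize}
A smaller point, since you flagged completeness of the checks yourself: the periodic case of \cref{lem:reduce_core_matching_to_fi} compares fingerprints of length $n+\lceil n/3\rceil$. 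Collision-freeness must therefore be checked for those windows too, not only for windows of length $2n$.
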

\begin{proof}
Let us fix some integer $\err \geq \log_2 (m\sigma)$.
    Throughout the proof, the $\tO$-notation not only hides $\polylog(m)$ factors, but also $\poly(\err)$ factors.

    We focus on one value $\ell$ and the associated $n = 2^\ell \cdot \Delta$.
    By \cref{lem:reduce_core_matching_to_fi}, in $\tO(1)$ time and space, we can construct a sequence of $\tO(1)$ functions such that inverting each function at most once suffices for solving a core-matching query of length $3n$.
    Each function is over a domain of size $\tO(m / n)$ and can be evaluated in $\tO(n)$ time.
    We construct a function inversion data structure for each function.
    By multiplying the preprocessing time $\tO((m / n)^{1+ z})$ for a constant-time computable function with the $\tO(n)$ evaluation time and the number $\tO(1)$ of functions, the preprocessing time for function inversion is \[{\tO((m / n)^{z + 1} \cdot n)} \subseteq \tO(m^{1 + z} / {n}^z) \subseteq \tO(m^{1 + z} / {\Delta}^z).\]
    Summing over all $\ell \in [1\dd \ceil{\log_2(m/(3\Delta))}]$, the total time for constructing the functions with \cref{lem:reduce_core_matching_to_fi} and then constructing the function inversion data structures is $\tO(m^{1 + z} / {\Delta}^z)$.
        The query time for function inversion for a text of length $n$ is
        $\tO(n(1 + (m/n)^x/\tau^y))$.
        
    If all inversion data structures were constructed successfully, then we already achieve the claims of the \lcnamecref{lem:reduce_ra_to_fi}; specifically, the construction of functions with \cref{lem:reduce_core_matching_to_fi} succeeds with probability at least $1- 2^{-2\err}$ (invoked with error parameter $2\err$), and the remaining steps are deterministic.
    
    By constructing $2\err$ inversion data structures for each function, at least one data structure will succeed for each function with probability $1 - 2^{-2\err}$.
    Whenever we have to invert a function, 
    we perform an inversion query using each of the inversion data structures that we have built for it, aborting the inversion query as soon as the time exceeds $\tO(n(1 + (m/n)^x/\tau^y))$ or the space exceeds $\tO(\tau)$.
    If a query actually returns a preimage, we verify that the result is correct by evaluating the function once; the time required for this is dominated by the time required for function inversion (which, in particular, performs function evaluations).
    If all queries either exceed the time, or return no preimage, or return an incorrect preimage, then we answer the core-matching query with the special failure symbol.

    Finally, to get a preprocessing algorithm without error, we use $\err = \ceil{\log_2 m}$. We can verify that the preprocessing was successful in $\cO(\poly(m))$ time as follows. We only have to check that there are no fingerprint collisions, that the computed synchronizing set is indeed sufficiently dense and sparse, and that there is at least one inversion data structure for each function that works correctly in the claimed time complexity.
    Each of these tasks can be performed naively by brute force in $\cO(\poly(m))$ time and $\tO(1)$ additional working space. If the construction fails, we restart it with new randomness. We already know that the construction succeeds with high probability and hence, the expected number of times we have to restart the construction is close to one.
\end{proof}

The data structure from \cref{fact:inversion} lets us use \cref{lem:reduce_ra_to_fi} with $x = y = 3$ and $z = 0$, thus obtaining the following result:

\begin{theorem}\label{cor:core_matching_queries}
	Consider integers $\tau,\Delta>0$ and a reference $R \in \Sigma^m$.
    There is an $\tO(\tau)$-space data structure that can answer a core-matching query for any text of length $3n$, where $n \in \{2^\ell \cdot \Delta \mid \ell \in [0 \dd \floor{\log_2 (m/(3\Delta))}] \}$, in worst-case time $\tO(n \cdot (1 + m^3 / (n^3\tau^{3})))$.
    The preprocessing takes $\cO(\poly(m))$ expected time and succeeds without error.
    The preprocessing can also be implemented in $\tO(m \cdot \poly(\err))$ time with success probability at least $1 - 2^{-\err}$ at the expense of increasing the space complexity and update time by an $\cO(\poly(\err))$ multiplicative factor, for any integer parameter $\err \geq \log_2 (m\sigma)$.
\end{theorem}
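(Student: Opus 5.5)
This is a direct instantiation of \cref{thm:core_matching_queries}. We plug in the Fiat--Naor data structure of \cref{fact:inversion} as the assumed function inversion data structure. It then remains to check that its parameters match the hypotheses of \cref{thm:core_matching_queries} with $x = y = 3$ and $z = 0$.

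\textbf{Step 1: matching the hypotheses.} For a function $f:[N]\to[N]$ evaluable in constant time, \cref{fact:inversion} gives the following.
\begin{itemize}
\item Construction takes $\tO(N) = \cO(N^{1+0}\cdot\polylog(N))$ time, so $z=0$.
\item Queries take $\tO(N^3/\tau^3) \subseteq \cO((1+N^3/\tau^3)\polylog(N))$ time, so $x=y=3$.
\item Storage and querying use $\tO(\tau)$ space.
\item Construction succeeds with probability $1-1/N \ge \frac12$ whenever $N\ge 2$. For $N=1$, inversion is trivial by evaluating $f$ once, so the hypothesis can be enforced in that case too.
\end{itemize}
One technicality: \cref{fact:inversion} suppresses $\polylog(N)$ factors via $\tO$, which is exactly the form \cref{thm:core_matching_queries} demands. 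Also, since $N$ is at most $\poly(m)$ throughout, $\polylog(N)$ factors are absorbed into the $\tO$ notation in $m$.

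\textbf{Step 2: reading off the bounds.} Substituting into \cref{thm:core_matching_queries}, the space is $\tO(\tau)$. The worst-case query time for a text of length $3n$ is
\[
\tO\bigl(n\cdot(1 + m^3/(n^3\tau^3))\bigr),
\]
for every $n = 2^\ell\Delta$ with $\ell\in[0\dd\floor{\log_2(m/(3\Delta))}]$. The error-free preprocessing, which takes $\cO(\poly(m))$ expected time, transfers unchanged. The randomized preprocessing runs in $\tO(m^{1+z}/\Delta^z\cdot\poly(\err))$ time, which with $z=0$ becomes $\tO(m\cdot\poly(\err))$. It succeeds with probability $1-2^{-\err}$ and incurs the stated $\cO(\poly(\err))$ overhead in space and time.

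No real obstacle is expected. The only point needing care is that the success probability of \cref{fact:inversion} is at least $\frac12$, which is handled by the $N\ge 2$ versus $N=1$ case split in Step 1.
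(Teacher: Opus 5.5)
Your proposal is correct and matches the paper, which obtains this theorem exactly by plugging the Fiat--Naor data structure (\cref{fact:inversion}) into \cref{thm:core_matching_queries} with $x=y=3$ and $z=0$. Your separate handling of the degenerate case $N=1$, where the stated success probability $1-1/N$ vanishes, is a small detail the paper leaves implicit.
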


\subsection{Obtaining Suffix Random Access Data Structures}

\begin{theorem}\label{lem:reduce_ra_to_fi}
    Let $x,y,z \in \mathbb R$ be non-negative constants, and let $\tau > 0$ be an integer parameter.
    Assume that there is a function inversion data structure that, for a function $f : [N] \rightarrow [N]$ that can be evaluated in constant time and space, can be constructed in $\cO(N^{1+z} \cdot \polylog(N))$ time, answers queries in $\cO((1 + N^{x} / \tau^{y}) \cdot \polylog(N))$ time, and can be stored and queried using space $\cO(\tau \cdot \polylog(N))$. Further assume that the construction succeeds with probability at least $\frac12$.
    
    Then, for every positive integer $k = \tO(\tau)$, there is a $(k - 1)$-error streaming random access data structure with space complexity $\tO(\tau)$, worst-case update time $\tO(1 + k^xm^x / \tau^{x + y})$, and worst-case query time $\cO(1 + \log k)$.
    The preprocessing takes $\cO(\poly(m))$ expected time and succeeds without error.
    The preprocessing can also be implemented in $\tO(k^zm^{1 + z} / \tau^{z} \cdot \poly(\err))$ time with success probability at least $1 - 2^{-\err}$ at the expense of increasing the space complexity and update time by an $\cO(\poly(\err))$ multiplicative factor, for any integer parameter $\err \geq \log_2 (m\sigma)$.
\end{theorem}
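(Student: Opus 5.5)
We plan to chain the reductions already established: core-matching queries answered by \cref{thm:core_matching_queries}, then \cref{lem:reduce_offline_ra_to_core_matching} to obtain an offline $0$-error data structure, and finally a variant of \cref{lem:random_access_reduce_kerrorstreaming_to_noerroroffline} to obtain the streaming $(k-1)$-error structure. The free parameter is the granularity $\Delta$ of the core-matching lengths, which need not equal the space parameter $\tau$. We intend to set $\Delta = \tau/k$, up to polylogarithmic factors. The reason is that the final structure must be assembled from $\cO(k)$ chained $0$-error structures, and each may use only $\tO(\tau/k)$ space so that the total stays $\tO(\tau)$. The core-matching inversion structures from \cref{thm:core_matching_queries} are built once, in $\tO(\tau)$ space, and are shared among all the chained copies.

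First, we apply \cref{lem:reduce_offline_ra_to_core_matching} with parameter $\Delta$ in place of $\tau$. This yields an offline $0$-error structure of size $\cO(\Delta+\log m)$ with constant query time, and its construction reduces to $\cO(\log m)$ core-matching queries at each length $3n$, where $n=2^\ell\Delta$. By \cref{thm:core_matching_queries}, the total construction time is
\[\sum_\ell \tO\bigl(2^\ell\Delta\,(1+m^x/((2^\ell\Delta)^x\tau^y))\bigr).\]
We cannot bound this directly by $\tO(n')$ times the cost at $\ell=0$, because the largest level may be as long as $m$. Instead, we use \cref{lem:construct_in_h_time}, or observe that only levels with $2^\ell\Delta\le h$ are used. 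Under this restriction, each level's cost divided by $h$ is at most $\tO(1+m^x/(\Delta^x\tau^y))$, since the term $m^x/n^x$ decreases in $n$. Hence $c(m)=\tO(1+m^x/(\Delta^x\tau^y)) = \tO(1+k^xm^x/\tau^{x+y})$.

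Next, we convert this to the streaming $(k-1)$-error setting. \cref{lem:random_access_reduce_kerrorstreaming_to_noerroroffline} gives update time $\cO(c(m)(q(m)+\log k)\log m)$, and with $q(m)=\cO(1)$ this matches the claimed update time up to logarithmic factors. Its space bound is $\cO(k\,s(m)\log m)=\tO(k\Delta)=\tO(\tau)$, plus the shared $\tO(\tau)$ of the inversion structures. The query time is $\cO(1+\log k)$ from the binary search over the chain. The paper mentions a ``slight modification improving the dependency on $k$''. The concern is that a naive composition would pay $\log k$ inside every update, whereas we want the streaming block decomposition of \cref{lem:random_access_reduce_streaming_to_offline} to query the $k$-error blocks only in $\cO(1+\log k)$ time. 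This overhead is absorbed by $\tO$ since $k=\tO(\tau)$, so we will accept it.

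For the preprocessing claims, we inherit both variants from \cref{thm:core_matching_queries} with $\Delta=\tau/k$. This gives $\tO(m^{1+z}/\Delta^z)=\tO(k^zm^{1+z}/\tau^z)$ time for the randomized variant, and $\cO(\poly(m))$ expected time with verification for the error-free variant. \textbf{Main obstacle:} the step we expect to be delicate is the accounting in the second paragraph, namely ensuring that the work of core-matching queries at large levels is charged against the support length, while the per-symbol cost is still dominated by the smallest level $n=\Delta$. We also need the de-amortization in \cref{lem:random_access_reduce_streaming_to_offline} to turn this into worst-case time. We would handle this through \cref{lem:construct_in_h_time} together with the fact that $n\mapsto n(1+m^x/(n^x\tau^y))/n$ is nonincreasing.
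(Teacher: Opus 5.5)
Your route matches the paper's. You take $\Delta=\lceil\tau/k\rceil$, run \cref{lem:reduce_offline_ra_to_core_matching} at granularity $\Delta$, and answer its queries with \cref{thm:core_matching_queries} (space parameter $\tau$, length parameter $\Delta$). You bound each query of length $3n'\le 3n$ by $\tO(n(1+m^x/(\Delta^x\tau^y)))$ and feed the result into \cref{lem:random_access_reduce_kerrorstreaming_to_noerroroffline}. Your time, query and preprocessing bounds are correct.

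The step you do not justify is the space bound. You set $s(m)=\tO(\Delta)$, but in this paper space complexity includes the working space used during construction. A core-matching query answered via \cref{thm:core_matching_queries} may need $\tO(\tau)$ working space, since the inversion structures are only guaranteed to be queryable in $\cO(\tau\cdot\polylog(N))$ space. Sharing the stored inversion structures, as you propose, does not cover this. Applied as a black box, the corollary charges $k\cdot s(m)\log m$ with $s(m)=\tO(\tau)$, which gives $\tO(k\tau)$ rather than $\tO(\tau)$.

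The missing observation is the content of the last paragraph of the paper's proof. It is also the ``slight modification'' the paper refers to; the $\log k$ factor in the update time, which you discuss instead, is harmless. The point is that only $\tO(1)$ core-matching query procedures are ever active at once:
\begin{itemize}
\item the streaming reduction of \cref{lem:random_access_reduce_streaming_to_offline} runs only $\cO(\log m)$ offline constructions concurrently;
\item the $k$-error construction of \cref{lem:random_access_reduce_kerror_to_noerror} builds its $\cO(k)$ chained $0$-error structures one after another;
\item the reduction of \cref{lem:reduce_offline_ra_to_core_matching} is adaptive and issues one query at a time.
\end{itemize}
So the $\tO(\tau)$ query working space is paid only $\tO(1)$ times, and only the persistent $\tO(\Delta)=\tO(\tau/k)$ storage of each chained structure is multiplied by $k$. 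With this argument added, your proof gives the claimed $\tO(\tau)$ space.
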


\begin{proof}
Let us fix some integer $\err \geq \log_2 (m\sigma)$.
    Throughout the proof, the $\tO$-notation not only hides $\polylog(m)$ factors, but also $\poly(\err)$ factors.
    By \cref{lem:random_access_reduce_kerrorstreaming_to_noerroroffline}, it suffices to show that there is a $0$-error offline random access data structure with space complexity $\tO(\tau / k)$ that has constant query time and, for a text of length $n$, can be constructed in $\tO(n \cdot (1 + k^xm^x / \tau^{x + y}))$ time, after an $\tO(k^zm^{1+z} / \tau^{z})$-time preprocessing of $R$ and $\ceil{\tau / k}$.
    
    Let $\Delta = \ceil{\tau / k}$.
    By \cref{lem:reduce_offline_ra_to_core_matching}, constructing the $0$-error offline random access data structure can be reduced to $\tO(1)$ core-matching queries,
    where each query text is a substring of~$T$ of length $3n'$ with $n' = 2^\ell \cdot \Delta$ for some $\ell \in [0\dd \floor{\log_2(m/(3\Delta))}]$. The reduction takes $\tO(\Delta)$ time and space.
    For now, assume that we have an oracle that answers each of the core matching queries in $\tO(n' \cdot (1 + m^x / ({n'}^x\tau^y)))$ time and constant space.
    Then, the overall time to answer all queries is $\tO(n \cdot (1 + m^x / (\Delta^x\tau^y))) = \tO(n \cdot (1 + k^xm^x / \tau^{x + y}))$ and the overall space is $\tO(\Delta)$, as required.
    
    Instead of using the oracle, we can achieve the same time complexity by answering the queries using \cref{thm:core_matching_queries} with parameters $\tau$ and $\Delta$.
    The space increases by an additive $\tO(\tau)$ factor; however, we only have to invest this space once for the preprocessing of \cref{thm:core_matching_queries}, and once for each active query procedure (accounting for the working space of the core-matching data structure at query time). The reductions used to obtain \cref{lem:random_access_reduce_kerrorstreaming_to_noerroroffline} construct at most $\tO(1)$ offline $0$-error random access data structures at a time, and hence there are at most $\tO(1)$ active query procedures.
    Therefore, the additional space is $\tO(\tau)$, as claimed.
\end{proof}

The data structure from \cref{fact:inversion} lets us use \cref{lem:reduce_ra_to_fi} with $x = y = 3$ and $z = 0$, thus obtaining the following result:

\def\contentofstreamingkerror{%
    There is an algorithm that, for any reference $R \in \Sigma^m$ and positive integer parameters~$\tau$ and $k = \tO(\tau)$, maintains a $(k - 1)$-error streaming random access data structure with space complexity $\tO(\tau)$, worst-case update time $\tO(1 + k^3m^{3} / \tau^6)$, and query time $\cO(1 + \log k)$. The preprocessing succeeds without error in $\cO(\poly(m))$ expected time.
    
    The preprocessing can be implemented in $\tO(m \cdot \poly(\err))$ time with success probability at least $1 - 2^{-\err}$ at the expense of increasing the space complexity and update time by an $\cO(\poly(\err))$ multiplicative factor, for any integer $\err \geq \log_2 (m\sigma)$.
}
\def\versionofstreamingkerror{full version}

\streamingkerror

\section{Function Inversion Protocol and Lower Bound}

In computational complexity, a fundamental difference arises between \emph{uniform} and \emph{non-uniform} computation. 
In the former case, we assume that a single algorithm is used to solve input instances of all possible sizes (i.e., the size is part of the input).
In the latter case, we are allowed to design a family of algorithms using a different algorithm for each input size. 
We say \emph{non-uniform algorithm} to refer to the entire family.
Crucially, a non-uniform algorithm may embed an arbitrary amount of \emph{advice} that depends only on the size of the input.

In this section, we introduce a non-uniform function inversion protocol based on any offline random access data structure. Its time and space complexity for function inversion will depend on the time and space complexity of the random access data structure. Using well-known lower bounds on the time-space trade-off of permutation inversion, we then derive lower bounds on the time-space trade-off of random access data structures.

In the auxiliary lemma below, we show that, for any pair of a function $f$ and a string $S$, we can either use $f$ to derive a compressed representation of $S$, or we can use $S$ to derive a function inversion data structure for $f$.
Then, in \cref{lem:how_many_strings_sets_to_invert_any_function}, we show that a small set of strings is sufficient to derive an inversion data structure for \emph{any} function; otherwise, we would have a provably impossible compression scheme.
This set is then used to obtain the non-uniform inversion protocol.

\begin{lemma}\label{lem:compressor_or_inverter}
\let\tau\tauconst
Let $\tau, \delta > 0$ with $\tau < 1$ be constant. 
Assume that there is an offline random access data structure with 
space complexity $\cO(m^\tau)$ bits, amortized query time $\tO(m^\delta)$, and construction time $\tO(m^{\tau + \delta})$ for a text of length at most $m^\tau \log m$. For a positive integer $N$, let $M = \ceil{N^{\tau/(1 - \tau)} \cdot \log_2 N}$. For every string $S \in \{0,1\}^{NM}$ and every constant-time computable function $f : [N] \rightarrow [N]$, at least one of the following statements holds.
\begin{enumerate}
\item The string~$S$ can be compressed into a representation of $NM - M + o(M)$ bits that can be decompressed if access to~$f$ is given (i.e., if we can compute $f(x)$ for any $x \in [N]$).
\item There is a function inversion data structure of size $\cO(M)$ bits that, given any query $y \in [N]$, outputs an element of $f^{-1}(y)$ (or reports that $f^{-1}(y) = \emptyset$) in $\tO(N^{(\tau + \delta) / (1- \tau)})$ time, using $\cO(M)$ working space. At query time, the data structure needs random access to~$S$ and~$f$ (i.e., it needs to be able to compute $f(x)$ for any $x \in [N]$ in constant time). 
\end{enumerate}
\end{lemma}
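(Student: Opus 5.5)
Read $S$ as $N$ consecutive blocks $S_1,\dots,S_N\in\{0,1\}^M$, and let the reference $R$ be $S$ itself (or $S$ with short separators). Then $m=NM$. The choice of $M$ is calibrated so that $m^\tau \approx (NM)^\tau \approx M/\log N\cdot\polylog$, i.e.\ $N^{\tau/(1-\tau)}\approx m^\tau$ up to logarithmic factors. This is because $M^{1-\tau}\approx N^\tau$ gives $M\approx N^{\tau/(1-\tau)}$, which is consistent with the definition of $M$. So a random access data structure for this reference has $\cO(m^\tau)=\cO(M)$ bits, and texts of length $M\le m^\tau\log m$ are admissible.

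For each $x\in[N]$, feed the block $S_{f(x)}$ as the text $T$. Its length is $M$ and it occurs in $R$ at position $(f(x)-1)M+1$. Because $T$ occurs in $R$, the support-length is $h=M$, so the structure must give random access to all of $T$ in $\cO(M)$ bits. Call $x$ \emph{good} if, for this text, the structure built after preprocessing $R$ (deterministically, by \cref{lem:preprocess_once_use_many_times}) satisfies the following: the constructed $\cO(M)$-bit state, together with read access to $R$, recovers $S_{f(x)}$, and the construction in some sense "locates" a block index. Concretely, the key dichotomy is this. If for the query $y$ the block $S_y$ can be reconstructed from $R$ and some short $\cO(\log N)$-bit pointer, then either the pointer identifies a block index $x$ with $f(x)=y$, or the bits of $S_y$ are redundant.

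To make this precise, define the inverter for case 2 as follows. Given $y$, build the random access data structure for text $T=S_y$, which has $h=M$. Then probe which block of $R$ the structure's queries read. More robustly, simulate the construction while recording all accesses to $R$. A position accessed in block $S_x$ with $f(x)=y$ suggests $x$, and we check each candidate $x$ by evaluating $f$. The time is the construction time plus $M$ queries, that is $\tO(m^{\tau+\delta})=\tO(N^{(\tau+\delta)/(1-\tau)})$, with $\cO(M)$ bits of working space. It fails only if the structure answers all $M$ queries for $S_y$ while reading $R$ only outside the blocks in $f^{-1}(y)$ (and in positions that do not correspond to preimage blocks).

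Case 1 then comes from the failure event. Suppose inversion fails for some $y$ in the image. Compress $S$ by deleting block $S_y$ and storing $y$ ($\log N=o(M)$ bits) plus the $\cO(m^\tau)$-bit-controlled state of the structure. I need the stored state to cost $o(M)$ beyond the deleted $M$ bits. The main obstacle is this: the structure itself has $\Theta(M)$ bits, so naively we only save a constant factor. I would fix it by choosing the text to be a longer concatenation. Take the text to be $S_y$ repeated, or equivalently blocks of length $M$ with $m^\tau\ll M$ by a $\log$ factor; this is exactly why $M$ carries the extra $\log_2 N$ factor, making the state $\cO(m^\tau)=o(M)$ bits. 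The decompressor reads the stored state and $y$. It rebuilds the rest of $S$, which is available because only $S_y$ was removed, and every read of $R$ falls outside $f^{-1}(y)$-blocks, in particular outside the deleted block, provided that $f^{-1}(y)$ contains block $y$'s index (since $f(y)$ need not equal $y$, I would instead delete a block $S_x$ with $x\in f^{-1}(y)$, and detect this using access to $f$). It then answers the $M$ random access queries to recover $S_x$. This yields $NM-M+o(M)$ bits, giving case 1 whenever case 2 fails.
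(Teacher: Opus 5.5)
\textbf{There is a genuine gap: your reference does not encode $f$.} With $R = S = S_1 \cdots S_N$, the text $S_y$ occurs at its own block $y$ of $R$, and nothing about $f$ affects where the data structure looks. A perfectly valid structure may locate $S_y$ at block $y$ and read only there. Then your inverter sees no accessed block $x$ with $f(x)=y$, so it fails for essentially every $y$. Your compressor fails as well. If you delete $S_y$, the simulated queries read exactly the deleted block. If you delete $S_x$ with $x\in f^{-1}(y)$, the structure built for text $S_y$ carries no information about $S_x$. So neither alternative of the lemma is established. The confusion in your last paragraph, between ``block $S_x$ with $f(x)=y$'' and ``the block of $R$ holding $S_y$'', comes from this: with $R=S$ the two are unrelated.

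\textbf{The missing idea is to take $R = S_{f(1)} S_{f(2)} \cdots S_{f(N)}$.} Then block $x$ of $R$ is a copy of $S_i$ exactly when $f(x)=i$.
\begin{itemize}
\item \emph{Compression.} For each $i$, build the structure for text $S_i$. The compressed form is this state, the index $i$, and all $S_j$ with $j\neq i$. To decompress, use $f$ to rebuild $R$ with every block $x$ satisfying $f(x)=i$ replaced by a dummy, and simulate the $M$ random-access queries. This recovers $S_i$ unless $S_i$ does not occur in $R$ (which forces $i\notin f([N])$) or some query reads a block $x$ with $f(x)=i$. Both events are detectable.
\item \emph{Inversion.} If compression fails for every $i$, the same simulation is the inverter. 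On query $i$, the first read of a block $x$ with $f(x)=i$ yields $x$. If no such read happens, $i$ is not in the image.
\end{itemize}
You must also restore the structure to its post-preprocessing state between queries.

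\textbf{On space, no longer or repeated texts are needed.} Repeating $S_y$ would in fact destroy the property that the text occurs in $R$. With $m=NM$ you get $m^\tau=\Theta(N^{\tau/(1-\tau)}\log^{\tau}N)=o(M)$ directly, because $M$ carries a full $\log_2 N$ factor and $\tau<1$. Also $M\le m^\tau\log m$, so texts of length $M$ are admissible.
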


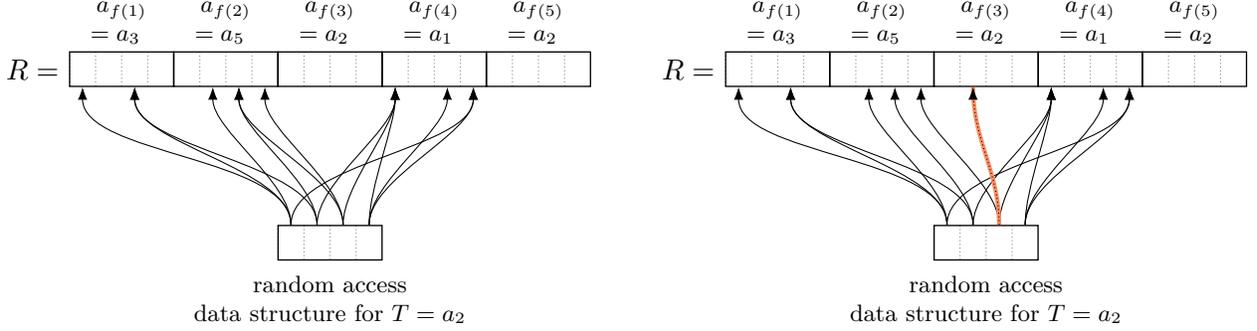
\begin{figure}
    \tikzset{
    every node/.style={inner sep=0pt},
    allblocks/.style={line width=.5pt, inner sep=-.75pt},
    }

    \def\N{5}
    \def\M{4}
    \def\f{3,5,2,1,2}
    \pgfmathsetmacro{\NM}{int(\N*\M)}

    \def\basedrawing{%
        \foreach[evaluate=\i as \im using int(\i - 1)] \i in {1,...,\NM} {
            \node (tl) at (\im, 0) {};
            \node (br) at (\i, -1) {};
            \node[fit=(tl)(br), allblocks] (p\i) {};
            \draw[gray, densely dotted] (tl.center) to (tl |- br.center);
        }

        \foreach[
            evaluate=\i as \rborder using {int(\i*\M)},
            evaluate=\rborder as \lborder using {int(\rborder-\M)},
        ] \i in {1,...,\N} {
            \node (tl) at (\lborder, 0) {};
            \node (br) at (\rborder, -1) {};
            \node[fit=(tl)(br), draw, allblocks] (b\i) {};
        }

        \foreach[count=\i from 1] \fofi in \f {
            \node[above=.25em of b\i, align=center] {\footnotesize{$a_{f(\i)}$}\\[-.2\baselineskip]\footnotesize{$ = a_{\fofi}\enskip$}};
        }

        \foreach[evaluate=\ip as \i using int(\ip + 8), evaluate=\i as \im using int(\i - 1)] \ip in {1,...,\M} {
            \node (tl) at (\im, -5) {};
            \node (br) at (\i, -6) {};
            \node[fit=(tl)(br), allblocks] (t\ip) {};
            \draw[gray, densely dotted] (tl.center) to (tl |- br.center);
        }
        \node (tl) at (8, -5) {};
        \node (br) at (12, -6) {};
        \node[fit=(tl)(br), draw, allblocks] (t) {};

        \node[below=.5em of t, align=center] {\footnotesize random access\\[-.2\baselineskip]\footnotesize data structure for $T = a_2$};
        \node[left=0 of p1] {$R=\ $};
    }

    \subcaptionbox{The random access data structure for text $a_2$ can simulate access to $a_2$ using only $a_1, a_3, a_5$ and~$f$, resulting in a compressed representation in which the random access data structure replaces $a_2$.}{%
    \begin{tikzpicture}[x=.9em, y=1.2em]
        \basedrawing
        \foreach[count=\src from 1] \dstset in {{1,3,6,16},{3,7,13},{7,8,13},{13,15,16}} {
            \foreach \dst in \dstset {
                \draw[-Latex] (t\src.north) to[out=90, in=270, looseness=.7] (p\dst.south);
            }
        }
    \end{tikzpicture}}%
    ~\hfill~%
    \subcaptionbox{The random access data structure for text~$a_2$ simulates access to $a_2$ by reading a symbol of $a_{f(3)} = a_2$. Hence we can find an element of $f^{-1}(2)$ by simulating access to $a_2$.}{%
    \begin{tikzpicture}[x=.9em, y=1.2em]
        \basedrawing
        \foreach[count=\src from 1] \dstset in {{1,3,6,16},{3,7,13},{8,13},{13,15,16}} {
            \foreach \dst in \dstset {
                \draw[-Latex] (t\src.north) to[out=90, in=270, looseness=.7] (p\dst.south);
            }
        }
        \draw[ultra thick, my-red] (t3.north) to[out=90, in=270] (p10.south);
        \draw[densely dotted] (t3.north) to[out=90, in=270] (p10.south);
        \draw[Latex-] (p10.south) to ++(0,-1pt);
    \end{tikzpicture}}
    \caption{Random access data structures for the reference from the proof of \cref{lem:compressor_or_inverter}. An arrow from a text position to a reference position indicates that the reference position is read by the random access data structure when simulating access to the text position.}
    \label{fig:enter-label}
\end{figure}

\begin{proof}
\let\tau\tauconst
Let $m = N\cdot M = \Theta(N^{1/(1 - \tau)} \cdot \log N)$,
which implies $M = o(m^\tau \log m)$, as well as $m^\tau = \Theta(N^{\tau/(1 - \tau)} \cdot \log^\tau N)$ and hence $m^\tau = o(M)$.
We split~$S$ into strings $A_1, \dots, A_N \in \{0,1\}^M$ defined by $\forall i \in [1 \dd N] : {A_i = {S((i - 1)\cdot M \dd i \cdot M]}}$.
Throughout the proof, we focus on the reference $R[1\dd m] = A_{f(1)}\cdot A_{f(2)}\cdot \ldots \cdot A_{f(N)}$.
We will use a random access data structure for~$R$ (and some for now undefined text) to derive either a compression scheme or a function inversion protocol.

\paragraph{Compression scheme.} Our goal is to find some $i \in [1 \dd N]$ such that we can effectively replace~$A_i$ with the random access data structure for reference~$R$ and text~$A_i$.
We run the following compression algorithm for each $i \in [1 \dd N]$ and terminate as soon as the algorithm is successful.
We preprocess the reference $R$, and then process $A_i$ as the text, resulting in a random access data structure of size $\cO(m^\tau) \subseteq o(M)$ bits.
The compressed representation of the string set consists of this data structure, the value~$i$ stored in $\cO(\log N)$ bits,
and the strings $A_1, \dots, A_{i - 1}, A_{i + 1}, \dots, A_N$ using $NM - M$ bits. Due to $\log_2 N = \Theta(\log M) \subset o(M)$, the overall size is $NM - M + o(M)$ bits, as claimed.

It remains to verify if the compressed representation is indeed sufficient for restoring $A_i$. 
We first create the string $R' = B_{f(1)}\cdot B_{f(2)} \cdot \ldots \cdot B_{f(N)}$, where $B_{f(j)} = 0^M$ if $f(j) = i$, and otherwise $B_{f(j)} = A_{f(j)}$. (This is straightforward using~$f$ and $A_1, \dots, A_{i - 1}, A_{i + 1}, \dots, A_N$.)
We then use the random access data structure for text~$A_i$ to simulate reading the entire~$A_i$, performing random accesses to~$R'$ rather than~$R$.
This correctly restores~$A_i$, unless one of the following holds. 
First, if~$A_i$ is not a substring of~$R$ (which implies that~$i$ is not in the image of $f$), then the random access data structure may not provide access to the entire~$A_i$.
Second, we may access some position $R'[h] = B_{f(\ceil{h/M})}[1 + h \bmod M]$ with $f(\ceil{h/M}) = i$, which can intuitively be seen as trying to access one of the occurrences of $A_i$ in~$R$.
Both cases can readily be detected.

If the decompression fails for all $i \in [1 \dd N]$, then we derive a function inversion protocol instead.

\paragraph{Function inversion protocol.} 
We preprocess the reference~$R$ (without processing any text), which 
results in a data structure of size $\cO(m^\tau) \subseteq o(M)$ bits.
On arrival of the first query $i \in [1\dd N]$ (asking for some $j \in [1\dd N]$ with $f(j) = i$), we build the random access data structure for the text~$A_i$. Then, we use the data structure to simulate read access to each position of $a_i$. Since the compression scheme failed, it is clear that (while simulating read access) we either access some position $R[h]$ with $f(\ceil{h/M}) = i$, or~$i$ is not in the image of~$f$. Hence, as soon as we access some $R[h]$ with $f(\ceil{h/M}) = i$, we answer the query with $j = \ceil{h/M}$. If no such position $R[h]$ is accessed, then we instead answer that~$i$ is not in the image of~$f$.
Processing the text $A_i$ and performing~$M$ queries takes $\tO(m^{\tau + \delta} + M \cdot m^\delta) = \tO(N^{(\tau + \delta) / (1 - \tau)})$ time, as claimed.

If $M = \tO(N^{(\tau + \delta) / (1 - \tau)})$, then we first copy the data structure before running the query.
Otherwise, we explicitly keep track of the changes made to the random access data structure during the query execution. Before returning the result, we revert all changes by either restoring the copy or undoing the explicitly tracked changes (which does not affect the asymptotic time complexity). 
Hence, at the beginning of each query, the random access data structure is in the same state as right after preprocessing the reference and before processing any text. 
\end{proof}

The compressed string representation from \cref{lem:compressor_or_inverter} consists of $NM - M + o(M)$ bits. 
Hence, for a fixed function $f$, it can distinguish only $2^{NM - M + o(M)}$ different bitstrings of length $NM$.

\begin{observation}\label{lem:how_many_strings_compressible_per_function}
Let $0 < \tau < 1$ be a constant. For $N \in \mathbb Z_{> 0}$, let $M = \ceil{N^{\tau/(1 - \tau)} \cdot \log_2 N}$ and consider a fixed function $f : [N] \rightarrow [N]$.
Less than $2^{NM - M + o(M)}$ bitstrings of length $NM$ can be compressed using~$f$ in the sense of \cref{lem:compressor_or_inverter}.
\end{observation}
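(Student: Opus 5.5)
This is a counting argument: we bound the number of distinct decompression inputs. Fix $f$. The compression scheme of \cref{lem:compressor_or_inverter} is a map from certain bitstrings $S \in \{0,1\}^{NM}$ to compressed representations, each of length at most $NM - M + o(M)$ bits. The decompressor is deterministic given $f$, so it recovers $S$ from its representation. Hence the map $S \mapsto \text{representation}$ is injective on the set of compressible strings.

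It remains to count representations. Let $L = NM - M + c(M)$ with $c(M) = o(M)$ be the length bound. The number of bitstrings of length at most $L$ is
\[
\sum_{j=0}^{L} 2^j = 2^{L+1} - 1 < 2^{L+1} = 2^{NM - M + o(M)},
\]
because the extra $+1$ is absorbed into the $o(M)$ term. By injectivity, fewer than $2^{NM-M+o(M)}$ strings are compressible.

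The only subtlety is that the representation must be self-delimiting, so that decompression is well defined. This holds if we view the representation as a bitstring of variable length at most $L$ and count all such strings, as above. Alternatively, we can pad every representation to exactly $L$ bits and use a fixed-width encoding for the lengths of its components. These components are the random access data structure of $\cO(m^\tau)$ bits, the index $i$, and the remaining blocks $A_j$. Their lengths are fixed or can be encoded in $\cO(\log m) = o(M)$ bits, so this costs nothing beyond the $o(M)$ term.

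I do not expect any real obstacle. The only step needing care is that the $o(M)$ term in the representation size is uniform, i.e., bounded by a single function of $M$ independent of $S$. This holds because the data structure size $\cO(m^\tau)$ and the $\cO(\log N)$ index cost are worst-case bounds.
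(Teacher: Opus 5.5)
Your argument is correct and matches the paper's reasoning. The paper justifies the observation in one line: a representation of $NM - M + o(M)$ bits can, for fixed $f$, distinguish only $2^{NM-M+o(M)}$ strings, which is your injectivity-plus-counting argument. Your remarks on self-delimiting encodings and on the uniformity of the $o(M)$ term correctly fill in details the paper leaves implicit.
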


\begin{lemma}\label{lem:how_many_strings_sets_to_invert_any_function}
Let $0 < \tau < 1$ be constant. For $N \in \mathbb Z_{> 0}$, let $M = \ceil{N^{\tau/(1 - \tau)} \cdot \log_2 N}$.
There is a set $\mathcal A \subseteq \{0,1\}^{NM}$ of size $\absolute{\mathcal A} = \cO(N \log N)$ such that, for every function $f : [N] \rightarrow [N]$, there is at least one string in $\mathcal A$ that implies an inversion protocol for~$f$ in the sense of \cref{lem:compressor_or_inverter}.
\end{lemma}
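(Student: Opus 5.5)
The plan is a probabilistic counting argument: a set $\mathcal A$ of $t = \cO(N\log N)$ strings drawn independently and uniformly at random from $\{0,1\}^{NM}$ works with positive probability, so such a set exists. By \cref{lem:compressor_or_inverter}, for every fixed pair $(S,f)$ at least one alternative holds. Hence a string $S$ fails to imply an inversion protocol for $f$ only if $S$ is compressible using $f$.

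First, fix a function $f : [N] \rightarrow [N]$. By \cref{lem:how_many_strings_compressible_per_function}, fewer than $2^{NM - M + o(M)}$ strings of length $NM$ are compressible using $f$. So a uniformly random $S \in \{0,1\}^{NM}$ is compressible with probability less than $2^{-M + o(M)} \le 2^{-M/2}$ for sufficiently large $N$. Since the $t$ strings of $\mathcal A$ are independent, the probability that all of them are compressible using $f$ is less than $2^{-tM/2}$.

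Second, take a union bound over all $N^N = 2^{N\log_2 N}$ functions $f$. The probability that some $f$ has no inverting string in $\mathcal A$ is at most $2^{N\log_2 N - tM/2}$. This is below $1$ once $t > 2N\log_2 N / M$. Since $M \ge \log_2 N$, the choice $t = \ceil{2N + 1}$ already suffices, which is well within the claimed $\cO(N\log N)$.

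Therefore some fixed $\mathcal A$ of that size has, for every $f$, a string that is not compressible using $f$, and that string yields an inversion protocol for $f$ by \cref{lem:compressor_or_inverter}. For small $N$, where the $o(M)$ term may not yet be below $M/2$, $\mathcal A$ can simply be all of $\{0,1\}^{NM}$, which has constant size.

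The only delicate point is absorbing the $o(M)$ term and making sure the compression bound is exactly per-function, as \cref{lem:how_many_strings_compressible_per_function} states. A cruder but equally valid variant avoids randomness: greedily add the string that is incompressible for the largest number of remaining functions. Each step removes at least a $1 - 2^{-M/2}$ fraction of the remaining functions, so $\cO(N\log N / M)$ steps suffice.
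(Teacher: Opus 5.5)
Your argument is correct, but it takes a different route from the paper. The paper builds $\mathcal A$ deterministically by an iterated averaging (greedy set-cover) argument. It keeps the unchosen strings $X_i$ and the still-uncovered functions $Y_i$, and orients each edge of the complete bipartite graph between them according to which alternative of \cref{lem:compressor_or_inverter} holds. It uses \cref{lem:how_many_strings_compressible_per_function} only to conclude that fewer than half of the edges are ``compression'' edges. Hence some unchosen string inverts at least half of the remaining functions, and $\ceil{\log_2 N^N} + 1 = \cO(N \log N)$ rounds exhaust all functions.

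You instead apply the probabilistic method: a random string is useless for a fixed $f$ with probability at most $2^{-M/2}$. Independence plus a union bound over the $N^N$ functions then show that $t = 2N+1$ random strings suffice (in fact $\cO(N\log N/M)$). Your route uses the full strength of the compression bound, namely failure probability $2^{-M+o(M)}$ per string rather than merely below $1/2$, and therefore yields a smaller set. The paper's halving needs only that compressible strings are a minority, and pays a logarithmic factor for it. This difference is immaterial downstream, where only the $\cO(\log N)$ bits needed to index the chosen string matter. Your deterministic greedy variant is exactly the paper's argument with the per-round fraction $1 - 2^{-M/2}$ in place of $1/2$.

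One minor correction: your fallback for small $N$ does not work as justified. When the $o(M)$ term is not dominated by $M$, \cref{lem:how_many_strings_compressible_per_function} gives no information. It can then happen that every string is compressible using $f$, for instance if the random access data structure is large enough to store the whole length-$M$ text verbatim. In that case even $\{0,1\}^{NM}$ contains no inverting string. The paper's proof has the same implicit requirement that $N$ be sufficiently large, since its inequality $y_i \cdot 2^{NM - M + o(M)} < x_i y_i / 2$ is asymptotic. So the lemma should simply be read for large $N$. Bounded $N$ is irrelevant for \cref{lem:random_access_implies_non_uniform_inversion}, where constant-size functions can be inverted by brute force.
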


\begin{proof}
Let $\mathcal F$ be the set of all $[N]$ to $[N]$ functions, and let $\mathcal S = \{0,1\}^{NM}$.
We will show that, for $i \in [0 \dd \ceil{\log_2 \absolute{\mathcal F}} + 1]$, there are $X_i \subseteq \mathcal S$ and $Y_i \subseteq \mathcal F$ with $x_i = \absolute{X_i} = 2^{NM} - i$ and $y_i = \absolute{Y_i} \leq \absolute{\mathcal F} / 2^i$ such that, for every function from $\mathcal F \setminus Y_i$, there is at least one string in $\mathcal S \setminus X_i$ that implies an inversion protocol in the sense of \cref{lem:compressor_or_inverter}. Then, using $\mathcal A = \mathcal S \setminus X_{\ceil{\log_2 \absolute{\mathcal F}} + 1}$ satisfies the claim.

Trivially, the statement holds for $i = 0$ using $Y_0 = \mathcal F$.
Inductively assume that the statement holds for some $i \in [0 \dd \ceil{\log_2 \absolute{\mathcal F}}]$.
Now we show that it holds for $i + 1$.
We will show that there is a string in $X_i$ that can be used to derive an inversion protocol for at least half of the functions in~$Y_i$ in the sense of \cref{lem:compressor_or_inverter}. Then, $X_{i + 1}$ is obtained by removing this string from $X_i$, and $Y_{i + 1}$ is obtained by removing all the invertible functions from $Y_i$.

We model the input space using the complete bipartite graph $K_{x_i, y_i}$ (with strings from $X_i$ on the left side, and functions from $Y_i$ on the right side).
There are $x_i \cdot y_i$ edges, which we orient as follows. An edge goes from right to left if and only if its incident function can be used to compress its incident string using \cref{lem:compressor_or_inverter} (and conversely, if the edge goes from left to right, then the string can be used to invert the function).
By \cref{lem:how_many_strings_compressible_per_function}, there are at most $y_i \cdot 2^{NM - M + o(M)}$ edges from right to left. It holds $i \leq \ceil{\log_2 \absolute{\mathcal F}} = \ceil{N \log_2 N}$, which implies $x_i = 2^{NM} - i \geq 2^{NM - 1}$. Thus, there are at most $y_i \cdot 2^{NM - M + o(M)} < x_i \cdot y_i / 2$ edges from right to left, and at least $x_i \cdot y_i / 2$ edges from left to right.
Hence there must be a string in $X_i$ that has at least $y_i / 2$ outgoing edges, i.e., it can be used to derive an inversion protocol for at least half the functions in~$Y_i$.
\end{proof}

\randomaccessimpliesnonuniforminversion
\begin{proof}
\let\tau\tauconst
Let $M = \ceil{N^{\tau/(1 - \tau)} \cdot \log_2 N}$. 
Since the function inversion data structure will be non-uniform, we can afford to embed the $\cO(N \log N)$ strings from \cref{lem:how_many_strings_sets_to_invert_any_function} into the algorithm.
Now we describe how to obtain the inversion data structure for a given function~$f$.
We try the embedded strings one at a time. For each string, we run the 
algorithm from \cref{lem:compressor_or_inverter} to obtain a function inversion data structure. We verify the correctness of the data structure by evaluating $f(x) = y$ for every $x \in [1\dd N]$. When queried with~$y$, the data structure either reports some alleged value $x' \in f^{-1}(y)$, or it reports that~$y$ is not in the image of~$f$. In the latter case, we know that the inversion protocol is incorrect. In the former case, we verify that actually $f(x') = y$.
In this way, we verify that the entire image of~$f$ is correctly inverted.
It remains to check if all non-image elements of the co-domain are handled correctly. Hence, we try all $y \in [1\dd N]$ and try to invert them. Whenever the data structure reports some alleged $x' \in f^{-1}(y)$, we verify that $f(x') = y$.
If we discover that the data structure reports an incorrect result, we immediately abort and continue with the next embedded string. 
Once we find a string that yields a correct inversion data structure, we report this data structure and the index of the string. Since there are only $\tO(N)$ strings, we can afford the $\cO(\log N) \subset \tO(1)$ bits needed to store the index.
\end{proof}

We first derive a conditional lower bound for the suffix access random data structure.

\raofflinelowercond
\begin{proof}
\let\tau\tauconst
If $\delta < 3 - 7\tau$, then \cref{lem:random_access_implies_non_uniform_inversion} implies that there is a non-uniform function inversion protocol that, up to polylogarithmic factors, uses $S = N^{\tau / (1- \tau)}$ space and $T = N^{(\delta + \tau)/(1-\tau)}$ time. We obtain $S^3 T = N^{(4\tau + \delta)/(1-\tau)} = o(N^3)$, improving over Fiat and Naor's function inversion data structure~\cite{doi:10.1137/S0097539795280512}. We further note that Fiat and Naor's bounds only make sense for $S \ge N^{2/3}$ (otherwise, $T \ge N$, which is trivial), and therefore our lower bound holds for $0.4 \le \tau < 1$. 
\end{proof}

Secondly, we show a non-conditional lower bound.

\raofflinelower

\begin{proof}
\let\tau\tauconst
If $\delta < 1 - 3\tau$, then \cref{lem:random_access_implies_non_uniform_inversion} implies that there is a function inversion protocol that, up to polylogarithmic factors, uses $S = N^{\tau / (1- \tau)}$ space and $T = N^{(\delta + \tau)/(1-\tau)} \ll N^{1 - \tau / (1- \tau)} = N/S$ time. However, then $S \cdot T \ll N$ contradicts the well-known lower bound for inverting a permutation~\cite{DBLP:conf/stoc/Yao90,DBLP:journals/eccc/DeTT09}, which holds for non-uniform algorithms. (While the lower bounds in \cite{DBLP:conf/stoc/Yao90,DBLP:journals/eccc/DeTT09} do not explicitly mention non-uniformity, the bound in \cite[Corollary~10.3]{DBLP:journals/eccc/DeTT09} is achieved by obtaining a compression scheme from the function inversion protocol. The compressibility argument used, see \cite[Fact~10.1]{DBLP:journals/eccc/DeTT09}, is based on the number of distinct inputs distinguishable by a compressed representation of a given size. It does not depend on whether the compression algorithm is non-uniform; thus, the lower bound holds for non-uniform algorithms.
\end{proof}

The above results hold only for relatively short texts. However, in the streaming setting, we can show that similar lower bounds hold for texts of arbitrary length.

\begin{corollary}\label{lem:ra_streaming_lower}
\let\tau\tauconst
Let $\eps, \tau$ with $0 < \tau < 1$ be constant.
Assume that there is a streaming random access data structure with space complexity $\cO(m^\tau)$. For all $n\geq m^\tau \log m$ and all constant $\varepsilon > 0$, there is a length-$n$ text $T$ such that both of the following hold:
\begin{enumerate}
    \item \label{it:streaming_lb_cond} If the amortized query time is $\cO(Q/m^{\varepsilon})$, where $Q = m^{3-7\tau}$ and $0.4 \le \tau < 1$, then the worst-case update time is $\tilde\Omega(Q)$, unless there is a non-uniform function inversion data structure that improves over the time-space trade-off of Fiat and Noar.
    \item \label{it:streaming_lb} If the amortized query time is $\cO(Q / m^{\varepsilon})$, where $Q = m^{1-3\tau}$ and $0 < \tau < 1$, then the worst-case update time is $\tilde\Omega(Q)$.
\end{enumerate}
\end{corollary}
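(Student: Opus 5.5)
The plan is a contrapositive argument: a streaming data structure that is too fast on some texts gives an offline random access data structure that violates \cref{lem:ra_offline_lower_cond} (for item~\ref{it:streaming_lb_cond}) or \cref{lem:ra_offline_lower} (for item~\ref{it:streaming_lb}). Fix $n \ge m^\tau \log m$. Suppose that for \emph{every} length-$n$ text the worst-case update time is $o(Q)$, up to the polylogarithmic slack hidden in $\tilde\Omega$, and that the amortized query time is $\cO(Q/m^{\eps})$. We derive a contradiction, so some length-$n$ text must force update time $\tilde\Omega(Q)$.

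The offline structure is built as follows. Take a text $T'$ of length $n' \le m^\tau \log m$, as allowed in \cref{lem:random_access_implies_non_uniform_inversion}. Pad it on the left to length exactly $n$ by a prefix $P$, so that the whole text is $P\cdot T'$. Choose $P$ as a fixed string depending only on $m$ and $n$; this is allowed because the protocol is non-uniform, and it can be taken to be, say, the prefix on which the streaming structure behaves generically. We then feed $P$ to the streaming structure \emph{during preprocessing}: it depends only on $R$ and the advice, and its cost counts as preprocessing. At construction time we stream $T'$ symbol by symbol. This takes $n'$ updates, each of worst-case time $o(Q)$, so the construction time is $o(Q\cdot m^\tau\log m)$. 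Random access queries into the suffix $T'$ are answered by the streaming structure's queries. The support-length semantics carry over: a support-length reaching back beyond $T'$ is capped at $n'$, and if the support-length is smaller than $n'$, the failure-to-factorize condition applies to a suffix of $T'$ itself. Space stays $\cO(m^\tau)$ words, which is $\tO(m^\tau)$ bits; we absorb the logarithmic factor by perturbing $\tau$ by an arbitrarily small constant, or by noting that \cref{lem:compressor_or_inverter} only used $o(M)$ bits.

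Now we match parameters. Write $Q = m^{\delta+\tau}$, with $\delta = 3-8\tau$ in item~\ref{it:streaming_lb_cond} and $\delta = 1-4\tau$ in item~\ref{it:streaming_lb}. The construction time is then $\tO(m^{\tau+\delta'})$ with $\delta' < \delta$ strictly, because of the $o(\cdot)$ slack, or more robustly because we assume update time $\cO(Q/m^{\eps'})$. The query time is $\cO(Q/m^\eps) \subseteq \tO(m^{\delta'})$. In item~\ref{it:streaming_lb_cond} this gives $\delta' < 3-7\tau$, so \cref{lem:ra_offline_lower_cond} yields a non-uniform inverter beating Fiat--Naor. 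In item~\ref{it:streaming_lb} it gives $\delta' < 1-3\tau$, contradicting \cref{lem:ra_offline_lower}. These exponents need to be checked carefully. In the offline statements the construction time is $m^{\tau+\delta}$ while the query time is $m^\delta$; here the query time is allowed to be $Q/m^\eps$ while the update time multiplied by $m^\tau$ must be at most $m^{\tau+\delta}$. So I would instead redo the short calculation inside the proofs of \cref{lem:ra_offline_lower_cond,lem:ra_offline_lower} with separate query and construction exponents. The inverter's time is $\tO(m^{\tau}\cdot U + M\cdot q)$, where $U$ is the update time and $q$ the query time. Both terms are $\tO(m^\tau Q)\cdot m^{-\Omega(1)}$, and converting to $N$ gives $S^3T = o(N^3)$, respectively $ST = o(N)$, exactly as before.

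The main obstacle is the padding step. We need a single fixed prefix $P$ after which the streaming structure still behaves correctly for all suffixes $T'$, and the preprocessing, including streaming $P$, must not modify state that we need in order to restart for each inversion query. I would handle this the same way as in \cref{lem:compressor_or_inverter}: snapshot the state after processing $P$, and undo changes after each query. This costs only time proportional to the work done, so the bounds are preserved. The worst-case versus amortized distinction for updates is harmless because only $n'$ updates happen per query.
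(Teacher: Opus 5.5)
Your proposal is correct and is essentially the paper's argument. The paper also treats the processing of a stream prefix as preprocessing and notes that a random access data structure for (prefix)$\cdot T'$ is one for any $T'$ of length at most $\mu=\floor{m^\tau\log m}$. It then invokes \cref{lem:ra_offline_lower_cond} and \cref{lem:ra_offline_lower}.

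The only difference is that the paper applies this iteratively, building $T$ block by block so that every $2\mu$ consecutive updates cost $\tilde\Omega(Q\mu)$. Your single arbitrary prefix $P$ yields just one expensive update, which is all the statement asks for. Of your two exponent computations, keep the direct one ($\tO(M\cdot U+M\cdot q)$, giving $S^3T=o(N^3)$, resp.\ $ST=o(N)$). The first matching with $\delta=3-8\tau$ drops the factor $m^\tau$ between update time and construction time.
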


\begin{proof}
    \let\tau\tauconst
    Let $\mu = \floor{m^\tau \log m}$. We actually show a stronger claim: in both cases, every $2\mu$ updates take $\tilde\Omega(Q \mu)$ time. We construct the text $T$ in blocks of length $\mu$.
    Assume that, for some integer $i > 0$, we have already constructed $T[1\dd i\mu]$ with the claimed property.
    Now assume that there is no string $S[1\dd \mu]$ such that updating~$T$ by appending~$S$ takes $\tilde\Omega(Q \mu)$ time.
    Then everything done so far can be seen as a preprocessing for the \emph{offline} construction of a random access data structure. In particular, given any text $T'$ of maximum length $\mu$, we can now construct a random access data structure in $\tO(Q\mu)$ time. (This holds because a random access data structure for $T[1\dd i\mu] \cdot T'$ is trivially also a random access data structure for $T'$.)
    However, this contradicts \cref{lem:ra_offline_lower_cond} when $Q = m^{3-7\tau}$ and \cref{lem:ra_offline_lower} for $Q = m^{1-3\tau}$.
    Hence, we can construct an arbitrarily long string~$T$ that can be split into blocks of length $\mu$ such that every block requires $\tilde\Omega(Q \mu)$ update time.
    The claim then immediately follows because every length-$2\mu$ fragment of~$T$ contains one of the blocks.
\end{proof}

Finally, an unconditional lower bound higher than ours by a factor of $\tau$ would imply a breakthrough in the long-standing unconditional lower bound for function inversion. The known bound states that, for any inversion data structure for a constant-time computable function $f : [N] \rightarrow [N]$, the product of space and query time must be $\Omega(N)$~\cite{DBLP:conf/stoc/Yao90,DBLP:journals/eccc/DeTT09}.

\begin{restatable}{corollary}{nobetterlowercond}\label{lem:nobetterlowercond} 
Let $0 < \eps < 1$ and let $\tau > 0$ be an integer parameter. Assume that, for every 0-error streaming random access data structure with space complexity $\tO(\tau)$ bits and constant query time, the worst-case update time is $\tilde\Omega(m / \tau^{2-\epsilon})$. Then, every $\tau$-space function inversion data structure for a constant-time computable function $f : [N] \rightarrow [N]$ must have worst-case query time $\tilde\Omega(N/\tau^{1-\epsilon})$, and hence the product of space and query time is $\tilde\Omega(N \tau^{\epsilon})$.
\end{restatable}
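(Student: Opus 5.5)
We argue by contraposition. Suppose there is a $\tau$-space inversion data structure for every constant-time computable $f:[N]\to[N]$ with query time $t = o(N/\tau^{1-\epsilon})$, up to polylogarithmic factors. We turn it into a $0$-error streaming random access data structure with $\tO(\tau)$ bits of space, constant query time, and update time $o(m/\tau^{2-\epsilon})$, contradicting the hypothesis. The bridge is the upper-bound machinery of the previous section, run in the opposite direction: \cref{lem:reduce_ra_to_fi} already converts any inversion structure into a random access structure. We only have to check that plugging in a structure with near-linear query time, $x = 1$ and $y = 1-\epsilon$, gives the required update time.

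The steps are as follows.

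First, we instantiate \cref{lem:reduce_ra_to_fi} with $k = 1$ (the $0$-error case) and with the assumed inversion structure, taking parameters $x = 1$, $y = 1-\epsilon$, and $z$ equal to whatever its construction time is. Preprocessing is irrelevant to the update time, and the expected-$\poly(m)$ preprocessing that succeeds without error is acceptable here. The theorem then gives update time
\[
\tO\bigl(1 + m^{x}/\tau^{x+y}\bigr) = \tO\bigl(1 + m/\tau^{2-\epsilon}\bigr),
\]
constant query time, and space $\tO(\tau)$.

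Second, we check where this bound comes from. In the proof of \cref{lem:reduce_ra_to_fi}, the dominant cost is a core-matching query for a text of length $3n$. Via \cref{lem:reduce_core_matching_to_fi}, such a query performs $\tO(1)$ inversions of functions on domains of size $N = \tO(m/n)$. Each function takes $\tO(n)$ time to evaluate, so each query costs $\tO(n\cdot t(m/n))$. The queries are amortized and de-amortized over $\Theta(n)$ arriving symbols, through \cref{lem:reduce_offline_ra_to_core_matching} and \cref{lem:random_access_reduce_kerrorstreaming_to_noerroroffline}, giving $\tO(t(m/\tau))$ per symbol at the bottom level $n=\tau$.

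Third, we substitute $t(N) = o(N/\tau^{1-\epsilon})$ with $N = m/\tau$. This gives update time $o(m/\tau^{2-\epsilon})$ (modulo polylogarithmic factors), contradicting the hypothesised lower bound. The space-time consequence is then immediate: $\tau\cdot\tilde\Omega(N/\tau^{1-\epsilon}) = \tilde\Omega(N\tau^\epsilon)$.

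The main obstacle is that the contrapositive hypothesis only gives a bound \emph{for a fixed $N$ and $\tau$}, in little-$o$ form. \cref{lem:reduce_ra_to_fi} instead assumes a uniform polynomial form $N^x/\tau^y$ together with a constant-time evaluable $f$, whereas our functions cost $\tO(n)$ per evaluation. The evaluation cost is handled as in the paper: it becomes a multiplicative $n$ factor, and that factor is absorbed by amortizing over $n$ symbols.

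For the quantifier issue, the cleanest plan is not to invoke \cref{lem:reduce_ra_to_fi} as a black box but to redo its one-line calculation. At each level $\ell$, the domain size is $N_\ell = \tO(m/(2^\ell\tau))$ and the inversion structure for that domain uses space $\tau$. Because $N_\ell/\tau^{1-\epsilon}$ decreases with $\ell$, the sum over the $\cO(\log m)$ levels is dominated by $\ell = 0$. This gives update time $o(m/\tau^{2-\epsilon})$ up to polylogarithmic factors, which is exactly the contradiction needed.
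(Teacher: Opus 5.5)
Your proposal is correct and follows the paper's route. The paper instantiates \cref{lem:reduce_ra_to_fi} with $k=1$, $x=1$, $y=1-\epsilon$ to obtain update time $\tO(1+m/\tau^{2-\epsilon})$, so a strictly faster inverter would contradict the assumed lower bound; your explicit contrapositive and per-level amortization make precise what the paper's one-paragraph proof states loosely. One small point on the quantifier issue you raise: if the fast inverter is only available at a single domain size, pad the smaller-domain functions of levels $\ell\ge 1$ up to the level-$0$ domain $N_0$, so that each level costs $\tO(t(N_0))$ per symbol, which suffices over the $\cO(\log m)$ levels.
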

\begin{proof}
Assume that there is a $\tau$-space function inversion data structure for $f$ with both query and update time in $\tO(N/\tau^{1-\epsilon})$. 
Applying the construction of \cref{lem:reduce_ra_to_fi} with $x = 1$ and $y = 1-\epsilon$, we obtain
a random access data structure with space complexity $\tO(\tau)$ bits, worst-case query time $\cO(1)$, and worst-case update time $\tO(m / \tau^{2-\epsilon})$.
Therefore, a $\tilde\Omega(m / \tau^{2-\epsilon})$ lower bound on the update time of the random access data structure directly translates to a $\tilde\Omega(N/\tau^{1-\epsilon})$ lower bound on the function inversion time.
\end{proof}

\section{Application: Asymmetric Streaming Pattern Matching}

In a pattern matching problem, we are given a pattern $P[1\dd m]$ and a text $T[1\dd n]$ with the task of computing the ending positions of fragments of $T$ that \emph{match} $P$ under some matching relation; we refer to these positions as the \emph{occurrences} of $P$ in $T$.\footnote{Our reduction also applies to the variant where one wishes to compute all fragments instead of their ending positions. Note, however, that the set of such fragments might be much larger than the set of ending positions; for example, this can be the case in approximate pattern matching under the edit distance. Additionally, some extra problem-specific work would be needed to avoid double-reporting in said variant.}
In the $k$-approximate variant of pattern matching under a given matching relation, we are looking for the ending positions of the fragments of $T$ that are at distance at most $k$ from a string matching~$P$; these positions are called the \emph{$k$-approximate occurrences} of $P$ in $T$ (under the corresponding matching relation).
Below, we focus on the two most popular string distances, namely the Hamming and edit distances, where the edit distance can be weighted; that is, each edit operation may have a weight dependent on the involved symbols, which can be accessed in constant time. 

In the online read-only setting, one is given read-only access to a pattern $P$ and to a text $T$ that arrives online symbol by symbol. The goal is to output each integer $j$ for which there exists an $i$ such that $T[i\dd j]$ matches $P$, and to do so before symbol $T[j+1]$ arrives.

We next show that streaming random access data structures enable black-box reductions from several fundamental pattern matching problems in the asymmetric streaming setting to their counterparts in the online read-only setting.
Namely, we introduce and consider the following class of \emph{generic pattern matching} problems:


\defgenericpatternmatching%
\noindent This class includes several classical pattern matching problems. We provide some of them below:
\begin{itemize}
\item The standard pattern matching problem is $1$-generic.
\item The circular pattern matching problem~\cite{10.1007/978-3-319-02309-0_59,FREDRIKSSON2009579,10.1093/comjnl/bxt023,Iliopoulos2017,10.1007/978-3-642-25591-5_69,10.1007/978-3-642-38905-4_15,DBLP:journals/jcss/Charalampopoulos21,DBLP:conf/esa/Charalampopoulos22,keditCPM,DBLP:conf/cpm/BathieCS24}, in which a string~$S$ matches~$P$ if $S$ equals a \emph{rotation} $P(j \dd ] P[\dd j]$ of $P$ for some $j \in [1 \dd m]$, is $2$-generic.
\item One of the variants of elastic-degenerate string matching~\cite{grossi_et_al:LIPIcs.CPM.2017.9,aoyama_et_al:LIPIcs.CPM.2018.9,doi:10.1137/20M1368033,BERNARDINI2020109,DBLP:conf/biostec/ProchazkaCK021,ILIOPOULOS2021104616,DBLP:journals/mst/BernardiniGPSSZ24,pissis_et_al:LIPIcs.CPM.2025.28,gawrychowski_et_al:LIPIcs.CPM.2025.29,GABORY2025105296}.
A symbol $S[i]$ of an elastic-degenerate string $S$ is a finite subset of $\Sigma^\ast$, where $\Sigma$ is the alphabet.
An elastic-degenerate string generates a language $L(S) = \{S_1 S_2 \cdots S_m : S_i \in S[i]\}$.
We say that a regular string $Q \in \Sigma^\ast$ and an elastic-degenerate string $S$ match if $Q \in L(S)$.
If only the pattern is elastic-degenerate~\cite{ILIOPOULOS2021104616,GABORY2025105296} and contains $z$ elastic-degenerate symbols, i.e., only $z$ symbols contain more than a single string, then the problem is $(2z+1)$-generic.
\item String matching with variable-length gaps~\cite{Lee03,DBLP:journals/jcb/MorgantePVZ05,DBLP:conf/cocoon/RahmanILMS06,DBLP:journals/ir/FredrikssonG08,DBLP:conf/lata/FredrikssonG09,10.5555/643002.643006,BILLE201225,AMIR201534,
10.1007/978-3-642-20662-7_7,DBLP:journals/algorithmica/AmirKLPPS19}, where the pattern is of the form $P = P_1 \cdot g\{a_1,b_1\} \cdot P_2 \cdot g\{a_2,b_2\} \cdots g\{a_{r-1},b_{r-1}\} \cdot P_r$. Each $P_i$ is a string over $\Sigma$ that only matches itself, whereas each $g\{a_i,b_i\}$ is a wildcard string that matches any string of length in the interval $[a_i \dd b_i]$.
This problem is $(r + \sum_i b_i)$-generic.
\end{itemize}
We note that the $k$-approximate variants of all the above problems are generic pattern problems as well:

\begin{observation}
If a pattern matching problem $\mathcal{P}$ is $z$-generic, then its $k$-approximate variant under either the Hamming distance or the (weighted) edit distance is $(z+2k)$-generic. 
\end{observation}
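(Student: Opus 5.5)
The statement to prove is the final observation: if $\mathcal P$ is $z$-generic, then its $k$-approximate variant under the Hamming or (weighted) edit distance is $(z+2k)$-generic. The plan is a direct decomposition argument. Fix a $k$-approximate occurrence: a fragment $F=T[i\dd j]$ such that some string $S$ matching $P$ under $\mathcal P$'s relation satisfies $d(F,S)\le k$. By $z$-genericity applied with text $S$ (an exact occurrence of itself), $S$ decomposes as a concatenation of at most $z+1$ substrings of $P$ and at most $z$ single symbols. We then transport this decomposition from $S$ to $F$ along an optimal alignment, paying at most two extra "cuts" per edit operation.

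First take an optimal alignment between $S$ and $F$ realising $d(F,S)\le k$. Under the Hamming distance this is at most $k$ substitutions. Under weighted edit distance we need a bound on the number of operations, since a bound on the weight alone is not enough. We will assume the usual normalisation that every non-identity operation has weight at least $1$, so the count is also at most $k$; this is the one place requiring care, and with fractional weights the claim would need this normalisation stated explicitly. Then $F$ is obtained from $S$ by at most $k$ operations: substitutions, insertions, and deletions.

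Next, process the operations one at a time, maintaining a decomposition of the current string into "substring-of-$P$" pieces and single-symbol pieces. We will check three cases.
\begin{itemize}
\item A substitution at a position inside a piece $U=U_1aU_2$ that is a substring of $P$ turns it into $U_1$, the single symbol $b$, and $U_2$. Both $U_1$ and $U_2$ are still substrings of $P$, so this adds at most one $P$-piece and one symbol. If the position is a single-symbol piece, nothing changes.
\item An insertion splits a piece the same way, again adding at most one $P$-piece and one symbol; if it falls at a piece boundary, only the symbol is added.
\item A deletion inside $U$ yields $U_1U_2$, which we keep as two $P$-pieces, adding at most one piece. Deleting a single-symbol piece only removes a piece.
\end{itemize}
Empty pieces may be dropped, and a $P$-piece may be the empty substring, which is harmless.

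After at most $k$ operations, $F$ is a concatenation of at most $z+1+k$ substrings of $P$ and at most $z+k$ symbols. This is within the $(z+2k)$-generic budget of $z+2k+1$ substrings and $z+2k$ symbols. The slack of a factor $2$ is what the stated bound allows, and it also covers writing a deletion as two substrings plus nothing.

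The main obstacle is only the weighted edit distance remark above: the argument needs the number of edit operations, not just their total weight, to be at most $k$. Everything else is a routine induction on the number of operations.
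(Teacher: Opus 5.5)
Your proof is correct. The paper states this as an observation with no proof, and your piece-splitting induction is the natural argument it leaves implicit.

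Your count actually gives more than you use. After at most $k$ operations you have at most $z+k+1$ substrings of $P$ and at most $z+k$ symbols, so the $k$-approximate variant is $(z+k)$-generic, and the stated $(z+2k)$ follows immediately. Your remark that factor-$2$ slack is needed to cover deletions is unnecessary, since your deletion case adds only one piece.

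The weight normalisation you flag for weighted edit distance (every non-identity operation costs at least $1$, so total weight at most $k$ means at most $k$ operations) is necessary, not just convenient. Take exact matching with $P=a^m$, and suppose a substitution of $a$ by $b\neq a$ costs $1/m$. Then $b^m$ is a $1$-approximate occurrence, but the only substring of $P$ it contains is the empty one. Writing it would need $m$ single symbols, far more than $z+2k$. The paper leaves this standard convention implicit, and you are right to make it explicit.
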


We now present a black-box transformation of an online read-only algorithm for a generic pattern matching problem to an asymmetric streaming algorithm.

\begin{theorem}\label{th:blackbox_ro_to_as}
Consider integers $z \geq 0$ and $\tau > 0$.
Suppose that, for a $z$-generic pattern matching problem with pattern $P[1\dd m]$, there exists an online read-only algorithm $\mathcal{A}$ that uses $\cO(\tau)$ additional space and processes a text $T[1\dd n]$ in $g(m, \tau)$ worst-case (or amortized) time per symbol, where $f$ and $g$ are functions that are non-decreasing in their parameters.
    
    Suppose that there exists an $\cO(\tau)$-space streaming $z$-error random access data structure 
    with worst-case update time $c(m,\tau)$ and worst-case query time $q(m,\tau)$.
    
    Then, there is an asymmetric streaming algorithm that, given $\tau$, random access to $P$, and streaming access to $T$, 
    solves the given $z$-generic pattern matching problem for $P$ and $T$ using $\cO(\tau)$ additional space and $\cO(c(m,\tau) + g(m, \tau) \cdot q(m, \tau))$ time per symbol of $T$;
    this time complexity is worst-case (amortized) if $\mathcal{A}$ processes each symbol in $g(m, \tau)$ worst-case (resp.~amortized) time.
\end{theorem}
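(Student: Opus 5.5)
We simulate $\mathcal{A}$ on the stream, maintaining alongside it a streaming $z$-error random access data structure $\mathcal{D}$ with reference $R = P$. Whenever $T[j]$ arrives, we first feed it to $\mathcal{D}$ in $c(m,\tau)$ worst-case time. We then hand $T[j]$ to $\mathcal{A}$ as its newly arrived online symbol. During the processing of $T[j]$, $\mathcal{A}$ may read earlier text symbols $T[i]$ with $i < j$ (it has read-only access to the text). Each such read is answered by a random access query to $\mathcal{D}$ in $q(m,\tau)$ time, or directly from a buffer when $i = j$. Reads of $P$ are served directly from the read-only pattern. $\mathcal{A}$ performs at most $g(m,\tau)$ steps, and hence at most that many text reads, per symbol. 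The per-symbol cost is therefore $\cO(c(m,\tau) + g(m,\tau)\cdot q(m,\tau))$, worst-case or amortized matching $\mathcal{A}$. The space is $\cO(\tau)$ for $\mathcal{A}$ plus $\cO(\tau)$ for $\mathcal{D}$.

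The crux is showing that every text read by $\mathcal{A}$ falls inside the support of $\mathcal{D}$, that is, $i > j - h$ where $h$ is the support-length. This does not hold in general, since $\mathcal{A}$ could read arbitrary old symbols. So we would modify the simulation instead of relying on $\mathcal{A}$ directly.

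The plan is to exploit the $z$-generic property. Every occurrence ending at $j$ is a fragment $T[i\dd j]$ that factorizes into at most $z+1$ substrings of $P$ and at most $z$ symbols. By the definition of a $z$-error random access data structure, if $h < j$ then $T[j-h\dd j]$ admits no such factorization, and neither does any longer suffix. Such a factorization is inherited by suffixes, which is essentially \cref{obs:datastructure_of_substring}. Hence every occurrence ending at $j$ starts inside $(j-h\dd j]$.

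We would therefore run $\mathcal{A}$ on the text restricted to the supported window. Whenever $\mathcal{A}$ requests a position $i \le j - h$, we return a fresh sentinel symbol $\$ \notin \Sigma$ instead of the true symbol. Formally, $\mathcal{A}$ is then run on the virtual text $T'$ obtained from $T$ by replacing everything left of the current window with $\$$.

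The main obstacle is consistency. The window's left boundary $j-h$ moves non-monotonically, so symbols that $\mathcal{A}$ has already read may later be replaced by $\$$, and the virtual text seen by $\mathcal{A}$ is then not a single fixed string. I would address this with the observation that the left boundary of the window only moves right, except when it is reset. If $T[j-h\dd j]$ is non-factorizable, then so is $T[j-h\dd j']$ for every $j' > j$, so the boundary never moves left. This gives monotonicity: once a position is declared $\$$, it stays $\$$. However, a position read earlier as a true symbol may later become $\$$.

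To handle that case, I would argue that no occurrence can span a $\$$-position, because such an occurrence would extend a non-factorizable fragment. So the matching output on the virtual text $T'$ coincides with the output on $T$ at every ending position $j$. The remaining issue is that $\mathcal{A}$, as an online algorithm, has committed state based on old symbols. Outputs are unaffected because occurrences never cross $\$$, but justifying this rigorously for an arbitrary black box $\mathcal{A}$ is the delicate point.

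The fallback I would try is to restart a copy of $\mathcal{A}$ with fresh state on the suffix starting at the new boundary. Restarting runs the risk of breaking the time bound. To de-amortize this, I would rely on the fact that the boundary advances only when a factorization fails. If that does not work, I would fall back on the interpretation that $\mathcal{A}$ reads any symbol $T[i]$ with $i \le j-h$ only when no occurrence is possible, and answer such reads with $\$$ arguing via $T'$ directly.
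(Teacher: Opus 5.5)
\textbf{There is a genuine gap, and it sits where you yourself flagged difficulty.}

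Your main construction fails at the consistency step. Suppose $\mathcal{A}$ has read position $i$ as the true symbol $T[i]$, and later that position is answered with $\$$. Then the single instance of $\mathcal{A}$ is running on an input that is not any fixed string. Since $\mathcal{A}$ is a black box, its correctness guarantee tells you nothing about such a run. The fact that no occurrence crosses a dead position constrains only the \emph{correct} output, not what $\mathcal{A}$ does from an inconsistent state.

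A smaller issue: your monotonicity argument is about non-factorizability, not about the support. The left end $j-h$ of an arbitrary $z$-error structure need not move monotonically, because the definition only bounds $h$ from below. You would need to use a running maximum instead.

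Your fallbacks do not close the gap:
\begin{itemize}
\item The third fallback assumes a property of $\mathcal{A}$ that the theorem does not give you.
\item The second fallback, restarting $\mathcal{A}$ at the new boundary, is the right instinct but is too expensive as stated. The boundary is discovered only after symbols to its right have arrived, so a fresh instance must replay up to $h$ symbols at $g(m,\tau)\cdot q(m,\tau)$ cost each. The boundary can move at almost every step; for example, with $P=a^m$ and $T=a^\infty$ it slides by one per symbol, giving $\Theta(m)$ replay work per symbol.
\end{itemize}

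\textbf{The missing idea, which is how the paper proves the theorem, is to fix the restart points in advance via a block decomposition.}

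First, the blocks. When $T[f_i]$ arrives, start a fresh $z$-error structure on $T[f_i],T[f_i+1],\dots$. End the block at the first $j$ where its support-length stops covering $f_i$. At that point $T[f_i\dd j]$ is non-factorizable. A second instance, updated only while support is full, remains one symbol behind and still supports all of $T[f_i\dd f_{i+1})$ with $f_{i+1}=j$.

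Second, the containment property. Sub-fragments of factorizable fragments are factorizable, so no occurrence contains a whole $T[f_{i+1}\dd f_{i+2}]$. Hence every occurrence starting in block $i$ lies inside $T[f_i\dd f_{i+2})$.

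Third, the simulation. You launch a fresh copy of $\mathcal{A}$ exactly when $T[f_i]$ arrives, so there is no replay. Run it on $T[f_i\dd f_{i+2})$, served by the structures of blocks $i$ and $i+1$, and discard it afterwards. Then:
\begin{itemize}
\item At most two copies of $\mathcal{A}$ and $\cO(1)$ random access structures are alive at any time.
\item Duplicate reports are filtered by remembering the last reported position.
\item Blocks are forced to have length at least $\tau$, by buffering short stretches explicitly. This way the $\cO(\tau)$ cost of copying (and later freeing) $\mathcal{A}$'s preprocessed state at each restart is paid for by the block.
\end{itemize}
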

\begin{proof}
	If $\tau \geq n$, we can simply store the text and run algorithm $\mathcal{A}$ within the claimed bounds.
	We thus henceforth assume that $\tau < n$.

	As the text $T$ arrives, we conceptually decompose it into blocks $T[f_i\dd f_{i + 1})$, where each block is a right-maximal fragment supported by a $z$-error random access data structure.
	Note that we can assume that each block (apart, possibly, from the last one) is of length at least $\tau$, as we can maintain random access to any length-$\tau$ fragment of $T$ explicitly within our space budget by copying its symbols instead.
	At any moment, we store $z$-error random access data structures (and at most~$\tau$ symbols) for the rightmost two blocks we have encountered so far (more precisely, for the second rightmost block if one exists and for the seen prefix of the rightmost block).
	We first explain how these data structures enable us to efficiently solve the $z$-generic pattern matching problem at hand, and then describe how to efficiently maintain them.
	
	Consider a block $T[f_i\dd f_{i + 1})$.
	When we read a symbol $T[f_i]$ and start maintaining the $z$-error random access data structure for this block, we take $\cO(\tau)$ time to create a copy of the information precomputed for $P$ by algorithm $\mathcal{A}$.
	This comes at no extra asymptotic cost due to our assumption that each block has length at least $\tau$.
	(We do this as the pattern matching algorithm for processing the text may edit this information, and we want to have the original version available when processing subsequent blocks.)
	We use this copy of precomputed information to run $\mathcal{A}$ for $P$ and $T[f_i\dd f_{i + 2})$ (or $T[f_i\dd f_{i + 1})$ if this is the last block) as the symbols of this fragment of $T$ arrive.
	Our maintained data structures enable $\cO(q(m,\tau))$-time random access to $T[f_i\dd f_{i + 2})$ and hence $\mathcal{A}$ can be implemented to work in time $\cO((f_{i-2}-f_i) \cdot g(m, \tau) \cdot q(m,\tau))$.
	In the end, we free any memory used by the pattern matching algorithm in $\cO(\tau)$ time, including the (possibly edited) copy of the information precomputed for $P$.
	Over all blocks, this takes a total of $\cO(n \cdot g(m, \tau) \cdot q(m,\tau))$ time, which is within the claimed bound.
	To prove that the occurrences of $P$ are computed correctly, it suffices to prove that any fragment of $T$ that can be decomposed to at most $z+1$ substrings of~$P$ and at most $z$ symbols spans at most two blocks.
	Indeed, if this were not the case, then there would exist a fragment $T[a \dd b)$ and a block $T[f_i\dd f_{i + 1})$ with $a < f_i < f_{i+1} < b$, contradicting the right-maximality of $T[f_i\dd f_{i + 1})$, as $T[f_i \dd b)$ can be factorized into at most $z+1$ substrings of $P$ and $z$ symbols.
	Finally, note that, at any time, there are at most two active instances of $\mathcal{A}$; e.g., before receiving symbol $T[f_{3}]$, we are running the algorithm $\mathcal{A}$ for texts $T[f_1 \dd f_3)$ and $T[f_2 \dd f_3)$.
	We can avoid reporting duplicates by using a single variable that records the most recently reported occurrence.
	
	Let us now describe how to efficiently maintain the $z$-error random data structures for the two rightmost blocks.
	We process each block $T[f_i\dd f_{i + 1})$ as follows, starting with the block $T[f_1 \dd f_2)$, where $f_1 = 1$.
	We maintain two instances of the $z$-error random access data structure for the considered block, updating one after the other for each arriving symbol.
	When a symbol $T[j]$ arrives, we first update the first instance of the data structure.
	If the support-length is incremented, we also update the second instance of the data structure.
	Otherwise, we conclude that $j = f_{i+1}$, delete the first instance of the data structure, stop updating the second instance, and start processing block $T[f_{i + 1}\dd f_{i + 2})$.
	At the same time, we also delete the $z$-error random access data structure stored for block $T[f_{i - 1}\dd f_i)$, if one exists.
	We apply the same trick as above, that is, create copies of the precomputed information for $P$ that enables the efficient construction of $z$-error random access data structures for each block, at no extra asymptotic cost.
	It is then clear that the maintenance of the random access data structures is achieved within the claimed bounds.
\end{proof}

\begin{corollary}\label{cor:blackbox_ro_to_as}
Consider integers $z \geq 0$ and $\tau > 0$.
Suppose that, for a $z$-generic pattern matching problem with pattern $P[1\dd m]$, there exists an online read-only algorithm $\mathcal{A}$ that uses $\cO(\tau)$ additional space and processes a text $T[1\dd n]$ in $g(m, \tau)$ worst-case (or amortized) time per symbol, where $f$ and $g$ are functions that are non-decreasing in their parameters.

Then, there is an asymmetric streaming algorithm that, given $\tau$, random access to $P$, and streaming access to $T$, solves the given $z$-generic pattern matching problem for $P$ and $T$ using $\cO(\tau)$ additional space and $\cO(k^3 m^3/\tau^6 + g(m, \tau) \cdot (1+\log k))$ time per symbol of $T$; this time complexity is worst-case (amortized) if $\mathcal{A}$ processes each symbol in $g(m, \tau)$ worst-case (resp.~amortized) time.
\end{corollary}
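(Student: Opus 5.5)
This is a direct instantiation of \cref{th:blackbox_ro_to_as} with the streaming random access data structure of \cref{cor:streaming_kerror}. The statement quietly uses $k$ for the error parameter of the data structure. I would read it as $k = z+1$, so that the data structure is $z$-error, which is exactly what \cref{th:blackbox_ro_to_as} requires.

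First I apply \cref{cor:streaming_kerror} with parameters $\tau$ and $k = z+1$. The condition $k = \tO(\tau)$ is needed here. If it fails, then $z = \tilde\Omega(\tau)$, and the claimed term $k^3m^3/\tau^6$ is at least $\tilde\Omega(m^3/\tau^3)$. When $\tau\le m$ this is at least linear in $m$ per symbol, so the trivial warm-up structure of \cref{lm:random-warmup} (lifted to $z$ errors via \cref{lem:random_access_reduce_kerrorstreaming_to_noerroroffline}), or simply recomputation, fits within the bound. More simply, I will note that we may assume $k \le \tau$, and the general case only affects polylogarithmic factors.

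Under this assumption the data structure has space $\tO(\tau)$, worst-case update time $c(m,\tau)=\tO(1+k^3m^3/\tau^6)$, and query time $q(m,\tau)=\cO(1+\log k)$. Its preprocessing succeeds without error in expected polynomial time. The preprocessing happens before the stream starts, so it does not count towards the per-symbol time.

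Next I plug $c$ and $q$ into \cref{th:blackbox_ro_to_as}. This gives $\cO(\tau)$ additional space (up to the polylogarithmic factors hidden by \cref{cor:streaming_kerror}) and per-symbol time $\cO(c(m,\tau) + g(m,\tau)\cdot q(m,\tau)) = \tO(1 + k^3m^3/\tau^6) + \cO(g(m,\tau)(1+\log k))$. The worst-case versus amortized distinction carries over verbatim from \cref{th:blackbox_ro_to_as}, because $c$ is a worst-case bound.

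The only real obstacle is bookkeeping rather than mathematics. \cref{cor:streaming_kerror} gives $\tO(\tau)$ space, not $\cO(\tau)$. To match the stated $\cO(\tau)$, I would run the data structure with parameter $\tau' = \tau/\polylog(m)$. This turns the update time into $\tO(k^3m^3/\tau^6)$ with the polylogarithmic factors absorbed, which is the accuracy the stated bound implicitly uses. Alternatively, I would read $\cO$ there as $\tO$, consistent with the rest of the paper.
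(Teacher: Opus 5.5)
Your main argument matches the paper's one-line proof. You take \cref{cor:streaming_kerror} with $k = z+1$, so that the structure is $z$-error, and plug $c(m,\tau) = \tO(1+k^3m^3/\tau^6)$ and $q(m,\tau) = \cO(1+\log k)$ into \cref{th:blackbox_ro_to_as}, reading the $\cO$ in the statement as $\tO$. That part is fine.

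The paragraph on the regime $k \notin \tO(\tau)$, however, is wrong. Replace it with an explicit hypothesis $z+1 = \tO(\tau)$, inherited from \cref{cor:streaming_kerror}; the paper's proof relies on this tacitly. Each of your sub-claims fails:
\begin{itemize}
\item $m^3/\tau^3 \geq m$ holds only when $\tau \leq m^{2/3}$. For $k = \tau = m$, the claimed update term is just $\cO(1)$.
\item Lifting \cref{lm:random-warmup} through \cref{lem:random_access_reduce_kerrorstreaming_to_noerroroffline} costs $\cO(k\tau\log m)$ space. That is exactly the space you cannot afford when $k \gg \tau$.
\item There is nothing to ``recompute'' from once the text has streamed past.
\end{itemize}

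Nor is ``we may assume $k \leq \tau$'' a normalization costing only polylogarithmic factors, because the statement itself fails in that regime. Consider the relation ``$S$ matches $P$ iff $\absolute{S} = z$ and $S[1] = S[z]$''.
\begin{itemize}
\item It is $z$-generic, since every occurrence is a concatenation of $z$ single symbols.
\item It is solved online read-only in $\cO(1)$ space and time, so $\tau = 1$ and $g = \cO(1)$ qualify.
\item After $T[1\dd z)$ has been read, feed the continuation $T[z\dd z-1+i]$ with last symbol $\texttt{1}$. Position $z-1+i$ is then reported iff $T[i] = \texttt{1}$. So the memory state must determine every symbol of $T[1\dd z)$.
\item Hence any streaming algorithm needs $\Omega(z)$ bits: deterministically by a fooling-set argument, and with randomization via INDEX.
\item Read-only access to $P$ does not help, because the relation ignores $P$.
\end{itemize}
For $z = n^{\Omega(1)}$ this exceeds $\tO(\tau)$ by a polynomial factor, not a polylogarithmic one.
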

\begin{proof}
The claim follows in a straightforward manner via the black-box reduction encapsulated in \cref{th:blackbox_ro_to_as} and our random access data structure from \cref{cor:streaming_kerror}. 
\end{proof}

Our reduction applies to the entire class of $z$-generic pattern matching problems, thus providing a general framework for transforming read-only online algorithms into asymmetric streaming ones. Below, we give an example of an application of the reduction to standard and circular pattern matching problems.

\begin{corollary}\label{fact:pm}
	Suppose that we are given positive integers $\tau$ and $k$, read-only access to a pattern~$P$ of length $m$, and a streaming text $T$ of length $n$.
	After preprocessing $P$, we can deterministically
\begin{enumerate}
\item compute all exact occurrences of $P$ in $T$ using space $\cO(\tau)$ and worst-case time $\tO(1+m^3/\tau^6)$ per symbol of $T$, and
\item compute all $k$-approximate occurrences of $P$ in $T$ under the Hamming distance using space $\cO(k \log m + \tau)$ and worst-case time $\tO(k + k^3m^3/\tau^6)$ per symbol of $T$, and
\item compute all $k$-approximate occurrences of $P$ in $T$ under the edit distance using space $\tO(k^4 + \tau)$ and amortized time $\tO(k^4 + k^3m^3/\tau^6)$ per symbol of $T$.
\end{enumerate}
\end{corollary}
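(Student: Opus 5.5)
The plan is to apply \cref{cor:blackbox_ro_to_as} (equivalently, \cref{th:blackbox_ro_to_as} combined with \cref{cor:streaming_kerror}) three times. Each application needs an online read-only algorithm for the corresponding generic pattern matching problem, together with the genericity parameter $z$. The error parameter of the random access data structure is then $z$, and so, up to polylogarithmic factors, the update time is $\tO(1 + z^3 m^3/\tau^6)$.

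\emph{Determinism.} The main obstacle is that \cref{cor:streaming_kerror} is randomised only in its preprocessing. The preprocessing succeeds without error in $\cO(\poly(m))$ expected time, and after it completes, every update and query is deterministic. I would therefore read ``after preprocessing $P$, we can deterministically'' as a statement about the processing of $T$. I would state this explicitly in the proof, pointing to the verification step in the proof of \cref{thm:core_matching_queries} as the reason the preprocessing has zero error.

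\emph{Exact matching.} Standard pattern matching is $1$-generic, so we use a $1$-error structure. This gives update time $\tO(1 + m^3/\tau^6)$ and query time $\cO(1)$. For the online read-only algorithm $\mathcal{A}$ I would take a constant-space real-time algorithm, for instance the Galil--Seiferas or Breslauer--Grossi--Mignosi real-time constant-extra-space matcher. Such an algorithm accesses $T$ only within the current candidate window. It runs in $g = \cO(1)$ worst-case time per symbol and uses $\cO(1) \subseteq \cO(\tau)$ space. The total per-symbol time is then $\tO(1 + m^3/\tau^6)$, as claimed.

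\emph{Approximate matching.} By the observation above, $k$-mismatch matching and $k$-edit matching are both $(1+2k) = \cO(k)$-generic. We therefore use an $\cO(k)$-error structure, with update time $\tO(1 + k^3 m^3/\tau^6)$ and query time $\cO(\log k)$.
\begin{itemize}
\item For Hamming distance, I would cite a deterministic online read-only $k$-mismatch algorithm using $\cO(k\log m)$ extra space and $\tO(k)$ worst-case time per symbol; such algorithms exist in the literature, for example via the small-space $k$-mismatch techniques combined with deterministic de-amortisation. This yields space $\cO(k\log m + \tau)$ and time $\tO(k + k^3 m^3/\tau^6)$.
\item For edit distance, I would cite a deterministic online read-only algorithm with $\tO(k^4)$ space and $\tO(k^4)$ amortized time per symbol, of the kind in the small-space edit-distance matching line of work. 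This yields space $\tO(k^4 + \tau)$ and amortized time $\tO(k^4 + k^3 m^3/\tau^6)$.
\end{itemize}

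The only remaining work is bookkeeping. The space bound of \cref{th:blackbox_ro_to_as} is stated as $\cO(\tau)$, so I would apply it with its parameter set to $\max(\tau, \text{space of }\mathcal{A})$. Since the extra $\log$ factor from queries is absorbed by $\tO$, the final bounds are the sums stated in the corollary.
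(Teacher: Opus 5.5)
Your proposal is correct and essentially matches the paper's proof. The paper simply instantiates \cref{cor:blackbox_ro_to_as} with the Breslauer--Grossi--Mignosi real-time constant-space matcher for the exact case, and with Theorem~4.5 ($k$ mismatches) and Lemma~5.8 ($k$ edits) of \cite{DBLP:conf/isaac/BathieKS23}; these are exactly the deterministic online read-only algorithms with the $\cO(k\log m)$/$\cO(k)$ and $\tO(k^4)$/$\tO(k^4)$ bounds you postulate, so cite them instead of the vaguer ``de-amortisation'' justification. Your extra bookkeeping is sound and only makes explicit what the paper leaves implicit: the zero-error Las Vegas preprocessing, and enlarging the space parameter to $\max(\tau,\text{space of }\mathcal{A})$, which also secures the hypothesis $k=\tO(\tau)$ of \cref{cor:streaming_kerror}.
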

\begin{proof}
We apply \cref{cor:blackbox_ro_to_as} to 
(a) \cite{DBLP:journals/talg/BreslauerG14} for the exact variant, (b) \cite[Theorem 4.5]{DBLP:conf/isaac/BathieKS23} for the $k$-approximate variant under the Hamming distance, and (c) \cite[Lemma 5.8]{DBLP:conf/isaac/BathieKS23} for the $k$-approximate variant under the edit distance.
\end{proof}

\begin{table}[t]\small
\caption{Complexities of exact and approximate (standard) pattern matching for a pattern of length~$m$.
Here, $\tau$ is an integer parameter and time complexities are given per symbol of the text.
The complexities in the asymmetric streaming model (marked with $^{\dagger}$) are shown in this work and are \emph{deterministic} (apart from Las Vegas randomization in the preprocessing of the pattern).
In contrast, complexities in the streaming setting are \emph{Monte Carlo randomized}.}
\label{tab:pm}
\setlength{\tabcolsep}{.76em}
\centering
\begin{tabular}{|l|l|l|l|}
\hline
\textbf{} & \textbf{Online read-only} & \textbf{Asym.~streaming} & \textbf{Streaming} \\
\textbf{} & (deterministic, space used on  & (deterministic, space used & (randomized) \\
\textbf{} & top of read-only $P$ \emph{and} $T$) &  on top of read-only $P$) & \\ 
\hline
\hline
Exact & 
\makecell[l]{$\cO(1)$ space\\ $\cO(1)$ time~\cite{DBLP:journals/talg/BreslauerG14}} & 
\makecell[l]{$\cO(\tau)$ space\\ $\tO(1+m^3/\tau^6)$ time$^{\dagger}$}& 
\makecell[l]{$\cO(\log m)$ space\\ $\cO(\log m)$ time~\cite{porat2009optimal}} \\
\hline
$k$-mism. & 
\makecell[l]{$\cO(k \log m)$ space\\ $\cO(k)$ time~\cite{DBLP:conf/isaac/BathieKS23}} & 
\makecell[l]{$\cO(k \log m + \tau)$ space\\ $\tO(k + k^3m^3/\tau^6)$ time$^{\dagger}$} & 
\makecell[l]{$\cO(k \log m)$ space\\ $\tO(\sqrt{k})$ time~\cite{clifford2018streaming}} \\
\hline
$k$-edit & 
\makecell[l]{$\tO(k^4)$ space\\ $\tO(k^4)$ (amort.) time~\cite{DBLP:conf/isaac/BathieKS23}} & 
\makecell[l]{$\tO(k^4 + \tau)$ space\\ $\tO(k^4 + k^3m^3/\tau^6)$ (amort.) time$^{\dagger}$} & 
\makecell[l]{$\tO(k^2)$ space\\ $\tO(k^2)$ time~\cite{DBLP:conf/icalp/Bhattacharya023}}\\
\hline
\end{tabular}
\end{table}

We summarize the results for exact and approximate (standard) pattern matching in \cref{tab:pm}.\footnote{There is an unpublished work~\cite{mai2021optimalspacetimestreaming} on asymmetric streaming pattern matching. However, this work considers the problem where one is given random access to the text while the pattern arrives as a stream. Hence, this contribution is incomparable to ours.}
We note that the bounds for the fully streaming model carry straightforwardly to the asymmetric streaming one. All streaming algorithms, however, critically rely on Monte Carlo randomization.
In contrast, the output of our algorithms for the asymmetric streaming model is always correct and only the preprocessing of the pattern is Las Vegas randomized; for $\tau \ge m^{0.4}$ and $k = m^{o(1)}$, our algorithms' time-space product is sublinear.
This poses the challenge of exploring the trade-off between algorithmic efficiency and randomization in asymmetric streaming pattern matching.

We now turn our attention to circular pattern matching.
Bathie, Charalampopoulos, and Starikovskaya~\cite{DBLP:conf/cpm/BathieCS24}, showed that, for every $\tau \in [\sqrt m \log m (\log \log m)^3 \dd m]$, there exists an asymmetric streaming algorithm that solves this problem in $\cO(\tau)$ space and $\tO(m/\tau)$ time per symbol. 
We next obtain a significantly better trade-off; notably, for $\tau \ge m^{0.4}$, the time-space product of the algorithm encapsulated in \cref{fact:cpm} is sublinear.

\begin{corollary}\label{fact:cpm}
Let $\tau > 0$ be an integer parameter. There is a deterministic asymmetric streaming algorithm solving  the exact circular pattern matching problem on a read-only pattern of length $m$ and a streaming text of length $n$ in $\tO(\tau)$ space and $\tO(m^3/\tau^6+1)$ time per symbol of the text.
\end{corollary}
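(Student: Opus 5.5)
We obtain \cref{fact:cpm} as a direct instance of \cref{cor:blackbox_ro_to_as}. Exact circular pattern matching is $2$-generic: any occurrence is a rotation $P(j\dd]P[\dd j]$, i.e., a concatenation of two substrings of $P$. In fact it is even $1$-generic with one extra symbol, but we only need the looser bound. We therefore apply \cref{cor:blackbox_ro_to_as} with $z = 2$. This uses \cref{cor:streaming_kerror} with $k = z + 1 = 3$, a constant. The term $\cO(k^3m^3/\tau^6)$ then becomes $\tO(m^3/\tau^6)$, and the $1 + \log k$ factor becomes $\cO(1)$.

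The remaining ingredient is an online read-only algorithm $\mathcal{A}$ for exact circular pattern matching that runs in $\tO(1)$ time per symbol and uses at most $\tO(\tau)$ extra space. Read-only here means access to both $P$ and $T$. I would take the online read-only circular pattern matching algorithm of Bathie, Charalampopoulos, and Starikovskaya~\cite{DBLP:conf/cpm/BathieCS24}. Their result is stated with $\cO(\log m)$-type space and polylogarithmic, possibly amortized, time per symbol. If it is only amortized, the final bound is amortized, or we de-amortize by time-slicing with a constant-size delay buffer. The alternative is the constant-space read-only circular matching of \cite{DBLP:conf/esa/KociumakaSV14}, run online. With $g(m,\tau) = \tO(1)$, the total time per symbol is $\tO(m^3/\tau^6 + 1)$ and the space is $\tO(\tau)$, as claimed.

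Determinism deserves a check. The output is computed deterministically, and randomization enters only in the preprocessing of \cref{cor:streaming_kerror}. That preprocessing is Las Vegas: it succeeds without error in expected $\poly(m)$ time. Any fingerprint use inside $\mathcal{A}$ would have to be replaced by verification via random access; the algorithm of \cite{DBLP:conf/esa/KociumakaSV14} is deterministic, so this is not an issue there.

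The main obstacle is confirming that a suitable online read-only algorithm exists with the required guarantees. It must report each occurrence before the next symbol arrives, and it must do so deterministically. The offline read-only results of \cite{DBLP:conf/esa/KociumakaSV14} may need adaptation to the online setting, for example by combining online exact matching of all rotations via critical-factorization techniques. I would first check whether \cite{DBLP:conf/cpm/BathieCS24} already states an online read-only version, since they use that model for their asymmetric streaming algorithm.
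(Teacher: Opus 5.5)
Your proposal follows the paper's proof: the paper also derives the result from the black-box reduction of \cref{th:blackbox_ro_to_as}, with the random access data structure of \cref{cor:streaming_kerror} (constant number of errors) and the online read-only circular pattern matching algorithm of \cite[Theorem~7.1]{DBLP:conf/cpm/BathieCS24}. Your fallback is misattributed, since \cite{DBLP:conf/esa/KociumakaSV14} concerns longest common substring rather than circular pattern matching, but you do not need it because \cite{DBLP:conf/cpm/BathieCS24} already supplies the required online read-only algorithm.
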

\begin{proof}
The claim follows via the black-box reduction (\cref{th:blackbox_ro_to_as}), using the data structure from \cref{cor:streaming_kerror} and the algorithm from~\cite[Theorem 7.1]{DBLP:conf/cpm/BathieCS24}.
\end{proof}

\section{Application: Relative Lempel--Ziv Factorization}
Finally, we show an application of suffix random access data structures to designing asymmetric streaming compression algorithms, namely, for computing the relative Lempel--Ziv factorization. Formally, the relative Lempel--Ziv factorization is defined as follows:

\begin{restatable}{definition}{rlzdef}[{\cite{10.1007/978-3-642-16321-0_20}}]
Let strings $T, R \in \Sigma^*$. We say that $T = f_1 f_2 \ldots f_z$ is the Lempel--Ziv factorization of $T$ relative to $R$, $\rlz(T,R)$, if for every $1 \le i \le z$, $f_i$ is either a symbol that does not occur in $R$, or the longest substring of $R$ that is a prefix of $f_i f_{i+1} \ldots f_z$. We call $f_1, f_2, \ldots, f_z$ the factors of the factorization. 
\end{restatable}

This method is particularly effective for compressing collections of strings that are highly similar to a reference. For this reason, the relative Lempel--Ziv factorization, when combined with simple auxiliary data structures, has become a popular text index for collections of complete genome sequences from individuals of the same species~\cite{10.1093/bioinformatics/btr505,10.1007/978-3-642-16321-0_20,DBLP:journals/bioinformatics/DeorowiczDG13,DBLP:journals/bmcgenomics/ValenzuelaNVPM18,DBLP:journals/bioinformatics/NavarroSMG19}. However, no space-efficient algorithms for its construction were known, a gap that we address in this section.

\subsection{Upper Bound}
We call a factorization of a string $T$ into phrases that are symbols and fragments of a string $R$ an \emph{$R$-factorization of $T$}. Throughout this section we work with a length-$n$ string~$T$ and a reference length-$m$ string $R$. Let $z = |\rlz(T,R)|$. We first show how to compute an $R$-factorization of~$T$ that consists of $\cO(z \log^2 m)$ phrases. In particular, this factorization satisfies the following stronger optimality property (an analogue of this definition for the LZ77 factorization was considered in~\cite{DBLP:conf/esa/0001GGK15}):

\begin{definition}[see {\cite[Definition 12]{DBLP:conf/esa/0001GGK15}}]\label{def:aoptimal}
	Consider strings $T$ and $R$.
    We call an $R$-factorization $T = f_1 f_2 \cdots f_z$
    \emph{$\alpha$-optimal} if, for all $i\in [0 \dd z-\alpha]$, the fragment $\bigodot_{j=1}^{\alpha} f_{i+j}$
    (that is the concatenation of $\alpha$ consecutive phrases) is not a substring of $R$.
\end{definition}

The following observation is immediate:

\begin{observation}\label{obs:opt_small}
An $\alpha$-optimal $R$-factorization of a string $T$ consists of at most $\alpha z$ phrases, where $z = |\rlz(T,R)|$.
\end{observation}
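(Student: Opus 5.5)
The claim is Observation~\ref{obs:opt_small}: an $\alpha$-optimal $R$-factorization $T = f_1 \cdots f_{z'}$ has $z' \le \alpha z$ phrases, where $z = |\rlz(T,R)|$. I will charge phrases of the $\alpha$-optimal factorization to phrases of $\rlz(T,R) = g_1 g_2 \cdots g_z$. The key fact is that the greedy factorization is never ``overtaken'': each of its phrases, read from its starting position, is maximal.

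First I fix the setup. Let $p_1 < p_2 < \dots < p_z$ be the starting positions of the phrases $g_1, \dots, g_z$ in $T$. For every $t$, the fragment $g_t$ is either a symbol that does not occur in $R$, or the longest prefix of $T[p_t\dd n]$ that is a substring of~$R$. In either case, every fragment $T[p_t\dd e]$ that extends past the end of $g_t$ is neither a substring of $R$ nor a single symbol. For a symbol not occurring in $R$, such an extension has length at least~$2$ and contains that symbol, so it is not a substring of~$R$ either.

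Next I partition the phrases $f_1, \dots, f_{z'}$ into groups according to which phrase $g_t$ contains their starting position; each phrase starts inside exactly one $g_t$. I claim that each group has at most $\alpha$ phrases. Suppose instead that $\alpha+1$ consecutive phrases $f_{i+1}, \dots, f_{i+\alpha+1}$ all start inside the same $g_t$. Then $f_{i+1}\cdots f_{i+\alpha}$ ends strictly before the start of $f_{i+\alpha+1}$, which lies inside $g_t$. Hence $f_{i+1}\cdots f_{i+\alpha}$ is a fragment contained in $g_t$.

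If $g_t$ is a substring of $R$, then so is every fragment of it, and in particular $f_{i+1}\cdots f_{i+\alpha}$. This contradicts $\alpha$-optimality. If $g_t$ is a single symbol, then at most one phrase can start inside it, which is at most $\alpha$ since $\alpha \ge 1$. Summing over the $z$ groups gives $z' \le \alpha z$.

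The main point to verify is the containment step: that $\alpha+1$ starts inside $g_t$ force $\alpha$ whole consecutive phrases into $g_t$. This holds because the phrases are contiguous and the $(\alpha+1)$-st one starts no later than the end of $g_t$. The argument does not actually need the maximality of the greedy phrases; it only uses that each $g_t$ is a substring of $R$ or a single symbol. So the bound holds against any $R$-factorization, and in particular against the optimal one $\rlz(T,R)$.
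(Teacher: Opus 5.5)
Your argument is correct. The paper calls this observation immediate and gives no proof, and your charging scheme is the natural way to fill it in: at most $\alpha$ phrases of the $\alpha$-optimal factorization can start inside any single phrase of $\rlz(T,R)$, since otherwise $\alpha$ consecutive phrases would lie inside a substring of $R$, and a single-symbol phrase can host only one start. As you note, the maximality of the greedy phrases is never used, so the bound holds against any $R$-factorization of $T$.
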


Let us describe an $R$-factorization of $T$ that, as we prove, is $\cO(\log^2 m$)-optimal.
We show how to efficiently compute this factorization later.
This $R$-factorization is defined for a positive integer threshold~$\Delta$.
The optimality guarantees of the factorization are insensitive to $\Delta$, but the complexities for computing this factorization in the asymmetric streaming setting depend on $\Delta$.

\paragraph*{The $\Delta$-core $R$-factorization.}
Conceptually, we factorize the string greedily, in a left-to-right fashion, into \emph{meta-phrases} and then partition each meta-phrase to phrases that comprise our factorization.
Consider a position $i$ of~$T$ that is either $1$ or such that a meta-phrase ends at a position~$i-1$. Consider the following cases:
\begin{enumerate}
\item  If a core-matching query with an input $T[i \dd i+3 \cdot \Delta)$ on the reference $T$ is undefined or unsuccessful, we designate a fragment $T[i \dd \min(i+\Delta,n+1))$ as a \emph{meta-phrase};
\item Otherwise, we designate a fragment $T[i\dd i + 2^{j + 1} \cdot \Delta)$ as a \emph{meta-phrase}, where $j$ is the maximal integer such that a core-matching query with an input $T[i \dd i+ 3 \cdot 2^j \cdot \Delta)$ on the reference $R$ is defined and successful.
\end{enumerate}
We further factorize each meta-phrase $T[i\dd e]$ into phrases as follows:
\begin{enumerate}
\item First, we factorize its prefix $P = T[i \dd \min(e,i+\Delta))$ into phrases according to $\rlz(P,R)$; note that $P$ can only have length smaller than~$\Delta$ only if $T[i \dd e)$ is the rightmost meta-phrase in the factorization; 
\item Secondly, if $e \ge i+\Delta$, we factorize the remaining suffix of $T[i\dd e]$ into phrases $T[{i + 2^h \cdot \Delta}\dd i + 2^{h + 1} \cdot \Delta)$ for $h \in [0\dd j]$.
\end{enumerate}
Note that each phase in a substring of $R$ by the definition of a meta-phrase and that their concatenation equals $T[i\dd e]$. We call the described factorization the \emph{$\Delta$-core $R$-factorization of $T$}. We next show that this factorization is $\cO(\log^2 m)$-optimal, which implies that it has $\cO(z \log^2 m)$ phrases due to \cref{obs:opt_small}.

\begin{lemma}\label{lem:bad_approx_ratio}
	Consider two strings $T$ and $R[1\dd m]$.
    For any positive integer $\Delta$, the $\Delta$-core $R$-factorization of $T$ is $\ceil{8\log_2^2 m + 2}$-optimal.
\end{lemma}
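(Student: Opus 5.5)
Let $\alpha = \ceil{8\log_2^2 m + 2}$. We argue by contradiction: suppose some $\alpha$ consecutive phrases of the $\Delta$-core $R$-factorization concatenate to a fragment $F = T[a\dd b)$ that occurs in~$R$. Every fragment of~$F$ then also occurs in~$R$. We will show that $F$ can overlap only few meta-phrases, and each meta-phrase only few phrases.

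\textbf{Phrases per meta-phrase.} A meta-phrase $T[i\dd e]$ has two parts. The first is its prefix $P = T[i\dd \min(e,i+\Delta))$, which is split according to $\rlz(P,R)$. The second consists of at most $j+1 \le \log_2 m + 1$ doubling phrases; this holds because a successful query forces $3\cdot 2^j\Delta \le m$.

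The $\rlz$ part needs an extra argument: it contributes many phrases only if $P$ is far from a substring of~$R$. Consider the phrases of $\rlz(P,R)$ that lie entirely inside $F$, except the last such phrase. By greediness, each of them together with the next symbol is not a substring of~$R$. Since $F$ occurs in~$R$, at most one greedy $\rlz$ phrase of~$P$ can start inside $F$ and end before the end of~$P$ within~$F$. So each meta-phrase contributes at most $\cO(1)$ $\rlz$ phrases inside~$F$, and hence at most $\log_2 m + \cO(1)$ phrases inside~$F$ in total.

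\textbf{Meta-phrases meeting $F$.} This is the main step: we show that $F$ overlaps only $\cO(\log m)$ meta-phrases. Let $T[i\dd e]$ be a meta-phrase whose start $i$ lies in $F$, and let $j$ be its exponent, with $j = -1$ in case~1. Suppose $T[i\dd i+3\cdot 2^{j+1}\Delta) \subseteq F$. This fragment occurs in~$R$, so by the definition of core-matching queries (a query may fail only if the whole query text does not occur) the query for $j+1$ would be defined and successful. That contradicts the maximality of~$j$, or contradicts case~1 when $j=-1$.

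Hence the meta-phrase length $L_i = 2^{j+1}\Delta$ (or $\Delta$ in case~1) satisfies $b - i < 3\cdot 2^{j+1}\Delta = \cO(L_i)$. So every meta-phrase starting in $F$ covers at least a constant fraction of the remaining part of~$F$. Consequently, after $\cO(1)$ consecutive meta-phrases the remaining length of~$F$ at least halves. Since $|F| \le m$, only $\cO(\log m)$ meta-phrases intersect~$F$. Making the constant explicit (roughly $4\log_2 m + 2$) should be a routine tail-halving count.

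\textbf{Conclusion.} Multiplying the two bounds, $F$ contains fewer than about $(4\log_2 m+2)(2\log_2 m+\cO(1)) \le 8\log_2^2 m + 2$ phrases, which contradicts $F$ consisting of $\alpha$ phrases. The delicate points are the boundary effects at the two ends of~$F$ and the exact constant; the key idea is the maximality-of-$j$ argument.
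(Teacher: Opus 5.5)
Your proof follows the paper's. The decisive step is the same maximality-of-$j$ observation: a core-matching query whose text lies inside a fragment occurring in $R$ is defined and must succeed. Hence every meta-phrase starting at a position $i$ of $F$ has length greater than $(b-i)/3$. Combined with $\cO(\log m)$ doubling phrases and $\cO(1)$ $\rlz$ phrases per meta-phrase, this gives an $\cO(\log^2 m)$ bound. The paper argues with a maximal occurrence $F$ and the meta-phrases fully contained in it, rather than by contradiction with a phrase-aligned $F$. That difference is cosmetic, and your alignment actually simplifies the boundary bookkeeping.

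The one step that fails as written is the final count. We have $(4\log_2 m+2)(2\log_2 m+c)=8\log_2^2 m+(4c+4)\log_2 m+2c$, which exceeds $8\log_2^2 m+2$ for every $m\ge 2$, so it does not yield the stated constant. The factor $2\log_2 m+\cO(1)$ is also inconsistent with the $\log_2 m+\cO(1)$ you derived earlier. The fix is to use the sharper bounds your argument already gives:
\begin{itemize}
\item \textbf{Phrases per meta-phrase.} Because $F$ is a union of whole phrases, at most one $\rlz$ phrase of a prefix $P$ lies in $F$. Two consecutive ones would make the first one plus the next symbol a fragment of $F$, contradicting greediness. There are also at most $\log_2 m$ doubling phrases, since $3\cdot 2^j\Delta\le m$. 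So each meta-phrase contributes at most $\log_2 m+1$ phrases.
\item \textbf{Meta-phrases meeting $F$.} The remaining length drops by a factor below $2/3$ with each meta-phrase starting in $F$. Counting also the one containing $a$, at most $\log_{3/2} m+2\le 2\log_2 m+2$ meta-phrases meet $F$.
\end{itemize}
Hence, for $m\ge 2$, $F$ has at most $2(\log_2 m+1)^2<8\log_2^2 m+2\le\alpha$ phrases, which is the desired contradiction. The case $m=1$ is trivial, as is every $m$ with $\alpha>m$, because $\alpha$ nonempty phrases have total length at least $\alpha>m$.
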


\begin{proof}
Note that we can assume that $m\geq 2$ as otherwise any $R$-factorization of $T$ is $2$-optimal. It suffices to prove that any fragment of $T$ that is a substring of $R$ overlaps at most $y=8\log_2^2 m$ phrases of the $\Delta$-core $R$-factorization of $T$, as this implies that said factorization is $(y+1)$-optimal.
	Clearly, it suffices to prove that this is true for 
	an arbitrary maximal such fragment $F := T[s'\dd e')$ of $T$
	(that is, we assume that
	(a) either $s' = 1$ or $T[s' -1 \dd e')$ does not occur in $R$ and
	(b) either $e' = \absolute{T} + 1$ or $T[s' \dd e' + 1)$ does not occur in $R$).
    
    We first upper-bound the number of meta-phrases that are fully contained in $F$. 
    Consider a meta-phrase ${T[s\dd e)}$ that is fully contained in $F$, that is, we have $s' \leq s$ and $e \leq e'$.
    Let $j = \floor{\log ((e' - s) / (3\Delta))}$, and observe that $T[s\dd s + 3 \cdot 2^h \cdot \Delta)$ is a prefix of $T[s\dd e')$.
    Since $T[s \dd e')$ has an occurrence in $R$, every core-matching query $T[s\dd s + 3 \cdot 2^h \cdot \Delta)$ with $h \in [0 \dd j]$ is successful, and therefore we have $e \geq s +  2^{j + 1} \cdot \Delta$. Finally,
    \[ e - s \geq 2^{j + 1} \cdot \Delta \geq \textstyle\frac13\cdot (e' - s),  \]
    that is, the meta-phrase covers at least one third of the remaining part of $F$.
    This implies that at most $\log_3 m \leq 2 \log_2 m$ meta-phrases are fully contained in $F$.
    Thus, at most $2 + 2\log_2 m \le 4 \log_2 m$ meta-phrases have an overlap with $F$ (as at most two meta-phrases can overlap $F$ but not be fully contained in~$F$).
    Each of these meta-phrases contains at most $\log_2 m$ phrases of the form
    $T[i + 2^h \cdot \Delta\dd i + 2^{h + 1} \cdot \Delta)$ for $h \in [0\dd j]$
    as the lengths of such phrases form a geometric progression with common ratio $2$.
    Further, for each meta-phrase $T[i\dd j)$, the part of its prefix $T[i \dd \min(j,i+\Delta))$ that overlaps~$F$ has an occurrence in $R$.
    Thus, for each meta-phrase that overlaps $F$, the number of phrases it contains that overlap~$F$ is at most $\log_2 m + 1 \le 2\log_2 m$.
    Therefore, $F$ overlaps at most $8\log_2^2 m$ phrases in total, concluding the proof.
\end{proof}

Before describing how to efficiently compute the $\Delta$-core $R$-factorization of $T$, we prove a simple lemma that enables us to compute the relative Lempel--Ziv factorizations of length-$\Delta$ prefixes of meta-phrases by treating them in batches.
In particular, in the proof of \cref{lem:rlz_compute_logapprox} below, we use \cref{lm:block_process} with $B$ being the concatenation of $\Theta(\tau/\Delta)$ such prefixes.

\begin{lemma}\label{lm:block_process}
Let $1 \le \tau \le m$ be an integer. Given strings $B[1 \dd \tau]$ and $R[1\dd m]$, we can compute, in $\tO(m)$ time using $\cO(\tau)$ space,
for each $i \in [1 \dd \tau]$,
a triple $(i,r,d)$ such that $R[r \dd r+d)$ is the longest substring of $R$ that occurs at position $i$ of $B$.
\end{lemma}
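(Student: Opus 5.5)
The plan is to reduce the task to a constant-space matching-statistics computation. We scan the reference once and maintain, for every position of $B$, the best match found so far. The $\cO(\tau)$ space will hold $B$ itself (or read-only access to it) plus an array $\mathsf{best}[1\dd\tau]$ of pairs $(r,d)$, all initialised to $(\cdot,0)$.

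The natural tool is a suffix-tree-like structure of $B$, since $|B|=\tau$ fits in our budget. I would build the suffix tree (or suffix automaton) of $B$ in $\cO(\tau)$ space and time. Then I would stream $R$ through it in the standard way: for each position $r'$ of $R$, compute the longest prefix of $R[r'\dd m]$ that occurs in $B$. Dually, I would run the matching-statistics algorithm of $R$ against the suffix automaton of $B$, which gives for each ending position $e$ in $R$ the longest suffix of $R[1\dd e]$ occurring in $B$. That is the wrong direction, however. We need, for each position $i$ of $B$, the longest common prefix of $B[i\dd]$ with some suffix of $R$.

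I would fix this by reversing both strings. The longest substring of $R$ starting at $i$ in $B$ corresponds to the longest substring of $\rev(R)$ ending at position $\tau-i+1$ in $\rev(B)$. So I will build the suffix automaton of $\rev(B)$ (equivalently a suffix tree of $B$ whose leaves/nodes are annotated with occurrence info). Then I scan $R$ from right to left, i.e., $\rev(R)$ from left to right, maintaining the current state and current match length $d$ in $\cO(1)$ amortized time per symbol. After reading $\rev(R)$ up to the symbol corresponding to $R[r]$, the current match $R[r\dd r+d)$ is the longest prefix of $R[r\dd m]$ occurring in $B$. This is exactly a matching-statistics value, but indexed by $r$ rather than by $i$.

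To transfer the information to positions of $B$, I attach it to the state or node. At state $v$ with length $d$, I record $\mathsf{cand}[v]=\max(d,\mathsf{cand}[v])$ together with $r$. After the scan, I propagate maxima along suffix links: a match of length $d$ at a node implies matches of all shorter lengths at its ancestors. Taking the value capped by each node's length range, and then reading off for each leaf $i$ the best value on its root path, gives $\mathsf{best}[i]$ with its witness $r$. All of this takes $\cO(\tau)$ space and $\cO(\tau+m)$ time; alphabet issues add at most a logarithmic factor via balanced dictionaries, hence the $\tO(m)$ bound.

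The main obstacle is the correctness of the propagation step. Every substring $B[i\dd i+\ell)$ occurring in $R$ must be witnessed at the node of $B[i\dd i+\ell)$ or at a descendant. This follows because the scanned match $R[r\dd r+d)$ for the occurrence's start $r$ has $d\ge\ell$ and extends $B[i\dd i+\ell)$, so it is a descendant in the suffix tree of $B$. Taking the maximum over a subtree, capped by string depth, then yields the correct value for each leaf.
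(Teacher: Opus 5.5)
Your proof is correct, and its skeleton matches the paper's. You index $B$ by (the equivalent of) its suffix tree and compute, for every $r$, the locus of the longest prefix $R[r\dd r+\ell_r)$ of $R[r\dd m]$ that occurs in $B$. You keep only the deepest mark per edge; your per-state maximum $\mathsf{cand}[v]$ is exactly the paper's rule of keeping one implicit marked node per edge. Then a tree traversal transfers the marks to positions of $B$.

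The genuine difference is how you obtain the matching statistics. The paper scans $R$ left to right with forward search in the suffix tree of $B$. After each step it jumps from the locus of $R[r\dd r+\ell_r)$ to that of $R[r+1\dd r+\ell_r)$ with a constant-time weighted-ancestor query. You scan $R$ right to left through the suffix automaton of $\rev(B)$: you prepend symbols via Weiner links and shorten matches by moving to parents in the suffix tree of $B$.

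Your route needs only a textbook automaton and the standard amortization argument, and it avoids the weighted-ancestor data structure. The paper's route works on the suffix tree directly, where positions of $B$ are essentially leaves. In your formulation, position $i$ is the state of the prefix $\rev(B)[1\dd \tau-i+1]$, which need not be a leaf of the suffix-link tree. Your propagation handles this correctly, but your ``leaf $i$'' should be read that way.

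Your explicit upward propagation of maxima, capped at string depth, is not optional. Taken literally, the paper's last step reads off the deepest marked node on the root-to-leaf path. That fails when the witnessing locus branches off the path exactly at the answer depth. Take $B=\texttt{abac}$, $R=\texttt{ac}$, and $i=1$: the only marks are the loci of $\texttt{ac}$ and $\texttt{c}$, neither lies on the path to $\texttt{abac}$, yet the answer is $1$. So the paper's step must be understood as ``the deepest node on the path that has a marked descendant,'' which is exactly what you compute.
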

\begin{proof}
We sort the elements of $\Sigma$ that occur in $B$ using merge-sort and construct the suffix tree of~$B$ in $\tO(\tau)$ time and $\cO(\tau)$ space~\cite{Gusfield_1997}.
We may assume that the (first symbols of the) labels of outgoing edges of each node are stored in a binary search tree (ordered lexicographically).
We also preprocess the suffix tree of $B$ in $\cO(\tau)$ time and space for $\cO(1)$-time weighted ancestor queries~\cite{belazzougui_et_al:LIPIcs.CPM.2021.8}.
A weighted ancestor query can support, in particular, the following operation in $\cO(1)$ time: Given positions~$i, j$ of~$B$, find the (implicit or explicit) node $v$ of the suffix tree of $B$ such that the path-label of $v$ (that is, the concatenation of the string labels in the root-to-$v$ path) equals $B[i \dd j]$.

We then traverse the suffix tree of $B$ with $R$ in order to mark, for each position $r$ of $R$, the (implicit or explicit) node corresponding to the longest substring $R[r \dd r + \ell_r)$ which occurs in~$B$;
crucially, we mark it with ``$r$''.
In the interest of efficiency, during the marking procedure, we unmark a marked implicit node when a deeper implicit node in the same edge gets marked,
thus maintaining at most one implicit marked node per edge.
This marking procedure can be performed in $\tO(m)$ time as follows.
We start at the root and perform forward search, that is, we go down the tree while we can, in order to find and mark the node with path-label $R[1 \dd 1 + \ell_1)$.
Next, using a weighted ancestor query, we find the node of the tree corresponding to $R[2 \dd 1 + \ell_1)$.
We try to go down the tree from this node in order to find and mark the node with path-label $R[2 \dd 2 + \ell_2)$; and so on.
This procedure takes $\tO(m)$ time in total.
We attempt to go down in the tree with some symbol $\cO(m)$ times and each such attempt takes $\tO(1)$ time (using the binary search trees).
The number of performed weighted ancestor queries is also $\cO(m)$.

Finally, we perform a depth-first traversal of the suffix tree to compute, for each leaf $v$ of the tree, the depth $d$ of the deepest marked node on the root-to-$v$ path.
Suppose that this node is marked with~$r$ and that leaf $v$ has path-label $B[i \dd]$.
Then, the longest substring of $R$ that starts at position~$i$ of $B$ is $B[i \dd i+d) = R[r \dd r+d)$. We return triple $(i,r,d)$.
This takes $\cO(m)$ time.
\end{proof}

We are now ready to present an efficient algorithm for computing the $\Delta$-core $R$-factorization of~$T$, outputting the resulting phrases as a stream.

\begin{lemma}\label{lem:rlz_compute_logapprox}
	Consider a read-only reference string $R[1\dd m]$, a streaming string $T[1\dd n]$, and a positive integer $\tau \in [\ceil{m^{0.4}} \dd m]$.
	We can preprocess $R$ in $\poly(m)$ expected time to construct an $\tO(\tau)$-space data structure, so that
	we can then process $T$ in $\tO(z \cdot (m/\tau)^{5/3} + n \cdot  (1 + m^{3} / \tau^6) + m)$ total time using $\cOtilde(\tau)$ extra space,
	outputting, for some $\Delta \in \mathbb{Z}_{+}$, the $\Delta$-core $R$-factorization of $T$ as a stream.
	
		The preprocessing can be implemented in $\tO(m \cdot \poly(\err))$ time with success probability at least $1 - 2^{-\err}$ at the expense of increasing the space complexity and time required for processing $T$ by an $\cO(\poly(\err))$ multiplicative factor, for any integer $\err \geq \log_2 (m\sigma)$.
\end{lemma}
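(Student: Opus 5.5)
We will compute the $\Delta$-core $R$-factorization meta-phrase by meta-phrase while $T$ streams in. Throughout, we maintain a streaming $0$-error random access data structure from \cref{cor:streaming_kerror} with space $\tO(\tau)$. It costs $\tO(1 + m^3/\tau^6)$ time per symbol, which gives the $n(1+m^3/\tau^6)$ term, and its preprocessing gives the claimed $\poly(m)$ expected / $\tO(m\poly(\err))$ preprocessing bounds.

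We also build the core-matching structure of \cref{cor:core_matching_queries} for all $n' = 2^\ell\Delta$. Here $\Delta$ is chosen so that the core-matching cost balances the batch cost below. Since a query on length $3n'$ costs $\tO(n'(1+m^3/(n'^3\tau^3)))$, we pick $\Delta$ roughly $(m/\tau)^{2/3}\cdot\tau^{0}$ such that $m^3/(\Delta^2\tau^3)\approx(m/\tau)^{5/3}$, i.e.\ $\Delta=\Theta((m/\tau)^{2/3})$. The constraint $\tau\ge m^{0.4}$ guarantees $\Delta\le\tau$, so $\Theta(\tau)$ symbols of the current region can be buffered explicitly.

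\textbf{Determining each meta-phrase.} For a meta-phrase starting at $i$, we need the maximal $j$ such that the core-matching query on $T[i\dd i+3\cdot2^j\Delta)$ succeeds. These queries refer to future symbols, so we delay: we keep a pending start $i$, and whenever $T[i\dd i+3\cdot 2^{j}\Delta)$ becomes available we issue query $j$. While waiting, the symbols from $i$ onward must remain accessible. They are supported by the random access structure as long as they lie within a substring of $R$ (the query only succeeds then), and otherwise we fall back to case~1 after the first $3\Delta$ symbols, which we buffer. Query $j$ is answered in time $\tO(2^j\Delta + m^3/(4^j\Delta^2\tau^3))$, and the summation over $j$ is dominated by the linear term plus $\tO((m/\tau)^{5/3})$ per meta-phrase. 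We de-amortize by time slicing against the $2^{j}\Delta$ symbols arriving meanwhile; alternatively, an amortized bound suffices since only total time is claimed. Once query $j+1$ fails (or exceeds $m$), the meta-phrase is $T[i\dd i+2^{j+1}\Delta)$ and its tail phrases are output immediately. By \cref{lem:bad_approx_ratio} and \cref{obs:opt_small}, the number of meta-phrases is $\cO(z\log^2 m)$, so the per-meta-phrase cost totals $\tO(z(m/\tau)^{5/3})$.

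\textbf{Prefix phrases via batching.} The length-$\Delta$ prefixes must be factorized by $\rlz(\cdot,R)$. We copy each prefix into a buffer $B$, since the prefix is either held explicitly or is random-accessible at this time. Once $B$ reaches length $\tau$ (that is, $\Theta(\tau/\Delta)$ prefixes), or at the end of the stream, we run \cref{lm:block_process} in $\tO(m)$ time and $\cO(\tau)$ space. We then greedily derive the RLZ factorization of each prefix from the triples (longest match at each position, truncated at the prefix end), and output the phrases. We de-amortize by time slicing over the arrival of subsequent symbols if needed. There are at most $\cO(z\log^2 m)$ such prefixes in total, hence $\cO(z\log^2m\cdot\Delta/\tau+1)$ batches, costing $\tO(z\,m\Delta/\tau + m)\subseteq\tO(z(m/\tau)^{5/3}+m)$ since $m\Delta/\tau=m^{5/3}\tau^{-5/3}\cdot\tau^{0}\cdot(\ldots)$ matches our choice of $\Delta$ up to the $m/\tau$ vs.\ $m$ slack. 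The main obstacle is precisely this balancing: making the batch cost $m\Delta/\tau$ and the core-matching cost $m^3/(\Delta^2\tau^3)$ both at most $(m/\tau)^{5/3}$. Their product bound fixes $\Delta=\Theta((m/\tau)^{2/3})$ only if the batch term is $\tO((m/\tau)\cdot\Delta)$; if this does not come out right, we would batch $\Theta(\tau)$ prefixes per call instead of $\tau/\Delta$ and recheck the exponents. The $+m$ term absorbs the final partial batch. Output is streamed in order because phrases are released only after the earlier prefix batch is resolved, which requires buffering $\cO(\tau)$ pending phrase descriptors, within the space budget.
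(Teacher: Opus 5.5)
There is a genuine gap in how you keep random access to the not-yet-factorized part of $T$.

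\textbf{Minor error.} A core-matching query may \emph{succeed} on a text that does not occur in $R$; the definition only \emph{permits} failure in that case. So ``the query only succeeds then'' is false. Also, if access to the start $i$ is lost after some queries have already succeeded, the meta-phrase must be closed at the last successful level $\ell'$, i.e.\ as $T[i\dd i+2^{\ell'+1}\Delta)$, not by ``falling back to case~1''.

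\textbf{Main gap.} Consider the moment $x=i+3\cdot 2^{j+1}\Delta-1$ at which query $j+1$ fails and you close $T[i\dd e)$ with $e=i+2^{j+1}\Delta$. The next meta-phrase starts at $e=x-2^{j+2}\Delta+1$. That is $\Theta(2^{j}\Delta)$ already-received symbols in the past, possibly $\Theta(m)\gg\tau$ of them. Its first queries, and its length-$\Delta$ prefix that you want to copy into $B$, read $T[e\dd x]$.
\begin{itemize}
\item The failed query only tells you that $T[i\dd x]$ does not occur in $R$.
\item A single streaming structure only guarantees support for the longest suffix of $T[1\dd x]$ occurring in $R$, which need not reach back to $e$.
\item For example, if the symbol $T[x]$ does not occur in $R$, the support may be empty, although $T[e\dd x)$ is a long substring of $R$ that must yield a long next meta-phrase.
\item The same happens in cascades, where closing one meta-phrase immediately closes the following ones, all of which start in the past.
\end{itemize}
So your claim that the prefix ``is either held explicitly or is random-accessible at this time'' is exactly what must be proved; it is the paper's Assumption~\ref{assumption}. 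It does not follow from maintaining one random access structure. Your buffer of the first $3\Delta$ symbols of the current meta-phrase does not contain $T[e\dd e+\Delta)$ once $j\geq 1$.

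\textbf{How the paper handles it.} The second half of the paper's proof is devoted to this. It keeps three copies of the structure from \cref{cor:streaming_kerror}:
\begin{itemize}
\item $D_1$, processing symbols as they arrive;
\item $D_1'$, lagging by one symbol, so the state just before the symbol that triggered a failure or loss of support remains available;
\item $D_2$, paused at the position $x_i$ where the $i$-th meta-phrase was determined.
\end{itemize}
These are combined with the explicitly stored length-$\Delta$ prefixes of recent meta-phrases. Closing a meta-phrase (at the last successful level) as soon as access to its start is lost yields Observation~\ref{obs:just_before}. Claim~\ref{claim:union:two:copies} then shows that the supports jointly cover $[e_{i+1}\dd j]$. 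You need an argument of this kind for your algorithm to be well defined.

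\textbf{What is fine.} Your cost accounting matches the paper's: $\tO(f+m^3/(\Delta^2\tau^3))$ per meta-phrase, $\cO(z\log^2 m)$ meta-phrases, and batches of $\Theta(\tau/\Delta)$ prefixes fed to Lemma~\ref{lm:block_process}. Your doubt about the balancing is unfounded. With $\Delta=\Theta((m/\tau)^{2/3})$, both $m\Delta/\tau$ and $m^3/(\Delta^2\tau^3)$ are $\Theta((m/\tau)^{5/3})$. Your proposed fallback of batching $\Theta(\tau)$ prefixes per call would instead need $\Theta(\tau\Delta)$ space.
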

\begin{proof}
Let $\Delta$ be an integer in $[2 \dd 2\tau]$ that will be specified later.
	We use \cref{cor:streaming_kerror} to maintain an $\cOtilde(\tau)$-space $0$-error random access data structure for $T$ with reference $R$.
	This data structure has worst-case update time $\tO(1 + m^{3} / \tau^6)$ and query time
	$\cO(1)$; the preprocessing time of $R$ (and~$\tau$) is within the claimed bounds.
	Additionally, we employ \cref{cor:core_matching_queries} so that, by preprocessing $R$,~$\tau$, and $\Delta$ within the claimed preprocessing time bounds, we construct an $\cOtilde(\tau)$-space data structure that answers core-matching queries for texts of length $3n'$ where $n'\in \{2^\ell \cdot \Delta \mid \ell \in [0 \dd \floor{\log_2 (m/(3\Delta))}] \}$ in
	$\tO(n' ( 1 + m^{3} / ((n')^3 \tau^3)))$ time.
	
		As explained in detail below, we greedily compute the meta-phrases of the $\Delta$-core $R$-factorization of $T$ with respect to~$R$ using core-matching queries.
		We factorize the remaining prefixes of meta-phrases in batches using \cref{lm:block_process}.
	Let $r := \ceil {\tau / \Delta}$.
	
	Let us first describe and analyze our algorithm under the following assumption:
	
	\begin{assumption}\label{assumption}
	 At any point, if the suffix of $T$ that has not been factorized in meta-phrases is $T[i \dd n)$, we have constant-time random access to the prefix of $T[i \dd n)$ that we have received.
	 \end{assumption}
	 
	Suppose that we have just computed a meta-phrase of the $\Delta$-core $R$-factorization that ends at some position $i-1$ (or, if we are just now receiving symbol $T[1]$, set $i = 1$).
	We perform the following computations for the meta-phrase starting at a position $i$.
	First, we explicitly store a copy of the fragment $T[i \dd \min(i + \Delta,n+1))$ in the memory, which we will delete after we have computed another $2r$ meta-phrases.
	The total space required for such copies of fragments of $T$ of length at most $\Delta$ at any time is $\cO(r \cdot \Delta) = \cO(\tau)$ and the amortized time for creating and deleting them is $\cO(n)$ since each such fragment belongs to a distinct meta-phrase.
	We compute the meta-phrase starting at the position $i$ using core-matching queries.
	Suppose that we are in some stage of this computation and define $\ell'$ as follows.
	Let $\ell' = -1$ if we have not yet performed any core-matching queries for fragments of $T$ starting at the position $i$ and, otherwise,
	let $\ell'$ be the largest $\ell$ for which we have already successfully performed a core-matching query for $T[i \dd i + 3 \cdot 2^\ell \Delta)$.
	We then have the following cases:
	\begin{enumerate}
		\item If we lose random access to $T[i]$ before the symbol $T[i + 3 \cdot 2^{\ell'+1} \Delta - 1]$ arrives or we reach the end of $T$, we conclude that the sought meta-phrase is
	$F := T[i \dd \min ( i + 2 \cdot 2^{\ell'} \Delta , n+1))$.
		\item Else, when symbol $T[i+ 3 \cdot 2^\ell \Delta - 1]$ arrives, we issue a core-matching query for $T[i \dd i + 3 \cdot 2^\ell \Delta)$.
		\begin{enumerate}
			\item If it is unsuccessful, we again conclude that $F$ is the sought meta-phrase.
			\item Otherwise, we increment $\ell'$ and repeat this procedure.
		\end{enumerate}
	\end{enumerate}
			Note that in the process of computing a meta-phrase of length $x > \Delta$, we, in particular, also compute the partition of its length-$(x-\Delta)$ suffix into phrases via the performed core-matching queries.
	The total time complexity of the core-matching queries performed for computing a meta-phrase of length $f$ (and the partition of its length-$(f-\Delta)$ suffix into phrases) is
	$\tO(f + m^{3} / (\Delta^2 \tau^3))$.
	Over all meta-phrases, this yields a total running time of
	$\cO(n + z \cdot  m^{3} / (\Delta^2 \tau^3))$.
	
	It remains to explain how we compute $\rlz(T[i \dd \min(e,i+\Delta)),R)$ for each meta-phrase $T[i \dd e)$.
	We simply invoke \cref{lm:block_process} after every $r$ meta-phrases have been computed (and when we reach the end of $T$) to compute the relative Lempel--Ziv factorizations of such prefixes (that we have not already factorized) as a batch, noting that we have stored copies of them in memory.
	Specifically, we set $B$ equal to the concatenation of said prefixes, separated by a delimiter symbol $\$ \not\in \Sigma$.
	The total cost over all such invocations of \cref{lm:block_process} is $\tO(m \cdot (1 + z / r)) = \tO(m + z \cdot  m \Delta/\tau)$ as the total number of computed meta-phrases is $\cOtilde(z)$ due to the combination of \cref{lem:bad_approx_ratio} and \cref{obs:opt_small}.
	
	Let us now set
	$\Delta := 2\cdot \ceil{(m/\tau)^{2/3}}$ to balance $z \cdot m^{3} / (\Delta^2\tau^3) $ and $z \cdot m\Delta/\tau$.
	We then have that
	\[\tO(z \cdot m^{3} / (\Delta^2\tau^3) + z \cdot m\Delta/\tau + m) \subseteq \tO(z \cdot (m/\tau)^{5/3})\]
	and the complexity follows.
	Note that since $\tau \geq  m^{0.4}$, our constraint $\Delta \leq 2\tau$ is satisfied.
	
	We now waive \cref{assumption} under which the above algorithm was described.
	Let $\phi_1 = T[e_1 \dd e_2), \phi_2 = T[e_2 \dd e_3), \ldots, \phi_{\zeta}  = T[e_{\zeta} \dd e_{\zeta+1})$ be the meta-phrases of the $\Delta$-core $R$-factorization of $T$ ordered with respect to their starting positions.
	For each $i \in [1 \dd \zeta]$, let~$x_i$ be the position of~$T$ such that the described algorithm computes meta-phrase $\phi_i$ after receiving symbol $T[x_i+1]$ and before receiving symbol 
$T[x_i+2]$.

\begin{observation}\label{obs:consec}
For all $i \in [0 \dd \zeta]$, we have $x_i + 1 \geq e_{i+1}$.
\end{observation}

	Further, for each position $j$ of $T$, let $h(j)$ denote the support-length of the maintained instance of the $0$-error random access data structure of \cref{cor:streaming_kerror} after processing a symbol $T[j]$.
	The computation of meta-phrases guarantees that shortly prior to computing a meta-phrase, we have random-access to the unfactorized suffix of the received prefix of $T$:
	
	\begin{observation}\label{obs:just_before}
		For any $i\in [1 \dd \zeta]$, we have $x_i - \max(h(x_i),\Delta) \leq e_i$.
	\end{observation}
	
	We then have the following property.
	
	\begin{claim}\label{claim:union:two:copies}
		Suppose that we have just received (but not processed) a symbol $T[j]$ and we have already computed $\phi_0, \ldots, \phi_i$ for some $i\in [0 \dd \zeta)$.
		Then, we have
		\[ (x_{i} - h(x_{i}) \dd x_{i}] \cup
		[e_{i} \dd \min(j+1,e_{i} + \Delta)) \cup
		(j - h(j) \dd j]\cup
		[e_{i+1} \dd \min(j+1,e_{i+1} + \Delta))
		\supseteq [e_{i+1} \dd j].\]
	\end{claim}
	\begin{proof}
		\cref{obs:just_before} directly yields
		\[(x_{i} - h(x_{i}) \dd x_{i}] \cup
		[e_{i} \dd e_{i} + \Delta) \supseteq [e_i \dd x_i].\]

		Further, by the monotonicity of $y - h(y)$, we have
		$j - h(j) \leq x_{i+1} - h(x_{i+1})$.
		Another application of \cref{obs:just_before} implies that if $j - \Delta > e_{i+1}$, then $x_{i+1} - h(x_{i+1}) \leq e_{i+1}$.
		Hence,
		\[(j - h(j) \dd j]\cup
		[e_{i+1} \dd \min(j+1,e_{i+1} + \Delta)) \supseteq [e_{i+1} \dd j].\]
		
		A direct application of \cref{obs:consec} then yields the  required inclusion.
	\end{proof}
	
	We are now ready to explain how to enforce \cref{assumption} by carefully maintaining three copies of the $0$-error random access data structure of \cref{cor:streaming_kerror}.
	We maintain two basic data structures, namely
	\begin{itemize}
		\item $D_1$ which processes symbols of $T$ as they arrive, and
		\item $D_2$ which processes symbols from left-to-right, pausing when it reaches a symbol $T[x_i]$, for each $i$, and only resumes processing subsequent symbols when a phrase $\phi_{i+1}$ has been computed.  
	\end{itemize}
	Further, we maintain an auxiliary data structure $D_1'$ 
	that processes a symbol $T[j]$ just before a symbol $T[j+1]$ is received, that is, after all other computations that are to be performed for the symbol $T[j]$ have been completed.
	
	Suppose that we have just received (but not processed) symbol $T[j+1]$.
	$D_1$ and $D_1'$ provide constant-time random access to $(j-h(j) \dd j]$ and $D_2$ provides constant-time random access to $(j-h(j) \dd j]$ if we have not yet computed any meta-phrases and to $(x_{i} - h(x_{i}) \dd x_{i}]$ if we have computed exactly $i$ meta-phrases.
	Further, we also have access to
	$T[e_{i} \dd \min(j+1,e_{i} + \Delta))$ as well as to $T[e_{i+1} \dd \min(j+1,e_{i+1} + \Delta))$ if $e_{i+1}$ is defined; this is due to the explicitly stored copies of prefixes of meta-phrases.
	By \cref{claim:union:two:copies}, we thus have constant-time random access to $T[e_{i+1} \dd j]$.
	In particular, in the case when we compute $\phi_{i+1}$ for some $i$ when processing $T[j+1]$, we can update~$D_2$ as necessary using $D_1'$.
	The cost of maintaining $D_2$ and $D_1'$ is asymptotically dominated by the cost of maintaining $D_1$.
	This concludes the proof.
\end{proof}

Next, we show how to refine the $R$-factorization computed by \cref{lem:rlz_compute_logapprox} using ideas similar to those in a work of Fischer, Gagie, Gawrychowski, and Kociumaka~\cite{DBLP:conf/esa/0001GGK15}.
The proof of \cref{lem:refine} follows an approach described in \cite[Section 5.2]{DBLP:conf/esa/0001GGK15} that relies on the following result for dictionary matching in the read-only setting.

\begin{fact}[see Theorem 11 in the full version of {\cite{DBLP:conf/esa/0001GGK15}}]\label{fct:small_space_dict_match}
    Given a text $R[1\dd m]$ and patterns $P_1, \ldots, P_d$ of total length~$p$, we can compute, for each pattern $P_i$, the maximum length $\ell_i$ such that $P_i[1\dd \ell_i]$ occurs in $R$, and a position of $R$ where $P_i[1\dd \ell_i]$ occurs using $\cO(m \log m + p)$ total time and $\cO(d)$ space.\footnote{Here, we state a simplified version of Theorem 11 from the full version of~\cite{DBLP:conf/esa/0001GGK15} which is sufficient for our purposes.}
\end{fact}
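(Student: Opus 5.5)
Since this statement is quoted from Fischer, Gagie, Gawrychowski, and Kociumaka, the plan is to rebuild their argument in the form we need. We fix Karp--Rabin fingerprints as in \cref{fact:construct_probabilistic_collision_free_fp}, so that distinct equal-length substrings of $R$ and of the patterns do not collide (w.h.p.; any reported candidate can later be verified). We then determine each $\ell_i$ in two phases. The first phase finds the largest power of two $2^{k_i} \le \ell_i$. The second phase fixes the remaining $k_i$ low-order bits of $\ell_i$. Throughout, the working memory will be $\cO(1)$ words per pattern: the current candidate length, one fingerprint, and one witness position in $R$.

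\textbf{Phase 1.} We proceed by levels $k = 0, 1, \dots, \floor{\log_2 m}$. At level $k$, let $A_k$ be the patterns that are still active, i.e., whose prefix of length $2^{k-1}$ was found in $R$ and whose length is at least $2^k$. We put the fingerprints $\phi(P_i[1\dd 2^k])$ for $i \in A_k$ into a hash table of size $\cO(d)$. These are computed by extending the fingerprints of the previous level using \cref{obs:linearity_fingerprints}, which costs $\cO(p)$ over all levels. Next, we scan $R$ once with a rolling fingerprint of window length $2^k$ (\cref{obs:rollfp}). Every window whose fingerprint hits the table records a witness position for the corresponding patterns. One scan costs $\cO(m)$, so this phase takes $\cO(m\log m + p)$ time and $\cO(d)$ space.

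\textbf{Phase 2.} For each pattern we know that $\ell_i \in [2^{k_i} \dd 2^{k_i+1})$, and we run a parallel binary search on the remaining bits, from the most significant to the least significant. In round $t$, every pattern tests one candidate length $L_i$. The difficulty is that in a single round the candidate lengths of different patterns differ, so one rolling window no longer serves all patterns at once. A scan costing $\cO(m)$ per distinct length would only give $\cO(m \cdot d)$ per round.

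This is the main obstacle, and the first fix I would try is to anchor the test. Let $L = L_i$ with $L \ge 2^{k_i}$, and write $L = 2^{k_i} + a$ with $0 \le a < 2^{k_i}$. The prefix $P_i[1\dd L]$ occurs in $R$ if and only if there is an occurrence of $P_i[1\dd 2^{k_i}]$ at some position $j$ such that $R[j+2^{k_i} \dd j+L) = P_i(2^{k_i} \dd L]$. We now use \cref{cor:progression}. Within a window of length $2 \cdot 2^{k_i}$ of $R$, the occurrences of $P_i[1\dd 2^{k_i}]$ form a single arithmetic progression. Hence, within such a window, either there is only $\cO(1)$ candidate $j$ to check, or the prefix is periodic; in the periodic case only the last occurrence of the run matters, because it extends the furthest. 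With this observation, each scan at level $k$ becomes a sweep over $R$ in blocks of length $2^k$. The sweep maintains a fixed window length per level, and candidates are resolved by fingerprint differences computed from stored prefix fingerprints at $\cO(1)$ anchor positions per pattern. In this way every round costs $\cO(m)$ per level group, instead of $\cO(m)$ per distinct length.

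To keep the total time at $\cO(m\log m)$ rather than $\cO(m\log^2 m)$, I would charge the rounds of pattern $i$ against its level $k_i$. The $t$-th round tests a length in $[2^{k_i} \dd 2^{k_i+1})$, so I would organize round $t$ of all patterns into a single sweep of $R$ with blocks of length $2^{k_i - t}$. If that charging fails, I would fall back to the weaker bound $\cO(m \log^2 m + p)$. That bound is still sufficient for \cref{lem:refine}, up to the $\tO(\cdot)$ notation.
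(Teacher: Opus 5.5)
\textbf{There is a genuine gap: the proposal does not establish the stated time bound.} The paper does not prove this statement; it imports it from Fischer, Gagie, Gawrychowski, and Kociumaka. A self-contained reconstruction therefore has to deliver the stated $\cO(m\log m+p)$ bound on its own, and yours does not.

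Phase~1 is fine. One detail: record the witness once per distinct fingerprint and propagate it afterwards, since otherwise a prefix shared by many patterns costs $\Theta(d)$ per hit.

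The gap is Phase~2. By your own accounting, a round costs $\cO(m)$ per level group, and level $k$ needs $k$ rounds. With all levels populated you therefore spend $\sum_k k\cdot\Theta(m)=\Theta(m\log^2 m)$. The proposed charging does not change this:
\begin{itemize}
\item a sweep of $R$ costs $\Theta(m)$ whatever block length is used inside it;
\item since $k_i$ differs between patterns, one sweep for ``round $t$ of all patterns'' still has to maintain a rolling window for every level, so it costs $\Theta(m\log m)$.
\end{itemize}
So the step that is supposed to save a logarithmic factor is missing. The fallback $\cO(m\log^2 m+p)$ is a different, weaker statement. Whether it would suffice downstream is beside the point, and \cref{lem:refine} is in fact stated with a single logarithm.

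One way to drop the binary search altogether is to make each level a single sweep. Keep for every pattern the best length $L_i$ found so far. Test each anchor occurrence $j$ only once, when the sweep reaches $j+L_i$. After a successful test, extend symbol by symbol. Since $L_i$ only grows, the extension work is $\cO(|P_i|)$ per pattern, i.e., $\cO(p)$ overall. This still needs the repairs below.

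\textbf{The claimed $\cO(m)$ cost per round and level is also not justified.} There are two separate problems.

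First, the periodic case is handled incorrectly: it is not the last occurrence of a run that matters. For $k\ge 2$, take $P_i=\texttt{a}^{2^k+1}\texttt{b}$ and a maximal run $\texttt{a}^{2^k+1}$ in $R$ followed by $\texttt{b}$. The rightmost occurrence of the anchor $\texttt{a}^{2^k}$ does not extend by even one symbol, whereas the leftmost one yields all of $P_i$. In general, let $q=\per(P_i[1\dd 2^{k_i}])$, let $e$ be the length of the longest prefix of $P_i$ with period $q$, and let $r$ be the last position of the maximal $q$-periodic run of $R$. The only useful candidates are the leftmost occurrence in the run and the occurrence starting at $r-e+1$ (if it exists). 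So you must compute $e$ for each pattern and the run ends in $R$.

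Second, the non-periodicity argument bounds the number of occurrences of each \emph{distinct} anchor by $\cO(m/2^k)$, not the number of tests. When many patterns share an anchor but test different lengths, every occurrence must be tested once per pattern, or at best once per distinct candidate length. For instance, let $R$ have period $2^{k-1}+1$ with pairwise distinct symbols within a period. Let each of the $2^{k-1}+1$ (non-periodic) anchors be shared by $2^k$ patterns whose answers cover $[2^k\dd 2^{k+1})$. Then the last round alone costs $\Theta(m\cdot 2^k)$. Levels holding many more than $2^k$ patterns thus need a different treatment, which your proposal does not give. Consequently, even the fallback $\cO(m\log^2 m+p)$ is not established by the argument as written.
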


\begin{lemma}\label{lem:refine}
    Suppose that we are given a positive integer $\alpha \geq 1$, a real $\epsilon \in (0, 1]$,
    and read-only access to a string~$R[1\dd m]$.
    Further, suppose that we receive an $\alpha$-optimal $R$-factorization of a string $T$ in a streaming fashion.
    Then, for any $\tau \in [\ceil{12\alpha / \epsilon} \dd n]$, we can output as a stream an $R$-factorization of $T$ with $(1+\epsilon)z$
    phrases, where $z = |\rlz(T,R)|$, in total time $\cO(\alpha^2 z \cdot \frac{1}{\epsilon} \cdot (m/\tau) \log m)$ using space $\cO(\tau)$.
\end{lemma}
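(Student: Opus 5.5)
We follow the approach of Fischer, Gagie, Gawrychowski, and Kociumaka~\cite[Section 5.2]{DBLP:conf/esa/0001GGK15}. The incoming $\alpha$-optimal factorization is processed in batches of $b = \Theta(\epsilon\tau/\alpha)$ consecutive phrases. Since $\tau \ge \ceil{12\alpha/\epsilon}$, we have $b \ge 1$. For the $\cO(\tau)$-space claim to hold, each phrase must be stored in $\cO(1)$ words. Phrases of the input are symbols or fragments $R[r\dd r+d)$, so we store each as a pointer triple. Within a batch, we compute a greedy (relative Lempel--Ziv) parse of the text covered by the batch, starting from the batch's left boundary. This parse consists of $\cO(b\alpha)$ positions at most, since the input is $\alpha$-optimal. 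The greedy parse cannot be worse than the optimal parse restricted to the batch by more than one phrase. Across the $z' \le \alpha z$ input phrases there are $\cO(\alpha z / b)$ batches, so the additive loss is $\cO(\alpha z/b)$. We choose the batch size so that $\alpha z / b \le \epsilon z$, which gives $(1+\epsilon)z$ phrases overall. Concretely, we charge $+1$ per batch boundary against the standard greedy-optimality argument: the greedy parse started at the boundary is at least as advanced as $\rlz(T,R)$ after the same number of phrases.

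\textbf{Computing greedy phrases.} Within a batch, we compute the greedy phrases in rounds using \cref{fct:small_space_dict_match}. Suppose the current batch text $U$ is represented by $b$ pointer phrases. From the current position, the next greedy phrase has length at most the span of $\alpha$ consecutive input phrases, since anything longer would contain $\alpha$ consecutive whole phrases, contradicting $\alpha$-optimality. Thus a greedy phrase starting inside an input phrase ends within the next $\alpha$ input phrases. Dictionary matching needs explicit patterns, but we can feed pattern symbols via random access to $R$ through the pointers, using $\cO(d)$ working space plus the pointers. To run many greedy steps in parallel, we use the trick of~\cite{DBLP:conf/esa/0001GGK15}: in one round, we launch one pattern per input phrase of the batch, starting at that phrase's start. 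Each pattern extends over the next $\alpha$ input phrases, giving $d = b$ patterns of total length $p \le \alpha \cdot |U|$. The greedy parse is then simulated by jumping: from a phrase start, the longest match is known. The complication is that greedy phrases may start in the middle of input phrases. Over $\alpha$ rounds, however, each greedy phrase is one of $\cO(\alpha)$ ``starting offsets'' relative to input phrase boundaries. We therefore expect to need $\cO(\alpha)$ rounds of dictionary matching per batch, each costing $\cO(m\log m + p)$.

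\textbf{Time accounting.} With $\cO(\alpha z/b)$ batches and $\cO(\alpha)$ rounds each, the total cost is
\[
\cO\bigl((\alpha z/b)\cdot \alpha \cdot m\log m\bigr) = \cO\bigl(\alpha^2 z \epsilon^{-1} (m/\tau)\log m\bigr)
\]
for $b = \Theta(\epsilon \tau/\alpha)$. The $p$ terms are absorbed: the $p$ terms contribute $\cO(\alpha n)$, which seems problematic since $n$ can exceed $m$. To avoid this, we must feed patterns not symbol by symbol but through fingerprint-based jumps, or bound the pattern length by $m$ and truncate each pattern at length $m$ (matches cannot exceed $m$), giving $p \le b\,m$. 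That bound is too big, so the intended version of \cref{fct:small_space_dict_match} must handle patterns given as concatenations of fragments of $R$. This is the point I expect to be the main obstacle. I would try to check that Theorem~11 of the full version of~\cite{DBLP:conf/esa/0001GGK15} accepts patterns that are concatenations of $\cO(\alpha)$ substrings of the text, with time $\cO(m\log m)$ per round independent of $p$. Failing that, I would restrict attention to one greedy phrase per input-phrase pair, so that each pattern is a concatenation of few $R$-fragments whose longest-prefix matches can be computed by extending via LCE-style comparisons.
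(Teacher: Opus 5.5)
There is a genuine gap in how you parallelize the greedy parse. Inside one batch, the greedy phrases form a sequential chain. The $(j+1)$-st greedy phrase starts where the $j$-th one ends, which is in general strictly inside an input phrase. So the longest matches computed from input-phrase starts do not determine it. Moreover, by your own observation each greedy phrase spans at most $\alpha$ input phrases, so a batch of $b$ phrases has at least $b/\alpha$ greedy phrases. Nothing in your sketch supports the claim that $\cO(\alpha)$ rounds (or $\cO(\alpha)$ ``starting offsets'') produce this chain.

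Your parameters also do not close. Charging one extra phrase per batch needs $b \ge \alpha/\epsilon$ to keep the loss at most $\epsilon z$, whereas your time bound uses $b=\Theta(\epsilon\tau/\alpha)$. These two choices are compatible only when $\tau=\Omega(\alpha^2/\epsilon^2)$. For $\tau$ near $\ceil{12\alpha/\epsilon}$ you get $b=\Theta(1)$ and a loss of $\Theta(\alpha z)$ phrases. Even when they are compatible, $(\alpha z/b)\cdot\alpha\cdot m\log m$ with $b=\Theta(\epsilon\tau/\alpha)$ equals $\alpha^3 z\, m\log m/(\epsilon\tau)$, not the claimed bound.

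The missing idea is a two-level grouping in which many \emph{independent} greedy parses run in lockstep. The paper buffers $\zeta=\Theta(\tau)$ input phrases, with $\zeta$ a multiple of $\mu=\ceil{12\alpha/\epsilon}$; the hypothesis on $\tau$ ensures that a buffer contains at least one group. It cuts the buffer into $\Theta(\epsilon\tau/\alpha)$ groups $\Pi_i$ of $\mu$ consecutive phrases, and each group is parsed greedily from its own left end. One call to \cref{fct:small_space_dict_match}, with one pattern per group (its not-yet-parsed remainder), yields the next greedy phrase of every group at once.

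A group has an $R$-factorization into at most $2\mu$ phrases (its own input phrases), so $\cO(\alpha/\epsilon)$ rounds suffice. Hence a buffer costs $\cO((\alpha/\epsilon)\, m\log m)$, and the $\cO(\alpha z/\tau)$ buffers give the claimed time. Your approximation argument then applies per group: every greedy phrase except the last one of its group contains an endpoint of a factor of $\rlz(T,R)$. So the excess is at most the number of groups, $\cO(\alpha z/\mu)$, which the choice $\mu=\ceil{12\alpha/\epsilon}$ makes at most $\epsilon z$.

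The issue you flag as the main obstacle, the total pattern length $p$, is not where the argument lives. The paper simply charges each round $\cO(m\log m+\tau\epsilon/\alpha)$, so resolving it would not repair the parallelization problem above.
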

\begin{proof}
    Let us denote the phrases of the $\alpha$-optimal $R$-factorization of $T$ that we receive by $f_1, \ldots, f_{\hat z}$.
    Let us denote the sequence of phrases we output by $o_1, \ldots, o_{\bar z}$.
    At any point in time, when our algorithm receives a phrase $f_v$, it stores a suffix $f_u, \ldots, f_v$ of the $\alpha$-optimal factorization such that
    $f_1 \cdots f_v = o_1 \cdots o_x f_u \cdots f_v$, where $o_x$ is the last phrase that our algorithm has returned.

    Set $\mu := \ceil{12 \alpha/\epsilon}$ and $\zeta =\cO(\tau)$ to be the smallest multiple of $\mu$ in $[\tau \dd \infty)$.
    Consider the arrival of a phrase $f_v$.
    We first append $f_v$ to the maintained sequence of phrases, thus obtaining sequence $S_v := f_u, \ldots, f_v$.
    It will be clear from the description below that we maintain $v - u \leq 2 \tau$ as an invariant.
    We then only perform further computations, described next, if $v = \hat z$ or $v \equiv 0 \pmod {\zeta}$.
    We partition~$S_v$ into parts consisting of $\mu$ phrases (and merge the last two parts if the last one is shorter).
    Let us denote the $i$-th part in this partition by $\Pi_i$.
    We then repeatedly apply \cref{fct:small_space_dict_match} as follows.
    First, for each $i$ we compute the longest prefix $F_i$ of the concatenation of strings in $\Pi_i$ that occurs in $R$ along with a position where it occurs using a call to the algorithm encapsulated in \cref{fct:small_space_dict_match}.
    We then repeat this process with each $\Pi_i$ replaced by $\Pi_i[1 + \absolute{F_i} \dd ]$; and so on.
    Since we know that each $\Pi_i$ admits an $R$-factorization that consists of $\cO(\alpha/\epsilon)$ phrases, it suffices to invoke the algorithm from \cref{fct:small_space_dict_match} $\cO(\alpha/\epsilon)$ times.
    Hence, the total time complexity of the described procedure is
    \[\cO((\alpha/\epsilon) \cdot (m\log m + \tau \cdot \epsilon / \alpha)) = \cO((\alpha/\epsilon) \cdot m \log m).\]
    This way, we compute an $R$-factorization $h_1 \cdots h_y$ of $f_u \cdots f_v$ and output phrases $h_1, \ldots, h_y$.
    Note that $v-u \leq \zeta-1 \leq 2\tau$ holds at all times due to the frequency of the application of the procedure described above.
    Further, note that said procedure is called at most $\ceil{\frac{\alpha z + 1}{\zeta}}$ times; once for every $\zeta$ phrases that arrive and perhaps once more when the last phrase arrives.
    Hence, the total running time is $\cO(\alpha^2 z \cdot \frac{1}{\epsilon} \cdot (m/\tau) \log m)$.

    The $\alpha$-optimality of the input $R$-factorization guarantees that each $\Pi_i$ considered throughout the course of the algorithm is not a fragment of $R$.
    Further, due to the greedy nature of our computation, all the phrases that we partition each $\Pi_i$ into, apart possibly from the last one, contain an endpoint of a factor of $\rlz(T,R)$.
    Hence, our factorization has at most one excess factor per encountered $\Pi_i$.
    The total number of excess factors in the output $R$-factorization $o_1 \cdots o_{\bar z}$ compared to the number of factors in the RLZ factorization is thus at most
        \[\left\lceil\frac{\alpha z + 1}{\zeta}\right\rceil \cdot \frac{2\tau + 1}{\mu}
        \leq \frac{4\alpha z}{\tau} \cdot \frac{3\tau}{\lceil 12\alpha / \epsilon \rceil}
        \leq \frac{12\alpha z \cdot \epsilon}{12\alpha}
        \leq \epsilon z.\qedhere\]
\end{proof}

We now put everything together.

\begin{theorem}\label{thm:approx_rlz_ub}
	Consider a read-only reference string $R[1\dd m]$, a streaming string $T[1\dd n]$, a positive integer
	$\tau \in [\ceil{m^{0.4}}/\epsilon \dd m]$, 
	 and a real $\epsilon \in (0, 1]$.
	Let $z = |\rlz(T,R)|$.
	We can preprocess $R$ in $\poly(m)$ expected time to construct an $\tO(\tau)$-space data structure, so that
	we can then process $T$ in total time $\tO(z \cdot (m/\tau)^{5/3} + z\cdot m/(\epsilon\tau) + n \cdot  (1 + m^{3} / \tau^6) + m)$ total time using $\cOtilde(\tau)$ extra space,
	outputting an $R$-factorization of $T$ of size at most $(1+\epsilon)z$ as a stream.
	
	The preprocessing can be implemented in $\tO(m \cdot \poly(\err))$ time with success probability at least $1 - 2^{-\err}$ at the expense of increasing the space complexity and time required for processing $T$ by an $\cO(\poly(\err))$ multiplicative factor, for any integer $\err \geq \log_2 (m\sigma)$.
\end{theorem}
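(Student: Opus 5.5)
We plan to obtain \cref{thm:approx_rlz_ub} by chaining \cref{lem:rlz_compute_logapprox} with \cref{lem:refine}, running both as a pipeline over the stream. First we preprocess $R$ exactly as in \cref{lem:rlz_compute_logapprox}, for the given $\tau$. Since $\tau \ge \ceil{m^{0.4}}/\epsilon \ge \ceil{m^{0.4}}$, its hypothesis holds, and this preprocessing fits within the claimed expected $\poly(m)$ bound, or within $\tO(m\cdot\poly(\err))$ in the Monte Carlo variant. While $T$ arrives, \cref{lem:rlz_compute_logapprox} emits, for some $\Delta$, the $\Delta$-core $R$-factorization of $T$ as a stream. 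This costs $\tO(z(m/\tau)^{5/3} + n(1+m^3/\tau^6) + m)$ total time and $\tO(\tau)$ space. By \cref{lem:bad_approx_ratio}, this factorization is $\alpha$-optimal with $\alpha = \ceil{8\log_2^2 m + 2} = \cO(\log^2 m)$.

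We feed each phrase, as soon as it is output, directly into the algorithm of \cref{lem:refine} with this $\alpha$ and the given $\epsilon$. For the refinement we use a space parameter $\tau' = \max(\tau, \ceil{12\alpha/\epsilon})$. Because $\alpha$ is polylogarithmic, $\tau' = \tO(\tau)$ whenever $\tau \ge 1/\epsilon$, which the hypothesis $\tau \ge \ceil{m^{0.4}}/\epsilon$ guarantees (for $m \ge 2$). We also need $\tau' \le n$. If $n < \tau'$, we instead buffer $T$ explicitly in $\tO(\tau)$ space and finish offline, for example by running \cref{lem:refine} at the end on the buffered phrases.

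\cref{lem:refine} produces an $R$-factorization with at most $(1+\epsilon)z$ phrases. Its running time is $\cO(\alpha^2 z \cdot \epsilon^{-1}\cdot (m/\tau')\log m) = \tO(z\cdot m/(\epsilon\tau))$, and its space is $\cO(\tau')$. Adding this to the cost of \cref{lem:rlz_compute_logapprox} gives exactly the claimed total. The two stages never hold more than $\tO(\tau)$ words simultaneously, so the space bound is also met.

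The only genuinely delicate point is the streaming interface between the two stages. \cref{lem:refine} receives phrases as references (positions and lengths in $R$, or single symbols), not as raw text. It accesses their content only through $R$ when it concatenates parts $\Pi_i$ and invokes \cref{fct:small_space_dict_match}. The phrases produced by \cref{lem:rlz_compute_logapprox} are indeed all substrings of $R$ or single symbols, with known occurrences obtained from core-matching queries or from \cref{lm:block_process}. Hence no extra access to $T$ is required, and the reduction goes through once these occurrences are attached to the emitted phrases. For the success-probability version, the $\poly(\err)$ overheads carry over from \cref{lem:rlz_compute_logapprox}, while \cref{lem:refine} is deterministic.
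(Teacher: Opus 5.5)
Your proposal is correct and follows the paper's proof. It pipes the $\Delta$-core $R$-factorization from \cref{lem:rlz_compute_logapprox}, which is $\ceil{8\log_2^2 m+2}$-optimal by \cref{lem:bad_approx_ratio}, into \cref{lem:refine} with $\alpha=\ceil{8\log_2^2 m+2}$, and the only difference is cosmetic: the paper satisfies $\tau\ge\ceil{12\alpha/\epsilon}$ by replacing $\tau$ with $999\tau$ everywhere, whereas you enlarge the parameter only for the refinement stage, which also keeps $\tau\le m$ for \cref{lem:rlz_compute_logapprox}.
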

\begin{proof}
	If $\tau < \ceil{12\ceil{8\log_2^2 m + 2}/\epsilon}$, we replace $\tau$ by $999 \tau$ noting that, for any positive integer $m$,
	we have
	$999 \ceil{m^{0.4}}/\epsilon \geq \ceil{12\ceil{8\log_2^2 m + 2}/\epsilon}$ holds.
	This ensures that all of the constraints required by the lemmas we use below are satisfied.

	We employ \cref{lem:rlz_compute_logapprox} to output (as a stream) a $\Delta$-core $R$-factorization of $T$ for a positive integer $\Delta$ within the stated bounds.
	This factorization is $\ceil{8\log_2^2 m + 2}$-optimal due to \cref{lem:bad_approx_ratio}.
	We process this output stream according to \cref{lem:refine} with $\alpha = \ceil{8 \log_2^2 m+2}$, thus obtaining an $R$-factorization of $T$ of size at most $(1+\epsilon)z$ within the stated complexities.
\end{proof}

\edef\newT{t}
\let\newS\tau

\subsection{Lower Bound}

Finally, we show a lower bound on the space-time product of a $(1+\varepsilon)$-approximation algorithm for computing a relative Lempel--Ziv factorization.

\subparagraph{Lower bounds via branching programs.}
The $R$-way branching program model of computation, introduced by Cook and Borodin
\cite{DBLP:journals/siamcomp/BorodinC82}, is asymptotically at least as powerful as the word RAM model with word-width $\cO(\log R)$.
It has been useful for proving unconditional lower bounds on time-space tradeoffs for decision problems (see, e.g., \cite{DBLP:journals/jcss/BeameJS01,DBLP:journals/cjtcs/Pagter05,DBLP:journals/tcs/Patt-ShamirP93} and references therein).

An \emph{$R$-way branching} program is defined for a set of input instances $I \subseteq [R]^n$, i.e., each instance is an ordered tuple $(x_1, \dots, x_n)$ of $n$ variables.
The program is modeled as a directed acyclic graph in which one node is the dedicated \emph{start node}.
Every node is labeled with some index $i \in [n]$, and we say that the node \emph{probes} variable $x_i$.
Each node has out-degree at most $R$, and the outgoing edges are labeled with distinct values from $[1, R]$.
Each edge can optionally be annotated with one output value from $[1, R]$.
The computation is performed by repeatedly executing the following procedure, starting from the start node.
If the current node is labeled $i$, then we follow its outgoing edge with label~$x_i$. If this edge is annotated with some output, we report this output.
If the edge does not exist, then we terminate.
The traversed edges form the \emph{computation path} of the input.

The time complexity~$\newT$ of an $R$-way branching program is the length of the longest computation path traversed by any input from $I$ (measured in edges), and its capacity is the base-two logarithm of the number of nodes reachable by any input from $I$.
For our purposes, we use a slightly different notion of time and capacity that focuses on the parts of the program that actually produce output values.
This will be useful, e.g., for splitting the preprocessing and streaming phase of asymmetric streaming algorithms. 
For any input, its \emph{output-critical computation path} is obtained by removing from its computation path the lengthwise maximal prefix that does not produce any output values.
Then, the \emph{output-critical time complexity} is the length of the longest output-critical computation path traversed by any input from $I$. The \emph{output-critical capacity} is the base-two logarithm of the number of nodes in the union of the output-critical computation paths of all inputs from $I$.

\begin{lemma}\label{lem:readonly_to_Rway_generic}
    On a word RAM of word-width $\cO(\log_2 R)$, consider an algorithm designed to work for a set of possible inputs~$I \subseteq [R]^n$.
    The algorithm receives the input as a read-only array of size~$n$, with each value stored in $\ceil{\log_2 R}$ bits.
    The algorithm is allowed to preprocess the input into an auxiliary data structure.
    Then, using the data structure and random access to the input, it produces a stream of output elements from $[R]$, each in $\ceil{\log_2 R}$-bit representation.

    If, during the output phase, the algorithm uses at most $\newS$ bits of working space (including the auxiliary data structure) and at most $\newT$ time, then there is an equivalent $R$-way branching program with output-critical time complexity $\cO(\newT)$ and output-critical capacity $\cO(\newS)$.    
\end{lemma}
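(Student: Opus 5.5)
We follow the classical simulation argument of Borodin and Cook, adapted to the output-critical notions. The states of the word RAM during the output phase are configurations, and we want one branching-program node per reachable configuration at the moment it first matters.

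\textbf{Configurations.} A configuration during the output phase consists of
\begin{itemize}
\item the program counter, which has $\cO(1)$ possible values since the algorithm is fixed;
\item the contents of the at most $\newS$ bits of working memory, including the auxiliary data structure and all registers.
\end{itemize}
Random access to the read-only input is not part of the configuration, because the input is fixed per instance. Hence there are at most $2^{\newS+\cO(1)}$ distinct configurations.

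\textbf{The program.} We build an $R$-way branching program whose nodes are (configuration, step) pairs, or simply configurations if we make the graph leveled by time. Each node is labeled as follows.
\begin{itemize}
\item If the next instruction reads input cell $x_i$ (address determined by the configuration), label the node $i$ and give it outgoing edges for every value $v\in[R]$. Edge $v$ leads to the configuration reached after executing the instructions up to and including the next input read, with $v$ loaded.
\item Instructions not reading the input are deterministic functions of the configuration, so a maximal run of them can be collapsed into the edge preceding it.
\item If an output instruction occurs on such a run, annotate the edge with that output value. An output element is a single $\ceil{\log_2 R}$-bit word, hence a value in $[R]$.
\item If several outputs occur between two input reads, split the run with dummy nodes that probe an arbitrary variable and send all labels to the same successor. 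This costs at most one node per output instruction, so the time stays $\cO(\newT)$.
\end{itemize}

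\textbf{Preprocessing phase.} The preprocessing phase (building the auxiliary data structure), and anything before the first output, is simulated by an arbitrary branching-program prefix. We may even use a decision tree over the whole input, since its size is irrelevant. By definition, the output-critical path discards the maximal output-free prefix, so this part contributes nothing to the output-critical measures.

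\textbf{Time bound.} Along any input's output-critical path, each edge corresponds to at least one executed RAM step of the output phase. The added dummy edges number at most the output steps. Hence the output-critical time is $\cO(\newT)$.

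\textbf{Capacity bound.} This is the main obstacle, because a time-leveled graph would give $2^{\newS}\cdot \newT$ nodes. That still yields capacity $\newS+\log \newT$. Since $\newT\le 2^{\cO(\newS)}$ can be assumed (otherwise a deterministic machine with $\newS$ bits would loop), this is $\cO(\newS)$.

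A second subtlety is that the output-critical path must start at a node reached after the first output. That node's configuration is one of the $2^{\cO(\newS)}$ output-phase configurations, so nodes in the union of output-critical paths are indexed by (output-phase configuration, time, dummy index). This gives capacity $\cO(\newS)$.

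I would take care that identical configurations reached by different inputs at the same time are merged into the same node; this is exactly what bounds the count.
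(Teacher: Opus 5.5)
\textbf{Gap: word-level reads and writes.} The step that fails as written is how you treat input and output at word granularity. You label a reading node by a single index $i$ with one outgoing edge per $v\in[R]$, and you assume an output element is exactly one word. In the lemma's model, however, the input is a packed array of $\ceil{\log_2 R}$-bit fields and the word width $w$ is only $\cO(\log_2 R)$. A single word read may therefore return (parts of) up to $1+\ceil{w/\log_2 R}=\cO(1)$ consecutive input elements, and a single word write may emit several output elements or fragments of them. Such an instruction cannot be represented by one node probing one variable.

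The missing step, which is the first half of the paper's proof, is to first pass to a model where each step reads or writes a single element. You simulate every word access by $\cO(1)$ single-element probes or outputs, buffering the pieces in $\cO(w)$ extra bits. This costs $\cO(1)$ time per original step. It preserves the $2^{\cO(\newS)}$ bound on configurations only because $w=\cO(\newS)$ (any word RAM algorithm uses at least one word of space), and you need to state this. With this reduction in place, your construction goes through.

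\textbf{What is sound.} Otherwise your simulation is the paper's: one node per output-phase memory configuration, with edges following reads and outputs. The extra care you add is correct and fills in points the paper leaves implicit. Leveling by time gives an acyclic program, as the definition requires; the time must be measured from the first output, not globally, since preprocessing time is unbounded. At most $2^{\cO(\newS)}$ levels are needed, because a terminating deterministic run on a fixed input never repeats a configuration. Pushing everything before the first output into an unrestricted prefix is exactly what the output-critical measures allow.

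\textbf{A smaller point.} Emit every output from its own dummy node keyed by the configuration at that output instruction. Under your collapsing rule, a lone output would otherwise be annotated on the edge leaving the previous read node. For the first output, that node may belong to the unrestricted prefix, and it would then count toward the output-critical capacity.
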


\begin{proof}    
    Consider the following adaptation of the word RAM model. The algorithm may no longer access the input or write the output in a word-wise manner.
    Instead, it can only read (resp.\ write) a single input (resp.\ output) at a time, which takes constant time. 
    
    If the word-width is $w = \cO(\log_2 R)$, then, in the original model, the algorithm can (partially) read (resp.\ write) up to $1 + \ceil{w / \log_2 R} = \cO(1)$ consecutive input (resp.\ output) elements in a single operation.
    Such a multiple-read (resp.\ write) operation can be simulated in the adapted model in $\cO(1)$ time and $\cO(w)$ bits of space (merely performing multiple oracle queries or output operations and using the extra space to buffer the elements). Trivially, $w = \cO(\newS)$, as any word RAM algorithm requires at least one word of space.
    Therefore, there is an equivalent algorithm in the adapted word RAM model that still uses $\cO(\newS)$ bits of space and $\cO(\newT)$ time.
    
    We create an $R$-way branching program with $2^{\cO(\newS)}$ states, one for each memory configuration of this algorithm.
    We then add edges that are equivalent to the behavior of the algorithm, which is possible because at most one element is read or written per computation step.  
\end{proof}

We will need the following tool.

\begin{lemma}\label{lem:rway_to_tree}
    Consider an $R$-way branching program over $n$ input variables. There is an $R$-way branching program whose structure is a tree rooted in its start node, and satisfying the following property for every possible input. The computation path consists of at least $n$ edges. During the first~$n$ steps, every variable is probed exactly once.
    The program outputs value~$y$ in the $i$-th computation step if and only if the original program outputs $y$ in the $i$-th computational step.
\end{lemma}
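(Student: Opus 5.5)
The plan is the standard unfolding of a DAG-shaped branching program into a tree, preceded by a prefix in which every variable is probed once.

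\textbf{The probing prefix.} Build a complete $R$-ary tree of depth $n$ whose nodes at depth $i-1$ (for $i\in[n]$) all probe $x_i$, with one outgoing edge per value in $[1,R]$. None of these $n$ edges carries an output annotation. A node at depth $n$ is then determined by a root-to-node path, so it remembers the full input $(x_1,\dots,x_n)$. Every computation path therefore has at least $n$ edges, and each variable is probed exactly once during the first $n$ steps.

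\textbf{Attaching the original program.} Below each depth-$n$ node $u$, associated with input $x$, attach a copy of the original program's computation for $x$, unfolded into a tree. Since $u$ already knows $x$, the computation path of $x$ in the original program is a fixed finite sequence of nodes and edges. If it ends by a missing edge, it has finite length. Acyclicity rules out an infinite path, and a DAG with finitely many nodes has bounded path length, so $x$'s path is finite.

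Hang below $u$ a chain of nodes replicating that path. The $j$-th chain node probes the same variable as the $j$-th node of the original path. It has a single outgoing edge, labelled with the value $x_i$ (which is correct, because the input is $x$), annotated with exactly the output of the original $j$-th edge. The chain stops where the original computation terminates. To keep the structure a tree, every chain node is a fresh copy.

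\textbf{Aligning the outputs with the time steps.} The remaining issue is the statement's requirement that $y$ be output in the $i$-th step exactly when the original outputs $y$ in its $i$-th step. The construction above shifts every output by $n$ steps, so it does not satisfy this as written. I would fix this by interleaving the two phases instead of placing the probing prefix first.

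Build the tree directly by simulating the original program step by step. Each tree node records the original node it corresponds to, together with the set of variable values already learned on its root path. At step $i$ with $i\le n$, the tree node probes the variable the original node would probe. It branches on all $R$ values and attaches the original edge's output to the matching value. This gives identical outputs at identical steps.

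To also guarantee that each variable is probed exactly once in the first $n$ steps, use the fact that the learned set records which variables were already seen. When the original program would re-probe a variable, the tree node instead probes the lexicographically smallest unread variable. It branches on all $R$ values, and every branch carries the output that the original edge for the already-known value would produce and continues to the same next original node. This way outputs are preserved while the probe counts the new variable.

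If the original computation terminates before step $n$, append probes of the remaining unread variables with no outputs. Termination caused by a missing edge must be mirrored by then terminating the tree path after these extra probes, without outputs. The output stream is unaffected, because the original produces nothing more.

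\textbf{Main obstacle.} I expect the delicate point to be this interleaving bookkeeping: the substitution of re-probes, and handling early termination or missing edges so that exactly $n$ distinct probes occur while outputs coincide step for step. I would verify it by induction on $i$, with the invariant that the tree node at depth $i$ corresponds to the original node at step $i$ and that $\min(i,n)$ distinct variables have been probed.
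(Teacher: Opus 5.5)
Your final, interleaved construction is correct and is essentially the paper's proof. The paper also works in three moves: it redirects missing edges into an output-free chain of length $n$, it unfolds the DAG into a tree, and, top-down within the first $n$ levels, it relabels every node that re-probes a variable so that it probes a fresh one, replacing all child subtrees by copies of the subtree for the known value (with that edge's output annotation). You fold these three steps into one step-by-step simulation, and you were right to discard the prefix-then-attach idea, since it shifts every output by $n$ steps.
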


\begin{proof}
    We obtain the program by modifying the original one. First, we create $n + 1$ nodes $v_0, \dots v_n$, where, for every $i \in [0, n)$ and every $j \in [R]$, there is an edge from $v_i$ to $v_{i + 1}$ that has label $j$ and no output annotation.
    For every node $v$ of the original program and for every $j \in [R]$, if node $v$ has no outgoing edge with label $j$, then we add such an edge with destination $v_0$, and without output annotation.
    This does not change the behavior of the program and ensures that all computation paths are of length at least $n$.
    Also, every node has out-degree either $0$ or $R$.
    
    Next, as long as there is a node that has multiple ingoing edges, we duplicate the node (including its outgoing edges) once per ingoing edge, such that each copy has exactly one ingoing edge.
    (This is indeed a finite process; if we process the nodes in topological order, then we never add ingoing edges to already processed nodes.)
    Now the program is a tree rooted in the start node, where every leaf is at depth at least $n$, and every node is either a leaf or has $R$ outgoing edges.
    
    Finally, consider a node $v$ with ancestors $a_1, \dots, a_d$ (in root-to-$v$ order), where $d < n$. Such a node cannot be a leaf, and hence it has children $u_1, \dots u_R$ (respectively reached via edges labeled $1, \dots, R$).
    Let $a_{d + 1} = v$.
    If there is some $i \in [1, d]$ such that both $a_i$ and $v$ probe the same variable, then let $x$ be the label of the edge from $a_i$ to $a_{i + 1}$.
    Every input that reaches node $v$ must assign value $x$ to the variable.
    We proceed as follows. We change the label of $v$ such that it probes some variable not probed by any of its ancestors.
    Then, for every $j \in [R] \setminus \{ x \}$, we replace child $u_j$ and the entire subtree rooted in $u_j$ with a copy of the subtree rooted in $u_x$.
    If the edge with label $x$ outgoing from $v$ has an output annotation, then we assign the same annotation to the outgoing edge from $v$ with label $j$ (and otherwise we assign no annotation to this edge).
    This again does not change the behavior of the program.
    We apply this procedure until no longer possible, in order of depth, starting from the root.
    Afterwards, the tree satisfies the claimed properties.
\end{proof}

\subparagraph{Lower bounds via set complementation.}

We obtain lower bounds via a reduction from the set complementation problem.
In this problem, we are given a set $X \subset [R]$ of size $\absolute{X} = n$ (as a read-only array) and have to report all elements of $[R] \setminus X$.
For $R = 2n$, it is known that any algorithm using $\cO(\newS)$ additional working space requires $\Omega(n^2 / \newS)$ time \cite{DBLP:journals/tcs/Patt-ShamirP93}. 
We show a similar result for a relaxed version of the problem, where a subset of the reported elements is allowed to be from $X$, i.e., the algorithm is allowed to produce false positives. We also allow an arbitrary time and space preprocessing of $X$ before producing the first output value. Using this version of the problem, we then obtain lower bounds for batched function inversion and relative Lempel-Ziv compression.

\defdetailedcustomproblem{$(\alpha,\beta,\gamma,n)$-Set-Complementation}%
{Let $\alpha, \beta, \gamma \in \mathbb R_{> 0}$ and $n \in \mathbb N_{> 0}$. Let $R = n + {\ceil{\alpha n}}$.}%
{%
Input/{ordered tuple $(x_1, \dots, x_n) \in [R]^n$ in which all values are distinct},%
Output/{list of at most $\beta n$ not necessarily distinct elements of $[R]$, of which at least $\gamma n$ distinct elements are in the set complement $[R] \setminus \{x_1, \dots, x_n\}$}}%

\begin{theorem}\label{thm:prep_set_comp}
    Let $\alpha, \gamma \in \mathbb R_{> 0}$ and $n \in \mathbb N_{> 0}$ such that $\alpha n, \gamma n \in \mathbb N_{>0}$ and $\gamma \leq \alpha \leq \frac{1}{3}$.
    Let $R = n + {\alpha n}$ and $\beta = \frac\gamma{3\alpha}$.
    Let $I \subseteq [R]^n$ be a set of $\absolute{I} \geq {R!}/{((R - n)! \cdot 2^{32})}$ input instances, where each instance consists of $n$ distinct values. Consider an $R$-way branching program that solves $(\alpha,\beta,\gamma,n)$-set-complementation for all instances in $I$. If the program has output-critical capacity $\newS < \gamma n/ 2^{32}$ and output-critical time complexity~$\newT$, then $\newS \newT \geq \gamma n^2 / 2^{32}$.
\end{theorem}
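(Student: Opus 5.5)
We follow the classical Borodin--Cook / Patt-Shamir--Peleg segmentation argument, adapted to output-critical paths and to the relaxed output condition. First, by \cref{lem:rway_to_tree}, we may assume the program is a tree in which every computation path has length at least $n$ and every variable is probed exactly once during the first $n$ steps; outputs are unchanged. We restrict attention to the output-critical part of each computation path, which has length at most $\newT$ and passes only through the at most $2^{\newS}$ nodes of the output-critical union. Suppose for contradiction that $\newS\newT < \gamma n^2/2^{32}$. We cut each output-critical path into consecutive segments of $\ell = \lfloor \gamma n/(c\,\newS)\rfloor \cdot$(suitable constant) steps, say $\ell \approx n/2^{16}$ scaled so that the number of segments is $\newT/\ell = \cO(\newS/\gamma n \cdot \newT \cdot \ldots)$. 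The goal is that some segment yields many correct distinct outputs while probing few variables.

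The key step is a counting claim. Fix a node $v$ in the output-critical union and consider the subprogram of depth $\ell$ rooted at $v$. For any computation segment starting at $v$ that probes at most $\ell$ variables and outputs at least $k$ distinct values, all of which lie in the complement, we bound the fraction of instances of $I$ that follow it. Since the segment probes at most $\ell$ coordinates and fixes their values, while each of the $k$ outputs must be absent from the remaining $n-\ell$ coordinates, the probability over a uniform random injective tuple is at most roughly $(1 - (n-\ell)/R)^{k}\cdot$(fixing factor), i.e.\ about $(\alpha/(1+\alpha))^{k}$ up to the correction for the probed values; with $\alpha\le 1/3$ this is at most $4^{-k}$-ish. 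Summing over the at most $R^\ell$ paths of length $\ell$ in the subtree is too much, so instead we argue path by path: the segment is determined by $v$ and the probed values, and the outputs are determined by the path, so the number of (node, path) pairs is at most $2^{\newS}$ times the number of ways the input restricted to probed positions can go; comparing with the density $|I| \ge R!/((R-n)!\,2^{32})$ lets us lose only a $2^{32}$ factor.

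Then the pigeonhole step: every instance outputs at least $\gamma n$ distinct complement elements along its output-critical path, and the number of segments is $\lceil \newT/\ell\rceil$, so some segment contributes at least $k=\gamma n \ell/\newT$ distinct correct outputs. Choosing $\ell$ so that $k \ge c\,\newS$ for a large constant $c$ (possible exactly when $\newS\newT<\gamma n^2/2^{32}$, using $\newS<\gamma n/2^{32}$ to ensure $\ell \le n/2$ or so), a union bound over the $2^{\newS}$ start nodes and segment indices gives that the fraction of instances of $I$ admitting such a segment is at most $2^{\newS}\cdot \newT \cdot 2^{-\Omega(k)}\cdot 2^{32} < 1$, a contradiction. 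The parameter $\beta=\gamma/(3\alpha)$ enters in bounding how many outputs (including false positives and repeats) a segment may emit, so that one cannot hit complement elements cheaply by guessing: with at most $\beta n$ outputs total, the chance that guessed unprobed values are all absent must be charged per distinct output, and we need the guessing success rate $\alpha/(1+\alpha)$ times $\beta n$ to stay below $\gamma n/2$, which is exactly what $\beta = \gamma/(3\alpha)$ guarantees.

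The main obstacle is the probabilistic lemma for a single segment in the relaxed setting: outputs may be false positives or repeats, so we must show that among the outputs of a fixed segment, the number of distinct unprobed values that happen to be in the complement exceeds $k$ only with probability $2^{-\Omega(k)}$, via a Chernoff-type bound for sampling without replacement (each unprobed value is in the complement with probability about $\alpha/(1+\alpha)$, and the segment outputs at most $\beta n$ values, giving expectation at most $\gamma n/3$-proportional amounts). We would first try to derive this by a direct hypergeometric tail estimate, with all constants absorbed into the generous $2^{32}$.
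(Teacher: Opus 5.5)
Your template (cut output-critical paths into short segments, pigeonhole or union-bound over the at most $2^{\newS}$ nodes, lose a factor $2^{32}$ for the density of $I$) is the right one. However, the per-segment step fails as written.

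\textbf{Main gap: the segment's total output count is not controlled.} Pigeonholing only on correct outputs gives a segment with $k=\gamma n\ell/\newT$ distinct correct values, but it says nothing about the number $s$ of values that segment emits. You bound $s$ only by the global $\beta n$ (or trivially by $\ell$). Conditioned on a path from $v$, each unprobed output is correct with probability about $p=\alpha n/(R-\ell)$. So the best available bound is $\Pr[\ge k\text{ correct}]\le\binom{s}{k}p^k\le(esp/k)^k$, which is useful only if $s\lesssim k/(ep)$, i.e.\ $s\le(1+\epsilon)k\beta/\gamma$ for a small constant $\epsilon$.

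Without such a bound your tail estimate is false. A segment that probes nothing and emits $\ell$ fresh values gets about $\alpha\ell/(1+\alpha)$ correct ones by pure guessing. This exceeds $k$ as soon as $\newT\gtrsim\gamma n/\alpha$, a regime the theorem must cover. Your inequality $\frac{\alpha}{1+\alpha}\beta n<\gamma n/2$ is a statement about the whole path and does not transfer to a single segment.

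The missing idea is a \emph{joint} averaging. You need a segment that has many correct outputs \emph{and} an output-to-correct ratio of at most $(1+\epsilon)\beta/\gamma$; such a segment exists because the path has at least $\gamma n$ correct and at most $\beta n$ total outputs. This is exactly \cref{claim:dense_piece}. There, pieces are cut to contain exactly $r$ correct outputs. Fewer than a $1/999$ fraction of pieces are longer than $\ceil{n/20}$, and fewer than a $998/999$ fraction emit more than $\floor{(22/21)r\beta/\gamma}$ values, so some piece is both short and sparse in outputs.

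The constants here are tight and cannot be ``absorbed into $2^{32}$''. With the binomial count, the paper's final base is $\frac{21\cdot 19\cdot 3}{22\cdot 20\cdot e}\approx 1.0008$. A constant-factor loss in the ratio, or in $\ell/n$, pushes it below $1$.

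\textbf{Smaller issues.} First, applying \cref{lem:rway_to_tree} to the whole program before pigeonholing is incompatible with the $2^{\newS}$ bound. Unfolding into a tree and copying subtrees multiplies nodes, so the new output-critical union need not have at most $2^{\newS}$ nodes. Also, the lemma's distinct-probe guarantee concerns only the first $n$ steps from the root, not segments in the middle of an output-critical path.

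The paper orders things differently. It first fixes a node $v$ of the \emph{original} program that starts a good piece for at least $\absolute{I}/2^{\newS}$ inputs. Only then does it apply \cref{lem:rway_to_tree} to the subprogram rooted at $v$, so the first $\ceil{n/20}$ steps from $v$ probe distinct variables. It then compares two counts at some node $u$ at depth $\ceil{n/20}$ below $v$:
\begin{itemize}
\item at least $(R-\ceil{n/20})!/((R-n)!\,2^{2\newS})$ inputs reach $u$;
\item at most $((22/21)e\beta/\gamma)^r(R-\ceil{n/20}-r)!/(R-n-r)!$ inputs are compatible with the $v$-to-$u$ path.
\end{itemize}

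For your conditional-probability formulation you can drop the tree lemma entirely. Paths from a fixed node partition the inputs, and each path fixes at most $\ell$ values.

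Second, the extra factor $\newT$ in your union bound is unnecessary; the at most $2^{\newS}$ start nodes suffice. Keeping it would break the argument when $\log_2\newT\gg\newS$.
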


\begin{proof}
For some input $(x_1, \dots, x_n) \in I$, we say that an output value $y \in [R]$ is correct if and only if $y \notin \{ x_i \mid i \in [n] \}$.
Consider any $R$-way branching program for $I$ that has output-critical capacity~$\newS$ and output-critical time complexity $\newT$.
Every output-critical computation path must output $\gamma n$ correct values, and hence it consists of at least $\gamma n + 1$ nodes, i.e., $\newT \geq \gamma n$ and $\newS \geq \log_2 (\gamma n + 1) \geq 1$.
If $\gamma n \leq 2^{32}$ then $\newS \geq 1 \geq \gamma n / 2^{32}$ and the claim trivially holds.
Hence, we can assume $\gamma n > 2^{32}$, which implies $\newS \geq \log_2 (\gamma n + 1) > 32$.
If $\newT \geq \gamma n^2 / 2^{32}$, then also $\newS\newT \geq \newT \geq \gamma n^2 / 2^{32}$ and the claim trivially holds.
From now on, we can assume $2^{32} < \gamma n \leq \newT < \gamma n^2 / 2^{32}$.

We start by showing that every output-critical computation path contains a short piece on which a large number of correct outputs and a relatively small number of incorrect outputs is produced.

\begin{claim}\label{claim:dense_piece}
    Let $r := \min(\floor{\gamma n / 1000}, \floor{\gamma n^2 / (20000\newT)})>0$.
    For every input from $I$, its output-critical computation path contains a piece of at most $\ceil{n / 20}$ consecutive edges that produce at least~$r$ correct output values and at most $\floor{(22/21) \cdot r\beta/\gamma}$ output values overall.
\end{claim}

\begin{proof}
    We cut the output-critical computation path into edge-disjoint pieces such that each piece produces exactly $r$ correct output values. 
    We discard the final piece if it produces fewer correct values.
    Because of $r \leq \gamma n / 1000$, the number of pieces is $\floor{\gamma n   / r} > \gamma n / r - 1 \geq 999/1000 \cdot \gamma n / r$.
    If $\newT > \ceil{n/20}$, then necessarily $r = \floor{\gamma n^2 / (20000\newT)}$.
    Hence, if at least a $(1/999)$ fraction of the pieces are of length over $\ceil{n/20}$, 
    then their total length is strictly more than 
    \[1/999 \cdot 999/1000 \cdot \gamma n / r \cdot n/20 = \gamma n^2 / (20000r) \geq \newT,\]
    which is a contradiction.
    If at least a $(998/999)$ fraction of the pieces produce over $(22/21) \cdot r\beta/\gamma$ output values, then the overall number of outputs exceeds 
    \[(998/999) \cdot (999/1000) \cdot \gamma n / r \cdot (22/21) \cdot r\beta/\gamma > \beta n,\] 
    which is another contradiction.
    Hence, we can find a $(1/999)$ fraction of pieces such that each of them produces at most $(22/21) \cdot r\beta/\gamma$ output values.
    Then, within this $(1/999)$ fraction of pieces, there must be at least one piece of length at most $\ceil{n/20}$.
\end{proof}

We assign every input to the initial node of a piece satisfying \cref{claim:dense_piece}.
There is a node with at least $\absolute{I} / 2^\newS$ inputs assigned to it.
We focus on this node $v$ and the subset $I'$ of inputs.
We apply \cref{lem:rway_to_tree} to the subprogram obtained by defining the start node to be $v$ and including all edges and nodes reachable from $v$.
From now on, we say \emph{modified program} to refer to the resulting program.
For every input from $I'$, we run the first $\ceil{n / 20}$ computational steps of the modified program and assign the input to the node reached in this way.
Since all $\ceil{n/20}$ variables probed during this procedure are distinct and have distinct values, there are $R! / (R - \ceil{n / 20})!$ nodes reachable in this way.
Consequently, there must be a final node $u$ with the following number of inputs assigned to it:
\[\left.\absolute{I'} \middle/ \frac{R!}{(R - \ceil{n / 20})!} \right. \geq \frac{(R - \ceil{n / 20})!}{((R - n)! \cdot 2^{\newS + 32})} \geq \frac{(R - \ceil{n / 20})!}{((R - n)! \cdot 2^{2\newS})}.\]

Our goal is to provide an upper bound on this number, resulting in a lower bound on $2^{2\newS}$.
Hence, we count the number of possible inputs that could be compatible with $u$, i.e., inputs for which $v$ is the initial node of a piece satisfying \cref{claim:dense_piece}, and for which running the modified program for $\ceil{n / 20}$ steps results in the final node $u$.
We focus on the unique computation path from $v$ to $u$ (recalling that the modified program is a tree).
Due to \cref{claim:dense_piece}, we know that, while running the modified program, the first at most $(22/21) \cdot r\beta/\gamma$ output values must contain at least $r$ correct output values (for every compatible input), and there are indeed at least $r$ correct outputs produced during the initial $\ceil{n / 20}$ steps.
There are at most $\binom{\floor{(22/21) \cdot r\beta/\gamma}}{r} \leq ({(22/21) \cdot e \cdot \beta}/{\gamma})^r$ ways of fixing $r$ correct output values among the first at most $(22/21) \cdot r\beta/\gamma$ output values produced by the path.
The path from $v$ to $u$ fixes the values of $\ceil{n/20}$ variables. The remaining variables must avoid the already used values, as well as the $r$ correct outputs.
Hence, there are $(R - \ceil{n/20} - r)! / (R - n - r)!$ ways of assigning the remaining variables.
We have shown that the number of inputs that are compatible with $u$ cannot exceed 
\[\left(\frac{22 \cdot e \cdot \beta}{21\gamma}\right)^r \cdot \frac{(R - \ceil{n/20} - r)!}{(R - n - r)!}.\]

Below, we use our lower bound on the number of inputs assigned to $u$ as well as our upper bound on the number of inputs compatible with $u$ to obtain the lower bound on $\newS$.
We exploit that $R - n = \alpha n$ and $R - \ceil{n / 20} - r + 1 \geq R - n/20 - \gamma n \geq R - n/20 - \alpha n = 19n/20$.
\newcommand{\xdot}[1][0]{\mathrlap{\hspace*{#1em}{}\enskip\cdot\enskip{}}\hspace*{2em}}%
\begin{alignat*}{2}
    & \mathrlap{\frac{(R - \ceil{n/20})!}{(R - n)! \cdot 2^{2\newS}} \leq \left(\frac{22 \cdot e \cdot \beta}{21\gamma}\right)^r \cdot \frac{(R - \ceil{n/20} - r)!}{(R - n - r)!}}&&\\[.5\baselineskip]
    \mathllap{\implies \enskip 2^{2\newS}}
    \enskip \geq \enskip {}& 
    \left(\frac{21\gamma}{22 \cdot e \cdot \beta}\right)^r \xdot & \frac{(R - n - r)!}{(R - n)!} \xdot & \frac{(R - \ceil{n/20})!}{(R - \ceil{n/20} - r)!}\\[.25\baselineskip]
    \enskip \geq \enskip {}&
    \left(\frac{21\gamma}{22 \cdot e \cdot \beta}\right)^r  \xdot[.75] & (R - n)^{-r} \xdot & (R - \ceil{n/20} - r + 1)^{r}\\[.25\baselineskip]
    \enskip \geq \enskip {}&
    \left(\frac{21\gamma}{22 \cdot e \cdot \beta}\right)^r  \xdot[1.5] & (\alpha n)^{-r} \xdot & (19n/20)^{r}\\[.25\baselineskip]
    \enskip \geq \enskip {}&
    \mathrlap{\left(\frac{21 \cdot 19 \gamma}{22 \cdot 20 \cdot e \cdot \alpha\beta}\right)^r = 
    \left(\frac{21 \cdot 19 \cdot 3}{22 \cdot 20 \cdot e} \cdot \frac{\gamma}{3\alpha} \cdot \frac{1}{\beta} \right)^r = 
    \left(\frac{21 \cdot 19 \cdot 3}{22 \cdot 20 \cdot e} \right)^r > (1.0007)^r
    } &&
\end{alignat*}
Hence, we have shown $\newS \geq r/2 \cdot  \log_2 (1.0007) > r / 2^{11}$.
Finally, if $r = \floor{\gamma n / 1000}$ then $r > \gamma n / 2^{11}$ and $\newS \geq \gamma n / 2^{22}$.
If $r = \floor{\gamma n^2 / (20000\newT)}$ then $r > \gamma n^2 / (2^{16} \cdot \newT)$. Therefore, $\newS \geq \gamma n^2 / (2^{28} \cdot \newT)$, which implies $\newS\newT \geq \gamma n^2 / 2^{32}$, concluding the proof.
\end{proof}

\begin{corollary}\label{lem:ram_set_comp}
    Let $\alpha, \beta, \gamma, n, R$ be as in \cref{thm:prep_set_comp}.
    Consider a randomized word RAM algorithm for $(\alpha, \beta, \gamma, n)$-set-complementation over all possible inputs, where the word-width is $\cO(\log R)$.
    The input is given read-only, with each variable stored in $\ceil{\log_2 R}$ bits.
    The algorithm is allowed to preprocess the input into an auxiliary data structure.
    Then, using the data structure and random access to the input, it produces the output elements as a stream.
    
    If the algorithm succeeds with probability $\geq 1/100$ and, during the output phase, uses worst-case space $\newS$ bits (including the data structure) and worst-case time $\newT$, then $\newS = \Omega(\gamma n)$ or $\newS\newT = \Omega(\gamma n^2)$.
\end{corollary}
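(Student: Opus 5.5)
We reduce \cref{lem:ram_set_comp} to \cref{thm:prep_set_comp} in three steps: fix the randomness to get a deterministic algorithm that succeeds on a large subset of inputs, convert that algorithm into an $R$-way branching program via \cref{lem:readonly_to_Rway_generic}, and then invoke \cref{thm:prep_set_comp} on that subset.

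\textbf{Derandomization (Yao-style averaging).} Let $\mathcal I\subseteq[R]^n$ be the set of all inputs with distinct values, so $\absolute{\mathcal I}=R!/(R-n)!$. Since the algorithm succeeds with probability at least $1/100$ on every input, averaging over the random string $\rho$ shows that some fixed $\rho^*$ makes it succeed on at least $\absolute{\mathcal I}/100$ inputs. Let $I$ be this set of inputs. Then
\[\absolute{I}\ \ge\ \frac{\absolute{\mathcal I}}{100}\ \ge\ \frac{R!}{(R-n)!\cdot 2^{32}},\]
which is exactly the size hypothesis required in \cref{thm:prep_set_comp}.

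\textbf{Obtaining the branching program.} Fixing $\rho^*$ yields a deterministic word RAM algorithm that solves $(\alpha,\beta,\gamma,n)$-set-complementation on every instance in $I$. It has a preprocessing phase followed by an output phase that uses at most $\newS$ bits and $\newT$ time. We may assume the preprocessing phase produces no output: the problem statement lets the algorithm preprocess arbitrarily and then emit outputs as a stream, so the first output occurs in the output phase. Applying \cref{lem:readonly_to_Rway_generic} therefore gives an equivalent $R$-way branching program with output-critical capacity $\cO(\newS)$ and output-critical time complexity $\cO(\newT)$.

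\textbf{Conclusion via \cref{thm:prep_set_comp}.} Let $c$ be the constant from \cref{lem:readonly_to_Rway_generic}, so the program's output-critical capacity is $\newS'\le c\newS$ and its output-critical time is $\newT'\le c\newT$. We apply \cref{thm:prep_set_comp} with these parameters. If $\newS'\ge\gamma n/2^{32}$, then $\newS\ge\gamma n/(c\,2^{32})=\Omega(\gamma n)$. Otherwise the theorem gives $\newS'\newT'\ge\gamma n^2/2^{32}$, hence $\newS\newT\ge\gamma n^2/(c^2 2^{32})=\Omega(\gamma n^2)$.

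\textbf{Expected obstacle.} The delicate step is matching the output-critical notions. The program built by \cref{lem:readonly_to_Rway_generic} has nodes corresponding to preprocessing memory configurations, and these are unbounded. Output-critical capacity counts only nodes on paths after the first output. The first output edge leaves a node representing an output-phase configuration, which uses at most $\newS$ bits. So every node in the union of output-critical paths is an output-phase configuration, and there are $2^{\cO(\newS)}$ of them. This is exactly what the lemma provides, so the bounds transfer without loss.
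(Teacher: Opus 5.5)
Your proposal is correct and matches the paper's proof step for step. Both fix a random seed by averaging so that the algorithm succeeds on at least a $1/100$ fraction of the inputs. Both then convert the resulting deterministic algorithm into an $R$-way branching program via \cref{lem:readonly_to_Rway_generic} and conclude with \cref{thm:prep_set_comp}. Your explicit constant bookkeeping and your remark on why preprocessing configurations do not count toward output-critical capacity are added detail, not a different route.
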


\begin{proof} 
    Let $I$ of size $R! / (R - n)!$ be the set of all possible inputs. Let $r$ be the outcome of the randomness of the algorithm.
    For input $x \in I$, we define indicator variable $X_{x, r} = 1$ if and only if the algorithm with randomness $r$ succeeds for input $x$ (and otherwise $X_{x,r} = 0$).
    We have $\mathbb E_r[X_{x, r}] = P_r[X_{x, r} = 1] \geq 1/100$.
    By the linearity of expectation, $\mathbb E_{r}[\sum_{x \in I} X_{x, r}] = \sum_{x_\in I} \mathbb E_r[X_{x, r}] \geq \absolute{I} / 100$.
    Therefore, there must be a fixed~$r$ such that the algorithm succeeds for $\absolute{I} / 100$ inputs, which implies the existence of a deterministic algorithm that succeeds for $\absolute{I} / 100$ inputs.
    By \cref{lem:readonly_to_Rway_generic}, there is an $R$-way branching program that solves the problem on this subset of inputs with output-critical capacity $\cO(\newS)$ and output-critical time $\cO(\newT)$. Due to \cref{thm:prep_set_comp}, this implies $\newS = \Omega(\gamma n)$ or $\newS\newT = \Omega(\gamma n^2)$.
\end{proof}

\begin{corollary}
    On a word RAM of word-width $\cO(\log N)$, consider a data structure that, when constructed for a function ${f : [N] \rightarrow [N]}$ with parameter $b \in [N]$, allows answering the following type of query. Given a batch of $b$ elements $\{x_1, \dots, x_b\} \subseteq[N]$, report for each $x_i$ either a preimage element $y_i \in f^{-1}(x_i)$, or report that $f^{-1}(x_i) = \emptyset$. The data structure may evaluate $f$ at query time.
    
    Assume that the construction succeeds with probability $\geq 1/100$, and, after successful construction, all queries can be answered correctly.
    If the size of the data structure including the working space at query time is $\newS = o(N)$ bits, then a query takes $\Omega(Nb/\newS)$ time in the worst case.
\end{corollary}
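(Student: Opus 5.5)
We reduce set complementation to batched function inversion and then invoke \cref{lem:ram_set_comp}. Fix a batched inversion data structure of size $\newS = o(N)$ bits with worst-case query time $\newT$ per batch. Set $n = N$ and choose constants with $\gamma \le \alpha \le \frac13$, say $\alpha = \gamma = \frac14$. Then $R = n + \alpha n$ and $\beta = \gamma/(3\alpha) = \frac13$. (Rounding issues are handled by padding $N$ to a multiple of $4$.)

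Given an instance $(x_1,\dots,x_n) \in [R]^n$ with distinct values, we define $f : [R] \to [R]$ by $f(i) = x_i$ for $i \le n$, and by $f(i) = x_1$ for $n < i \le R$. The image of $f$ is then exactly $\{x_1,\dots,x_n\}$, so the set complement $[R]\setminus\{x_1,\dots,x_n\}$ equals $\{y \in [R] : f^{-1}(y)=\emptyset\}$. Each evaluation of $f$ is one read of the read-only input, so $f$ is constant-time computable with word-width $\cO(\log R) = \cO(\log N)$.

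In the preprocessing phase (whose cost is not charged) we build the inversion data structure for $f$ with domain size $R = \Theta(N)$ and parameter $b$. In the output phase we query the batches $\{1,\dots,b\}, \{b+1,\dots,2b\}, \dots$ covering $[R]$, and stream every $y$ reported as $f^{-1}(y)=\emptyset$, stopping once we have output $\gamma n$ such values.

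Whenever construction succeeds, every answer is correct. Since $|[R]\setminus \mathrm{im} f| = \alpha n \ge \gamma n$, all output values are correct, they are distinct, and their number is at most $\gamma n \le \beta n$. Hence the algorithm solves $(\alpha,\beta,\gamma,n)$-set-complementation with probability $\ge 1/100$.

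To guarantee that enough complement elements have been seen, we query all $\lceil R/b \rceil$ batches. This takes time $\lceil R/b\rceil \cdot \newT$, and the working space is $\newS + \cO(\log N)$ bits for the counters. Storing a full batch explicitly would cost $b\log N$ bits, which could exceed $\newS$. We avoid this by assuming the query interface reads its batch and writes its answers through an oracle or stream, just as the input is accessed read-only. If that assumption is unavailable, we still get the bound, because the data structure's own working space already has to accommodate its batch.

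By \cref{lem:ram_set_comp}, either $\newS + \cO(\log N) = \Omega(\gamma n) = \Omega(N)$, which contradicts $\newS = o(N)$, or
\[
(\newS + \cO(\log N)) \cdot (R/b) \cdot \newT = \Omega(N^2).
\]
Since $\newS \ge \log N$ (at least one word), this gives $\newT = \Omega(N b / \newS)$.

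The main obstacle is matching the model of \cref{lem:ram_set_comp}. The algorithm there has a preprocessing phase followed by a streaming output phase, and its space measure includes the auxiliary data structure, so we must confirm that the inversion structure's stored size together with its query working space is exactly what is charged. Another point to check is that the padded domain $[R]$ with $R = \tfrac54 N$ is allowed. If the data structure is only defined on $[N]$, we instead take $R = N$ and $n = \tfrac45 N$, and let $f$ map the extra domain points to $x_1$. This changes only constants.
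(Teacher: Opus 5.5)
Your proposal is correct and follows the paper's proof. Both encode the set-complementation input as the image of $f$, query all of $[R]$ in $\lceil R/b\rceil$ batches, and invoke \cref{lem:ram_set_comp}; the paper differs only cosmetically, using $\alpha=\beta=\gamma=\frac13$ with $R=N-1$, mapping the extra domain points to the unused value $R+1$ instead of $x_1$, and charging an explicit additive $\cO(R)$ for scanning the answers, which it discards because space times $R$ is $o(N^2)$.
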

\begin{proof}
    Without loss of generality, assume that $N - 1$ is a multiple of $4$. Let $n = {3\cdot(N - 1)}/{4}$ and $R = n + n / 3 = N - 1$.
    We reduce $(\frac{1}{3},\frac{1}{3},\frac{1}{3}, n)$-set-complementation to constructing and querying the data structure.
    Consider an input $(x_1, \dots, x_n)$ for this problem.
    We define the function $f : [R + 1] \rightarrow [R + 1]$ as $f(i) = x_i$ if $i \in [n]$ and otherwise $f(i) = R + 1$.
    Given random access to $(x_1, \dots, x_n)$, we can evaluate the function in constant time (without storing additional information).
    We preprocess $f$ into a function inversion data structure of size $\newS = o(N)$ bits.

    Now we show how to compute the set complement.
    Let $\newT$ be the worst-case time to answer a batch of $b$ queries. 
    For every $i \in [1, \ceil{R / b}]$, we invert the batch $(B - b, B]$ with $B = \min(R,ib)$. For every $j \in (B - b, B]$, we report that $j$ is in the set complement if and only if the inversion query did not report a preimage in $[n]$.
    Hence, we report the entire set complement in $\cO(R\newT / b + R)$ time.
    By \cref{lem:ram_set_comp}, and due to $\newS = o(N) = o(n)$, we have $R\newT / b + R = \Omega(N^2 / \newS)$.
    Because of $R = N - 1 \notin \Omega(N^2 / \newS)$, this implies $\newT = \Omega(Nb / \newS)$. 
\end{proof}

\begingroup
    \let\olddollar\textdollar
    \let\oldhash\#
    \def\${{\texttt\olddollar}}%
    \def\#{{\texttt\oldhash}}%
    \let\phi\unvarphi

\begin{theorem}\label{lem:rlz_lower_fixed_text_length}\let\phi\unvarphi
    Fix positive integers $m,z,\phi$.
    On a word RAM of word-width $\cO(\log m)$, consider an algorithm that preprocesses a read-only reference $R[1\dd m]$ into a data structure of size $\newS = o(m / \phi^2)$ bits. 
    Then, given text $T[1\dd 48z\phi]$ satisfying $\absolute{\rlz(R, T)} \geq z$, it outputs an $R$-factorization of $T$ 
    of size at most $\phi \cdot \absolute{\rlz(R, T)}$ 
    in left-to-right order (each phrase as a fragment of $T$ without providing an occurrence in~$R$), using $\newS$ bits of additional space. 
    
    If the preprocessing succeeds with probability $\geq 1/100$ such that afterwards the text processing succeeds for every text, then the text processing takes $\Omega(zm / (\phi^3\newS))$ time in the worst-case.
\end{theorem}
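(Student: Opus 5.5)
We plan to reduce $(\alpha,\beta,\gamma,n)$-set-complementation to the approximate RLZ task and then apply \cref{lem:ram_set_comp}. Given an instance $(x_1,\dots,x_n)\in[R']^n$ with $R'=n+\alpha n$, we encode every value $v\in[R']$ as a fixed-length codeword $\mathrm{enc}(v)$ of length $L=\Theta(\log n)$. The codewords are chosen self-delimiting, for instance a binary representation interleaved with a marker bit, so that two distinct codewords never overlap inside a concatenation. The reference is
\[R=\#\,\mathrm{enc}(x_1)\,\#\,\mathrm{enc}(x_2)\,\#\cdots\#\,\mathrm{enc}(x_n)\,\#,\]
and random access to $R$ is simulated in $\cO(1)$ time from read-only access to the tuple. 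The parameters are set so that $m=|R|=\Theta(nL)$.

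The text is a query batch. We take candidate values $y_1,y_2,\dots$ from the range $[R']$ and write $T=\bigodot_j (\$\,\mathrm{enc}(y_j)\,\$)^{c}$, where $\$$ does not occur in $R$ and each block is padded so that $|T|=48z\phi$. The number of distinct candidates is about $z$, divided by a constant. The key combinatorial fact is the following. A block for $y_j\notin X$ cannot be covered by few phrases, since $\mathrm{enc}(y_j)$ is not a substring of $R$ and the separators force at least two phrases per such codeword. A block for $y_j\in X$ needs only $\cO(1)$ phrases per copy. The repetition count $c$ (about $\phi$) amplifies the difference. A $\phi$-approximate factorization may spend extra phrases only on $\cO(\phi z)$ positions, so it must split codewords of complement elements in a detectable way for most of them. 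From the streamed phrase boundaries, which are given as fragments of $T$, we then output the $y_j$ whose codeword is cut strictly inside, which yields at least $\gamma\cdot(\text{batch size})$ correct complement elements and at most $\beta$ times as many outputs overall. We choose the candidate lists so that the batches sweep $[R']$, calling the algorithm $\Theta(n/z)$ times, or $\Theta(n\phi/z)$ times once the granularity is accounted for. Before each call we reset the working state, as in the proof of \cref{lem:compressor_or_inverter}.

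For the accounting, the preprocessing of $R$ is paid once. The total output-phase time is the number of batches times the per-text time $t$, that is $\approx (n\phi/z)\,t$. The space is $\newS$ plus $\cO(\log m)$, and $\newS=o(m/\phi^2)$ gives $\newS=o(\gamma n)$ for a suitable $\gamma\approx 1/\phi$. \Cref{lem:ram_set_comp} then yields $\newS\cdot(n\phi/z)\,t=\Omega(\gamma n^2)$, hence $t=\Omega(zn/(\phi^2\newS))=\Omega(zm/(\phi^3\newS))$ after substituting $m\approx nL$ and absorbing the logarithm into the parameter choice. We will have to fix the choice of $\alpha$, $\beta=\gamma/(3\alpha)$ and $\gamma$ carefully, with $\alpha\approx 1/\phi$, to land exactly on the exponent $\phi^3$.

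The main obstacle is the gadget design: making the $\phi$-approximation slack unable to hide membership. This requires separators and repetition such that (i) the optimal RLZ size is at least $z$ and is known up to constants, and (ii) any factorization within a factor $\phi$ must cut inside at least a constant fraction of the non-member codewords while cutting inside few member codewords. We would first try $c=\Theta(\phi)$ repetitions and a counting argument showing that fewer than half of the non-member blocks can avoid an internal cut.
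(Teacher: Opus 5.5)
There is a genuine gap, and it is exactly at the step you flag as the main obstacle. With your gadget, the $\phi$-approximation slack \emph{can} hide membership, and repetition does not help.

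Since $\$$ does not occur in $R$, no phrase crosses a separator, so every copy $\$\,\mathrm{enc}(y)\,\$$ is factorized independently. A member copy costs $3$ phrases in $\rlz(T,R)$ and a non-member copy costs at least $4$. An adversarial algorithm may cut every member codeword once, since both halves of an occurrence are substrings of $R$. That costs one extra phrase per member copy, i.e., at most a factor $4/3$ over the optimum, so for every integer $\phi\ge 2$ it is a valid output. Repeating each block $c$ times scales the optimum and this extra cost by the same factor, so nothing is amplified.

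Your rule ``report $y_j$ if its codeword is cut strictly inside'' then reports all $n+\alpha n$ candidates. But $(\alpha,\beta,\gamma,n)$-set-complementation allows at most $\beta n=\gamma n/(3\alpha)\le n/3$ outputs. Your planned counting argument also targets the wrong side: non-member blocks are always cut. What must be bounded is the number of \emph{member} blocks the factorization can make look like non-members. Because your separators are literals in every copy, the budget $\phi\cdot\absolute{\rlz(T,R)}$ is large enough to fake all of them.

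The missing idea is an asymmetric gadget. In the optimum a member must cost $\cO(1)$ phrases, while a non-member is \emph{forced} to cost $\Theta(\phi)$ phrases. Then each fake costs $\Omega(\phi)$ against a budget of $\phi$ times an optimum dominated by cheap member gadgets.

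The paper gets this by making each value a single symbol, which is allowed since the alphabet may be polynomial on a word RAM of width $\cO(\log m)$. It repeats that symbol inside both strings: $R=x_1^a\$x_2^a\$\cdots\$x_k^a\$\cdots$ and $T=\texttt{1}^a\texttt{2}^a\cdots K^a$, with $a=48\phi$ and $\alpha=1/a$. A member gadget $\sigma^a$ is one phrase and a non-member gadget is $a$ forced literals, so $\absolute{\rlz(T,R)}=2k$. Reporting $\sigma$ iff $\sigma^a$ is split into $a$ phrases outputs every complement element and at most $2k\phi/a<k/3$ elements overall. \Cref{lem:ram_set_comp} with $\gamma=\alpha$ and $k=\Theta(m/\phi)$ then gives $\Omega(zm/(\phi^2\newS))$ per text, after cutting or repeating $T$ to match $z$. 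This implies the stated bound.

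Your binary encoding causes a second, independent loss. It makes $n=\Theta(m/\log m)$ rather than $\Theta(m/\phi)$. Then $\newS=o(m/\phi^2)$ no longer implies $\newS=o(\gamma n)$ unless $\phi\gtrsim\log m$, and that is what you need to rule out the first alternative in \cref{lem:ram_set_comp}. The final bound also carries a $1/\log m$ factor that cannot be absorbed into the parameters. Likewise, with $c\approx\phi$ your blocks have length about $\phi\log m$. That gives only about $z/\log m$ candidates per text, not $z/\cO(1)$, which also endangers the requirement $\absolute{\rlz(R,T)}\ge z$ for small $\phi$.
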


\begin{proof}
    Let $a = 48\phi$, $\alpha = 1/(48\phi)$, and $k = \floor{m/(2a^2)} \cdot a$.
    If $m < 2a^2$ then the claimed lower bound is trivial. Hence, assume $m \geq 2a^2$ and thus $k \geq a$.
    We will reduce $(\alpha, 1/3, \alpha, k)$-set-complementation to the problem of computing, for some reference $R$ and some text $T$, an $R$-factorization of $T$ of size at most $\phi \cdot \absolute{\rlz(R, T)}$. The input is an array of read-only variables $x_1, \dots, x_k$. Let $X = \{x_1, \dots, x_k\}$, then we have to report a superset of $[K] \setminus X$, where $K = k + \alpha k$.
    The total number of reported elements may not exceed $k / 3 = 16k\phi/a$.
    We define the reference 
    \[R[1\dd m] = x_1^{a} \cdot \$ \cdot x_2^{a} \cdot \$ \cdots \$ \cdot x_k^{a} \cdot \$^{m - (a + 1) \cdot k + 1}.\]
    Note that $(a + 1) \cdot k \leq 2ak = 96\phi k \leq m$, i.e.,~$R$~is indeed well-defined.
    For now, we focus on the case in which $z = K$. We generalize the construction for arbitrary $z$ later.
    We define the text
    \[ T[1\dd aK] = \texttt{1}^{a} \cdot \texttt{2}^{a} \cdots {K}^{a}. \]

    Note that, given random access to $x_1, \dots, x_k$, we can simulate constant time random access to~$R$ and $T$.
    The size of $\rlz(T,R)$ is exactly $2k \geq K$: a gadget $\sigma^{a}$ becomes a reference if and only if $\sigma \in X$, resulting in $k$ references; if $\sigma$ is one of the $\alpha k$ elements of $[K] \setminus X$, then the gadget is factorized into $a$ literals, resulting in $a \cdot \alpha k = k$ literals.
    Therefore, in any $\phi$-approximate factorization, there are at most $2k\phi$ phrases.
    
    Now assume that we receive some $\phi$-approximate factorization as a stream.
    If some gadget $\sigma^{a}$ gets factorized into fewer than~$a$ phrases (i.e., at least two consecutive $\sigma$s are in the same reference), then we know that $\sigma \in X$, and we do not report $\sigma$. If, however, the gadget gets factorized into $a$ literals, then we do not know with certainty if $\sigma \in X$, and we do report~$\sigma$.
    It is clear that we indeed report the $\alpha k$ elements of $[K] \setminus X$.
    To count the number of false positives, we recall that there are at most $2k\phi$ phrases. This implies that at most $2k\phi / a < k/3$ elements are reported overall.
    
    Hence, if we compute an $R$-factorization of $T$ of size at most $\phi \cdot \absolute{\rlz(R, T)}$ in $\newS$ bits of space, then we also solve $(\alpha, 1/3, \alpha, k)$-set-complementation in $\newS$ bits of space, and \cref{lem:ram_set_comp} implies that the time is $\Omega(\alpha k^2 / \newS)$ in the worst case. (Note that we can indeed apply \cref{lem:ram_set_comp} because of $\log k = \Theta(\log(m))$.)
    Substituting $\alpha = \Omega(1/\phi)$ and $k = \Omega(m/\phi)$, the lower bound is $\Omega(km / (\phi^2 \newS))$.
    If $z = K = \Theta(k)$, then the bound is $\Omega(zm / (\phi^2 \newS))$.

    \subparagraph{\boldmath Lower bound for $z < K$.\unboldmath} 
    
    We cut $T$ into $\ceil{K/z}$ pieces, each of length exactly $az$, with the final piece possibly shorter. If the final piece is shorter, then we pad it with a prefix of $T$ such that it is also of length $az$.
    Hence, each piece consists of exactly $z$ substrings of the form $\sigma^a$, and $\absolute{\rlz(T,R)} = z$.
    Over all the pieces, the number of phrases is at most $4k$: in the entire $T$, we had exactly $2k$ phrases, and padding the final piece at most doubles the number of phrases.
    Given random access to $x_1, \dots, x_k$, we can simulate constant time random access to each piece.

    Assume that some algorithm produces a $\phi$-approximate factorization of a piece as a stream, using~$\newS$ bits of space. In a moment, we will show that we can assume without loss of generality that the algorithm does not modify the precomputed data structure during the text processing phase. 
    Hence, without rerunning the preprocessing, we can run the algorithm for each piece.
    We use the same strategy as before, outputting $\sigma$ if and only if $\sigma^a$ got factorized into $a$ literals.
    The number of reported elements over all pieces is at most $4k\phi / a < k / 3$, and thus we still solve set-complementation.
    Therefore, the total time for processing all pieces must be $\Omega(km / (\phi^2 \newS))$, and a worst-case piece requires $\Omega(z/K \cdot km / (\phi^2 \newS)) = \Omega(zm / (\phi^2 \newS))$ time.

    It remains to be shown that we indeed do not need to rerun the preprocessing.
    During preprocessing, the algorithm computes a data structure of size $\newS$ bits.
    We show that, without significantly increasing the space or time usage of the algorithm for computing an $R$-factorization of $T$, we can restore this data structure after the text processing phase.
    During the first $\ceil{\newS / \log m}$ steps of the text processing phase, we explicitly keep track of all the changes made to the data structure.
    This history of operations can be stored in $\cO(\newS)$ bits of space.
    After $\ceil{\newS / \log m}$ steps, we pause the text processing phase.
    We create a copy of the current state of the data structure in $\cO(\newS / \log m)$ time and $\cO(\newS)$ bits of space.
    Then, we continue the text processing phase.
    Once the entire text has been processed, we can restore the state of the data structure by reverting the history of changes on either the copy created after $\ceil{\newS / \log m}$ steps or the original data structure.
    In either case, we increase both time and space by at most a constant factor.

    \subparagraph{\boldmath Lower bound for $z > K$.\unboldmath}
    We use the text $T'[1\dd az] = T^\infty[1\dd az]$, consisting of $z$ gadgets of the form $\sigma^a$.
    The size of $\rlz(T',R)$ is at least $z$ and at most $2k \cdot \ceil{z/K} \leq 2k \cdot \ceil{z/k} \leq 4z$.
    Given random access to $x_1, \dots, x_k$, we can simulate constant-time random access to $T'$.

    We analyze an $R$-factorization of $T'$ which has at most $\phi \cdot \absolute{\rlz(T',R)}$ phrases. 
    There are $\floor{z / K}$ occurrences of $T$ in $T'$.
    Since $T'$ gets factorized into at most $4z\phi$ phrases, there are at most $z/(2K)$ occurrences of $T$ that get factorized into more than $8K\phi \leq 16k\phi$ phrases.
    We focus on the at least $\ceil{z/(2K)}$ occurrences of $T$ that are factorized into at most $16k\phi$ phrases.
    For each such occurrence $T'[i\dd i + aK) = T$, we can define an algorithm for solving set-complementation as follows.
    We run the preprocessing for computing the $R$-factorization of $T'$, and then run the text processing phase until we output the phrase ending at position $T[i - 1]$. (Indeed, no phrase can stretch across two occurrences of $T$.)
    This concludes the preprocessing of the set-complementation algorithm.

    For the output phase of set-complementation, we continue running the text processing phase of the algorithm for computing the $R$-factorization of $T'$ until we receive the phrase that ends at position $T'[i + aK - 1]$, i.e., we obtain a factorization of $T$ that consists of at most $16k\phi$ phrases as a stream.
    Using this stream, we can solve set-complementation as before, reporting some element $\sigma$ if and only if $\sigma^a$ gets factorized into $a$ literals.
    It follows that producing this section of the text processing phase takes $\Omega(km / (\phi^2 \newS))$ time. 
    We stress that this lower bound holds for \emph{each} occurrence of $T$ that gets factorized into at most $16k\phi$ phrases; if there is a single occurrence that takes less time, then we also solve set-complementation in less time.
    We multiply the lower bound with the number $\ceil{z/(2K)}$ of considered occurrences, resulting in the lower bound $\Omega(z / K \cdot km / (\phi^2 \newS)) = \Omega(zm / (\phi^2 \newS))$.
\end{proof}

\begin{corollary}\label{cor:rlz_lb}
    Fix positive integers $n,m,z,\phi$ such that $100z\phi \leq n \leq zm$.
    On a word RAM of word-width $\cO(\log m)$, consider an algorithm that preprocesses a read-only reference $R[1\dd m]$ into a data structure of size $\newS = o(m / \phi^2)$ bits. 
    Then, given text $T[1\dd n]$ satisfying $\absolute{\rlz(R, T)} \in [z, 100z\phi]$, it outputs an $R$-factorization of $T$ of size at most $\phi \cdot \absolute{\rlz(R, T)}$ as a stream in left-to-right order (each phrase as a fragment of $T$ without providing an occurrence in~$R$), using $\newS$ bits of additional space. 
    
    If the preprocessing succeeds with probability $\geq 1/100$ such that afterwards the text processing succeeds for every text, then the text processing takes $\Omega(zm / (\phi^2\newS))$ time in the worst-case.
\end{corollary}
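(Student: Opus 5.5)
We derive \cref{cor:rlz_lb} from \cref{lem:rlz_lower_fixed_text_length} by a padding argument. The idea is to take the hard instance for text length $48z'\phi$, for a suitable $z' = \Theta(z)$, and extend it to an arbitrary length $n$. The extension must not change the RLZ size by more than a constant factor, and it must not make the factorization task any easier.

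\textbf{Choosing the core instance.} Set $z' = z$ and use the reduction from \cref{lem:rlz_lower_fixed_text_length}. That reduction builds a reference $R$ from the set-complementation input and a text piece $T_0$ of length $48z\phi$ whose RLZ size lies in $[z, 4z]$.

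\textbf{Padding the text.} We must reach length $n$, where $48z\phi \le 100z\phi \le n \le zm$. The padding should add only $\cO(z\phi)$ phrases to the RLZ factorization and should be easy to factorize independently. We fill the remaining $n - 48z\phi$ positions with copies of a long substring of $R$, namely the run $\$^{m-(a+1)k+1}$ at the end of $R$, or repeated blocks $x_1^a$ if that run is short. Each block of length at most $m$ costs one phrase. Since $n \le zm$, we need at most $\lceil n/m\rceil \le z$ extra blocks, hence at most $z$ extra phrases. We separate the padding from $T_0$ so that no phrase can span the boundary between them. For instance, we put the padding after $T_0$ and begin it with a symbol $\#$ that occurs once in $R$, placed in the filler region. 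This keeps $|\rlz(R,T)| \in [z, 5z+1] \subseteq [z, 100z\phi]$, as the corollary requires.

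\textbf{Transferring the bound.} Any $\phi$-approximate factorization of the padded text restricts, on the prefix $T_0$, to an $R$-factorization of $T_0$ with at most $\phi\cdot\cO(z)$ phrases. Phrases that cross the boundary can be cut there, which adds at most one phrase. The set-complementation decoding in the proof of \cref{lem:rlz_lower_fixed_text_length} tolerates a constant factor more phrases. Concretely, we rerun that proof with $a = c\phi$ for a larger constant $c$, so that the number of false positives stays below $k/3$. The algorithm streams phrases left to right, so we stop as soon as the phrase covering the end of $T_0$ is emitted. This yields the $\Omega(zm/(\phi^2\newS))$ bound on the time spent processing $T_0$, which is a lower bound on the worst-case time for the full text.

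\textbf{Main obstacle.} The delicate step is the constant-factor bookkeeping. The RLZ size must stay within $[z, 100z\phi]$. The phrase count inherited on $T_0$ must stay below the false-positive budget $k/3$ of $(\alpha,1/3,\alpha,k)$-set-complementation, which means retuning $a$ and $\alpha$ to constants times $\phi$. We must also check that the extra $\#$ and $\$$ symbols keep the alphabet polynomial in $m$, so that $\log k = \Theta(\log m)$ still holds when we apply \cref{lem:ram_set_comp}. The case split $z<K$ versus $z>K$ carries over verbatim from the proof of \cref{lem:rlz_lower_fixed_text_length}.
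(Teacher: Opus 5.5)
Your overall plan matches the paper's: take the hard text from \cref{lem:rlz_lower_fixed_text_length}, append material from $R$ that is cheap to factorize, and restrict a $\phi$-approximate factorization of the padded text to the hard prefix. However, the whole content of this corollary is the constant bookkeeping, and yours has concrete errors.

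First, the padding count is wrong. The filler run at the end of $R$ has length $m-(a+1)k+1$. Since $(a+1)k \le m/2 + m/(2a)$ and this bound is attained, the run can be only about $m/2$ long. Filling up to $n-48z\phi \approx zm$ positions with copies of it therefore costs up to about $2z$ phrases, plus one for the separator. Your inference ``blocks of length at most $m$, hence at most $\ceil{n/m}$ blocks'' runs the wrong way: you need blocks of length \emph{at least} about $m$. The fallback with blocks $x_1^a$ would cost $\Theta(n/a)$ phrases, although it is never actually triggered. Second, $\absolute{\rlz(T_0,R)}\in[z,4z]$ holds only in the case $z\ge K$. For $z<K$ a single piece can have up to $az$ phrases, so your range $[z,5z+1]$ is false. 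This is harmless, since $\le 100z\phi$ still follows from the text length.

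The first error matters, because your remedy collides with a constraint you never list: the hard text has length $az$ and must fit inside $T$, yet $n$ may be as small as $100z\phi$. With about $2z+1$ extra phrases, a $\phi$-approximate factorization of $T$ restricts only to roughly a $4\phi$-approximation of $T_0$. Applying \cref{lem:rlz_lower_fixed_text_length} as a black box with $\phi'\approx 4\phi$ would then need a text of length about $192z\phi>n$. Likewise, ``$a=c\phi$ for a larger constant $c$'' is admissible only for $c\le 100$ (slightly less, to keep $\absolute{\rlz(T,R)}\le 100z\phi$), and you do not check this. The case $z<K$ also does not carry over verbatim: each of the $\ceil{K/z}$ pieces gets its own padding, which adds $\Theta(K)$ phrases to the false-positive budget of the set-complementation decoding. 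A careful recount inside that proof does close the gap (the slack there suffices even with $a=48\phi$), but your proposal does not establish it.

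The paper avoids all of this in three steps:
\begin{itemize}
\item It pads with $R^\infty[1\dd n-96z\phi]$, whose greedy factorization uses at most $\ceil{n/m}\le z$ phrases, namely whole copies of $R$.
\item It compares this cost to $z_0=\absolute{\rlz(T',R)}\ge z$ rather than to $z$, so $\absolute{\rlz(T,R)}\le z_0+z\le 2z_0$. Restricting to $T'$ by cutting at the boundary then gives a $2\phi$-approximation. No separator is needed, because a prefix of a substring of $R$ is a substring of $R$.
\item It applies the theorem as a black box with $\phi'=2\phi$; the hard text has length $96z\phi\le n$, and $\absolute{\rlz(T,R)}\le 96z\phi+z<100z\phi$.
\end{itemize}
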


\begin{proof}
    Let $\phi' = 2\phi$. Consider a reference $R$ of length $m$ and a text $T'$ of length $48z\phi'$. 
    By \cref{lem:rlz_lower_fixed_text_length}, computing an $R$-factorization of $T'$ of size at most  $\phi' \cdot \absolute{\rlz(R, T)}$ takes $\Omega(zm / (\phi^2\newS))$ text processing time in the worst case (with the same assumptions on preprocessing and space as in the claim).
    Particularly, the lower bound is obtained using a text $T'[1\dd 48z\phi']$ for which the size of $\rlz(T',R)$ is $z' \in [z, 48z\phi']$.
    We define $T[1\dd n] = T'[1\dd 48z\phi'] \cdot R^\infty[1\dd n - 48z\phi']$.
    Trivially, $R^\infty[1\dd n - 48z\phi']$ can be factorized into at most $n/m \leq z$ phrases, and the exact factorization of~$T$ consists of $z' + z \leq 48z\phi' + z < 100z\phi$ phrases. If we have an $R$-factorization of $T$ of size at most $\phi \cdot (z' + z) \leq 2\phi z' = \phi'z'$, it immediately reveals a $\phi'$-approximate factorization of $T'$.
    It follows that computing an $R$-factorization of $T$ of size at most $\phi \cdot \absolute{\rlz(T,R)}$ takes at least $\Omega(zm / (\phi^2\newS))$ text processing time in the worst case.
\end{proof}

\endgroup

\bibliographystyle{plainurl}
\bibliography{main}

@inproceedings{10.1145/3357713.3384266,
author = {Chan, Timothy M. and Golan, Shay and Kociumaka, Tomasz and Kopelowitz, Tsvi and Porat, Ely},
title = {Approximating text-to-pattern Hamming distances},
year = {2020},
isbn = {9781450369794},
doi = {10.1145/3357713.3384266},
booktitle = {Proceedings of STOC 2020},
pages = {643–656},
numpages = {14}
}

@article{DBLP:journals/algorithmica/GolanKP19,
  author       = {Shay Golan and
                  Tsvi Kopelowitz and
                  Ely Porat},
  title        = {Streaming Pattern Matching with d Wildcards},
  journal      = {Algorithmica},
  volume       = {81},
  number       = {5},
  pages        = {1988--2015},
  year         = {2019},
  url          = {https://doi.org/10.1007/s00453-018-0521-7},
  doi          = {10.1007/S00453-018-0521-7},
  bibsource    = {dblp computer science bibliography, https://dblp.org}
}

@inproceedings{DBLP:conf/latin/AyadLPV24,
  author       = {Lorraine A. K. Ayad and
                  Grigorios Loukides and
                  Solon P. Pissis and
                  Hilde Verbeek},
  title        = {Sparse Suffix and {LCP} Array: Simple, Direct, Small, and Fast},
  booktitle    = {Proc.~of {LATIN} {(1)}},
  pages        = {162--177},
  year         = {2024},
  doi          = {10.1007/978-3-031-55598-5_11},
}

@article{GABORY2025105296,
title = {Elastic-degenerate string comparison},
journal = {Inf. Comput.},
volume = {304},
pages = {105296},
year = {2025},
issn = {0890-5401},
doi = {https://doi.org/10.1016/j.ic.2025.105296},
author = {Estéban Gabory and Moses Njagi Mwaniki and Nadia Pisanti and Solon P. Pissis and Jakub Radoszewski and Michelle Sweering and Wiktor Zuba}
}

@misc{mai2021optimalspacetimestreaming,
      title={Optimal Space and Time for Streaming Pattern Matching}, 
      author={Tung Mai and Anup Rao and Ryan A. Rossi and Saeed Seddighin},
      year={2021},
      eprint={2107.04660},
      archivePrefix={arXiv},
      primaryClass={cs.DS},
      url={https://arxiv.org/abs/2107.04660}, 
}

@article{AMIR201534,
title = {Dictionary matching with a few gaps},
journal = {Theor. Comput. Sci},
volume = {589},
pages = {34-46},
year = {2015},
issn = {0304-3975},
doi = {https://doi.org/10.1016/j.tcs.2015.04.011},
author = {Amihood Amir and Avivit Levy and Ely Porat and B. Riva Shalom}
}

@InProceedings{10.1007/978-3-642-20662-7_7,
  author       = {Tuukka Haapasalo and
                  Panu Silvasti and
                  Seppo Sippu and
                  Eljas Soisalon{-}Soininen},
  title        = {Online Dictionary Matching with Variable-Length Gaps},
  booktitle    = {Proc.~of {SEA}},
  pages        = {76--87},
  year         = {2011},
  doi          = {10.1007/978-3-642-20662-7_7},
}

@article{DBLP:journals/algorithmica/AmirKLPPS19,
  author       = {Amihood Amir and
                  Tsvi Kopelowitz and
                  Avivit Levy and
                  Seth Pettie and
                  Ely Porat and
                  B. Riva Shalom},
  title        = {Mind the Gap! - Online Dictionary Matching with One Gap},
  journal      = {Algorithmica},
  volume       = {81},
  number       = {6},
  pages        = {2123--2157},
  year         = {2019},
  doi          = {10.1007/S00453-018-0526-2},
  bibsource    = {dblp computer science bibliography, https://dblp.org}
}

@article{DBLP:journals/ir/FredrikssonG08,
  author       = {Kimmo Fredriksson and
                  Szymon Grabowski},
  title        = {Efficient algorithms for pattern matching with general gaps, character
                  classes, and transposition invariance},
  journal      = {Inf. Retr.},
  volume       = {11},
  number       = {4},
  pages        = {335--357},
  year         = {2008},
  doi          = {10.1007/S10791-008-9054-Z},
  bibsource    = {dblp computer science bibliography, https://dblp.org}
}

@inproceedings{DBLP:conf/lata/FredrikssonG09,
  author       = {Kimmo Fredriksson and
                  Szymon Grabowski},
  title        = {Nested Counters in Bit-Parallel String Matching},
  booktitle    = {Proc.~of {LATA}},
  volume       = {5457},
  pages        = {338--349},
  publisher    = {Springer},
  year         = {2009},
  doi          = {10.1007/978-3-642-00982-2_29},
  bibsource    = {dblp computer science bibliography, https://dblp.org}
}

@inproceedings{Lee03,
author = {I.Lee and A.Apostolico and C.S.Iliopoulos and K.Park},
title = {Finding approximate occurrences of a pattern that contains gaps},
booktitle = {Proc.~of AWOCA},
year = {2003},
pages = {89–-100}
}

@article{DBLP:journals/jcb/MorgantePVZ05,
  author       = {Michele Morgante and
                  Alberto Policriti and
                  Nicola Vitacolonna and
                  Andrea Zuccolo},
  title        = {Structured Motifs Search},
  journal      = {J. Comput. Biol.},
  volume       = {12},
  number       = {8},
  pages        = {1065--1082},
  year         = {2005},
  doi          = {10.1089/CMB.2005.12.1065},
  bibsource    = {dblp computer science bibliography, https://dblp.org}
}

@inproceedings{DBLP:conf/cocoon/RahmanILMS06,
  author       = {M. Sohel Rahman and
                  Costas S. Iliopoulos and
                  Inbok Lee and
                  Manal Mohamed and
                  William F. Smyth},
  title        = {Finding Patterns with Variable Length Gaps or Don't Cares},
  booktitle    = {Proc.~of {COCOON}},
  pages        = {146--155},
  year         = {2006},
  doi          = {10.1007/11809678_17},
}

@article{10.5555/643002.643006,
author = {Maxime Crochemore and Costas Iliopoulos and Christos Makris and Wojciech Rytter and  Athanasios Tsakalidis and Kostas Tsichlas},
title = {Approximate string matching with gaps},
year = {2002},
issue_date = {Spring 2002},
publisher = {Publishing Association Nordic Journal of Computing},
address = {FIN},
volume = {9},
number = {1},
issn = {1236-6064},
journal = {Nordic J. of Computing},
month = mar,
pages = {54–65},
numpages = {12}
}

@article{BILLE201225,
title = {String matching with variable length gaps},
journal = {Theor. Comput. Sci},
volume = {443},
pages = {25-34},
year = {2012},
issn = {0304-3975},
doi = {https://doi.org/10.1016/j.tcs.2012.03.029},
author = {Philip Bille and Inge Li Gørtz and Hjalte Wedel Vildhøj and David Kofoed Wind}
}

@article{ILIOPOULOS2021104616,
title = {Efficient pattern matching in elastic-degenerate strings},
journal = {Inf. Comput.},
volume = {279},
pages = {104616},
year = {2021},
issn = {0890-5401},
doi = {https://doi.org/10.1016/j.ic.2020.104616},
author = {Costas S. Iliopoulos and Ritu Kundu and Solon P. Pissis},
}

@article{doi:10.1137/20M1368033,
author = {Bernardini, Giulia and Gawrychowski, Pawe\l{} and Pisanti, Nadia and Pissis, Solon P. and Rosone, Giovanna},
title = {Elastic-Degenerate String Matching via Fast Matrix Multiplication},
journal = {SIAM J. Comput},
volume = {51},
number = {3},
pages = {549-576},
year = {2022},
doi = {10.1137/20M1368033}
}

@inproceedings{DBLP:conf/biostec/ProchazkaCK021,
  author       = {Petr Proch{\'{a}}zka and
                  Ondrej Cvacho and
                  Lubos Krc{\'{a}}l and
                  Jan Holub},
  title        = {Backward Pattern Matching on Elastic Degenerate Strings},
  booktitle    = {Proc.~of {BIOSTEC}, Volume 3: BIOINFORMATICS,
                  Online Streaming},
  pages        = {50--59},
  publisher    = {{SCITEPRESS}},
  year         = {2021},
  doi          = {10.5220/0010243600500059},
  bibsource    = {dblp computer science bibliography, https://dblp.org}
}

@article{DBLP:journals/mst/BernardiniGPSSZ24,
  author       = {Giulia Bernardini and
                  Est{\'{e}}ban Gabory and
                  Solon P. Pissis and
                  Leen Stougie and
                  Michelle Sweering and
                  Wiktor Zuba},
  title        = {Elastic-Degenerate String Matching with 1 Error or Mismatch},
  journal      = {Theory Comput. Syst.},
  volume       = {68},
  number       = {5},
  pages        = {1442--1467},
  year         = {2024},
  doi          = {10.1007/S00224-024-10194-8},
}

@InProceedings{gawrychowski_et_al:LIPIcs.CPM.2025.29,
  author =	{Pawe{\l} Gawrychowski and Adam G\'{o}rkiewicz and Pola Marciniak and Solon P. Pissis and Karol Pokorski},
  title =	{{Faster Approximate Elastic-Degenerate String Matching - Part B}},
  booktitle =	{Proc.~of CPM},
  pages =	{29:1--29:21},
  ISBN =	{978-3-95977-369-0},
  ISSN =	{1868-8969},
  year =	{2025},
  URN =		{urn:nbn:de:0030-drops-231236},
  doi =		{10.4230/LIPIcs.CPM.2025.29}
}

@InProceedings{pissis_et_al:LIPIcs.CPM.2025.28,
  author =	{Solon P. Pissis and Jakub Radoszewski and Wiktor Zuba},
  title =	{{Faster Approximate Elastic-Degenerate String Matching - Part A}},
  booktitle =	{Proc.~of CPM},
  pages =	{28:1--28:19},
  ISBN =	{978-3-95977-369-0},
  ISSN =	{1868-8969},
  year =	{2025},
  doi =		{10.4230/LIPIcs.CPM.2025.28}
}

@InProceedings{grossi_et_al:LIPIcs.CPM.2017.9,
  author =	{Roberto Grossi and Costas S. Iliopoulos and Chang Liu and Nadia Pisanti and Solon P. Pissis and Ahmad Retha and Giovanna Rosone and Fatima Vayani and Luca Versari},
  title =	{{On-Line Pattern Matching on Similar Texts}},
  booktitle =	{Proc.~of CPM},
  pages =	{9:1--9:14},
  ISBN =	{978-3-95977-039-2},
  ISSN =	{1868-8969},
  year =	{2017},
  volume =	{78},
  doi =		{10.4230/LIPIcs.CPM.2017.9}
}

@article{BERNARDINI2020109,
title = {Approximate pattern matching on elastic-degenerate text},
journal = {Theor. Comput. Sci},
volume = {812},
pages = {109-122},
year = {2020},
note = {In memoriam Danny Breslauer (1968-2017)},
issn = {0304-3975},
doi = {https://doi.org/10.1016/j.tcs.2019.08.012},
author = {Giulia Bernardini and Nadia Pisanti and Solon P. Pissis and Giovanna Rosone}
}

@InProceedings{aoyama_et_al:LIPIcs.CPM.2018.9,
  author =	{Aoyama, Kotaro and Nakashima, Yuto and I, Tomohiro and Inenaga, Shunsuke and Bannai, Hideo and Takeda, Masayuki},
  title =	{{Faster Online Elastic Degenerate String Matching}},
  booktitle =	{Proc.~of CPM},
  pages =	{9:1--9:10},
  ISBN =	{978-3-95977-074-3},
  ISSN =	{1868-8969},
  year =	{2018},
  address =	{Dagstuhl, Germany},
  URN =		{urn:nbn:de:0030-drops-87016},
  doi =		{10.4230/LIPIcs.CPM.2018.9}
}

@article{DBLP:journals/tcs/GalM07,
  author       = {Anna G{\'{a}}l and
                  Peter Bro Miltersen},
  title        = {The cell probe complexity of succinct data structures},
  journal      = {Theor. Comput. Sci.},
  volume       = {379},
  number       = {3},
  pages        = {405--417},
  doi          = {10.1016/J.TCS.2007.02.047},
  year         = {2007}
}

@inproceedings{DBLP:conf/tcc/Corrigan-GibbsK19,
  author       = {Henry Corrigan{-}Gibbs and
                  Dmitry Kogan},
  title        = {The Function-Inversion Problem: Barriers and Opportunities},
  booktitle    = {Proc.~of {TCC} {(1)}},
  volume       = {11891},
  pages        = {393--421},
  publisher    = {Springer},
  doi          = {10.1007/978-3-030-36030-6_16},
  year         = {2019}
}

@Article{DBLP:journals/bioinformatics/NavarroSMG19,
  author    = {Gonzalo Navarro and Victor Sepulveda and Mauricio Mar{\'{\i}}n and Sen{\'{e}}n Gonz{\'{a}}lez},
  title     = {Compressed filesystem for managing large genome collections},
  journal   = {Bioinformatics},
  year      = {2019},
  volume    = {35},
  number    = {20},
  pages     = {4120--4128},
  bibsource = {dblp computer science bibliography, https://dblp.org},
  doi       = {10.1093/BIOINFORMATICS/BTZ192},
}

@Article{DBLP:journals/bmcgenomics/ValenzuelaNVPM18,
  author    = {Daniel Valenzuela and Tuukka Norri and Niko V{\"{a}}lim{\"{a}}ki and Esa Pitk{\"{a}}nen and Veli M{\"{a}}kinen},
  title     = {Towards pan-genome read alignment to improve variation calling},
  journal   = {{BMC} Genom.},
  year      = {2018},
  volume    = {19},
  number    = {{S2}},
  bibsource = {dblp computer science bibliography, https://dblp.org},
  doi       = {10.1186/S12864-018-4465-8},
}

@Article{DBLP:journals/bioinformatics/DeorowiczDG13,
  author    = {Sebastian Deorowicz and Agnieszka Danek and Szymon Grabowski},
  title     = {Genome compression: a novel approach for large collections},
  journal   = {Bioinformatics},
  year      = {2013},
  volume    = {29},
  number    = {20},
  pages     = {2572--2578},
  bibsource = {dblp computer science bibliography, https://dblp.org},
  doi       = {10.1093/BIOINFORMATICS/BTT460},
}

@Article{10.1093/bioinformatics/btr505,
  author     = {Sebastian Deorowicz and Szymon Grabowski},
  title      = {Robust relative compression of genomes with random access},
  journal    = {Bioinformatics},
  year       = {2011},
  volume     = {27},
  number     = {21},
  pages      = {2979–2986},
  month      = nov,
  issn       = {1367-4803},
  address    = {USA},
  doi        = {10.1093/bioinformatics/btr505},
  issue_date = {November 2011},
  numpages   = {8},
  publisher  = {Oxford University Press, Inc.},
}

@InProceedings{10.1007/978-3-642-16321-0_20,
  author    = {Shanika Kuruppu and Simon J. Puglisi and Justin Zobel},
  title     = {Relative Lempel-Ziv Compression of Genomes for Large-Scale Storage and Retrieval},
  booktitle = {Proc.~of {SPIRE}},
  year      = {2010},
  pages     = {201--206},
  bibsource = {dblp computer science bibliography, https://dblp.org},
  doi       = {10.1007/978-3-642-16321-0_20},
}

@InCollection{Iliopoulos2017,
  author    = {Iliopoulos, Costas S. and Pissis, Solon P. and Rahman, M. Sohel},
  title     = {Searching and Indexing Circular Patterns},
  booktitle = {Algorithms for Next-Generation Sequencing Data: Techniques, Approaches, and Applications},
  publisher = {Springer},
  year      = {2017},
  pages     = {77--90},
  isbn      = {978-3-319-59826-0},
  doi       = {10.1007/978-3-319-59826-0_3},
}

@Article{FREDRIKSSON2009579,
  author  = {Kimmo Fredriksson and Szymon Grabowski},
  title   = {Average-optimal string matching},
  journal = {J. Discrete Algorithms},
  year    = {2009},
  volume  = {7},
  number  = {4},
  pages   = {579-594},
  issn    = {1570-8667},
  doi     = {https://doi.org/10.1016/j.jda.2008.09.001},
}

@InProceedings{10.1007/978-3-319-02309-0_59,
  author    = {Robert Susik and Szymon Grabowski and Sebastian Deorowicz},
  title     = {Fast and Simple Circular Pattern Matching},
  booktitle = {Proc.~of {ICMMI} (Man-Machine Interactions 3)},
  year      = {2013},
  pages     = {537--544},
  bibsource = {dblp computer science bibliography, https://dblp.org},
  doi       = {10.1007/978-3-319-02309-0_59},
}

@Article{10.1093/comjnl/bxt023,
  author   = {Kuei-Hao Chen and Guan-Shieng Huang and Richard Chia-Tung Lee},
  title    = {{Bit-Parallel Algorithms for Exact Circular String Matching}},
  journal  = {Comput. J.},
  year     = {2013},
  volume   = {57},
  number   = {5},
  pages    = {731-743},
  month    = {03},
  issn     = {0010-4620},
  doi      = {10.1093/comjnl/bxt023},
}

@InProceedings{10.1007/978-3-642-25591-5_69,
  author    = {Hon, Wing-Kai and Lu, Chen-Hua and Shah, Rahul and Thankachan, Sharma V.},
  title     = {Succinct Indexes for Circular Patterns},
  booktitle = {Proc.~of ISAAC},
  year      = {2011},
  pages     = {673--682},
  doi       = {10.1007/978-3-642-25591-5_69},
  isbn      = {978-3-642-25591-5},
}

@InProceedings{10.1007/978-3-642-38905-4_15,
  author    = {Hon, Wing-Kai and Ku, Tsung-Han and Shah, Rahul and Thankachan, Sharma V.},
  title     = {Space-Efficient Construction Algorithm for the Circular Suffix Tree},
  booktitle = {Proc.~of CPM},
  year      = {2013},
  pages     = {142--152},
  doi       = {10.1007/978-3-642-38905-4_15},
  isbn      = {978-3-642-38905-4},
}

@Article{DBLP:journals/jcss/Charalampopoulos21,
  author    = {Panagiotis Charalampopoulos and Tomasz Kociumaka and Solon P. Pissis and Jakub Radoszewski and Wojciech Rytter and Juliusz Straszynski and Tomasz Walen and Wiktor Zuba},
  title     = {Circular pattern matching with \emph{k} mismatches},
  journal   = {J. Comput. Syst. Sci.},
  year      = {2021},
  volume    = {115},
  pages     = {73--85},
  bibsource = {dblp computer science bibliography, https://dblp.org},
  doi       = {10.1016/J.JCSS.2020.07.003},
}

@InProceedings{DBLP:conf/stoc/Yao90,
  author      = {Andrew Chi{-}Chih Yao},
  title       = {Coherent Functions and Program Checkers (Extended Abstract)},
  booktitle   = {Proc.~of {STOC}},
  year        = {1990},
  pages       = {84--94},
  bibsource   = {dblp computer science bibliography, https://dblp.org},
  doi         = {10.1145/100216.100226},
  rmpublisher = {{ACM}},
}

@InProceedings{asymED,
  author      = {Andoni, Alexandr and Krauthgamer, Robert and Onak, Krzysztof},
  title       = {Polylogarithmic Approximation for Edit Distance and the Asymmetric Query Complexity},
  booktitle   = {Proc.~of {FOCS}},
  year        = {2010},
  pages       = {377-386},
  doi         = {10.1109/FOCS.2010.43},
  issn        = {0272-5428},
  rmpublisher = {{IEEE} Computer Society},
}

@InProceedings{DBLP:conf/stoc/KempaK19,
  author      = {Dominik Kempa and Tomasz Kociumaka},
  title       = {String synchronizing sets: sublinear-time {BWT} construction and optimal {LCE} data structure},
  booktitle   = {Proc.~of {STOC}},
  year        = {2019},
  pages       = {756--767},
  doi         = {10.1145/3313276.3316368},
  rmpublisher = {{ACM}},
}

@Article{DBLP:journals/siamdm/SchmidtSS95,
  author    = {Jeanette P. Schmidt and Alan Siegel and Aravind Srinivasan},
  title     = {Chernoff--{H}oeffding Bounds for Applications with Limited Independence},
  journal   = {{SIAM} J. Discrete Math.},
  year      = {1995},
  volume    = {8},
  number    = {2},
  pages     = {223--250},
  bibsource = {dblp computer science bibliography, https://dblp.org},
  doi       = {10.1137/S089548019223872X},
}

@InProceedings{keditCPM,
  author    = {Panagiotis Charalampopoulos and Solon P. Pissis and Jakub Radoszewski and Wojciech Rytter and Tomasz Walen and Wiktor Zuba},
  title     = {Approximate Circular Pattern Matching Under Edit Distance},
  booktitle = {Proc.~of {STACS}},
  year      = {2024},
  pages     = {24:1--24:22},
  doi       = {10.4230/LIPICS.STACS.2024.24},
}

@Book{Gusfield_1997,
  title     = {Algorithms on Strings, Trees, and Sequences: Computer Science and Computational Biology},
  publisher = {Cambridge University Press},
  year      = {1997},
  author    = {Gusfield, Dan},
  doi       = {10.1017/CBO9780511574931},
  place     = {Cambridge},
}

@Article{AKS,
  author    = {Manindra Agrawal and Neeraj Kayal and Nitin Saxena},
  title     = {{PRIMES} is in {P}},
  journal   = {Ann. Math.},
  year      = {2004},
  volume    = {160},
  number    = {2},
  pages     = {781--793},
  issn      = {0003486X},
  doi       = {10.4007/annals.2004.160.781},
  publisher = {Annals of Mathematics},
}

@Article{DBLP:journals/tcs/KidaMSTSA03,
  author    = {Takuya Kida and Tetsuya Matsumoto and Yusuke Shibata and Masayuki Takeda and Ayumi Shinohara and Setsuo Arikawa},
  title     = {Collage system: a unifying framework for compressed pattern matching},
  journal   = {Theor. Comput. Sci.},
  year      = {2003},
  volume    = {298},
  number    = {1},
  pages     = {253--272},
  bibsource = {dblp computer science bibliography, https://dblp.org},
  doi       = {10.1016/S0304-3975(02)00426-7},
}

@InProceedings{DBLP:conf/cpm/MiyazakiST97,
  author    = {Masamichi Miyazaki and Ayumi Shinohara and Masayuki Takeda},
  title     = {An Improved Pattern Matching Algorithm for Strings in Terms of Straight-Line Programs},
  booktitle = {Proc.~of {CPM}},
  year      = {1997},
  pages     = {1--11},
  doi       = {10.1007/3-540-63220-4_45},
}

@Article{CROCHEMORE199233,
  author  = {Maxime Crochemore},
  title   = {String-matching on ordered alphabets},
  journal = {Theor. Comput. Sci.},
  year    = {1992},
  volume  = {92},
  number  = {1},
  pages   = {33-47},
  issn    = {0304-3975},
  doi     = {10.1016/0304-3975(92)90134-2},
}

@InProceedings{DBLP:conf/esa/0001GGK15,
  author      = {Johannes Fischer and Travis Gagie and Pawe\l{} Gawrychowski and Tomasz Kociumaka},
  title       = {Approximating {LZ77} via Small-Space Multiple-Pattern Matching},
  booktitle   = {Proc.~of {ESA}},
  year        = {2015},
  pages       = {533--544},
  doi         = {10.1007/978-3-662-48350-3_45},
  rmpublisher = {Springer},
  rmvolume    = {9294},
note  = {Full version available at arXiv:1504.06647}
}

@InProceedings{DBLP:conf/focs/KolpakovK99,
  author      = {Roman M. Kolpakov and Gregory Kucherov},
  title       = {Finding Maximal Repetitions in a Word in Linear Time},
  booktitle   = {Proc.~of {FOCS}},
  year        = {1999},
  pages       = {596--604},
  doi         = {10.1109/SFFCS.1999.814634},
  rmpublisher = {{IEEE} Computer Society},
}

@InProceedings{porat2009optimal,
  author      = {Benny Porat and Ely Porat},
  title       = {Exact and Approximate Pattern Matching in the Streaming Model},
  booktitle   = {Proc.~of {FOCS}},
  year        = {2009},
  pages       = {315--323},
  bibsource   = {dblp computer science bibliography, https://dblp.org},
  doi         = {10.1109/FOCS.2009.11},
  rmpublisher = {{IEEE} Computer Society},
}

@InProceedings{DBLP:conf/esa/Charalampopoulos22,
  author    = {Panagiotis Charalampopoulos and Tomasz Kociumaka and Jakub Radoszewski and Solon P. Pissis and Wojciech Rytter and Tomasz Walen and Wiktor Zuba},
  title     = {Approximate Circular Pattern Matching},
  booktitle = {Proc.~of {ESA}},
  year      = {2022},
  pages     = {35:1--35:19},
  note      = {Full version: \url{https://doi.org/10.48550/arXiv.2208.08915}},
  doi       = {10.4230/LIPICS.ESA.2022.35},
  rmseries  = {LIPIcs},
  rmvolume  = {244},
}

@InProceedings{belazzougui_et_al:LIPIcs.CPM.2021.8,
  author    = {Belazzougui, Djamal and Kosolobov, Dmitry and Puglisi, Simon J. and Raman, Rajeev},
  title     = {Weighted Ancestors in Suffix Trees Revisited},
  booktitle = {Proc.~of {CPM}},
  year      = {2021},
  pages     = {8:1--8:15},
  doi       = {10.4230/LIPIcs.CPM.2021.8},
  isbn      = {978-3-95977-186-3},
  issn      = {1868-8969},
  rmseries  = {LIPIcs},
  rmvolume  = {191},
  urn       = {urn:nbn:de:0030-drops-139594},
}

@InProceedings{DBLP:conf/isaac/BathieKS23,
  author    = {Gabriel Bathie and Tomasz Kociumaka and Tatiana Starikovskaya},
  title     = {Small-Space Algorithms for the Online Language Distance Problem for Palindromes and Squares},
  booktitle = {Proc.~of {ISAAC}},
  year      = {2023},
  pages     = {10:1--10:17},
  doi       = {10.4230/LIPICS.ISAAC.2023.10},
  rmseries  = {LIPIcs},
  rmvolume  = {283},
}

@Article{DBLP:journals/eccc/DeTT09,
  author    = {Anindya De and Luca Trevisan and Madhur Tulsiani},
  title     = {Non-uniform attacks against one-way functions and {PRG}s},
  journal   = {Electron. Colloquium Comput. Complex.},
  year      = {2009},
  volume    = {{TR09-113}},
  bibsource = {dblp computer science bibliography, https://dblp.org},
  url       = {https://eccc.weizmann.ac.il/report/2009/113},
}

@Article{fine1965uniqueness,
  author  = {Nathan J. Fine and Herbert S. Wilf},
  title   = {Uniqueness Theorems for Periodic Functions},
  journal = {Proc.~Am. Math. Soc.},
  year    = {1965},
  volume  = {16},
  pages   = {109-114},
  doi     = {10.1090/S0002-9939-1965-0174934-9},
}

@article{DBLP:journals/talg/BreslauerG14,
  author       = {Dany Breslauer and
                  Roberto Grossi and
                  Filippo Mignosi},
  title        = {Simple real-time constant-space string matching},
  journal      = {Theor. Comput. Sci.},
  volume       = {483},
  pages        = {2--9},
  year         = {2013},
  doi          = {10.1016/J.TCS.2012.11.040},
}

@Article{doi:10.1137/S0097539795280512,
  author  = {Fiat, Amos and Naor, Moni},
  title   = {Rigorous Time/Space Trade-offs for Inverting Functions},
  journal = {{SIAM} J. Comput.},
  year    = {2000},
  volume  = {29},
  number  = {3},
  pages   = {790-803},
  doi     = {10.1137/S0097539795280512},
}

@Article{karp1987efficient,
  author    = {Richard M. Karp and Michael O. Rabin},
  title     = {Efficient Randomized Pattern-Matching Algorithms},
  journal   = {{IBM} J. Res. Dev.},
  year      = {1987},
  volume    = {31},
  number    = {2},
  pages     = {249--260},
  bibsource = {dblp computer science bibliography, https://dblp.org},
  doi       = {10.1147/RD.312.0249},
}

@InProceedings{DBLP:conf/esa/CliffordFPSS15,
  author    = {Rapha\"{e}l Clifford and Allyx Fontaine and Ely Porat and Benjamin Sach and Tatiana Starikovskaya},
  title     = {Dictionary Matching in a Stream},
  booktitle = {Proc.~of {ESA}},
  year      = {2015},
  pages     = {361--372},
  doi       = {10.1007/978-3-662-48350-3_31},
  rmseries  = {LNCS},
  rmvolume  = {9294},
}

@InProceedings{DBLP:conf/esa/GolanP17,
  author    = {Shay Golan and Ely Porat},
  title     = {Real-Time Streaming Multi-Pattern Search for Constant Alphabet},
  booktitle = {Proc.~of {ESA}},
  year      = {2017},
  pages     = {41:1--41:15},
  doi       = {10.4230/LIPIcs.ESA.2017.41},
  rmseries  = {LIPIcs},
  rmvolume  = {107},
}

@InProceedings{DBLP:conf/soda/CliffordFPSS16,
  author    = {Rapha\"{e}l Clifford and Allyx Fontaine and Ely Porat and Benjamin Sach and Tatiana Starikovskaya},
  title     = {The \emph{k}-mismatch problem revisited},
  booktitle = {Proc.~of {SODA}},
  year      = {2016},
  pages     = {2039--2052},
  doi       = {10.1137/1.9781611974331.ch142},
}

@InProceedings{DBLP:conf/icalp/GolanKP18,
  author    = {Shay Golan and Tsvi Kopelowitz and Ely Porat},
  title     = {Towards Optimal Approximate Streaming Pattern Matching by Matching Multiple Patterns in Multiple Streams},
  booktitle = {Proc.~of {ICALP}},
  year      = {2018},
  pages     = {65:1--65:16},
  doi       = {10.4230/LIPIcs.ICALP.2018.65},
  rmseries  = {LIPIcs},
  rmvolume  = {107},
}

@InProceedings{starikovskaya:LIPIcs:2017:7320,
  author    = {Tatiana Starikovskaya},
  title     = {Communication and Streaming Complexity of Approximate Pattern Matching},
  booktitle = {Proc.~of {CPM}},
  year      = {2017},
  pages     = {13:1--13:11},
  doi       = {10.4230/LIPIcs.CPM.2017.13},
  isbn      = {978-3-95977-039-2},
  issn      = {1868-8969},
  rmseries  = {LIPIcs},
  rmvolume  = {78},
}

@InProceedings{clifford2018streaming,
  author      = {Rapha{\"{e}}l Clifford and Tomasz Kociumaka and Ely Porat},
  title       = {The streaming k-mismatch problem},
  booktitle   = {Proc.~of {SODA}},
  year        = {2019},
  pages       = {1106--1125},
  bibsource   = {dblp computer science bibliography, https://dblp.org},
  doi         = {10.1137/1.9781611975482.68},
  rmpublisher = {{SIAM}},
}

@InProceedings{DBLP:conf/soda/DudekGGS22,
  author      = {Bartlomiej Dudek and Pawe\l{} Gawrychowski and Garance Gourdel and Tatiana Starikovskaya},
  title       = {Streaming Regular Expression Membership and Pattern Matching},
  booktitle   = {Proc.~of {SODA}},
  year        = {2022},
  pages       = {670--694},
  bibsource   = {dblp computer science bibliography, https://dblp.org},
  doi         = {10.1137/1.9781611977073.30},
  rmpublisher = {{SIAM}},
}

@InProceedings{DBLP:conf/icalp/Bhattacharya023,
  author      = {Sudatta Bhattacharya and Michal Kouck{\'{y}}},
  title       = {Streaming k-Edit Approximate Pattern Matching via String Decomposition},
  booktitle   = {Proc.~of {ICALP}},
  year        = {2023},
  pages       = {22:1--22:14},
  bibsource   = {dblp computer science bibliography, https://dblp.org},
  doi         = {10.4230/LIPICS.ICALP.2023.22},
  rmpublisher = {Schloss Dagstuhl - Leibniz-Zentrum f{\"{u}}r Informatik},
  rmseries    = {LIPIcs},
  rmvolume    = {261},
}

@Article{DBLP:journals/iandc/RadoszewskiS20,
  author    = {Jakub Radoszewski and Tatiana Starikovskaya},
  title     = {Streaming \emph{k}-mismatch with error correcting and applications},
  journal   = {Inf. Comput.},
  year      = {2020},
  volume    = {271},
  pages     = {104513},
  bibsource = {dblp computer science bibliography, https://dblp.org},
  doi       = {10.1016/j.ic.2019.104513},
}

@InProceedings{DBLP:conf/cpm/GolanKKP20,
  author    = {Shay Golan and Tomasz Kociumaka and Tsvi Kopelowitz and Ely Porat},
  title     = {The Streaming k-Mismatch Problem: Tradeoffs Between Space and Total Time},
  booktitle = {Proc.~of {CPM}},
  year      = {2020},
  pages     = {15:1--15:15},
  bibsource = {dblp computer science bibliography, https://dblp.org},
  doi       = {10.4230/LIPIcs.CPM.2020.15},
  rmseries  = {LIPIcs},
  rmvolume  = {161},
}

@InProceedings{DBLP:conf/focs/KociumakaPS21,
  author      = {Tomasz Kociumaka and Ely Porat and Tatiana Starikovskaya},
  title       = {Small-space and streaming pattern matching with k edits},
  booktitle   = {Proc.~of {FOCS}},
  year        = {2021},
  pages       = {885--896},
  bibsource   = {dblp computer science bibliography, https://dblp.org},
  doi         = {10.1109/FOCS52979.2021.00090},
  rmpublisher = {{IEEE}},
}

@Article{DBLP:journals/algorithmica/GawrychowskiS22,
  author    = {Pawe\l{} Gawrychowski and Tatiana Starikovskaya},
  title     = {Streaming Dictionary Matching with Mismatches},
  journal   = {Algorithmica},
  year      = {2022},
  volume    = {84},
  number    = {4},
  pages     = {896--916},
  bibsource = {dblp computer science bibliography, https://dblp.org},
  doi       = {10.1007/S00453-021-00876-X},
}

@InProceedings{Ergun:10,
  author    = {Funda Erg{\"{u}}n and Hossein Jowhari and Mert Saglam},
  title     = {Periodicity in Streams},
  booktitle = {Proc.~of {APPROX-RANDOM}},
  year      = {2010},
  volume    = {6302},
  pages     = {545--559},
  publisher = {Springer},
  bibsource = {dblp computer science bibliography, https://dblp.org},
  doi       = {10.1007/978-3-642-15369-3_41},
}

@InProceedings{stream-periodicity-mismatches,
  author      = {Funda Erg{\"{u}}n and Elena Grigorescu and Erfan Sadeqi Azer and Samson Zhou},
  title       = {Streaming Periodicity with Mismatches},
  booktitle   = {Proc.~of {APPROX-RANDOM}},
  year        = {2017},
  pages       = {42:1--42:21},
  bibsource   = {dblp computer science bibliography, https://dblp.org},
  doi         = {10.4230/LIPICS.APPROX-RANDOM.2017.42},
  rmpublisher = {Schloss Dagstuhl - Leibniz-Zentrum f{\"{u}}r Informatik},
  rmseries    = {LIPIcs},
  rmvolume    = {81},
}

@InProceedings{stream-periodicity-wildcards,
  author    = {Funda Erg{\"{u}}n and Elena Grigorescu and Erfan Sadeqi Azer and Samson Zhou},
  title     = {Periodicity in Data Streams with Wildcards},
  booktitle = {Proc.~of {CSR}},
  year      = {2018},
  pages     = {90--105},
  doi       = {10.1007/978-3-319-90530-3_9},
}

@Article{DBLP:journals/algorithmica/GawrychowskiMSU19,
  author  = {Pawe\l{} Gawrychowski and Oleg Merkurev and Arseny M. Shur and Przemyslaw Uzna\'{n}ski},
  title   = {Tight Tradeoffs for Real-Time Approximation of Longest Palindromes in Streams},
  journal = {Algorithmica},
  year    = {2019},
  volume  = {81},
  number  = {9},
  pages   = {3630--3654},
  doi     = {10.1007/s00453-019-00591-8},
}

@InProceedings{DBLP:conf/cpm/MerkurevS19,
  author      = {Oleg Merkurev and Arseny M. Shur},
  title       = {Searching Long Repeats in Streams},
  booktitle   = {Proc.~of {CPM}},
  year        = {2019},
  pages       = {31:1--31:14},
  bibsource   = {dblp computer science bibliography, https://dblp.org},
  doi         = {10.4230/LIPICS.CPM.2019.31},
  rmpublisher = {Schloss Dagstuhl - Leibniz-Zentrum f{\"{u}}r Informatik},
  rmseries    = {LIPIcs},
  rmvolume    = {128},
}

@InProceedings{doi:10.1137/1.9781611973105.122,
  author      = {Michael E. Saks and C. Seshadhri},
  title       = {Space efficient streaming algorithms for the distance to monotonicity and asymmetric edit distance},
  booktitle   = {Proc.~of {SODA}},
  year        = {2013},
  pages       = {1698--1709},
  bibsource   = {dblp computer science bibliography, https://dblp.org},
  doi         = {10.1137/1.9781611973105.122},
  rmpublisher = {{SIAM}},
}

@InProceedings{DBLP:conf/esa/KociumakaSV14,
  author      = {Tomasz Kociumaka and Tatiana Starikovskaya and Hjalte Wedel Vildh{\o}j},
  title       = {Sublinear Space Algorithms for the Longest Common Substring Problem},
  booktitle   = {Proc.~of {ESA}},
  year        = {2014},
  pages       = {605--617},
  bibsource   = {dblp computer science bibliography, https://dblp.org},
  doi         = {10.1007/978-3-662-44777-2_50},
  rmpublisher = {Springer},
  rmvolume    = {8737},
}

@InProceedings{7354391,
  author    = {Saha, Barna},
  title     = {Language Edit Distance and Maximum Likelihood Parsing of Stochastic Grammars: Faster Algorithms and Connection to Fundamental Graph Problems},
  booktitle = {Proc.~of {FOCS}},
  year      = {2015},
  pages     = {118-135},
  doi       = {10.1109/FOCS.2015.17},
}

@InProceedings{DBLP:conf/cpm/BathieCS24,
  author      = {Gabriel Bathie and Panagiotis Charalampopoulos and Tatiana Starikovskaya},
  title       = {Internal Pattern Matching in Small Space and Applications},
  booktitle   = {Proc.~of {CPM}},
  year        = {2024},
  pages       = {4:1--4:20},
  bibsource   = {dblp computer science bibliography, https://dblp.org},
  doi         = {10.4230/LIPICS.CPM.2024.4},
  rmpublisher = {Schloss Dagstuhl - Leibniz-Zentrum f{\"{u}}r Informatik},
  rmseries    = {LIPIcs},
  rmvolume    = {296},
}

@Article{DBLP:journals/theoretics/GanardiHLMS25,
  author    = {Moses Ganardi and Danny Hucke and Markus Lohrey and Konstantinos Mamouras and Tatiana Starikovskaya},
  title     = {Regular Languages in the Sliding Window Model},
  journal   = {TheoretiCS},
  year      = {2025},
  volume    = {4},
  bibsource = {dblp computer science bibliography, https://dblp.org},
  doi       = {10.46298/THEORETICS.25.8},
}

@InProceedings{ganardi_et_al:LIPIcs:2016:6853,
  author    = {Moses Ganardi and Danny Hucke and Markus Lohrey},
  title     = {Querying Regular Languages over Sliding Windows},
  booktitle = {Proc.~of {FST\,TCS}},
  year      = {2016},
  pages     = {18:1--18:14},
  doi       = {10.4230/LIPIcs.FSTTCS.2016.18},
  isbn      = {978-3-95977-027-9},
  issn      = {1868-8969},
  rmseries  = {LIPIcs},
  rmvolume  = {65},
}

@InProceedings{DBLP:conf/mfcs/GanardiJL18,
  author    = {Moses Ganardi and Artur Je\.{z} and Markus Lohrey},
  title     = {Sliding Windows over Context-Free Languages},
  booktitle = {Proc.~of {MFCS}},
  year      = {2018},
  pages     = {15:1--15:15},
  doi       = {10.4230/LIPIcs.MFCS.2018.15},
  rmseries  = {LIPIcs},
  rmvolume  = {117},
}

@Article{DBLP:journals/tcs/BabuLRV13,
  author  = {Ajesh Babu and Nutan Limaye and Jaikumar Radhakrishnan and Girish Varma},
  title   = {Streaming algorithms for language recognition problems},
  journal = {Theor. Comput. Sci.},
  year    = {2013},
  volume  = {494},
  pages   = {13--23},
  doi     = {10.1016/J.TCS.2012.12.028},
}

@InProceedings{DBLP:conf/cpm/GawrychowskiRS19,
  author    = {Pawe\l{} Gawrychowski and Jakub Radoszewski and Tatiana Starikovskaya},
  title     = {Quasi-Periodicity in Streams},
  booktitle = {Proc.~of {CPM}},
  year      = {2019},
  pages     = {22:1--22:14},
  bibsource = {dblp computer science bibliography, https://dblp.org},
  doi       = {10.4230/LIPIcs.CPM.2019.22},
  rmseries  = {LIPIcs},
  rmvolume  = {128},
}

@Article{DBLP:journals/siamcomp/MagniezMN14,
  author  = {Fr\'{e}d\'{e}ric Magniez and Claire Mathieu and Ashwin Nayak},
  title   = {Recognizing Well-Parenthesized Expressions in the Streaming Model},
  journal = {{SIAM} J. Comput.},
  year    = {2014},
  volume  = {43},
  number  = {6},
  pages   = {1880--1905},
  doi     = {10.1137/130926122},
}

@InProceedings{ganardi_et_al:LIPIcs:2018:9131,
  author    = {Moses Ganardi and Danny Hucke and Markus Lohrey},
  title     = {Randomized Sliding Window Algorithms for Regular Languages},
  booktitle = {Proc.~of {ICALP}},
  year      = {2018},
  pages     = {127:1--127:13},
  doi       = {10.4230/LIPIcs.ICALP.2018.127},
  isbn      = {978-3-95977-076-7},
  issn      = {1868-8969},
  rmseries  = {LIPIcs},
  rmvolume  = {107},
}

@InProceedings{franois_et_al:LIPIcs:2016:6355,
  author      = {Nathana{\"{e}}l Fran{\c{c}}ois and Fr{\'{e}}d{\'{e}}ric Magniez and Michel de Rougemont and Olivier Serre},
  title       = {Streaming Property Testing of Visibly Pushdown Languages},
  booktitle   = {Proc.~of {ESA}},
  year        = {2016},
  pages       = {43:1--43:17},
  bibsource   = {dblp computer science bibliography, https://dblp.org},
  doi         = {10.4230/LIPICS.ESA.2016.43},
  rmpublisher = {Schloss Dagstuhl - Leibniz-Zentrum f{\"{u}}r Informatik},
  rmseries    = {LIPIcs},
  rmvolume    = {57},
}

@InProceedings{DBLP:conf/icalp/BathieS21,
  author      = {Gabriel Bathie and Tatiana Starikovskaya},
  title       = {Property Testing of Regular Languages with Applications to Streaming Property Testing of Visibly Pushdown Languages},
  booktitle   = {Proc.~of {ICALP}},
  year        = {2021},
  pages       = {119:1--119:17},
  bibsource   = {dblp computer science bibliography, https://dblp.org},
  doi         = {10.4230/LIPICS.ICALP.2021.119},
  rmpublisher = {Schloss Dagstuhl - Leibniz-Zentrum f{\"{u}}r Informatik},
  rmseries    = {LIPIcs},
  rmvolume    = {198},
}

@InProceedings{ganardi_et_al:LIPIcs:2019:11502,
  author    = {Moses Ganardi and Danny Hucke and Markus Lohrey and Tatiana Starikovskaya},
  title     = {Sliding Window Property Testing for Regular Languages},
  booktitle = {Proc.~of {ISAAC}},
  year      = {2019},
  pages     = {6:1--6:13},
  doi       = {10.4230/LIPIcs.ISAAC.2019.6},
  isbn      = {978-3-95977-130-6},
  issn      = {1868-8969},
  rmseries  = {LIPIcs},
  rmvolume  = {149},
}

@InProceedings{li_et_al:LIPIcs.FSTTCS.2021.27,
  author      = {Xin Li and Yu Zheng},
  title       = {Lower Bounds and Improved Algorithms for Asymmetric Streaming Edit Distance and Longest Common Subsequence},
  booktitle   = {Proc.~of {FST\,TCS}},
  year        = {2021},
  pages       = {27:1--27:23},
  bibsource   = {dblp computer science bibliography, https://dblp.org},
  doi         = {10.4230/LIPICS.FSTTCS.2021.27},
  rmpublisher = {Schloss Dagstuhl - Leibniz-Zentrum f{\"{u}}r Informatik},
  rmseries    = {LIPIcs},
  rmvolume    = {213},
}

@InProceedings{DBLP:conf/lata/GanardiHL18,
  author    = {Moses Ganardi and Danny Hucke and Markus Lohrey},
  title     = {Sliding Window Algorithms for Regular Languages},
  booktitle = {Proc.~of {LATA}},
  year      = {2018},
  pages     = {26--35},
  doi       = {10.1007/978-3-319-77313-1_2},
  rmvolume  = {10792},
}

@InProceedings{ganardi_et_al:LIPIcs:2018:8485,
  author    = {Moses Ganardi and Danny Hucke and Daniel K{\"o}nig and Markus Lohrey and Konstantinos Mamouras},
  title     = {Automata Theory on Sliding Windows},
  booktitle = {Proc.~of {STACS}},
  year      = {2018},
  pages     = {31:1--31:14},
  doi       = {10.4230/LIPIcs.STACS.2018.31},
  isbn      = {978-3-95977-062-0},
  issn      = {1868-8969},
  rmseries  = {LIPIcs},
  rmvolume  = {96},
}

@InProceedings{cheng_et_al:LIPIcs.ICALP.2021.54,
  author      = {Kuan Cheng and Alireza Farhadi and MohammadTaghi Hajiaghayi and Zhengzhong Jin and Xin Li and Aviad Rubinstein and Saeed Seddighin and Yu Zheng},
  title       = {Streaming and Small Space Approximation Algorithms for Edit Distance and Longest Common Subsequence},
  booktitle   = {Proc.~of {ICALP}},
  year        = {2021},
  pages       = {54:1--54:20},
  bibsource   = {dblp computer science bibliography, https://dblp.org},
  doi         = {10.4230/LIPICS.ICALP.2021.54},
  rmpublisher = {Schloss Dagstuhl - Leibniz-Zentrum f{\"{u}}r Informatik},
  rmseries    = {LIPIcs},
  rmvolume    = {198},
}

@InProceedings{DBLP:conf/spire/MerkurevS19,
  author      = {Oleg Merkurev and Arseny M. Shur},
  title       = {Searching Runs in Streams},
  booktitle   = {Proc.~of {SPIRE}},
  year        = {2019},
  pages       = {203--220},
  bibsource   = {dblp computer science bibliography, https://dblp.org},
  doi         = {10.1007/978-3-030-32686-9_15},
  rmpublisher = {Springer},
  rmvolume    = {11811},
}

@Article{DBLP:journals/jcss/FredmanW93,
  author    = {Michael L. Fredman and Dan E. Willard},
  title     = {Surpassing the Information Theoretic Bound with Fusion Trees},
  journal   = {J. Comput. Syst. Sci.},
  year      = {1993},
  volume    = {47},
  number    = {3},
  pages     = {424--436},
  bibsource = {dblp computer science bibliography, https://dblp.org},
  doi       = {10.1016/0022-0000(93)90040-4},
}

@Article{DBLP:journals/iandc/CliffordEPP11,
  author    = {Rapha{\"{e}}l Clifford and Klim Efremenko and Benny Porat and Ely Porat},
  title     = {A black box for online approximate pattern matching},
  journal   = {Inf. Comput.},
  year      = {2011},
  volume    = {209},
  number    = {4},
  pages     = {731--736},
  doi       = {10.1016/J.IC.2010.12.007},
}

@article{DBLP:journals/siamcomp/BorodinC82,
  author       = {Allan Borodin and
                  Stephen A. Cook},
  title        = {A Time-Space Tradeoff for Sorting on a General Sequential Model of
                  Computation},
  journal      = {{SIAM} J. Comput.},
  volume       = {11},
  number       = {2},
  pages        = {287--297},
  year         = {1982},
  doi          = {10.1137/0211022},
}

@InProceedings{DBLP:conf/spire/Ellert23,
  author    = {Jonas Ellert},
  title     = {Sublinear Time {L}empel-{Z}iv {(LZ77)} Factorization},
  booktitle = {Proc.~of {SPIRE}},
  year      = {2023},
  pages     = {171--187},
  bibsource = {dblp computer science bibliography, https://dblp.org},
  doi       = {10.1007/978-3-031-43980-3_14},
}

@InProceedings{DBLP:conf/focs/KempaK24,
  author    = {Dominik Kempa and Tomasz Kociumaka},
  title     = {Lempel-{Z}iv {(LZ77)} Factorization in Sublinear Time},
  booktitle = {Proc.~of {FOCS}},
  year      = {2024},
  pages     = {2045--2055},
  bibsource = {dblp computer science bibliography, https://dblp.org},
  doi       = {10.1109/FOCS61266.2024.00122},
}

@InProceedings{DBLP:conf/esa/Charalampopoulos21,
  author    = {Panagiotis Charalampopoulos and Tomasz Kociumaka and Solon P. Pissis and Jakub Radoszewski},
  title     = {Faster Algorithms for Longest Common Substring},
  booktitle = {Proc.~of {ESA}},
  year      = {2021},
  pages     = {30:1--30:17},
  bibsource = {dblp computer science bibliography, https://dblp.org},
  doi       = {10.4230/LIPICS.ESA.2021.30},
}

@InProceedings{DBLP:conf/soda/KempaK23,
  author    = {Dominik Kempa and Tomasz Kociumaka},
  booktitle = {Proc.~of {SODA}},
  title     = {Breaking the ${O}(n)$-Barrier in the Construction of Compressed Suffix Arrays and Suffix Trees},
  year      = {2023},
  pages     = {5122--5202},
  bibsource = {dblp computer science bibliography, https://dblp.org},
  doi       = {10.1137/1.9781611977554.CH187},
}

@InProceedings{DBLP:conf/mfcs/Charalampopoulos25,
  author    = {Panagiotis Charalampopoulos and Manal Mohamed and Jakub Radoszewski and Wojciech Rytter and Tomasz Walen and Wiktor Zuba},
  title     = {Counting Distinct Square Substrings in Sublinear Time},
  booktitle = {Proc.~of {MFCS}},
  year      = {2025},
  pages     = {36:1--36:19},
  bibsource = {dblp computer science bibliography, https://dblp.org},
  doi       = {10.4230/LIPICS.MFCS.2025.36},
}

@InProceedings{DBLP:conf/cpm/Charalampopoulos22,
  author    = {Panagiotis Charalampopoulos and Solon P. Pissis and Jakub Radoszewski},
  title     = {Longest Palindromic Substring in Sublinear Time},
  booktitle = {Proc.~of {CPM}},
  year      = {2022},
  pages     = {20:1--20:9},
  bibsource = {dblp computer science bibliography, https://dblp.org},
  doi       = {10.4230/LIPICS.CPM.2022.20},
}

@InProceedings{DBLP:conf/spire/RadoszewskiZ24,
  author    = {Jakub Radoszewski and Wiktor Zuba},
  title     = {Computing String Covers in Sublinear Time},
  booktitle = {Proc.~of {SPIRE}},
  year      = {2024},
  pages     = {272--288},
  bibsource = {dblp computer science bibliography, https://dblp.org},
  doi       = {10.1007/978-3-031-72200-4_21},
}

@Article{DBLP:journals/siamcomp/KociumakaRRW24,
  author    = {Tomasz Kociumaka and Jakub Radoszewski and Wojciech Rytter and Tomasz Walen},
  title     = {Internal Pattern Matching Queries in a Text and Applications},
  journal   = {{SIAM} J. Comput.},
  year      = {2024},
  volume    = {53},
  number    = {5},
  pages     = {1524--1577},
  bibsource = {dblp computer science bibliography, https://dblp.org},
  doi       = {10.1137/23M1567618},
}

@article{DBLP:journals/jcss/BeameJS01,
  author       = {Paul Beame and
                  T. S. Jayram and
                  Michael E. Saks},
  title        = {Time-Space Tradeoffs for Branching Programs},
  journal      = {J. Comput. Syst. Sci.},
  volume       = {63},
  number       = {4},
  pages        = {542--572},
  year         = {2001},
  doi          = {10.1006/JCSS.2001.1778},
  bibsource    = {dblp computer science bibliography, https://dblp.org}
}

@article{DBLP:journals/cjtcs/Pagter05,
  author       = {Jakob Pagter},
  title        = {On {A}jtai's Lower Bound Technique for R-way Branching Programs and
                  the {H}amming Distance Problem},
  journal      = {Chic. J. Theor. Comput. Sci.},
  volume       = {2005},
  year         = {2005},
  url          = {http://cjtcs.cs.uchicago.edu/articles/2005/1/contents.html},
  bibsource    = {dblp computer science bibliography, https://dblp.org}
}

@article{DBLP:journals/tcs/Patt-ShamirP93,
  author       = {Boaz Patt{-}Shamir and
                  David Peleg},
  title        = {Time-Space Tradeoffs for Set Operations},
  journal      = {Theor. Comput. Sci.},
  volume       = {110},
  number       = {1},
  pages        = {99--129},
  year         = {1993},
  doi          = {10.1016/0304-3975(93)90352-T},
  bibsource    = {dblp computer science bibliography, https://dblp.org}
}

@inproceedings{DBLP:conf/isaac/GhaziS25,
  author       = {Taha El Ghazi and
                  Tatiana Starikovskaya},
  title        = {Streaming Periodicity with Mismatches, Wildcards, and Edits},
  booktitle    = {Proc.~of {ISAAC}},
  pages        = {36:1--36:20},
  year         = {2025},
  doi          = {10.4230/LIPICS.ISAAC.2025.36},
}

@inproceedings{DBLP:conf/stoc/KempaK22,
  author       = {Dominik Kempa and
                  Tomasz Kociumaka},
  title        = {Dynamic suffix array with polylogarithmic queries and updates},
  booktitle    = {Proc.~of {STOC}},
  pages        = {1657--1670},
  year         = {2022},
  doi          = {10.1145/3519935.3520061},
}
\end{document}